\newcommand{\half}{\frac{1}{2}}
        \newcommand{\pb}{{\mathrm {pb}}}
\newcommand{\IC}{\mathbb{C}}
\def\bar#1{\overline{#1}}
\def\inv{^{\raise.15ex\hbox{${\scriptscriptstyle -}$}\kern-.05em 1}}
\def\lbar{{\lower.35ex\hbox{$\mathchar'26$}\mkern-10mu\lambda}} 
\def\e#1{{\rm e}^{^{\textstyle#1}}}
\def\to{\rightarrow}
\let\p=\partial
\def\a{\alpha'}
\def\d{\textrm{d}}
\def\End{\textrm{End}}
\def\Im{\textrm{Im}}
\def\ker{\textrm{ker}}
\def\J{\mathcal{J}}
\def\bp{\bar{\partial}}
\def\bD{\bar D}
\def\tr{\textrm{tr}}
\def\OO{\mathcal{O}}
\def\D{\mathcal{D}}
\def\CS{\mathcal{CS}}
\def\A{\mathcal{A}}
\def\Q{\mathcal{Q}}
\def\L{\mathcal{L}}
\def\H{\mathcal{H}}
\def\Z{\mathcal{Z}}
\def\X{\mathcal{X}}
\def\B{\mathcal{B}}
\def\K{\mathcal{K}}
\def\P{\mathcal{P}}
\def\G{\mathcal{G}}
\def\F{\mathcal{F}}
\def\V{\mathcal{V}}
\def\R{\mathcal{R}}
\def\H{\mathcal{H}}
\def\CS{{\cal CS}}
\def\M{\mathcal{M}}
\def\g{\mathbf{g}}
\newtheorem{Theorem}{Theorem}
\newtheorem{Conjecture}{Conjecture}
\newtheorem{Proposition}{Proposition}
\theoremstyle{definition}
\newcommand{\quot}[1]{``#1''}
\numberwithin{equation}{section}
\def\d{\textrm{d}}
\newcommand{\DD}{\mathrm{D}}
\newcommand{\I}{\mathrm{i}}
\def\tr{\textrm{tr}\,}
\def\e{e}
\def\p{\partial}
\newcommand{\SU}[1]{\ensuremath{\mathrm{SU}(#1)}}
\newcommand{\U}[1]{\ensuremath{\mathrm{U}(#1)}}
\newcommand{\Sp}[1]{\ensuremath{\mathrm{Sp}(#1)}}
\newcommand{\Gtwo}{\ensuremath{\mathrm{G}_{2}}}
\newcommand{\vol}{\mathcal{V}_0}
\title{Moduli in General $SU(3)$-Structure\\[1ex]     
        Heterotic Compactifications}   
\author{Eirik Eik Svanes}             
\begin{document}


\baselineskip=18pt

\setcounter{secnumdepth}{3}
\setcounter{tocdepth}{3}

\maketitle                  

\begin{dedication}
\vspace*{5cm}\Large{\it  To my Family}
\end{dedication}        

\begin{abstract}
In this thesis, we study compactifications of ten-dimensional heterotic supergravity at $\OO(\a)$, focusing on the moduli of such compactifications. We begin by studying supersymmetric compactifications to four-dimensional maximally symmetric space, commonly referred to as the Strominger system. The compactifications are of the form $M_{10}=M_4\times X$, where $M_4$ is four-dimensional Minkowski space, and $X$ is a six-dimensional manifold of what we refer to as heterotic $SU(3)$-structure. We show that this system can be put in terms of a holomorphic operator $\bD$ on a bundle $\Q=T^*X\oplus\End(TX)\oplus\End(V)\oplus TX$, defined by a series of extensions. Here $V$ is the $E_8\times E_8$ gauge-bundle, and $TX$ is the tangent bundle of the compact space $X$. We proceed to compute the infinitesimal deformation space of this structure, given by $T\M=H^{(0,1)}(\Q)$, which constitutes the infinitesimal spectrum of the lower energy four-dimensional theory. In doing so, we find an over counting of moduli by $H^{(0,1)}(\End(TX))$, which can be reinterpreted as $\OO(\a)$ field redefinitions. In the next part of the thesis, we consider non-maximally symmetric compactifications of the form $M_{10}=M_3\times Y$, where $M_3$ is three-dimensional Minkowski space, and $Y$ is a seven-dimensional non-compact manifold with a $G_2$-structure. We write $X\rightarrow Y\rightarrow\mathbb{R}$, where $X$ is a six dimensional compact space of half-flat $SU(3)$-structure, non-trivially fibered over $\mathbb{R}$. These compactifications are known as domain wall compactifications. By focusing on coset compactifications, we show that the compact space $X$ can be endowed with non-trivial torsion, which can be used in a combination with $\a$-effects to stabilise all geometric moduli. The domain wall can further be lifted to a maximally symmetric AdS vacuum by inclusion of non-perturbative effects in a heterotic KKLT scenario. Finally, we consider domain wall compactifications where $X$ is a Calabi-Yau. We show that by considering such compactifications, one can evade the usual no-go theorems for flux in Calabi-Yau compactifications, allowing flux to be used as a tool in such compactifications, even when $X$ is K\"ahler. The ultimate success of these compactifications depends on the possibility of lifting such vacua to maximally symmetric ones by means of e.g. non-perturbative effects.
\end{abstract}          

\begin{acknowledgements}
I would like to begin by expressing my gratitude to both of my supervisors, Doctor Xenia de la Ossa and Professor Andr\'e Lukas. I am very grateful for all the support and encouragement you have given me, and for the extremely interesting and excitingly challenging projects you have introduced me to, both from a phenomenological and mathematical perspective. I am both lucky, and very happy to have had the opportunity to work with you both. Thank you for all you have done for me.

I would also like to thank my other collaborators Magdalena Larfors and Michael Klaput at Oxford. Both of whom have given me much support and encouragement during my time as a DPhil student. I am also grateful to Spiro Karigiannis, for his mathematical insights and continued support.

I would also like to thank Lara Anderson, James Gray, Philip Candelas, Rhys Davies, Yang Hui He, Burt Ovrut, Ruxandra Moraru, Ulrike Tillmann, for useful conversations on String Theory and other topics. I am especially grateful to Lara, James and Yang, for their tips, help and support during the final stages of my DPhil, and to Lara and James for their help and hospitality during my seminar trip to the United States.

I am also grateful to my fellow graduate students Edward Hardy, Saso Grozdanov, Setphen Angus, Cyril Matti, David Kraljic, Andrew Powell, Andrei Constantin, Challenger Mishra, Richard Lau, Kyle Allison, Jakub Sikorowski, Martin Fluder, Mathew Bullimore, Tim Adamo, and others for helpful support, friendship and encouragement. I also thank my friends at Balliol College MCR for their support. 

Finally, I would like to thank Oxford University and Balliol College for providing me with Scholarships, without which this thesis would have been impossible to complete.

\end{acknowledgements}   

\begin{romanpages}          
{\tableofcontents}            
\end{romanpages}            




\chapter{Introduction}


\section{Strings and Supergravity}
String theory has as its core assumption the idea that the fundamental constituents of nature are one-dimensional strings, rather than point particles. Originally introduced as a theory of the strong interaction \cite{Susskind:1970xm, Nielsen:1971ke}, it was later found to have far reaching consequences for the unification of gravity and the other fundamental forces of nature. Indeed, in string theory, it is the vibrations of the string that produce the low energy particle spectrum. Different vibration modes, or excitations, correspond to different particles. The graviton, which is the fundamental particle associated to the space-time metric, is then just another excitation of the string, giving a natural unification of the fundamental forces.

Looking very promising at first, there was of course bound to be complications associated to the theory. Firstly, in order to avoid tachyons, it is necessary to introduce world-sheet supersymmetry\footnote{This is not necessarily an issue, as it is really space-time (target space) supersymmetry which is broken at observable scales.}, giving rise to superstring theory. Secondly, the string being a one-dimensional object, it naturally lives in $(1+1)$ dimensions, commonly known as the world-sheet. The observed space-time is then an artefact of the vibrating string, referred to as the target space. In the case of superstring theory, in order to be a consistent theory quantum-mechanically, this space-time must be ten-dimensional . The low energy effective theory of string theory is thus a ten-dimensional supergravity. This supergravity arises as a double expansion of the world-sheet theory in $\a$, the string tension parameter, and $g_s$, the string coupling constant \cite{green1987superstring2, polchinski1998string1, polchinski1998string2}.

String theory then naturally includes many of the more exotic features of beyond Standard Model physics, such as supersymmetry \cite{ramond1971dual, gervais1971field, sohnius1985introducing}, extra dimensions \cite{kaluza1921unitatsproblem, klein1926quantentheorie, witten1981search}, grand unified theories (GUTs) \cite{Georgi:1974sy}, etc. There are five consistent low energy supergravities descending from the superstring. These are the type II theories, type IIA and type IIB, type I theory and the heterotic supergravities with gauge group $SO(32)$ or $E_8\times E_8$. The theories are related by a network of dualities \cite{schwarz1997lectures, forste1998duality}, most notably of these perhaps is Mirror Symmetry \cite{kikkawa1984casimir, sakai1986vacuum, candelas1991exactly}, a form of T-duality relating compactifications of type IIA on a manifold $X$ to compactifications of type IIB on a different manifold $\tilde X$ (it's mirror), so that the lower-dimensional physics is the same. 

The different string theories are conjectured to have a strong coupling completion in a framework known as M-theory \cite{witten1995string, hovrava1996eleven}, and in this sense the supergravities arise as different limits of this now eleven-dimensional theory. Looking very promising, there are however several open problems yet to be worked out in M-theory. Most notably is perhaps that of the world-volume theory of M5-branes and its relation to the Langlands program \cite{kapustin2006electric, gukov2006gauge, witten2009geometric}. A full understanding of the fundamental dynamics of M-theory is therefore far from complete.

We shall not discuss M-theory in any greater detail here, as the main focus of this thesis is the heterotic string, specifically the $E_8\times E_8$ heterotic supergravity. We shall mainly be concerned with $\a$-corrections to heterotic supergravity compactifications, non-standard compactifications of this theory, and moduli of heterotic compactifications. We will not be concerned with $g_s$-corrections in this thesis.

Heterotic string theory is very attractive in terms of standard-model model building. Indeed, the Standard Model naturally embeds in the gauge group $E_8$, providing fertile model building scenarios for beyond the standard model physics, as first noticed in \cite{Candelas:1985en}. Moreover, the other ``hidden sector" $E_8$ may be used to cancel additional anomalies, or help with moduli-stabilisation, as we shall see explicitly in chapter \ref{ch:HFNK}. Recall that supergravities derived from string theory naturally live in ten-dimensions. This might seem like a problem at first, but it turns out to also have advantages as we shall now explain. In order to do phenomenology with such theories, one usually assumes the ten-dimensional space-time $M_{10}$ to have the form of a fibration
\begin{equation}
\label{eq:fibr}
X_6\rightarrow M_{10}\xrightarrow{\pi} M_4\:,
\end{equation}
as its most general form. Here $X_6$ is a six-dimensional compact space, $M_4$ is four-dimensional space-time, and $\pi$ is the projection onto the non-compact base $M_4$. The compact space $X_6$ is assumed small and un-observable in accelerators, with an internal radius inversely proportional to the GUT scale or above. 

Often, the vacuum configuration of the space-time $M_4$ is taken to be maximally symmetric Minkowski space, and the fibration \eqref{eq:fibr} in this case reduces to a direct product
\begin{equation}
\label{eq:maxcomp}
M_{10}=X_6\times M_4\:.
\end{equation}
This is not the case however in general, and we shall discuss scenarios in Part~\ref{part:3d} of the thesis, where the space $X_6$ is fibered over one of the non-compact legs of $M_4$, often referred to as domain-wall solutions.

The compactifications described above are often referred to as Kaluza-Klein reductions \cite{duff1986kaluza}. One may wonder what the correct choice of compact space $X_6$ is. Firstly, it should solve the supergravity equations of motion derived from the ten-dimensional action.\footnote{We will not deal with quantum corrections to the ten-dimensional theory in this thesis.} Secondly, for phenomenological reasons but also in order to have some mathematical control over the solutions, we also wish to preserve four-dimensional supersymmetry. Supersymmetry, if it exists, provides a natural solution to the hierarchy problem of particle physics \cite{dimopoulos1981softly, dimopoulos1981supersymmetry}. The theories we consider in this thesis live at the GUT scale, upon compactification, where supersymmetry is unbroken. To break supersymmetry at this scale, would mean reintroducing the hierarchy problem, which is unattractive from a phenomenological point of view. For the most part in this thesis we will therefore not deal with supersymmetry breaking models, unless explicitly stated.

In heterotic theory we also have the $E_8\times E_8$ gauge group, which fibers non-trivially over the compact space $X_6$. It is precisely this gauge group that makes the theory so phenomenologically attractive. Indeed, the topology of this gauge group compactification should determine the net number of generations, the field content and the Yukawa couplings of the low energy theory. How this works has long been known to lowest order in $\a$, where $X_6$ is a Calabi-Yau\cite{strominger1985yukawa, candelas1991moduli, Anderson:2009ge}, and it is one purpose of this thesis to extend some of these results to higher orders. In particular, in chapter \ref{ch:SS} we work out the infinitesimal four-dimensional spectrum at $\OO(\a)$.

The heterotic string is thus a fertile ground for Standard Model building, and this is a direction that has been explored to a large degree in recent years. Compactifications to a maximally symmetric space-time of the type \eqref{eq:maxcomp} have been of particular interest \cite{Donagi:2004ia, Bouchard:2005ag, Anderson:2008uw, Anderson:2009mh, Anderson:2011ns, Braun:2011ni, Anderson:2012yf}, which to lowest order in $\a$ are compactifications on Calabi-Yau manifolds. Such compactifications always endure the problem of moduli stabilisation. Mathematically, the moduli space $\M$ of a compactification corresponds to the allowed deformations of the geoemetry that preserve the equations of motion, and supersymmetry in particular. These moduli then give rise to fields in the low energy theory, corresponding to flat directions in the four-dimensional potential. These fields are not observed in accelerators, and they must therefore be given a mass in order to lift them from the low energy spectrum. Progress in this direction has been made in recent years \cite{Anderson:2009nt, Anderson:2010mh}, but a maximally symmetric compactification with all moduli stabilised appears difficult to achieve \cite{Anderson:2011cza}.  

In this thesis, we want to remedy the problem of moduli stabilisation by allowing for compactifications of type \eqref{eq:fibr}. This allows the compact space $X_6$ to be of more exotic type, in particular it can be non-K\"ahler. Focusing on a particular type of compact space known as cosets, with a combination of $\a$-effects, we find that all geometric moduli can be stabilised perturbatively. Further, with inclusion of non-perturbative effects, we find that all moduli can be stabilised. Additionally, we will also consider Calabi-Yau compactifications to these more generic space-times, showing that they allow for a more flexibility in moduli stabilisation. In particular, by allowing for fluxes.


We now turn to the outline of the thesis, giving further motivation for each chapter. Before we begin, we also note that Heterotic supergravity has its UV completion in terms of a two-dimensional world-sheet theory, or sigma-model \cite{Hull1986357, Sen1986289, Hull1986187}. However, as this thesis is concerned with the ten-dimensional perspective we will not go into further details on this here. It should however be noted that one has to consider this framework in order to properly include loop corrections, i.e. $g_s$-corrections, to the structures presented in this thesis, and work in this direction is underway.

\section{Outline and Motivation}
As described above, we will be concerned with heterotic string compactifications in this thesis. In particular, compactifications for which the space-time vacuum configuration takes the form \eqref{eq:fibr}, which we rewrite as
\begin{equation*}
M_{10}=M_k\times X_{10-k}\:,
\end{equation*}
where $M_k$ is $k$-dimensional Minkowski space, and $X_{10-k}$ is a $(10-k)$-dimensional (possibly non-compact) manifold with non-trivial structure. In this thesis, we will specialise to the cases $k=\{3,4\}$, but more general cases can be considered. 

\subsection{Moduli in Maximally Symmetric Minkowski Compactifications}
We start in Part~\ref{part:4d} of the thesis by setting $k=4$, and considering supersymmetric solutions where now $X_6$ is compact. A lot is known about the lower energy effective four-dimensional theory at zeroth order in $\a$, where $X_6$ is a Calabi-Yau. Complications, however, arise in the higher order theory. In particular, the internal geometry need no longer be K\"ahler, and is torsional in general. A lot of the tools coming from K\"ahler and Calabi-Yau geometry are therefore lost. Moreover, the inclusion of the heterotic Bianchi identity, needed for anomaly cancellation, also complicates matters. A rather trivial identity at zeroth order where it states that the Neveu-Schwarz (NS) flux $H$ is closed, it becomes vastly more complicated at first order where it couples the flux with the gauge and gravitational sectors of the theory.

The form of the general supersymmetric solutions at $\OO(\a)$ where first written down in the 80's, by Strominger, Hull, de\,Wit\,{\it et\,al} and L\"ust \cite{Strominger:1986uh, Hull:1986kz, deWit:1986xg, Lust:1986ix}, and has since been known as the Strominger system. These are solutions where the base $X_6$ has a geometry of what we come to call a {\it heterotic $SU(3)$-structure}, which are complex conformally balanced manifolds with vanishing first Chern class. One particular feature of these solutions is that the torsion of the internal space gets identified with the flux. We will return to this in Part~\ref{part:3d} when discussing moduli stabilisation.

Although a lot of progress has been made in recent years \cite{Witten:1986kg, Dasgupta:1999ss, 1999math......1090A, Ivanov:2000fg, 2001CQGra..18.1089I, 2001math......2142F, Becker:2002sx, Gauntlett:2003cy, Becker:2003yv, Becker:2003sh, LopesCardoso:2003sp, LopesCardoso:2003af, Becker:2006xp, Becker:2009df,  Andreas:2010cv, 2012CMaPh.315..153A, 2013arXiv1304.4294G},\footnote{There has also been a lot of research into this topic from the two-dimensional world-sheet point of view, see e.g. \cite{Adams:2006kb, Sharpe:2008rd, Kreuzer:2010ph, McOrist:2010ae, Beccaria:2010yp, NibbelinkGroot:2010wm, McOrist:2011bn,  Blaszczyk:2011ib, Quigley:2011pv, Adams:2012sh, Nibbelink:2012wb, Melnikov:2012hk, Quigley:2012gq, Melnikov:2012nm}.} and particular examples of non-K\"ahler solutions have been found \cite{Becker:2006et, Fu:2006vj}, there is still a lot unknown about the effective four-dimensional supergravity. From a phenomenological point of view, a goal of heterotic supergravity is to relate it to some four-dimensional $N=1$ supergravity, with superpotential $W$ and K\"ahler potential $K$. In order to get to this, a Scherk-Schwarz type dimensional reduction needs to be performed \cite{scherk1979get}. For the heterotic string, partial reductions have been performed in the past, both for K\"ahler \cite{witten1985dimensional, candelas1991moduli} and non-K\"ahler cases \cite{Lust:1985be, Kapetanakis19924, Gurrieri:2004dt, KashaniPoor:2007tr, chatzistavrakidis2009reducing, zoupanos2, Lukas:2010mf}. These usually exclude the bundles and Bianchi identity, focusing on the geometric sector of the theory. A complete dimensional reduction and knowledge of the theory is therefore still lacking, even in the K\"ahler case. 

Worse still, before a dimensional reduction can be performed, knowledge is needed of what the moduli space of the compactification is. This has long been known for the geometric base $X_6$ in the case of Calabi-Yau compactifications \cite{candelas1991moduli}, and for special cases including the bundles and Bianchi identity \cite{witten1985dimensional, strominger1985yukawa, Berglund:1995yu}, but has so far been lacking for the $\a$-corrected Strominger system. Due to the non-K\"ahlerness of the compact space $X_6$, conventional methods seem hard to apply, and the non-trivial Bianchi identity, involving all parameters of the theory, also severely complicates such an approach. Partial attempts have been made to get a clue of the form of the moduli space, often excluding sectors such as the gauge bundle and the Bianchi Identity \cite{Becker:2005nb, Lapan2006, Adams:2009zg, Melnikov:2011ez}. However, this often leads to nonsensical infinite-dimensional results, as not all the conditions for a compactification are included. Knowledge of the full $\a$-corrected moduli space, even the infinitesimal moduli space, has therefore remained an open problem for the past 25 years, even in Calabi-Yau compactifications.

\subsubsection*{Moduli of the Strominger System}
We shall make some progress in this direction by deriving the infinitesimal moduli of the theory, or equivalently the tangent space $T\M$ of the moduli space. We relate this to certain cohomologies of holomorphic bundles over $X_6$. This then constitutes the infinitesimal spectrum of the lower energy four-dimensional theory, and this is a first step in the direction of deriving this theory. We perform this calculation in chapter \ref{ch:SS}. Of course, in order to fully know what the four-dimensional theory is, one also needs knowledge of the superpotential K\"ahler potential, Yukawa couplings, etc. In order to find these, one needs to do the dimensional reduction of the the ten-dimensional theory \cite{McOrist2014, Candelas2014}, which is beyond the scope of this thesis. 

To be more specific, we will see that the Strominger system can be put in terms of a holomorphic structure $\bD$ on a bundle $\Q$ over the complex base $X_6$. What is important for this structure is that {\it it is holomorphic, i.e. $\bD^2=0$, if and only if the relevant Bianchi Identities are satisfied.} Indeed, as we shall see, the heterotic Bianchi identity is naturally included in $\bD$. Moreover, the non-K\"ahler hermitian form is also given as part of this holomorphic structure, circumventing the issues concerning the non-K\"ahlerness of $X_6$. The spectrum is then computed as a cohomology
\begin{equation*}
T\M=H^{(0,1)}_{\bD}(\Q)\:.
\end{equation*}
The bundle $\Q$ is defined by a series of extensions, which in turn means that $H^{(0,1)}_{\bD}(\Q)$ can be computed by the machinery of long exact sequences in cohomology. We therefore find that $H^{(0,1)}_{\bD}(\Q)$ is given as a subset of the usual cohomologies $H^{(0,1)}(TX)$, $H^{(0,1)}(\End(V))$, $H^{(0,1)}(T^*X)$ and $H^{(0,1)}(\End(TX))$, counting complex structure moduli, bundle moduli, hermitian moduli, and connection moduli respectively. 

We note that putting the system in terms of a holomorphic might prove useful in further determining what the full moduli space $\M$ and the corresponding four-dimensional theory is. For one, obstructions to the infinitesimal deformations should be computed by the second cohomology $H^{(0,2)}_{\bD}(\Q)$, and these are expected to give rise to a non-trivial superpotential in the four-dimensional theory  \cite{Berglund:1995yu, Anderson:2011ty}. Secondly, K\"ahler metrics on moduli spaces of complex bundles have long been of interest to mathematicians, though usually with $X_6$ K\"ahler, see e.g. \cite{kobayashi1987differential, kim1987moduli, schumacher1993moduli, huybrechts2010geometry}. These results can potentially be extended to the Strominger system, viewed as a holomorphic structure $\bD$. However, in order to take full advantage of the mathematical framework, a lot of results would need to be extended to the non-K\"ahler case. This is left for future work.

\subsubsection*{Connections and Field Redefinitions}
In deriving the spectrum, we also encounter an ambiguity relating to a choice of connection $\nabla$ on the tangent bundle $TX$. More specifically, we encounter ``moduli", $H^{(0,1)}(\End(TX))$,  relating to deformations of this connection. This connection is a priori is a function of the other fields, and there should therefore not be such moduli related to this connection. This apparent discrepancy deserves more attention, and we devote chapter \ref{ch:connredef} to an explanation of these moduli. Specifically, we find that they correspond to field redefinitions. More precisely, they parametrize {\it how $\nabla$ depends on the other fields}. We also consider the higher order theory in this chapter, showing that the structure we find at $\OO(\a)$ survives to $\OO(\a^2)$ as well without any major modifications. 
\\\\
Part \ref{part:4d} of the thesis is based on the following papers:

\begin{itemize}

\item X. de la Ossa and E.E. Svanes, \it Holomorphic Bundles and the Moduli Space of $N=1$ Supersymmetric Heterotic Compactifications\rm, (2014), arXiv:1402.1725, Published in: JHEP, 1410, 123.

\item X. de la Ossa and E.E. Svanes, \it Connections, Field Redefinitions and Heterotic Supergravity\rm, (2014), arXiv:1409.3347.

\item X. de la Ossa and E.E. Svanes, \it Generalised Hermitian Yang-Mills Connections and the Strominger System\rm, (2014), in preparation.

\end{itemize}

\subsection{Non-Maximally Symmetric Compactifications and Moduli Stabilisation}
Having discussed $\a$-corrected maximally symmetric compactifications, we turn in Part~\ref{part:3d} to discuss moduli stabilisation and non-maximally symmetric compactifications. We note that the focus of this part of the thesis is \it moduli stabilisation\rm\:, more specifically stabilisation of geometric moduli corresponding to the compact space $X_6$. We will therefore ignore bundle moduli in this part of the thesis, apart from topological consistency checks. To include the bundle moduli, one would have to do an analysis similar to Part~\ref{part:4d} of the thesis. Moreover, the bundle would have to be included in the dimensional reduction. It is beyond the scope of this thesis to do so. 

As noted above, one of the main challenges of heterotic string theory is that of moduli stabilisation. In contrast to type II theory, where Ramound-Ramound (RR) fluxes are available, the heterotic string only has NS flux. This can be used in a similar manner to type II theories to stabilise flux through a Gukov-Vafa-Witten type superpotential \cite{Gukov:1999ya}. However, as we will explicitly see in chapter \ref{ch:SS} (see also chapter \ref{ch:CY}, Theorem \ref{tm:nogo}), a maximally symmetric space-time forces $H$ to be related to the torsion of the internal space, which leads us back to the Strominger system, where the geometry is necessarily non-K\"ahler. Although in principle not a problem, and we make modest progress with this case in Part~\ref{part:4d} of this thesis, we would also like to study scenarios where the internal geometry can be K\"ahler, without sacrificing flux. This forces us to consider non-maximally symmetric compactifications.

We also want to allow for more exotic types of geometries, where the internal space has torsion, which can be used in a similar manner as flux to stabilise moduli. As we shall see, this also leads us to consider compactifications to non-maximally symmetric space-times. For concreteness, we focus on the case $k=3$, where then $X_7$ has a $G_2$-structure with one non-compact leg, i.e. domain walls. This leads the compact part of $X_7$, $X_6$, to have the geometry of {\it half-flat} manifolds, to be defined in chapter \ref{ch:nonmaxcomp}.

\subsubsection*{Torsional Compactifications and Moduli Stabilisation}
In chapter \ref{ch:HFNK} we will study compactifications on a particular type of half-flat manifolds. Namely that of coset geometries. These are geometries of the type $G/H$, where $G$ is a Lie-group, and $H$ is a subgroup of $G$. Studying cosets have several advantages. In particular, their geometries are usually quite constrained, which means that there are less moduli to stabilise. In particular, in the examples we consider, there are no complex structure moduli. Secondly, their geometry has a nice description in terms of $G$-invariant forms. With this, it turns out that we can write a lot of expressions, and do a lot of computations, explicitly, which is not the case for Calabi-Yau's. The upshot is that we don't need to resort to hard theorems like e.g. the Donaldson-Uhlenbeck-Yau theorem \cite{0529.53018, MR997570}. In this thesis we focus on the coset $SU(3)/U(1)^2$, as this turns out to be most phenomenologically relevant. Other cosets can however also be considered, see e.g. \cite{Klaput:2011mz, CyrilThesis, Klaput:2012vv}.

We will consider the gravitational sector of the theory to $\OO(\a)$, i.e. the geometry of the internal space $X_6$. We will see that, through the heterotic Bianchi identity, a non-trivial flux is induced at this order, which in turn makes it possible to stabilise all geometric moduli perturbatively. By including non-perturbative effects, it is also possible to stabilise the axio-dilaton and arrive at a maximally symmetric vacuum in a heterotic KKLT type scenario \cite{Kachru:2003aw}. It should be noted that domain wall compactifications of the type we consider \cite{Gauntlett:2001ur, Gauntlett:2002sc, Gran:2005wf, Papadopoulos:2009br}, in particular coset compactifications\cite{zoupanos2, cardoso2003non, micu2004heterotic, chatzistavrakidis2009reducing, Gurrieri:2004dt, Gurrieri:2007jg, chatzistavrakidis2009dimensional, Lukas:2010mf, chatzistavrakidis2012nearly, Gray:2012md, 1998math.ph...7026O, castellani2001g, Kapetanakis19924, 0264-9381-5-1-011, Roberto19901, Castellani1984394, Lust:1985be, KashaniPoor:2007tr}, have appeared in the literature before, and our work is merely a continuation of an ongoing~story.

\subsubsection*{Calabi-Yau Compactifications with Flux}
In chapter \ref{ch:CY} we return to the case of K\"ahler geometries, more specifically Calabi-Yau's. 
As noted above, the presence of NS flux in maximally symmetric heterotic compactifications leads to internal manifolds which are complex but non-K\"ahler. This departure from Calabi-Yau manifolds, while not a problem in principle, constitutes a serious practical disadvantage. Compared to the significant body of knowledge on Calabi-Yau manifolds, not much is known about the required non-K\"ahler spaces. We attemt to make progress in this direction in Part~\ref{part:4d} of the thesis, but the construction of realistic particle physics models based on such spaces is still a long way off. It appears then, at present, that flux is of little practical use in the context of realistic heterotic model building.

In chapter \ref{ch:CY}, we wish to show how this conclusion can be avoided and to show that heterotic Calabi-Yau compactifications can indeed be consistent with the presence of NS flux. This requires dropping one of the assumptions which led to Strominger's result, namely that of maximal symmetry of the four-dimensional space-time. Instead, we will assume that four-dimensional space-time again has the structure of a domain wall, with a $2+1$-dimensional maximally symmetric world-volume and one transverse direction. As we will show, any Calabi-Yau manifold and any harmonic NS flux on this manifold can be combined with such a domain wall to form a full ten-dimensional solution of the heterotic string, at least to lowest order in $\alpha'$. This vacuum could however potentially be lifted by the use of non-perturbative effects, as in chapter \ref{ch:HFNK}. We leave this to future work. 
\\\\
Part \ref{part:3d} is based on the following papers:

\begin{itemize}

\item M. Klaput, A. Lukas, C. Matti, E. E. Svanes, \it Moduli Stabilising in Heterotic Nearly-KŠhler Compactifications\rm, (2012), arXiv:1210.5933, Published in: JHEP, 1301, 015.

\item M. Klaput, A. Lukas, E. E. Svanes, \it Heterotic Calabi-Yau Compactifications with Flux\rm, (2013), arXiv:1305.0594, Published in: JHEP 1309, 034.

\end{itemize}

\subsubsection*{Discussions and Conclusion}
We review the thesis and conclude in Part~\ref{part:concl}, summarising the most important results. We also have a discussion section at the end of each chapter, where we outline future directions and work in progress. At the end of each chapter, we also include appendices where technical details are laid out at the readers convenience.  






\section{First Order Heterotic Supergravity}

\label{sec:FirstOrder}

We now take a moment to review heterotic supergravity at first order in $\a$, as first given in \cite{Bergshoeff:1988nn, Bergshoeff1989439}. We write down the bosonic action and the corresponding supersymmetry transformations. Herby, we set up some notation which will be important for the remainder of the thesis. We also comment briefly on an ambiguity for a connection choice that appears in the action, leaving an extensive review of this issue to Chapter \ref{ch:connredef}.

\subsection{Action, Field Content, and Supersymmetry}
Let us begin by recalling the bosonic part of the action at this order \cite{Hull:1987pc, Bergshoeff1989439, Bergshoeff:1988nn}\footnote{For brevity, we omit to write the corresponding formulas for the fermionic sector for the most part of this thesis.}
\begin{align}
S=&\int_{M_{10}}e^{-2\phi}\Big[*\mathcal{R}+4\vert\d\phi\vert^2-\frac{1}{2}\vert H\vert^2-\frac{\alpha'}{4}(\tr\vert F\vert^2-\tr\vert R\vert^2)\Big]+\OO(\a^2)\:.
\label{eq:action}
\end{align}
$\mathcal{R}$ is the Einstein-Hilbert curvature term of the metric $g$, and the Hodge-dual $*$ is also defined with respect to this metric. We use the notation
\begin{equation*}
\vert C\vert^2=C\wedge *C\:,
\end{equation*}
for $C\in\Omega^p(M_{10})$, where
\begin{equation*}
C=\frac{1}{p!}C_{n_1\cdots n_p}\d x^{n_1\cdots n_p}\:.
\end{equation*}
Furthermore $F$ and $R$ are curvatures given by
\begin{equation*}
F=\d A+A\wedge A\:,\;\;\;\;R=\d\Theta+\Theta\wedge\Theta\:,
\end{equation*}
where $A\in\Omega^1(\End(V))$ is the connection of an $E_8\times E_8$ gauge bundle, and $\Theta\in\Omega^1(\End(TX))$ is the connection one-form of a tangent bundle connection $\nabla$. $B$ is a two-form field, with field strength given by the NS three-form flux,
\begin{equation}
\label{eq:anomalycancellation}
H=\d B+\CS\:,
\end{equation}
where
\begin{equation}
\label{eq:CS}
\CS=\frac{\a}{4}(\omega_{CS}^A-\omega^\nabla_{CS})\:.
\end{equation}

The $\omega_{CS}$'s are Chern-Simons three-forms of the gauge-connection $A$, and the tangent bundle connection $\nabla$,
\begin{align*}
\omega_{CS}^A&=\tr(A\wedge\d A+\frac{2}{3}A\wedge A\wedge A)\\
\omega_{CS}^\nabla&=\tr(\Theta\wedge\d\Theta+\frac{2}{3}\Theta\wedge\Theta\wedge\Theta)\:.
\end{align*}
This field strength is gauge invariant, provided the $B$-field transforms as \cite{Green1984117}
\begin{equation}
\label{eq:gaugeB}
\delta B=-\frac{\a}{4}(\tr \d A\epsilon-\tr \d\Theta\eta)\:,
\end{equation}
under gauge transformations $\d_A\epsilon$ of $A$, and $\d_\Theta\eta$ of $\Theta$, where $\epsilon\in\Omega^0(\End(V))$ and $\eta\in\Omega^0(\End(TX))$ respectively. These gauge transformations are in addition to the usual gauge transformation 
\begin{equation}
\label{eq:gaugeB0}
B\rightarrow B+\d\lambda
\end{equation}
of the $B$-field. Here
\begin{equation}
\label{eq:covderivA}
\d_A=\d+A
\end{equation}
is the covariant derivative on the bundle, with a similar expression for $\d_\Theta$. In particular, on endomorphism valued forms $\alpha\in\Omega^*(\End(V))$, this acts as
\begin{equation*}
\d_A\alpha=\d\alpha+[A,\alpha]\:,
\end{equation*}
where $[\:,\:]$ is the commutator on even forms, and anti-commutator on odd forms. Taking the exterior derivative of \eqref{eq:anomalycancellation}, we obtain the Bianchi identity
\begin{equation}
\label{eq:bianchi}
\d H=\frac{\a}{4}(\tr F\wedge F-\tr R\wedge R)\:.
\end{equation}


The fermonic fields of the theory are the gravitiono $\psi_I$, the dilatino $\lambda$ and the gaugino $\chi$. $N=1$ supersymmetry then imposes the following supersymmetry variations of these fields
\begin{align}
\label{eq:O1spinorsusy1}
\delta\psi_M &=\nabla^+_M\epsilon=\Big(\nabla^{\hbox{\tiny{LC}}}_M+\frac{1}{8}\H_M\Big)\epsilon+\OO(\a^2)\\
\label{eq:O1spinorsusy2}
\delta\lambda&=\Big(\slashed\nabla^{\hbox{\tiny{LC}}}\phi+\frac{1}{12}\H\Big)\epsilon+\OO(\a^2)\\
\label{eq:O1spinorsusy3}
\delta\chi&=-\frac{1}{2}F_{MN}\Gamma^{MN}\epsilon+\OO(\a),
\end{align}
where $\epsilon$ is a ten-dimensional Majorana-Weyl spinor parametrising supersymmetry, $\nabla^{\hbox{\tiny{LC}}}$ denotes the Levi-Civita connection with respect to the metric, and we have defined $\H_M=H_{MNP}\Gamma^{NP}$ and $\H=H_{MNP}\Gamma^{MNP}$. The $\Gamma^M$ are ten-dimensional Dirac gamma-matrices, and we use large roman letters $\{M,N,..\}$ to denote ten-dimensional indices.

Note that the transformation for the gauge field has a reduction in the order of $\a$. This is because the gauge field always appears with an extra factor of $\a$ in the action. Hence the gauge sector decouples in the limit $\a\rightarrow0$. In order to have a supersymmetry invariant action at $\OO(\a)$, we therefore only need to specify the gaugino transformation modulo $\OO(\a)$-terms. Supersymmetry for a given solution requires that the variations \eqref{eq:O1spinorsusy1}-\eqref{eq:O1spinorsusy3} are set to zero.

What the connection $\nabla$ on $TX$ is, is subtle. Firstly it cannot be a dynamical field, as there are no modes in the corresponding string theory corresponding to this. Hence, $\nabla$ must depend on the other fields of the theory in some particular way. This dependence is is forced upon us once the full supergravity action, whose bosonic part is \eqref{eq:action}, and supersymmetry transformations are specified. Indeed, the condition of a supersymmetry invariant supergravity action under the supersymmetry transformations  \eqref{eq:O1spinorsusy1}-\eqref{eq:O1spinorsusy3}, reduces this choice of connection to the particular choice of the Hull connection $\nabla^-$\cite{Hull1986357, Bergshoeff:1988nn, Bergshoeff1989439}
\begin{equation}
\label{eq:Hullconn}
{{\nabla^-}_{MN}}^{P}={{\nabla^{\hbox{\tiny{LC}}}}_{MN}}^{P}-\frac{1}{2}{H_{MN}}^{P}\:,
\end{equation}
where $\nabla^{\hbox{\tiny{LC}}}$ is the Levi-Civita connection. It should be noted that by deforming the supersymmetry transformations appropriately, this choice of connection can be relaxed. We will see precisely how this works in chapter \ref{ch:connredef}.

\subsection{Equations of Motion}
The action \eqref{eq:action} gives rise to the following set of equations of motion\footnote{In deriving these equations, one relies on a lemma by Bergshoeff and de Roo \cite{Bergshoeff1989439}, stating that the variation of the action with respect to $\nabla^-$ is of $\OO(\a^2)$, see also \cite{Andriot:2011iw}.}
\begin{align}
\label{eq:eom1}
\R-4(\nabla\phi)^2+\nabla^2\phi-\frac{1}{2}\vert H\vert^2-\frac{\a}{4}\Big(\tr\vert F\vert^2-\tr\vert R^{-2}\vert\Big)&=\OO(\a^2)\\
\label{eq:eom2}
\R_{MN}+2\nabla_M\nabla_N\phi-\frac{1}{4}H_{MPQ}{H_N}^{PQ}&\notag\\
-\frac{\a}{4}\Big(\tr\: F_{MP}{F_N}^P-\tr\: {R^-}_{MP}{R^-_N}^P\Big)&=\OO(\a^2)\\
\label{eq:eom3}
\nabla^N\Big(e^{-2\phi}H_{MNP}\Big)&=\OO(\a^2)\\
\label{eq:eom4}
e^{2\phi}\d_A(e^{-2\phi}*F)-F\wedge*H&=0\:+\OO(\a)\:.
\end{align}
Note again the reduction of orders for the gauge field equation of motion. 
The equations of motion may also be derived form the sigma-model world-sheet perspective, by demanding that the theory remains conformal, i.e. the beta functions vanish. 

We note for completeness that the supersymmetry conditions \eqref{eq:O1spinorsusy1}-\eqref{eq:O1spinorsusy3} implies the equations of motion, if and only if the connection $\nabla$ satisfies the condition \cite{Ivanov:2009rh, Martelli:2010jx}
\begin{equation}
\label{eq:instanton10d}
R_{MN}\Gamma^{MN}\epsilon=\OO(\a)\:,
\end{equation}
which we refer to as the instanton condition. Of course, in order to have a consistent theory with compatible solutions between the equations of motion and supersymmetry, we must check that the Hull connection does satisfy \eqref{eq:instanton10d} to the correct order. This is done in Appendix \ref{app:Hull}.

\part{Moduli in Minkowski Compactifications}
\label{part:4d}

\chapter{Moduli of the Strominger System}
\label{ch:SS}
We begin this part of the thesis by considering maximally symmetric four-dimensional compactifications of the heterotic string presented in the last chapter. We consider general supersymmetric compactifications at $\OO(\a)$, where the internal space need no longer be K\"ahler. The conditions for supersymmetric solutions of this theory where already worked out in the 80's by Strominger and Hull~\cite{Strominger:1986uh, Hull:1986kz}. However, working out what the four-dimensional supergravity looks like has been an elusive problem. Indeed, it is not even known what the massless spectrum of the theory is.\footnote{Progress in this direction has been made in recent years, see e.g. \cite{Lapan2006, Becker:2006xp, Melnikov:2011ez}.}

In this chapter, we attempt to make some modest progress in this direction. In particular, we will work out the massless spectrum of the theory by means of cohomologies of holomorphic bundles over the compact space $X$. We will see that the true spectrum, or moduli, are a subset of the usual cohomologies that appear in Calabi-Yau compactifications, as might be expected. We will only consider infinitesimal deformations in this thesis, leaving the study of obstructions to future work. Neither will we perform the full dimensional reduction, which is needed to find the full lower energy $N=1$ four-dimensional theory, equipped with K\"ahler potential and superpotential. We discuss some of these issues in the discussion section of the chapter, in relation to future work. The work in this chapter is based on \cite{delaOssa:2014cia}.


Before we begin, we note that the theory is very similar to the $\OO(\a^2)$-theory, which we present in the next chapter. We also remark that similar work to that presented in this chapter appeared concurrently with this work in \cite{Anderson:2014xha}, where the problem was addressed from a slightly different point of view.



\section{Introduction}
\label{sec:introSS}
We begin in section \ref{sec:hetcomp} with a review of  the geometry of the Strominger system and set up the notation.\footnote{For more details and extensive reviews, see also \cite{Ivanov:2000fg, 2001CQGra..18.1089I, Gillard:2003jh, Gauntlett:2003cy, Cardoso20035}} The heterotic compactification we are interested in consists of a pair $(X,V)$ where $X$ is a six-dimensional Riemannian spin manifold, equipped with what we call a \it heterotic\rm\:$SU(3)$-structure. In particular, $X$ is complex and conformally balanced.

We also have a vector bundle $V$ on $X$, with a holomorphic connection that satisfies the slope zero Yang-Mills condition,
\begin{equation*}
\omega\lrcorner F=0\:.
\end{equation*}
Together, these conditions (holomorphic and Yang-Mills of slope zero) imply that the connection is an {\it instanton}. Additionally, there is a holomorphic connection $\nabla$ on the tangent bundle with curvature $R $ which also satisfies this Yang-Mills condition. As discussed at length in the next chapter, this instanton connection is needed to ensure that supersymmetric solutions which satisfy the anomaly cancelation condition, also solve the equations of motion. This connection depends on the other fields of the theory. Exactly what this dependence is comes down to how one defines the various fields involved. Here, we will take the connection to be unspecified, and treat it more as a dynamical field\footnote{Recall that it should be the Hull connection \eqref{eq:Hullconn} for the usual  supersymmetry transformations \eqref{eq:O1spinorsusy1}-\eqref{eq:O1spinorsusy3}.}. As we will see, this is needed to be able to implement the Bianchi identity into the deformation theory.  The price we pay is that in doing so we get extra ``moduli" associated to this connection, which will have to be given the correct physical interpretation. We discuss this in detail in the next chapter.


\subsection{Holomorphic Structures and Moduli.}
We begin subsection \ref{sec:infsu3} by discussing the deformations of the relevant $SU(3)$-structure of $X$.  Such a deformation of the $SU(3)$-structure corresponds to simultaneous deformations of the complex structure, together with those of the hermitian structure, such that the heterotic $SU(3)$-structure is preserved. The deformations of the complex structure are easily described as the complex structure does not depend on the metric or the hermitian structure on $X$.  The analysis is similar to some extent to that for Calabi--Yau manifolds. Deformations of the hermitian structure satisfying the conformally balanced condition is more difficult as the hermitian structure also deforms with the complex structure. One of the problems is that the moduli space of deformations of the hermitian structure seems to be infinite-dimensional \cite{Becker:2006xp}.  Moreover, the conformally balanced condition is not stable under deformations of the complex structure~\cite{MR1107661, MR1137099, 0735.32009, 0793.53068}, in contrast with the stability of the K\"ahler condition. It turns out that by including the equations derived from the deformation of the anomaly cancelation condition, we find a finite-dimensional space for these parameters.  

We investigate the moduli of the Strominger system using the mathematical tools available in  deformation theory of holomorphic structures. We use the machinery developed by Atiyah \cite{MR0086359}.\footnote{This was also  used in \cite{Anderson:2010mh} where the combined bundle and complex structure moduli where studied in the Calabi-Yau case.}  We construct a holomorphic structure $\bD$ on an extension bundle $\Q$ which is an extension by the holomorphic cotangent bundle $T^*X$ of a bundle $E$ given by the short exact sequence
\begin{equation}
\label{eq:sesIntro}
0 \rightarrow T^*X \rightarrow \Q \rightarrow E \rightarrow 0\:,
\end{equation}
with an extension class  
\begin{equation*}
[\H]\in \textrm{Ext}^1(E,T^*X) = H^{(0,1)}(E^*\otimes T^*X)\:,
\end{equation*}
which precisely enforces the Bianchi identity. Here
\begin{equation*}
0\rightarrow\End(V)\oplus\End(TX)\rightarrow E\rightarrow TX\rightarrow0
\end{equation*}
is again defined by an extension sequence. We compute infinitesimal deformations of the holomorphic structure by computing long exact sequences in cohomology associated to the short exact sequences above.  We proceed in a stepwise manner.

In subsection \ref{subsec:defsV} we study in detail the deformations  of the holomorphic structure of the bundle $V$ on $X$. We generalise the work in~\cite{Anderson:2010mh, Anderson:2011ty} for the case in which $X$ is a Calabi-Yau manifold to the case of the more general non-K\"ahler manifolds at hand. We also extend these results in section \ref{subsec:defsTX} to include the deformations of the instanton connection $\nabla$ on $TX$. We then obtain simultaneous deformations of the holomorphic structures on the bundles and of the complex structure of $X$. 

The full moduli of the Strominger system is given in \ref{sec:defD} where we state the main result of the chapter. As mentioned above, we define $\Q$ by extending $E$ by the holomorphic cotangent bundle $T^*X$, given by \eqref{eq:sesIntro},
and define a holomorphic structure $\bD$  on $\Q$
\begin{equation*}
\bD:\Omega^{(0,q)}(\Q)\rightarrow\Omega^{(0,q+1)}(\Q)\:.
\end{equation*}
The operator $\bD$ has a rather lengthy definition, which we leave to section \ref{subsec:Anomaly}. What is important is that it includes the Bianchi identity. Moreover $\bD^2 = 0$ if and only if all the Bianchi identities involved are satisfied. We also note that the definition of $\bar D$ is such that $\Q$ is self-dual as a holomorphic bundle.

We shall see that most of the Strominger system can then be put in terms of $\bD$, except the Yang-Mills conditions. As we will see in section \ref{sec:HYM}, these conditions can however be taken care of by imposing that $\bD$ be an instanton. Naively, this might seem to impose extra conditions on the geometry. However, for a conformally balanced space $X$, to the given order in $\alpha'$, we will see that this is not the case. Rather, the condition imposes a constraint on which extension classes we choose. To be more precise, we find that we must choose the extension classes harmonic with respect to some Laplacian. This is shown in appendix \ref{app:HYMconfbal}. We discuss further implications of this in the discussion section \ref{sec:discussSS}.

The infinitesimal deformations of this holomorphic structure correspond to the elements in the cohomology group $H^{(0,1)}_{\bD}(\Q)$, that is, the tangent space $T\M$ to the moduli space $\cal M$ is given by 
\begin{equation*}
T\M\cong H^{(0,1)}_{\bD}(\Q)\cong  \Big[H^{(0,1)}_{\bp}(T^*X)\Big/\textrm{Im}(\H)\Big]\oplus\ker(\H)\:,
\end{equation*}
and we show that this is the infinitesimal moduli space of the heterotic compactifications. The subgroup $\ker(\H)$ is contained in the moduli space of deformations of $E$, that is the simultaneous variations of complex structure on X and holomorphic structures on the bundles, and it is in fact the kernel of a map $\H$ that corresponds to the analogue of the Atiyah class  for the  short exact extension sequence defining $\Q$. This is nothing but the obvious fact that the anomaly cancellation condition poses a non-trivial extra constraint on the moduli.  We also argue that
the (complexified) hermitian parameters belong to the group
\begin{equation*}
H^{(0,1)}_{\bp}(T^*X)\:,
\end{equation*}
like in the Calabi--Yau case.  These should be modded out by 
\[\textrm{Im}(\H)\cong\{\tr(F\,\alpha)\:\vert\:\alpha\in H^0(\End(V))\}\subset H^{(1,1)}(X),\] 
which enforces the Yang-Mills condition on $V$. 

As discussed above, we appear to find in our setup extra moduli, which correspond to deformations of the holomorphic structure of the tangent bundle given by $\nabla$ but which leave $X$ fixed.  It seems that we need this extra structure to be able to enforce the anomaly cancelation condition. Note that this structure, in which the connection $\nabla$ behaves as another dynamical field, is also natural when one considers the heterotic theory to higher orders in the $\alpha'$ expansion, as we discuss in the next chapter. There are however reasons as to why we do not want these extra fields in the low energy field theory, namely, that the connection $\nabla$ is not independent of the geometry of $X$.  We will discuss these points in more detail in Chapter \ref{ch:connredef}. We also note that the mathematical structure we have in this chapter is very similar to that of \cite{2013arXiv1304.4294G, 2013arXiv1308.5159B} where a generalised geometry for the heterotic supergravity is discussed. See also \cite{Coimbra:2014qaa}, where a generalized geometry with the Hull connection $\nabla^-$ was constructed, by a reduction of the generalized structure group. Additionally, these structures have also made an appearence in the context of double field theory \cite{Hohm:2011ex, Hohm:2013jaa, Bedoya:2014pma, Hohm:2014eba, Hohm:2014xsa}.


In the discussion section, we also discuss the corresponding four-dimensional theory, and future directions relating to this. We consider the conjectured Gukov-Vafa-Witten superpotential, and how the above mentioned kernel structure should correspond to F-terms of such a potential. Also, higher order obstructions should give rise to a superpotential in the low energy theory. There is also a question with regards to what the K\"ahler potential of the four-dimensional theory is. Knowledge of this would require a full dimensional reduction of the theory \cite{McOrist2014, Candelas2014}, which we will not perform here. However, putting the system in terms of a holomorphic structure may serve to help in this direction, as moduli spaces of holomorphic structures have been studied extensively in the mathematics literature before, see e.g. \cite{kobayashi1987differential, kim1987moduli, schumacher1993moduli, huybrechts2010geometry}.


We also include an additional appendix \ref{app:Primitive}, where we show that deformations of the Yang-Mills conditions impose no further constraint, provided the bundle is stable with traceless endomorphisms. This is a bit technical, and is therefore left to an appendix. Its generalisation to the poly-stable case is however straight forward, and is discussed in section \ref{sec:polystabH} in the context of deformations of $\bD$ and the Strominger system.

\section{$SU(3)$-Structures and the Strominger System}
\label{sec:hetcomp}

We now review the results of Strominger, Hull and de\,Wit\,{\it et\,al} \cite{Strominger:1986uh, Hull:1986kz, deWit:1986xg}. The requirements that the four-dimensional space--time is maximally symmetric, that $N=1$ supersymmetry is preserved in four-dimensions, and that  an equation canceling anomalies is satisfied, pose strong constraints on the possible geometries that are allowed as solutions of the equations of motion.   

A compactification to four-dimensions is obtained by considering a ten-dimensional space-time which is a local product 
\begin{equation}
\label{eq:4dcomp}
M_{10}=M_4\times X\:,
\end{equation}
of a maximally symmetric four-dimensional flat space-time $M_4$, and a six-dimensional manifold $X$.  We use latin indices $m,n$, etc, to denote six-dimensional indices on the tangent space $TX$ of $X$.  Also, a consequence of imposing the constraints of  $N=1$ supersymmetry in four-dimensional space-time is that $M_4$ must be Minkowski. So $X$ is a real six-dimensional manifold with metric $g$, and on $X$ there is a vector bundle $V$ with curvature $F$ which takes values in $\End(V)$.  We now discuss the constraints on the geometry of $(X,V)$. 

\subsection{Constraints on the Geometry of $X$.}
\label{subsec:Xgeom}

Under the compactification \eqref{eq:4dcomp}, the ten-dimensional spinor $\epsilon$ decomposes as
\begin{equation}
\label{eq:spinordecomp}
\epsilon=\kappa\otimes\eta\:,
\end{equation}
where $\eta$ is a nowhere vanishing globally defined complex spinor on $X$, and $\kappa$ is the four-dimensional remainder. This means that $X$ must be a spin manifold and that the structure group of  $X$ is reduced to a subgroup $SU(3)\subset \text{Spin}(6)$.  An $SU(3)$-structure on $X$ \cite{0444.53032, 1024.53018, LopesCardoso:2002hd}  is defined by a triple $(X,\omega,\Psi)$, where $\omega$ is a non-degenerate globally well defined real 2-form, and $\Psi$ is a 
no-where vanishing globally well defined complex 3-form. The forms $\Psi$ and $\omega$ satisfy
\begin{equation}
 \omega\wedge\Psi = 0~.\label{eq:compatibility}
 \end{equation}
In fact, there is an $SU(3)$-structure on $X$ determined entirely by the spinor $\eta$.  The two non-degenerate forms, $\omega$ and $\Psi$, can be constructed as bilinears of  $\eta$
\begin{equation*}
\begin{split}
\omega_{mn} &= -i \, \eta^\dagger \,\gamma_{mn}\, \eta\\
\Psi_{mnp} &= ~~\eta^T\,\gamma_{mnp}\,\eta~,
\end{split}
\end{equation*}
where $\gamma_m$ are the Dirac gamma-matrices, that satisfy the Clifford algebra in six-dimensions
\[ \{\gamma_m,\gamma_n\} = 2\, g_{mn}~,\]
and $\gamma_{m_1m_2\cdots m_p}$ denotes the totally antisymmetric product of $p$ gamma matrices
\[\gamma_{m_1m_2\cdots m_p} = \gamma_{[m_1}\gamma_{m_2}\cdots\gamma_{m_p]}~.\] 
Using Fierz rearrangement, one can prove that these satisfy  \eqref{eq:compatibility}.  One can also prove that there is a unique (up to a constant) invariant volume form on $X$ which satisfies
the compatibility condition
\begin{equation}
\d{\rm vol}_X =  \frac{1}{6}\, \omega\wedge\omega\wedge\omega =
\frac{i}{||\Psi||^2}\, \Psi\wedge\bar\Psi~,\label{eq:volume}
\end{equation} 
where
\[ ||\Psi||^2 = \frac{1}{3!}\, \Psi^{mnp}\, \Psi_{mnp}~.\]

The (real part of the) complex 3-form $\Psi$ determines a unique almost complex structure $J$\cite{MR1871001} such that $\Psi$ is a $(3,0)$-form
with respect to $J$.  In fact, 
\begin{equation}
\label{eq:complexstr}
{J_m}^n=\frac{{I_m}^n}{\sqrt{-\frac{1}{6}\tr I^2}},
\end{equation}
where the tangent bundle endomorphism $I$ is given by
\begin{equation}
{I_m}^n=(\textrm{Re}\Psi)_{mpq}(\textrm{Re}\Psi)_{rst}\,\epsilon^{npqrst}.\label{eq:CSnorm}
\end{equation}
With the normalization in \eqref{eq:complexstr},  it is not too difficult to prove that $J^2=-{\bf 1}$.  Note also that a change of scale $\Psi\to\lambda\Psi~,\lambda\in\IC^*$, defines the same complex structure $J$. 
With respect to $J$, the real two form $\omega$ is type $(1,1)$ due to \eqref{eq:compatibility}.  Moreover, $\omega$ is an almost hermitian form
\begin{equation}
\omega(X,Y) =  g(JX, Y)~,\quad \forall X, Y\in TX~.
\end{equation}

Therefore 6-dimensional manifolds with an $SU(3)$-structure are almost hermitian manifolds with trivial canonical bundle. We are not done however, as the preservation of  supersymmetry  also imposes differential conditions on $\omega$ and $\Psi$. 

Preservation of supersymmetry requires that on $X$ there must exist a metric connection $\nabla^+$ with skew-symmetric torsion $T=H$, where $H$ is the 3-form flux. Hence, the connection reads
\begin{equation*}
 \nabla^+_{mn}{}^p = {\nabla^{LC}}_{mn}{}^{\, p} + \frac{1}{2}\, H_{mn}{}^p~,
 \end{equation*}
The vanishing of the supersymmetric variation of the gravitino requires that the spinor $\eta$ must be covariantly constant with respect to this connection
\[ \nabla^+_m\,\eta = \nabla^{LC}_m\, \eta + \frac{1}{8}\, H_{mnp}\, \gamma^{np}\, \eta = 0~.\]
This in turn means that the forms $\omega$ and $\Psi$ are covariantly constant
\[ \nabla^+\Psi = 0~,\qquad \nabla^+\omega= 0~.\]
The almost complex structure determined by $\Psi$ must also be covariantly constant
\[ \nabla^+ J = 0~,\]
that is $\nabla^+$ is a hermitian connection. 

On an almost complex manifold there is unique metric connection for which $J$ is parallel, which has totally antisymmetric torsion. This is precisely the connection $\nabla^+$.  This connection is called the {\it Bismut connection}\cite{0666.58042} in the mathematics literature, and its torsion is given by
\begin{equation}
T= H =\d_c\omega= \J^{-1}\d\J\:,
\end{equation}
where
\begin{equation*}
\J=\sum_{p,q}i^{p-q}\Pi^{p,q}\:,
\end{equation*}
where $\Pi^{p,q}$ are the projectors onto $(p,q)$-forms. It should be noted that $\J$ is uniquely determined by $J$ and vice versa.

Equations for the exterior derivative of $\omega$ and $\Psi$ can be obtained from the fact that both $\omega$ and $\Psi$ are covariantly constant.  One can decompose the exterior derivative of $\omega$ and $\Psi$ into irreducible representations of $SU(3)$.
For a general $SU(3)$-structure, we have 
\begin{align}
\label{eq:tordom}
\d\omega=-\frac{12}{\vert\vert\Psi\vert\vert^2}\,\textrm{Im}(W_0\bar\Psi)+W_1^\omega\wedge\omega+W_3\\
\label{eq:tordPsi}
\d\Psi=W_0\ \omega\wedge\omega+W_2\wedge\omega+\bar W_1^\Psi\wedge\Psi\:.
\end{align}
were $(W_0, W _1^\omega,W_1^\Psi,W_2,W_3)$ are the five {\it torsion classes}\cite{0444.53032, 1024.53018, LopesCardoso:2002hd, Gauntlett:2003cy}.
Here, $W_0$ is a complex function, $W_2$ is a primitive $(1,1)$-form, $W_3$ is  a real primitive 3-form of type $(1,2)+(2,1)$, $W_1^\omega$ is a real one-form, and $W_1^\Psi$ is a $(1,0)$-form. 
The one forms $W_1^\omega$ and $W_1^\Psi$ are known as the Lee-forms of $\omega$ and $\Psi$ respectively, and they are given by
\begin{align*}
W_1^\omega&=\frac{1}{2}\, \omega\lrcorner\d\omega\\
W_1^\Psi&=- \frac{1}{\vert\vert\Psi\vert\vert^2}\,\Psi\lrcorner\d\bar\Psi.
\end{align*}
The contraction operator $\lrcorner$ is defined as 
\[ \alpha\lrcorner \beta = \frac{1}{k! p!}\, \alpha^{m_1\cdots m_k}\, \beta_{m_1\cdots m_k n_1\cdots n_p}\,
\d x^{n_1}\wedge\cdots\wedge \d x^{n_p}
= (-1)^{p(d-p-k)}\,* (\alpha\wedge *\beta)~,\]
where $\alpha$ is a $k$-form, $\beta$ is a $k+p$-form, and $d$ is the dimension of the manifold, which in our case is $d=6$.

Using both the gravitino and dilatino Killing spinor equations, one can show that $W_0$ and $W_2$ vanish, which is equivalent to the vanishing of the Nijenhuis tensor, and thus the integrability of the complex structure $J$ (note that these are the only torsion classes that scale under $\Psi\to\lambda\Psi$). Hence $X$ {\it must be a complex manifold}. It follows that the complexified tangent bundle can be decomposed into holomorphic and anti-holomorphic coordinates $\{z^a,\bar z^{\bar b}\}$ respectively, and similar for the cotangent bundle. We denote the holomorphic tangent bundles as $TX$ and $T^*X$ for ease of notation.

Also, for the Bismut connection the Lee-forms are related by\cite{LopesCardoso:2002hd}
\[   {\rm Re}(W_1^\Psi) =  W_1^\omega~.\]
Therefore, the exterior derivatives of $\omega$ and $\Psi$ are
\begin{align*}
\d\omega&= {\rm Re}(W_1^\Psi)\wedge\omega+W_3\:\\
\d\Psi&=\bar W_1^\Psi\wedge\Psi\:.
\end{align*}
Note that $\bar\partial\, \bar W_1^\Psi = 0$ as can be seen by taking the exterior derivative of the second equation. 

The vanishing of the supersymmetric variation of the dilatino gives a further constraint: the Lee-form of $\omega$ must be exact with
\[ W_1^\omega = \d\phi~.\]
Therefore, the equation for $\d\omega$ gives
\begin{equation}
\d( e^{-2\phi}\, \omega\wedge\omega) = 0~,\label{eq:confbal}
\end{equation}
that is, the manifold $X$ is required to be {\it conformally balanced}. Furthermore, the equation for $\d\Psi$ means that $X$ must have a {\it holomorphically trivial canonical bundle}
\begin{equation}
 \d\Omega=\d(  e^{-2\phi}\, \Psi) = 0~,\label{eq:hol}
 \end{equation}
where $\Omega= e^{-2\phi}\, \Psi$. In this chapter, a complex conformally balanced manifold $X$ with a holomorphically trivial canonical bundle will be called a manifold with a {\it heterotic structure}.

Note also that on a complex manifold, the exterior derivative decomposes in the Dolbeault operators $\d=\p+\bp$, where $\p^2=\bp^2=0$. The flux/torsion $H$ can then be written as
\begin{equation}
H=\d_c\omega=i(\partial - \bar\partial)\,\omega~.\label{eq:Bismut}
\end{equation}
Recall next the Weil relation \cite{weil1958introduction}, which states that for a primitive $k$-form $B_k$
\begin{equation}
\label{eq:Weil}
*\frac{1}{r!}\omega^r\wedge B_k=\frac{i^{k(k+1)}}{(n-k-r)!}\omega^{n-k-r}\J(B_k)\:,
\end{equation}
where $n=3$ is the complex dimension of $X$, and $*$ is the Hodge-dual with respect to $g$. Using \eqref{eq:Weil}, we see that we may also write the flux as
\begin{equation}
\label{eq:flux2}
H=*e^{2\phi}\d(e^{-2\phi}\omega)\:.
\end{equation}
 
\subsection{Constraints on the Vector Bundle $V$.}
\label{subsec:Vgeom}
The vanishing of the supersymmetric variation of the gravitino imposes conditions on the bundle $V$.  More precisely, the curvature of the Yang-Mills connection satisfies
\begin{align}
F\wedge\Psi  &= 0~,\label{eq:holF} \\
\omega\lrcorner F &= 0\label{eq:instV}~. 
\end{align} 
The first condition is equivalent to $F^{(0,2)}= 0$, that is, $V$ {\it must be a holomorphic bundle}. In particular, the bundle admits a decomposition into holomorphic and anti-holomorphic coordinates. We will be sloppy, and refer to both the full bundle and the holomorphic part of the bundle as $\End(V)$. It is hopefully clear which is meant from the context.

The second equation states that the curvature $F$ must be {\it primitive} with repect to $\omega$.  Both conditions together mean that  $V$ {\it must admit a hermitian Yang-Mills connection}. Because the right hand side of equation \eqref{eq:instV} is zero, we say that the connection on  $V$ is an {\it instanton}. When $X$ is K\"ahler, the existence of such a connection on $V$ is guaranteed by the work of Donaldson\cite{0529.53018} and Uhlenbeck and Yau\cite{MR861491, MR997570}. Buchdahl \cite{MR939923} (for the case of complex surfaces) and,  Li and Yau\cite{MR915839} (for higher dimensional complex manifolds) generalised the Donaldson, Ulenbeck-Yau Theorem to non-K\"ahler manifolds with a {\it Gauduchon metric}.  A Gauduchon metric $\hat g$ on a hermitian $n$ dimensional manifold with corresponding hermitian form $\widehat\omega$ is a metric that satisfies
\begin{equation*}
\partial\bar\partial\,\widehat\omega{}^{n-1} = 0\:.
\end{equation*}
For $n=3$, and a manifold $X$ which has an heterotic $SU(3)$-structure, this means that $\widehat\omega = e^{-\phi}\omega$ is Gauduchon.

\begin{Theorem}[Buchdahl, Li-Yau]
\label{tm:LiYau}
Let $X$ be a compact hermitian manifold with a Gauduchon metric $\hat g$ and corresponding hermitian form $\widehat\omega$.  A poly-stable (with respect to the class $[\widehat\omega^2]$) holomorphic vector bundle $V$ over $X$ admits a unique hermitian Yang-Mills connection.
\end{Theorem}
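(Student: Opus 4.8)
The plan is to establish the Hitchin--Kobayashi correspondence in this non-K\"ahler setting by the continuity method of Uhlenbeck--Yau, adapted to a Gauduchon metric as in Li--Yau and Buchdahl. Throughout, a ``hermitian Yang--Mills connection'' means the Chern connection of a hermitian metric $h$ on $V$ whose curvature $F_h$ satisfies
\begin{equation*}
\widehat\omega\lrcorner F_h = \lambda\,\mathrm{Id}_V\:,
\end{equation*}
where $\lambda$ is the slope $\mu(V)$ (so the instanton condition $\widehat\omega\lrcorner F=0$ of the Strominger system is the slope-zero case). The first task is to make stability meaningful: for a coherent subsheaf $\F\subset V$ one defines the Gauduchon degree $\deg(\F)=\int_X c_1(\F,h')\wedge\widehat\omega^{\,n-1}$, and the point of the Gauduchon condition $\p\bp\,\widehat\omega^{\,n-1}=0$ is exactly that this integral is independent of the choice of metric $h'$ on $\F$, so that the slope $\mu(\F)=\deg(\F)/\mathrm{rk}(\F)$ and hence (poly)stability are well defined. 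I would first dispose of the easy implication: if $h$ solves the HYM equation, a Chern--Weil computation for any coherent subsheaf, in which the second fundamental form enters with a favourable sign, gives $\mu(\F)\le\mu(V)$, with equality forcing a holomorphic splitting; this yields poly-stability and also feeds into uniqueness.

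For the substantive direction (poly-stable $\Rightarrow$ existence) I would reduce to the stable case, since a poly-stable bundle is a direct sum of stable bundles of equal slope and the HYM metric is then the direct sum of the factor solutions. For $V$ stable, fix a background metric $h_0$ and seek $h=h_0 e^{s}$ with $s$ a self-adjoint endomorphism. Rather than solve the HYM equation directly, I would introduce the perturbed family
\begin{equation*}
\widehat\omega\lrcorner F_{h_\epsilon} - \lambda\,\mathrm{Id}_V = -\epsilon\, s_\epsilon\:,\qquad s_\epsilon=\log(h_0^{-1}h_\epsilon)\:,
\end{equation*}
whose linearisation is invertible for $\epsilon>0$ because the zeroth-order term makes the relevant elliptic operator coercive; solvability for all $\epsilon\in(0,1]$ then follows from the implicit function theorem together with a connectedness (openness--closedness) argument in $\epsilon$. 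The heart of the analysis is the uniform control of the solutions as $\epsilon\to0$: using a Donaldson-type energy functional and a maximum-principle / Moser-iteration argument one seeks an a priori $C^0$ bound on $s_\epsilon$, after which elliptic regularity bootstraps to all higher norms and one extracts a convergent subsequence solving the genuine HYM equation.

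The main obstacle is precisely the dichotomy at $\epsilon\to0$. Either $\sup_X|s_\epsilon|$ stays bounded, in which case one is done, or it blows up, and then one must manufacture a contradiction with stability. This requires the Uhlenbeck--Yau regularity theorem: from the normalised, blown-up endomorphisms one constructs a weakly holomorphic $L^2_1$ projection, and the delicate step is to prove that such a projection in fact defines a \emph{coherent analytic subsheaf} $\F\subset V$ whose slope satisfies $\mu(\F)\ge\mu(V)$, contradicting stability. Carrying this regularity argument, and all the underlying integration-by-parts identities relating $\widehat\omega\lrcorner F$ to $\deg(\F)$, through in the non-K\"ahler setting is the content of the Li--Yau and Buchdahl refinement: one must verify that the Gauduchon condition $\p\bp\,\widehat\omega^{\,n-1}=0$ suffices to kill the torsion/boundary terms that $\d\widehat\omega\ne0$ would otherwise produce. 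Finally, uniqueness up to unitary automorphisms follows from convexity of the Donaldson functional along geodesics in the space of metrics, or equivalently a maximum-principle comparison of two solutions, completing the proof.
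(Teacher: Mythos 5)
You should first note a mismatch of expectations: the paper contains no proof of this theorem. It is stated as an imported result, credited to Buchdahl (for surfaces) and Li--Yau (for higher-dimensional complex manifolds), with the box standing in for a citation rather than an argument, and it is subsequently used as a black box to guarantee hermitian Yang--Mills connections on $V$ and on $TX$. So the only meaningful comparison is with the cited literature, and against that benchmark your sketch is a faithful reconstruction of the actual Li--Yau/Uhlenbeck--Yau strategy: the observation that $\p\bp\,\widehat\omega^{\,n-1}=0$ is exactly what makes the degree $\int_X c_1(\F)\wedge\widehat\omega^{\,n-1}$ independent of the choice of metric on $\det\F$ (so that slope and stability are well defined at all), the easy direction via Chern--Weil and the sign of the second fundamental form, reduction of the poly-stable case to a direct sum of stable factors, the perturbed continuity family $\widehat\omega\lrcorner F_{h_\epsilon}-\lambda\,\mathrm{Id}=-\epsilon\,s_\epsilon$, the $C^0$ dichotomy as $\epsilon\to0$, the construction of a destabilising weakly holomorphic $L^2_1$ projection, and uniqueness from convexity of the Donaldson functional. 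You also correctly identify where the non-K\"ahler refinement actually lives, namely in checking that the torsion terms produced by $\d\widehat\omega\neq0$ in all the integration-by-parts identities are controlled by the Gauduchon condition alone.

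Three caveats are worth flagging, since your proposal glides over them. First, the continuity method needs a starting point at $\epsilon=1$: one must conformally normalise the background metric $h_0$ (solving a scalar elliptic equation, which on a Gauduchon manifold is possible precisely because constants span the cokernel of the relevant Laplacian) so that $\tr(\widehat\omega\lrcorner F_{h_0})$ takes the correct constant value; invertibility of the linearisation alone does not produce an initial solution. Second, the two genuinely hard analytic inputs --- the Uhlenbeck--Yau regularity theorem that a weakly holomorphic projection defines a coherent subsheaf, and the estimate $\mu(\F)\ge\mu(V)$ for the limiting subsheaf, which rests on lower bounds for the Donaldson functional --- are invoked rather than proved; that is defensible for a result of this magnitude, but your write-up should be explicit that these are the content, not corollaries, of the cited work. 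Third, on a non-K\"ahler manifold the Gauduchon degree is in general \emph{not} topological, so the paper's phrase ``stable with respect to the class $[\widehat\omega^2]$'' is loose --- stability depends on the Gauduchon metric itself, which your Chern--Weil definition handles correctly --- and in the poly-stable case ``unique'' must be read as unique up to the action of holomorphic automorphisms of $V$ (constant rescalings on each stable summand leave the Chern connection fixed, but automorphisms mixing isomorphic summands give gauge-equivalent, not identical, connections); your convexity argument naturally yields exactly this statement.
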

\rightline{$\square$}
\noindent The stability refers to the slope $\mu(V)$ of $V$ which is now defined as
\[\mu(V) = \frac{1}{{\text rk}(V)}\, \int_X c_1(V)\wedge\widehat\omega^2~,\]
and it states that for all sub-sheaves $E$ of $V$ it must be true that
\[ \mu(E)<\mu(V)~.\]

It should be noted that in proving Theorem \ref{tm:LiYau}, Li and Yau relied upon a choice of {\it local holomorphic trivialisation},
\begin{equation}
\label{eq:holtriv}
\d_A=\bp+\p+a\:,
\end{equation}
which is possible for a holomorphic vector bundle \cite{1055.14001}. That is, for a holomorphic vector bundle we can set $\bp_A=\pb$ upon a suitable local gauge transformation. This connection is then assumed to be the {\it hermitian connection}, or Chern connection of some hermitian structure $h_V$ on $V$, that is $a=h_V^{-1}\p h_V$.  The corresponding primitive curvature is $F=\bp a$. Li and Yau then proved the existence of such a hermitian Yang-Mills structure, if and only if the holomorphic bundle $V$ is poly-stable. If such a structure exists, it is unique with respect to the holomorphic structure on $V$.


Therefore, when $X$ has a heterotic $SU(3)$-structure, we require the bundle $V$ to be a poly-stable holomorphic bundle  (with respect to the class $[e^{-2\phi}\,\omega^2]$), which thus guarantees the existence of a hermitian Yang-Mills connection on $V$. For heterotic string compactifications, the relevant vector bundles have $c_1(V)=0$ and so {\it the slope vanishes} $\mu(V) =0$. In fact, for the gauge bundle, the gauge group is a subgroup of $E_8{\times} E_8$.  Also, as we will see in the next section, we require that $TX$ be stable too, where $\mu(TX)=0$ because $X$ has vanishing first Chern class. 

\subsection{Constraints from the Anomaly Cancelation and Equations of Motion.}
\label{subsec:anomaly}

Apart from the constraints from supersymmetry, the pair $(X,V)$ must also satisfy an anomaly cancelation condition 
\begin{equation}
H =i(\p-\bp)\omega=\d B + {\cal C S}~.\label{eq:anomaly}
\end{equation}
Recall that the right hand side of the anomaly cancelation condition \eqref{eq:anomaly} is a definition of $H$ as the gauge invariant field strength of the $B$ field. The Bianchi identity for this anomaly cancelation condition is 
\begin{equation}
\d H =-2i\p\bp\omega= \frac{\alpha'}{4}\,\left(\tr (F\wedge F) - \tr (R \wedge R ) \right) + {\cal W}~,\label{eq:BIanomaly}
\end{equation}
where $R $ is the curvature on $X$ with respect to a connection $\nabla$ on $TX$ which we discuss below. Here $F$ and $R$ are the curvatures on the full bundles, not just the holomorphic part. Again we are sloppy and use $F$ and $R$ interchangeably referring to the curvatures on $\End(V)$ and $\End(TX)$ as the full bundles or their holomorphic part..

The term $\cal W$ is a non-perturbative correction which is a closed 4-form on $X$ in a cohomology class which corresponds to the Poincar\'e dual of the class of an (effective) holomorphic curve $\cal{C}$ which is wrapped by a five-brane. 
A topological condition derives from equation \eqref{eq:BIanomaly}  
\begin{equation}
 0 = - \, {\cal P}_1(V) + {\cal P}_1(TX) + [\cal{C}]~,\label{eq:topanomaly}
 \end{equation}
 where ${\cal P}_1(E)$ represents the first Pontryagin class of a bundle $E$. We will ignore the non-pertubative correction $\cal W$.
 
Any solution $(X,V)$ of the supersymmetry conditions, which also satisfies the anomaly cancelation condition, automatically satisfies the equations of motion if and only if the connection $\nabla$ satisfies
\begin{align} 
R \wedge\Psi &= 0~,\label{eq:holR}\\
 \omega\lrcorner R  &= 0~.\label{eq:instR}
 \end{align}
That is, the connection $\nabla$ of the curvature $R $ of the tangent bundle $TX$ must is an $SU(3)$ instanton\cite{Hull:1986kz, Ivanov:2009rh, Martelli:2010jx}. We give an explicit proof of this in appendix \ref{app:proof}. By the Theorem of Li and Yau above, such a connection exists only if we require $(TX, \nabla)$ to be a stable holomorphic bundle on $X$. The corresponding connection is given by the holomorphic trivialisation
\begin{equation}
\label{eq:holtrivnabla}
\nabla=\bp+\p+\theta\:,
\end{equation}
where $\theta=h^{-1}\p h\in\Omega^{(1,0)}(\End(TX))$, and $h$ is the unique hermitian Yang-Mills structure on $TX$. Note that as $h$ is a hermitian structure on $TX$, it actually defines a hermitian metric on $X$.

\subsection{The usual field choice, $\nabla=\nabla^-$.}
\label{eq:usual}
As noted above, the connection $\nabla$ appearing in the Bianchi Identity should be hermitian Yang-Mills, but we did not further specify it. However as we also noted, it should depend on the fields of the theory in some particular way. As we will see in the next chapter, this comes down to how one defines the fields. For the usual field choice, it is given by the Hull connection $\nabla^-$. However, given that this connection only appears at first order in $\alpha'$, to first order in $\alpha'$ we may consistently choose $\nabla$ to be $\nabla^-+\OO(\a)$ with this field choice.

A simple application of Stokes theorem, first given in \cite{Gauntlett:2002sc}, now shows that the flux is indeed of first order for compact $X$,
\begin{equation*}
\vert\vert e^{-\phi}H\vert\vert^2=\int_Xe^{-2\phi}H\wedge*H=-\int_XH\wedge\d(e^{-2\phi}\omega)=0+\OO(\a)\:,
\end{equation*}
by an integration by parts. Here we have used equation \eqref{eq:flux2} and the Binachi Identity. It follows that the flux is indeed of $\OO(\a)$. Hence we may as well use the Levi-Civita connection, or the Chern connection, within this field choice. The hermitian Yang-Mills condition then simply becomes the condition that the zeroth order geometry is Ricci flat, and hence Calabi-Yau.

Note also that as the zeroth order geometry is CY, the tangent bundle $TX_0$ of the zeroth order geometry is stable, where we denote the zeroth order geometry by $X_0$ for short. In particular, we have 
\begin{equation*}
H^0(TX_0)=0\:.
\end{equation*}
However, as the complex structure is independent of rescaling, and the $\a$-expansion is really a large volume expansion, it follows that we may take the expansion of the complex structure to be trivial. That is $J=J_0$, the zeroth order complex structure. This further implies that the Dolbeault cohomology groups of the base remain the same in the expansion as well. In particular, we can assume that
\begin{equation*}
H^0(TX)=0\:.
\end{equation*}
We shall return to this point later in the chapter.

\section{Infinitesimal Deformations of Heterotic Structures}
\label{sec:moduli}

In the next three sections we study the space of {\it infinitesimal} deformations of a heterotic compactification $(X,V)$.  As described in detail in the following, this moduli space contains the following parameters:

\begin{itemize}
\item Deformations of the {\it complex structure $J$ on $X$} (which is determined by $\Psi$).  It is well known that infinitesimal deformations $\Delta$ of the complex structure which preserve the integrability of the complex structure  belong to $H_{\bar\partial}^{(0,1)}(TX)$.  These deformations $\Delta$ are constrained by requiring that $\Omega$ stays holomorphic, equation \eqref{eq:hol}, which in turn requires that deformations of the complex structure are in $H_\d^{(2,1)}(X)\subseteq H_{\bar\partial}^{(0,1)}(TX)$.  Moreover, they are also constrained by the requirement that the holomorphic conditions of the bundles $V$ ($F\wedge\Psi = 0$) and $TX$ ($R \wedge\Psi = 0$) be preserved. We also find a further constraint on $\Delta$ coming from the anomaly cancelation condition.  

\item Deformations of the bundle connection $A$, for a fixed complex structure $J$ and hermitian form $\omega$ on $X$.
These are the {\it bundle moduli of $V$} which belong to $H^{(0,1)}(\End(V))$.  Similarly, we have deformations of the holomorphic tangent bundle connection $\Theta$, the {\it tangent bundle moduli}, which belong to $H^{(0,1)}(\End(TX))$.  Note that we are considering the instanton connection as an unphysical field in the theory.  We find that this is needed for the appropriate implementation of the anomaly cancelation condition, but we do not consider these moduli as corresponding to physical fields in the effective four-dimensional field theory. We interpret these ``moduli" from a physics point of view in the next chapter. 
\item Deformations of the hermitian structure $\omega$ which preserve the conformally balanced condition which are constrained by the anomaly cancelation condition. As we shall see, these are also obstructed by the Yang-Mills condition in the case of poly-stable bundles.
\end{itemize}

We leave the study of obstructions to these deformations for future work. 

\section{Deformations of Heterotic $SU(3)$-Structures}
 \label{sec:infsu3}
 
Let $(X, \omega, \Psi)$ be a manifold with a heterotic $SU(3)$-structure.  In this subsection we discuss first order variations of this $SU(3)$-structure.
   
Consider a one parameter family of manifolds $(X_t, \omega_t, \Psi_t), ~t\in\IC$, with a heterotic $SU(3)$-structure where we set
$(X, \omega,\Psi) = (X_0, \omega_0, \Psi_0)$. 
A deformation of the heterotic $SU(3)$-structure parametrized by the parameter $t$ corresponds to simultaneous deformations of the complex structure determined by $\Psi$ together with those of the hermitian structure
determined by $\omega$, such that the heterotic $SU(3)$-structure is preserved.  Hence the variation with respect to $t$ of any mathematical quantity $\alpha$ (as for example a $p$-form, or the metric) is given by the chain rule as follows
\begin{equation*}
\partial_t\alpha = (\partial_t Z^A)\, \partial_A\alpha + 
(\partial_t Z^{\bar A})\, \partial_{\bar A}\alpha + (\partial_t y^i)\, \partial_i\alpha ~,
\end{equation*}
where we label the complex structure parameters by $Z^A$ and by $y^i$ the parameters of the hermitian structure.\footnote{We will need to extend this later to include variations of the bundles.}
 
Note that $\Psi$ is independent of the hermitian structure parameters, however $\omega$ does depend on the complex structure as it must be  a $(1,1)$-form with respect to any complex structure. Therefore the moduli space ${\cal M}_X$ of the manifold $X$ must have the structure of a fibration.  We discuss this structure in the following.

\subsection{Deformations of the Complex Structure of $X$}
\label{sec:varsJ}

We begin this subsection by reviewing standard results on variations of an {\it integrable} complex structure $J$ of a manifold $X$.   With respect to $J$, the exterior derivative $\bp$ which acts on forms on $X$, squares to zero, that is $\bp^2=0$.  This condition is equivalent to the vanishing of the Nijenhuis tensor.  Conversely, a derivative $\bp$ which squares to zero defines an integrable complex structure on $X$. In fact, it determines a holomorphic structure on~$X$.

Let $Z^A, A = 1,\ldots, N_{CS}$, be complex structure parameters and $\Delta_A^m$ be a variation of the complex structure
\[ \Delta_A  = \Delta_A{}_{\, n}{}^m\, \d x^n\otimes \partial_m = - \frac{i}{2}\, \partial_AJ~.\]
It is a standard result that $\Delta_A^m\in \Omega^{(0,1)}(TX)$. Moreover, the deformations $\Delta_A$ are valued in the {\it holomorphic tangent bundle}, so $\Delta_A^m=\Delta_A^a$. Further, preservation of the integrability of the complex structure under variations requires to first order that $\Delta_A^m$ defines an element of $H^{(0,1)}_{\bar\partial}(TX)$.  The integrability to first order is guaranteed (using the Maurer-Cartan equations) if $\bp\Delta_A = 0$, and $\bp$-exact forms $\Delta_A$ correspond to trivial changes of the complex structure (that is, changes corresponding to diffeomorphisms). 

Equivalently, as the form $\Psi$ on $X$ determines a unique integrable complex structure $J$ on a manifold with a heterotic structure $X$, one can study the  variations of $J$ in terms of the variations of $\Psi$. It will be more convenient however to discuss these deformations in terms of the holomorphic $(3,0)$ form $\Omega$. First order variations of $\Omega$ have the form\cite{MR0112154, 0128.16902} 
\begin{equation} 
\partial_A \Omega = \widetilde K_A\, \Omega + \chi_A~,\label{eq:prevarO}
\end{equation}
where the $\widetilde K_A\in\Omega^0(X)$, and $\chi_A$ is a $(2,1)$-form which can be written in terms of $\Delta_A$
\begin{equation}
\chi_A = \half\, \Omega_{mnp}\, \Delta_A{}^m\wedge\d x^n\wedge\d x^p~.\label{eq:chi}
\end{equation}
Actually, we can prove that one can take the $\widetilde K_A$ to be  constants. Indeed, the Hodge-decomposition of $\widetilde K_A\, \Omega$ in terms of $\p$ is
\begin{equation*}
\widetilde K_A\, \Omega=K_A\,\Omega+\p\beta\:,
\end{equation*}
where $\beta$ is a $(2,0)$-form, and $K_A$ is constant. The last term can be ignored because it corresponds to changes in $\Omega$ due to diffeomorphisms of $X$, that is, trivial deformations of the complex structure.  This can be seen by computing the  Lie-derivative of $\Omega$ along a vector $v\in TX$ which gives
\begin{equation}
{\cal L}_v\Omega = -\delta_{diff}\,\Omega = \d(v\lrcorner\Omega)~,\label{eq:Odiffeo}
\end{equation}
where we have used the fact that $\d\Omega= 0$. Taking the $(3,0)$-part of this equation, we obtain
\[ ({\cal L}_v\Omega)^{(3,0)} = \p(v\lrcorner\Omega)~.\]
 
We now vary the equation
\[\d \Omega = 0~,\]
and demand that the deformed manifold admits a holomorphic $(3,0)$-form.  We find
\begin{equation}
 \d\chi_A = 0~.
 \label{eq:dchi}
 \end{equation}
 Therefore  each $\chi_A$ defines a class in the de-Rham cohomology
 \[ \chi_A \in H_{\d}{}^{(2,1)}(X)~,\]
as $\d$-exact $(2,1)$-forms correspond to 
diffeomorphisms of $X$, as can be seen from equation \eqref{eq:Odiffeo}.

We remark that using the holomorphicity of $\Omega$, and equation \eqref{eq:chi}, it is straightforward to prove that $\chi_A\in H_{\bp}{}^{(2,1)}(X)$ is equivalent to $\Delta_A{}^m \in H^{(0,1)}_{\bar\partial}(TX)$.  In fact, $\Omega$ gives an isomorphism between these cohomology groups (just like in the case of Calabi--Yau manifolds)\cite{MR915841}
\[H_{\bp}{}^{(2,1)}(X)\cong H^{(0,1)}_{\bar\partial}(TX)~.\] 
However, on a non-K\"ahler manifold with a holomorphically trivial canonical bundle, it is not necessarily the case that
$H_{\d}{}^{(2,1)}(X) \cong H_{\bp}{}^{(2,1)}(X)$
and it is generally the case that 
\[ {\rm dim} H_{\d}{}^{(2,1)}(X) \le {\rm dim} H_{\bp}{}^{(2,1)}(X)~.\]
The way to see this is by observing that first order variations of  \eqref{eq:dchi} require, not only that $\chi_A$ is $\bp$-closed, but also that it is $\p$-closed. Therefore, in a given class of $[\chi_A]\in H_{\bp}{}^{(2,1)}(X)$, there must exist a representative which is  $\p$-closed and is not $\d$-exact.  This is not always the case, and there are many examples of non-K\"ahler manifolds for which this happens.  A simple example with a heterotic $SU(3)$-structure is the Iwasawa manifold\footnote{However, the Iwasawa manifold is not a good heterotic compactification for any bundle $V$ because its holomorphic tangent bundle is not stable.}.  It is not too hard to show that for this example\cite{MR0393580}
\[ {\rm dim} H_{\d}{}^{(2,1)}(X) = 4~,\quad{\rm and} \quad {\rm dim} H_{\bp}{}^{(2,1)}(X) = 6~.\]
The two extra elements in $H_{\bp}{}^{(2,1)}(X)$ are $\p$-closed, however they are $\d$-exact. 

There are also many examples of non-K\"ahler manifolds for which 
\begin{equation} H_{\d}{}^{(2,1)}(X)\cong H_{\bp}{}^{(2,1)}(X)~,
\label{eq:CScong}
\end{equation}
like for example manifolds which are {\it cohomologically K\"ahler}, that is, which satisfy the $\p\bp$-lemma, a property which is stable under complex structure deformations\cite{MR2451566,MR2449178}.  The Iwasawa manifold does not satisfy the $\p\bp$-lemma.

The condition that each $\chi\in H_{\bp}{}^{(2,1)}(X)$ also satisfies $\p\chi=0$ is used  in\cite{MR915841} as a first step to discuss the obstructions to infinitesimal deformations of the complex structure $J$ of Calabi--Yau manifolds, and it is stated in that proof that it goes through even if  the manifold is not  K\"ahler, as long as it satisfies the $\p\bp$-lemma.  In our work, the requirement that $\p\chi=0$ appears at first order in deformation theory when discussing the deformations of the complex structure in terms of the variations of $\Omega$ and {\it it represents a necessary condition for integrability to first order}.  Issues including  the integrability of the deformed complex structure of $X$ and a generalisation of the work of Tian and Todorov~\cite{MR915841,MR1027500}, is beyond the scope of this thesis, but will be discussed in a forthcoming publication \cite{DKS2014}.  For the rest of this chapter, we work with the variations of the complex structure in terms of $\Delta$, but we should keep in mind that some of these elements may be obstructed as discussed.

\subsection{Deformations of the Hermitian Structure on $X$}
\label{subsubsec:defHS} 
We now consider deformations of the hermitian form $\omega$. Recall that for Calabi-Yau compactifications, the conditions of the deformations of the K\"ahler form is for it to remain closed. This forces the infinitesimal deformations to be valued in $H^{(1,1)}(X)$ modulo diffeomorphisms, and there is therefore a finite set of allowed deformations. In our case, the closeness condition is replaced by the anomaly cancellation condition \eqref{eq:anomaly}. We will discuss later how to include this condition in the deformation story. Recall however that we also have the conformlly balanced condition \eqref{eq:confbal}. As it turns out, this condition does put restrictions on the allowed hermitian deformations. Indeed, it determines the $\bp$-exact part of the deformation of $\omega$. Let us see how this goes.

Let
\[\hat\rho = \half\, \widehat\omega\wedge\widehat\omega~,\] 
where $\widehat\omega = e^{-\phi}\omega$ is the hermitian form corresponding to the Gauduchon metric. The conformally balanced condition is then equivalent to
\[\d \hat\rho = 0~,\]
and so $\hat\rho\in H_\d^4(X)$. Any variation of $\hat\rho$ must preserve this condition, that is
\[ \d (\partial_t\hat\rho) = 0.\]
Consider the action of  a diffeomorphism of $X$ on $\hat\rho$
\begin{equation}
\label{eq:gaugehatrho}
\L_{diff} \hat\rho = \d (v\lrcorner\hat\rho) = - \d( e^{-\phi} \, J(v)\wedge\widehat\omega)=-\d(\tilde v\wedge\hat\omega)\:,
\end{equation}
where we have set $\tilde v = e^{-\phi} \, J(v)$. Therefore, variations of $\hat\rho$ which preserve the conformally balanced condition correspond to $\d$-closed four forms modulo $\d$-exact forms which have the form $\d\beta$, where $\beta$ is a {\it non-primitive} three-form.  So this space is not necessarily finite dimensional as was first pointed out in\cite{Becker:2006xp}. As we will see,  when taking into account the anomaly cancelation condition, we obtain a finite dimensional parameter space. For the remainder of this section, we set up some notation and make some further remarks on the deformations of hermitian structure of $X$.

Consider a variation, $\p_t\omega$, of $\omega$.  We can decompose this variation in terms of the Lefshetz decomposition
\begin{equation}
 \p_t\omega = \lambda_t\omega + h_t~,\label{eq:varomega}
 \end{equation}
where $\lambda_t$ is a function on $X$ and $h_t$ is a primitive two form ($\omega\lrcorner h_t = 0$).

It is not too difficult to show that the $(0,2)$-part of the variation, $h_t^{(0,2)}$, depends only on the variations of the complex structure $\Delta_A$.  To prove this we vary the compatibility condition \eqref{eq:compatibility} 
\[ \Omega\wedge\omega = 0~,\]
which expresses the fact that with respect to the complex structure $J$ determined by the $(3,0)$-form $\Psi = e^{2\phi}\, \Omega$, the hermitian form $\omega$ is a $(1,1)$-form.  
Varying this equation we find
\[0= \partial_t\omega\wedge\Omega + \omega\wedge\partial_t\Omega = h_t\wedge\Omega+ \omega\wedge\chi_t~,\]
where
\[ \Delta_t = (\partial_t Z^A)\,\Delta_A~, 
\quad{\rm and}\quad \chi_t = (\partial_t Z^A)\,\chi_A~.\]
Contracting with $\bar\Omega$ we obtain
\begin{equation*}
  h_t^{(0,2)} = (\partial_t Z^A)\,h_A^{(0,2)}
 \end{equation*}
 where
 \begin{equation}
 h_A^{(0,2)} = \Delta_A{}^m\wedge \omega_{mn}\, \d x^n~,
 \label{eq:varpureomega}
 \end{equation}
and where we have used equation \eqref{eq:chi}.  Therefore, the $(0,2)$-part of the variation of $\omega$ is entirely  determined by 
the allowed variations of the complex structure of $X$, and there are no new moduli associated to $h_A^{(0,2)}$.

We would like to remark that it has been known for over 20 years in mathematics that the conformally balanced condition is not stable under deformations of the complex structure \cite{MR1107661, MR1137099, 0735.32009, 0793.53068}.  This is in sharp contrast with the Theorems of Kodaira and Spencer for the stability of the K\"ahler condition under deformations of the complex structure \cite{MR0112154, 0128.16902}. 

Returning now to the variation of the conformally balanced condition for the hermitian structure we have the following proposition.

\begin{Proposition}\label{prop:CBcond}
Let $\widehat\lrcorner$ and $\hat *$ be the contraction operator and the Hodge dual operator with respect to the Gauduchon metric respectively. As long as we assume that the tangent bundle is stable and has zero slope, the variation of the conformally balanced condition for the hermitian structure
\begin{equation}
\d \p_t\hat\rho = 0~,\label{eq:dvarhatrho}
\end{equation}
determines the $\bp$-exact part of the Hodge decomposition of the $(1,1)$-form $(\p_t\omega)^{(1,1)}$
in terms of deformations of the complex structure and dilaton, leaving a $\bp^{\tilde\dagger}$-closed part undetermined. Here $\bp^{\tilde\dagger}$ is the adjoint with respect to the rescaled metric $\tilde g=e^{-2\phi}g$.
\end{Proposition}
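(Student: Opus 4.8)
The plan is to compute $\p_t\hat\rho$ explicitly, strip off the pieces that are already fixed by the complex-structure deformation, and show that what remains of $\d\,\p_t\hat\rho=0$ is exactly an equation for $\bp^{\tilde\dagger}$ of the $(1,1)$ hermitian variation.

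First I would write $\hat\rho=\half\, e^{-2\phi}\,\omega\wedge\omega$ and differentiate, giving $\p_t\hat\rho=e^{-2\phi}\big[-(\p_t\phi)\,\omega\wedge\omega+\omega\wedge\p_t\omega\big]$. Using the Lefshetz decomposition $\p_t\omega=\lambda_t\,\omega+h_t$ with $h_t$ primitive, and recalling that the pure parts $h_t^{(0,2)},h_t^{(2,0)}$ are already determined by the complex-structure deformations through \eqref{eq:varpureomega}, the only genuinely new data sit in the $(1,1)$-form $\mu:=(\p_t\omega)^{(1,1)}=\lambda_t\,\omega+h_t^{(1,1)}$. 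The key technical input is the Weil identity \eqref{eq:Weil}, which converts every wedge-with-$\omega$ into a Hodge star, together with the conformal law $\tilde{*}=e^{-2\phi}\,*$ on two-forms for the rescaled metric $\tilde g=e^{-2\phi}g$. With these, the $(2,2)$-component of $\p_t\hat\rho$ becomes $\tilde{*}\,\nu$ for a $(1,1)$-form $\nu$ that is a fixed linear combination of $\mu$, its trace $\lambda_t$, $\omega$ and $\p_t\phi$, while the $(3,1)$ and $(1,3)$ components become $\tilde{*}\,h_t^{(2,0)}$ and $\tilde{*}\,h_t^{(0,2)}$.

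Next I would project $\d\,\p_t\hat\rho=0$ onto bidegree $(3,2)$ (the $(2,3)$ piece is its conjugate, and the extreme bidegrees $(4,1),(1,4)$ vanish automatically on a threefold). This reads $\bp(\tilde{*}\,h_t^{(2,0)})+\p(\tilde{*}\,\nu)=0$. Applying $\tilde{*}$ and using the hermitian adjoint identities $\bp^{\tilde\dagger}=-\tilde{*}\,\p\,\tilde{*}$ and $\p^{\tilde\dagger}=-\tilde{*}\,\bp\,\tilde{*}$ (valid on any compact hermitian manifold) collapses this to the $(1,0)$-form equation $\bp^{\tilde\dagger}\nu=-\p^{\tilde\dagger}h_t^{(2,0)}$, whose right-hand side is fixed purely by the complex-structure deformation. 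Unwinding $\nu$ in terms of $\mu$ then expresses $\bp^{\tilde\dagger}\mu$ through complex-structure and dilaton data alone. Finally, invoking the $\tilde g$-Hodge decomposition $\mu=\mathcal H\mu+\bp\,a+\bp^{\tilde\dagger}b$ and the fact that $\bp^{\tilde\dagger}$ annihilates the harmonic and $\bp^{\tilde\dagger}$-exact parts while restricting to an isomorphism from $\mathrm{Im}\,\bp$ onto $\mathrm{Im}\,\bp^{\tilde\dagger}$, one recovers the $\bp$-exact part $\bp\,a$ uniquely, leaving $\mathcal H\mu+\bp^{\tilde\dagger}b$, i.e. the $\bp^{\tilde\dagger}$-closed part, free.

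I expect the main obstacle to be twofold. The bookkeeping in the middle steps is delicate because three hermitian forms are in play ($\omega$, the Gauduchon $\widehat\omega=e^{-\phi}\omega$, and the rescaled $\tilde\omega=e^{-2\phi}\omega$), and because $\omega$ is not $\bp$-closed the trace term $\lambda_t\,\omega$ mixes nontrivially under $\bp^{\tilde\dagger}$; one must track the conformal weights carefully so that precisely the $\tilde g$-adjoint appears, as claimed. More seriously, to conclude that the equation imposes no constraint on the $\bp^{\tilde\dagger}$-closed part, one must verify solvability: the source $\p^{\tilde\dagger}h_t^{(2,0)}$ must lie in $\mathrm{Im}\,\bp^{\tilde\dagger}$, i.e. be orthogonal to the $\tilde g$-harmonic $(1,0)$-forms. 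This is exactly where the hypotheses that $TX$ is stable of zero slope enter: via Theorem \ref{tm:LiYau} they guarantee the background carries the requisite Gauduchon/hermitian Yang--Mills structure so that the rescaled Hodge theory is well posed, and they force the vanishing of the would-be harmonic obstructions (in particular $H^0(TX)=0$), ensuring that the coexact part is genuinely undetermined.
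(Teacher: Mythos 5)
Your route is in essence the paper's own: vary $\hat\rho$, use the Weil identity to convert $\d\,\p_t\hat\rho=0$ into an adjoint equation, isolate its $(1,0)$-component, and invoke stability of $TX$ through the chain $H^0(TX)=0\Rightarrow H^{(2,0)}_{\bp}(X)=0$ to control the $(2,0)$-source, concluding via the Hodge decomposition. However, there is one genuine gap: the trace part of the hermitian variation. Your $\nu$ is (up to sign) $2\hat\lambda_t\,\widehat\omega-\hat h_t^{(1,1)}$, which differs from $(\p_t\widehat\omega)^{(1,1)}=\hat\lambda_t\,\widehat\omega+\hat h_t^{(1,1)}$ by $3\hat\lambda_t\,\widehat\omega$, and the function $\hat\lambda_t=\lambda_t-\p_t\phi$ contains $\lambda_t=\tfrac{1}{3}\,\omega\lrcorner\,\p_t\omega$ — an \emph{unknown} piece of the hermitian deformation, not complex-structure or dilaton data. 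So your step ``unwinding $\nu$ in terms of $\mu$ then expresses $\bp^{\tilde\dagger}\mu$ through complex-structure and dilaton data alone'' is false as stated: it leaves the uncontrolled term $\bp^{\tilde\dagger}(\hat\lambda_t\,\widehat\omega)$. The paper closes this with a diffeomorphism argument you omit: decomposing $\L_{diff}\hat\rho=k\hat\rho+\widehat\omega\wedge P$ and wedging with $\widehat\omega$ shows $k\in\Im(\d^{\hat\dagger})$, so the non-constant part of $\hat\lambda_t$ can be gauged away; the residual constant then drops out because $\bp^{\hat\dagger}\widehat\omega=0$ on a conformally balanced manifold (since $\widehat\omega^2=e^{-2\phi}\omega^2$ is $\d$-closed). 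Without this gauge-fixing the conclusion about $(\p_t\omega)^{(1,1)}$ does not follow from your computation.

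A secondary inaccuracy concerns solvability. Orthogonality to the $\tilde g$-harmonic $(1,0)$-forms is not the right criterion: $\Im\,\bp^{\tilde\dagger}=(\ker\bp)^{\perp}$, so the source must be orthogonal to \emph{all} $\bp$-closed $(1,0)$-forms. The paper's ordering of steps makes this automatic: since no $\bp$-exact piece exists in $\Omega^{(2,0)}$, the Hodge decomposition of $\hat h_t^{(2,0)}$ has only harmonic and coexact parts, stability kills the harmonic part, and hence $\hat h_t^{(2,0)}=\bp^{\hat\dagger}\Lambda_t^{(2,1)}$ with $\Lambda_t^{(2,1)}$ fixed by complex-structure variations; then $\p^{\hat\dagger}\hat h_t^{(2,0)}=-\bp^{\hat\dagger}\p^{\hat\dagger}\Lambda_t^{(2,1)}$ is manifestly coexact, using $\p^{\hat\dagger}\bp^{\hat\dagger}+\bp^{\hat\dagger}\p^{\hat\dagger}=0$. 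Note also that stability is used precisely for this vanishing of $H^{(2,0)}_{\bp}(X)\cong H^0(TX)$, not to make the rescaled Hodge theory ``well posed'' — elliptic Hodge theory holds for any hermitian metric, Gauduchon or not.
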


\begin{proof}
From equation \eqref{eq:dvarhatrho} we get
\begin{equation}
\label{eq:dvarhatrho2} 
 \d^{\widehat\dagger} (2\hat \lambda_t \,\widehat\omega - \hat h_t^{(1,1)} + \hat h_t^{(0,2)} + \hat h_t^{(2,0)}) = 0~,
 \end{equation} 
where $d^{\widehat\dagger}$ is the adjoint of the exterior derivative $\d$ with respect to the Gauduchon metric, and where we have used equations \eqref{eq:varomega}, and $\hat h_t = e^{-\phi}\, h_t$ and $\hat\lambda_t = \lambda_t - \p_t\phi$. Here
\begin{equation*}
\p_t\widehat\omega=\hat\lambda_t\widehat\omega+\hat h_t\:,\;\;\;\;\p_t\hat\rho=2\hat\lambda_t\,\hat\rho+\widehat\omega\wedge\hat h_t\:.
\end{equation*}
Consider the $(1,0)$-part of equation \eqref{eq:dvarhatrho2}
\begin{equation}
\p^{\hat\dagger} \hat h_t{}^{(2,0)} = \bp^{\hat\dagger}( - 2\hat\lambda_t\, \widehat\omega + \hat h_t{}^{(1,1)})~.
\label{eq:eqforh}
\end{equation}
Recall from section \ref{eq:usual} that we may assume that
\[ H_{\bp}^{(2,0)}(X)\cong H_{\bp}^0(TX) = 0~,\]
where the first isomorphism is due the $\Omega$ isomorphism (that is,
for every element in $s^m\in H_{\bp}^0(TX)$ we have an element in $H_{\bp}^{(2,0)}(X)$ given by $s^m\,\Omega_{mnp}$).
The Hodge decomposition of $\hat h_t^{(2,0)}$ in terms of the Laplacian $\widehat\Delta_{\bp}\,$ requires that
\begin{equation*}
\hat h_t^{(2,0)}=\bp^{\hat\dagger}\Lambda_t^{(2,1)},
\end{equation*}
for some $(2,1)$-form $\Lambda$. Recall that $\hat h_t^{(2,0)}$ is completely determined by the complex structure deformations. It follows that $\Lambda^{(2,1)}$ is also given in terms of complex structure variations.  Equation \eqref{eq:eqforh} can now be written as
\begin{equation}
\label{eq:bpdagger}
\bp^{\hat\dagger} (\hat h_t{}^{(1,1)} - 2\hat\lambda_t\, \widehat\omega ) 
= - \bp^{\hat\dagger} ( 
 \p^{\hat\dagger}\Lambda_t^{(2,1)})~,
\end{equation}
which means that the left hand side is entirely determined by variations of the complex structure.  

Next, recall the diffeomorphism action on $\hat\rho$, \eqref{eq:gaugehatrho}. We may decompose $\L_v\hat\rho$ as
\begin{equation}
\label{eq:hodgegaugerho}
\L_{diff}\hat\rho=k\hat\rho+\widehat\omega\wedge P\:,
\end{equation}
for some function $k$ and primitive two-form $P$. Take the wedge product of \eqref{eq:hodgegaugerho} with $\widehat\omega$ to get
\begin{equation*}
\widehat\omega\wedge\L_{diff}\hat\rho=k\,\widehat\omega\wedge\hat\rho=-2\d(\tilde v\wedge\hat\rho)\:.
\end{equation*}
It follows that $k\in\Im(\d^{\hat\dagger})$. By a suitable diffeomorphism, we can hence remove any non-constant part of $\hat\lambda_t$. Equation \eqref{eq:bpdagger} can then be rewritten as
\begin{equation*}
\bp^{\hat\dagger}(\p_t\widehat\omega) =  - \bp^{\hat\dagger} (\p^{\hat\dagger}\Lambda_t^{(2,1)})\:.
\end{equation*}
This can further be rewritten as
\begin{equation*}
\bp^{\tilde\dagger}(\p_t\omega)=\bp^{\tilde\dagger}(\omega\p_t\phi+\p^{\tilde\dagger}\Lambda_t^{(2,1)})\:,
\end{equation*}
where we have rescaled the metric $\tilde g=e^{-\phi}\hat g=e^{-2\phi}g$. Using the Hodge decomposition, we find that this equation determines, as claimed, the $\bp$-exact part of the $(1,1)$-form $\p_t\omega^{(1,1)}$
in terms of deformations of the complex structure and dilaton, and leaves the $\bp^{\tilde\dagger}$-closed part undetermined. 
\end{proof}

By enforcing the anomaly cancelation condition, we will able to fix these parameters further. In fact, we will argue later in section \ref{sec:defD} that, by enforcing the anomaly cancelation condition, the moduli space of the (complexified) hermitian form is finite dimensional and related to the cohomology group
\begin{equation*}
H^{(0,1)}_{\bp}(T^*X)\:.
\end{equation*} 

\section{Deformations of the Holomorphic Bundles}
We now wish to include the holomorphic connections on the bundles into the story. That is, the gauge connection $A$ and tangent bundle connection $\nabla$. Working with holomorphic tangent bundles, we will often find it more convenient to use holomorphic and anti-holomorphic indices $\{a,\bar b\}$ corresponding to the holomorphic and anti-holomorphic coordinates $\{z^a,\bar z^{\bar b}\}$. We start with the gauge connection, i.e. we consider deformations of the holomorphic structure of $V$.

\subsection{Deformations of the Holomorphic Structure on $V$}
\label{subsec:defsV}

The study of deformations of the holomorphic bundles has a long history in mathematics. Of particular relevance is  the work of Atiyah\cite{MR0086359} which considers the parameter space of simultaneous deformations of the complex structure on a manifold $X$ together with those of the holomorphic structure on $V$.  This work has already been applied to the case in which $X$ is a Calabi-Yau manifold\cite{Anderson:2010mh, Anderson:2011ty}, and in this section we extend it to the more general case of a manifold with a heterotic $SU(3)$-structure. We will do this in detail, even though not much is different for this part of the parameter space, as it is the structure that we encounter here that generalises when we include the more complicated anomaly cancelation condition.

Consider now a one parameter family of heterotic compactifications $(X_t, V_t)~t\in\IC$
where we set  $(X_0, V_0) = (X, V)$. We study simultaneous deformations of the complex structure $J$ and the holomorphic structure on $V$.  Hence the variation with respect to $t$ of any mathematical quantity $\beta$ (which may have values in $V$ or $\End V$) is given by the chain rule as follows
\begin{equation*}
\partial_t\beta = (\partial_t Z^A)\, \partial_A\beta+ 
(\partial_t Z^{\bar A})\, \partial_{\bar A}\beta + (\partial_t y^i)\, \partial_i\beta
+ (\partial_t \lambda^\alpha)\, \partial_\alpha\beta
+ (\partial_t \lambda^{\bar\alpha})\, \partial_{\bar\alpha}\beta
\end{equation*}
where we label the bundle moduli by $\lambda^\alpha$.

Let $F$ be the curvature of the bundle connection $A$, where
\begin{equation}
F = d_A^2=\d A + A\wedge A\in\Omega^2(\End(V))\:,
\label{eq:curvV}
\end{equation}
and where $A\in\Omega^1(\End(V))$ is the gauge potential. A holomorphic structure on $V$ is determined by the derivative $\bp_A$ which is defined as the $(0,1)$-part of the operator $\d_A$. It is easy to prove that $\bp_A^2 = 0$ iff $F^{(0,2)} = 0$. Upon a choice of holomorphic trivialisation \eqref{eq:holtriv}, we may set $A=a$, and so $\bp_A=\bp$. We will use $\bp_A$ or $\bp$ interchangeably when dealing with the holomorphic bundle, depending on the circumstance. 


Consider now what happens to the holomorphicity of the bundle $V$ under deformations of the complex structure of $X$. Varying equation \eqref{eq:holF}, it is easy to see that
\begin{equation}
 (\partial_B F)^{(0,2)} = \Delta_B{}^a\wedge F_{a\bar b}\, \d z^{\bar b}~,\label{eq:varpureFone}
 \end{equation}
On the other hand, varying \eqref{eq:curvV} we find that
\begin{equation}
 (\partial_B F)^{(0,2)} = \bp\,\alpha_B~,\label{eq:varpureFtwo}
 \end{equation}
 where $\alpha_B$ is the non-$\bp$-closed $(0,1)$-part of the variation of $A$. Putting together equations \eqref{eq:varpureFone} and \eqref{eq:varpureFtwo} we find
 \begin{equation}
  \bp\,\alpha_B = \Delta_B{}^a\wedge F_{a\bar b}\, \d x^{\bar b}~.\label{eq:atiyahF}
 \end{equation}
 This equation represents a constraint on the possible variations $\Delta_B$ of the complex structure $J$ on $X$.
 
Consider the map
\begin{equation}
\label{eq:mapF}
{\F}\;:\;\Omega^{(0,q)}(TX)\longrightarrow \Omega^{(0,q+1)}(\End(V))
\end{equation}
given by
\begin{equation}
{\F}\big(\Delta\big)= (-1)^{q}\, \Delta{}^a\wedge F_{a\bar b}\, \d x^{\bar b}~,
\qquad \Delta\in\Omega^{(0,q)}(TX)~.
\label{eq:mapFdef}
\end{equation}
We have the following Theorem: 
\begin{Proposition}\label{tm:one}
\begin{equation}
\bp\left({\F}\big(\Delta\big)\right) = - {\F}\left(\bp\Delta\right)~,
\quad\forall\, \Delta\in H_{\bp}^{(0,q)}(TX)~,
\label{eq:Fcohom}
\end{equation}
and therefore the map ${\F}$, referred to as the {\it Atiyah map for $F$}, is a map between cohomologies\footnote{We will only consider the Atiyah maps acting on $(0,q)$-forms in this thesis. The generalization to $(p,q)$-forms is straight forward.}
\begin{equation}
\label{eq:mapFcohom}
{\F}\;:\;H^{(0,q)}(TX)\longrightarrow H^{(0,q+1)}({\rm End}(V))~.
\end{equation}
\end{Proposition}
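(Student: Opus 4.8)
The plan is to establish the identity \eqref{eq:Fcohom} by a direct computation in a local holomorphic trivialisation of $V$, in which the $(0,1)$-part of the gauge connection vanishes so that the covariant Dolbeault operator on $\End(V)$-valued forms reduces to the ordinary $\bp$, and in which the hermitian Yang--Mills curvature is purely of type $(1,1)$. The entire content of the statement is that, upon applying $\bp$ to $\F(\Delta)$ and using the Leibniz rule, the term in which the derivative lands on $F$ drops out because of the Bianchi identity, leaving only the term in which it lands on $\Delta$; the prefactor $(-1)^q$ in the definition \eqref{eq:mapFdef} is chosen precisely so that this surviving term reassembles into $-\F(\bp\Delta)$ rather than $+\F(\bp\Delta)$.

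Concretely, first I would record the structural facts about $F$. In the holomorphic trivialisation \eqref{eq:holtriv} the curvature is $F=\bp a$ with $a$ of type $(1,0)$; this exhibits $F$ as a $(1,1)$-form and yields the Bianchi identity $\bp F=\bp^2 a=0$ for free. In components $\bp F=0$ reads $\partial_{\bar c}F_{a\bar b}=\partial_{\bar b}F_{a\bar c}$, and from this symmetry it follows that $\bp(F_{a\bar b}\,dz^{\bar b})=\partial_{\bar c}F_{a\bar b}\,dz^{\bar c}\wedge dz^{\bar b}=0$, since a symmetric tensor contracts to zero against the antisymmetric $dz^{\bar c}\wedge dz^{\bar b}$. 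Writing $\Delta=\Delta^a$ with $\Delta^a$ a $(0,q)$-form, the Leibniz rule gives $\bp(\Delta^a\wedge F_{a\bar b}\,dz^{\bar b})=(\bp\Delta^a)\wedge F_{a\bar b}\,dz^{\bar b}+(-1)^q\,\Delta^a\wedge\bp(F_{a\bar b}\,dz^{\bar b})$, and the second term vanishes by the previous step. Hence $\bp(\F(\Delta))=(-1)^q(\bp\Delta^a)\wedge F_{a\bar b}\,dz^{\bar b}$, whereas $\F(\bp\Delta)=(-1)^{q+1}(\bp\Delta)^a\wedge F_{a\bar b}\,dz^{\bar b}$ with $(\bp\Delta)^a=\bp\Delta^a$; comparing the signs gives $\bp(\F(\Delta))=-\F(\bp\Delta)$, which is \eqref{eq:Fcohom}. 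The cohomological statement \eqref{eq:mapFcohom} is then immediate: if $\bp\Delta=0$ then $\bp\F(\Delta)=0$, so $\F$ sends closed forms to closed forms, and if $\Delta=\bp\Delta'$ then $\F(\Delta)=-\bp\F(\Delta')$ is exact, so $\F$ descends to a well-defined map on Dolbeault cohomology.

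The only genuinely substantive step is the vanishing of the second Leibniz term, which is nothing but the Bianchi identity for $F$; I expect the main pitfall to be conceptual bookkeeping rather than analysis. Specifically, I would take care that the $\bp$ acting on the $\End(V)$-valued form $\F(\Delta)$ is really the covariant operator $\bp_A$ whose cohomology defines $H^{(0,q+1)}(\End(V))$, and justify that it coincides with the ordinary $\bp$ used above by working in the trivialisation \eqref{eq:holtriv}; likewise I would note that $\{\partial_a\}$ is a local holomorphic frame for $TX$, so that $\bp$ commutes with extraction of the tangent component and $(\bp\Delta)^a=\bp\Delta^a$. With these identifications in place the argument is purely formal, and the sign tracking through the $(-1)^q$ factor is exactly what makes $\F$ anticommute, rather than commute, with $\bp$.
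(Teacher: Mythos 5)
Your proof is correct and takes essentially the same approach as the paper: the paper's own (one-line) proof invokes exactly the two facts you use, namely that the holomorphic connection makes $F$ a $(1,1)$-form and that the Bianchi identity gives $\bp F = 0$, and your Leibniz-rule computation with the $(-1)^q$ sign tracking is simply the fully detailed version of that argument. Your care about $\bp_A$ versus $\bp$ in the trivialisation \eqref{eq:holtriv} and about the holomorphic frame on $TX$ matches the conventions the paper sets up just before the proposition.
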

\begin{proof}
This follows from the fact that we have a holomorphic connection, so $F$ is a $(1,1)$-form, and the Bianchi identity,
\[\bp F = 0~.\]
\end{proof}

\noindent By the Bianchi identity, this map defines an element in the cohomology
\begin{equation}
\label{eq:AtiyahFcohom}
[\F]\in H^{(0,1)}(\End(V)\otimes T^*X)\:.
\end{equation}
It is worth noting that elements of the same class $[\F]$ give rise to holomorphically equivalent bundles. It is also worth remarking that this map is well defined as a map between cohomologies, due to the fact that under gauge transformations the curvature $F$ is covariant.\footnote{A gauge transformation does however spoil the holomorphic trivialisation in general.}

In terms of the map $\F$, the constraint \eqref{eq:atiyahF} on the variations of the complex structure $\Delta _A\in H^{(0,1)}(TX)$ can now be written as
\begin{equation}
 \bp\,\alpha_A = - {\F}\big(\Delta_A\big)~.\label{eq:atiyahFbis}
 \end{equation}
So  ${\F}\big(\Delta_A\big)$ must be exact in $H^{(0,2)}(\End(V))$, in other words
\[ \Delta _A\in \ker({\F})\subseteq H_{\bp}^{(0,1)}(TX)~.\]
The tangent space $T\M_1$ of the moduli space of combined deformations of the complex structure and bundle deformations, is therefore given by
\begin{equation}
T\M_1= H^{(0,1)}(\End(V))\oplus \ker({\F})~ ,\label{eq:M1}
\end{equation}
where $H^{(0,1)}(\End(V))$ is the space of bundle moduli. These correspond to $\bp$-closed forms that can be added to $\alpha_A$ without changing \eqref{eq:atiyahFbis}.

These results can be restated in a way that will be suitable for generalisations later when we include the other constraints on the heterotic compactification $(X,V)$. Define a bundle ${\cal Q}_1$ which is the extension of $TX$ by $\End(V)$, given by the short exact sequence
  \begin{equation}
   0\rightarrow \End(V)\xrightarrow{\iota_1} \Q_1 \xrightarrow{\pi_1} TX \rightarrow 0~,   \label{eq:ses1}
   \end{equation}
or $\Q_1=TX\oplus\End(V)$, with some extension class $\F\in\Omega^{(0,1)}(T^*X\otimes\End(V))$. That is on $\Q_1$ we define the $(0,1)$-connection
\begin{equation*}
\bp_1=\bar\nabla+\F\:,
\end{equation*}
where $\bar\nabla$ is the connection on the individual bundles. In matrix form, this connection takes the form
 \begin{equation}
 \label{eq:bp1block}
\bp_1\;=\; \left[ \begin{array}{cc}
\bar\nabla_{\End(V)} & \;{\F} \\
0  & \bar\nabla_{TX}  \end{array} \right]\:,
\end{equation}
when acting on $\End(V)\oplus TX$. 
Note that this is the most general $(0,1)$-connection on the extension bundle $\Q_1$ that we can write, i.e. the most general connection respecting the extension sequence \eqref{eq:ses1}. It should however be noted that upon a local $\Q_1$-valued coordinate transformation, any connection on $\Q_1$ can locally be put in an upper-right block-diagonal form, as in \eqref{eq:bp1block}. There are however obstructions to doing so globally, and there are issues in how to interpret the transformed bundles. We intend to investigate these issues in an upcoming publication \cite{delaOssaSoon}.




Next, note that $\bp_1$ holomorphic is equivalent to
\begin{equation*}
\bp_1^2=0\;\;\;\;\Leftrightarrow\;\;\;\;\bar\nabla^2=0\:,\;\;\;\;\bar\nabla\F=0\:.
\end{equation*}
It follows that the individual bundles are holomorphic. Hence $\bar\nabla_{\End(V)}$ and $\bar\nabla_{TX}$ determine holomorphic structures on $V$ and $TX$ respectively. With a choice of holomorphic trivialisation \eqref{eq:holtrivnabla}, we may set $\bar\nabla=\bp$. Note then also that as $\bp\F=0$, we have $\F$ is the curvature of some connection $a$ on $V$, that is $\F=\bp a$ by the Poincare lemma.

The infinitesimal moduli space of the holomorphic structure $\bp_1$ on the extension bundle $\Q_1$, which is given by
\[T\M_1 = H^{(0,1)}_{\bp_1}(\Q_1)~,\]
can be computed by a long exact sequence in cohomology of the sequence \ref{eq:ses1},
\begin{equation} 
\begin{split}
0 &\rightarrow H^{(0,1)}(\End(V)) \xrightarrow{\iota_1'} H^{(0,1)}(\Q_1) \xrightarrow{\pi_1'} H^{(0,1)}(TX) \\
&\xrightarrow{\F} H^{(0,2)}(\End(V)) \rightarrow H^{(0,2)}(\Q_1) \rightarrow \ldots\:.
\label{eq:les1}
\end{split}
\end{equation}
Here the Atiyah map $\F$ is the connecting homomorphism, whose definition is the usual one
\begin{equation}
[\iota_1^{-1}\circ\bp_1\circ \pi_1^{-1}(x)]=[{\F}(\Delta)]\:,
\end{equation}
where we have used the definition of $\bp_1$ above. This definition is valid by Proposition \ref{tm:one}.

In the computation of the long exact sequence \eqref{eq:les1}, we have used
\[H^0(TX) = 0~,\]
which is a valid assumption as we saw in section \ref{eq:usual}. Thus, we also have
\begin{equation}
 H^0(\Q_1) \cong H^0(\End(V))~.\label{eq:nosecQ1}
 \end{equation} 
Recall that for a stable bundle $V$   
\[ {\rm dim} H^0(\End(V))\le 1~.\] 
There are non-trivial sections whenever the Lie algebra $\End(V)$ has non-vanishing trace. Then,  for a poly-stable bundle
\begin{equation*}
V=\oplus_{i=1}^nV_i~,
\end{equation*}
we have 
\[\textrm{dim}(H^0(\End(V)))=\tilde n-1~\]
where $\tilde n$ is the number of bundle factors which have  endomorphisms non-vanishing trace, and we subtract one as the overall trace should vanish.

Finally, we find, by exactness of the sequence \eqref{eq:les1}, that
\[T\M_1= H_{\bp_1}^{(0,1)}(\Q_1) = {\rm Im}(\iota_1') \oplus {\rm Im}(\pi_1') \cong
H^{(0,1)}(\End(V))\oplus \ker({\F}) ~,\]
in agreement with equation \eqref{eq:M1}.

\subsection{Deformations of the Holomorphic Structure on $TX$.}
\label{subsec:defsTX}

We now extend our results to include deformations of the holomorphicity condition \eqref{eq:holR} of the tangent bundle $TX$ under deformations of the complex structure of $X$. Basically, we repeat the analysis above.  Let $R $ be the curvature of the instanton connection $\nabla$ 
\begin{equation}
R  = \d_\Theta^2=\d \Theta  + \Theta \wedge \Theta\in\Omega^2(\End(TX)) ~,\label{eq:curvR}
\end{equation}
where $\Theta \in\Omega^1(\End(TX))$ is the connection one-form of $\nabla$. Note again that as ${R}^{(0,2)}=0$, i.e. $(TX,\nabla)$ is holomorphic, we may set set $\d_{\Theta }^{(0,1)}=\bp_{\Theta }=\bp$ by a choice of local holomorphic trivialisation. Again, we use $\bp_{\Theta }$ and $\bp$ interchanginbly, depending on the circumstance. 


We now add the bundle $\End(TX)$ to $\End(V)$, to get 
\begin{equation*}
\g=\End(TX)\oplus\End(V)\:,
\end{equation*}
and in a similar fashion as above we define a holomorphic extension bundle $E=\g\oplus TX$, given by the short exact sequence
 \begin{equation}
  0\rightarrow \g\xrightarrow{\iota_2} E \xrightarrow{\pi_2} TX \rightarrow 0~,
  \label{eq:ses2}
  \end{equation}
with extension class
\begin{equation}
\label{eq:AtiyahRcohom}
[\B]=[\R]+[\F]\in H^{(0,1)}(T^*X\otimes\g)\:,
\end{equation}
given by the curvatures $R$ and $F$. I.e. there is a holomorphic structure on $E$ defined by the exterior derivative $\bar\partial_2$ 
 \begin{equation*}
\bp_2=\bar\nabla+\B=\bar\nabla+\F+\R 
\end{equation*}
which acts on $\Omega^{(0,q)}(E)$ and squares to zero by the Bianchi identity of $R $. Here $\bar\nabla$ is a connection on the individual bundles involved, where we exclude connections between endomorphism bundles. Again, in matrix form, $\bp_2$ is given by 
 \begin{equation*}
\bp_2\;=\;\left[ \begin{array}{cc}
\bar\nabla_{\g} & \;\B\\
 0 & \bar\nabla_{TX} \end{array} \right]
\:,
\end{equation*}
on $E=\g\oplus TX$. Note again that 
$\bp_2$ is the most general holomorphic connection on the extension bundle $E$, excluding connections between the endomorphism bundles.\footnote{Again, any connection on $E$ can locally be put in this form, upon a local $E$-valued coordinate transformations. As noted, there might be obstructions to doing so globally \cite{delaOssaSoon}.}
We proceed to compute the infinitesimal deformations.

The infinitesimal moduli space of the holomorphic structure $\bp_2$ on the extension bundle $E$, which is given by
\[T\M_2 = H^{(0,1)}_{\bp_2}(E)~,\]
can be computed by a long exact sequence in cohomology as in the previous section
\begin{align*}
0 &\xrightarrow{\B} H^{(0,1)}(\g) \xrightarrow{\iota_2'} H^{(0,1)}(E) \xrightarrow{\pi_2'} H^{(0,1)}(TX) \\
&\xrightarrow{\B} H^{(0,2)}(\g) \rightarrow H^{(0,2)}(E) \rightarrow \ldots\:,
\end{align*}
where the Atiyah map $\R+\F$ is the connecting homomorphism, and we have again set $H^0(TX)=0$. The zeroth order cohomology, or ``sections" of $E$, then read
\begin{equation}
 H^0(E) \cong H^0(TX)\oplus H^0(\g)\cong H^0(\End(V))~.\label{eq:nosecQ2} 
 \end{equation}
The last equality follows from the stability of $TX$, and the fact that the endomorphisms in $spin(6)$ are traceless, which implies that for traceless endomorphisms of $TX$ there are no holomorphic sections with values in $\End(TX)$
\[ H^0({\rm End} (TX)) = 0~.\]
Then we find that the infinitesimal moduli space of the extension $E$ is
\[T\M_2= H_{\bp_2}^{(0,1)}(E) =
H^{(0,1)}(\End(TX))\oplus
H^{(0,1)}(\End(V))\oplus (\ker({\F})\cap\ker({\R} )) 
~.\]
We remark again that the deformations in $H^{(0,1)}(\End(TX))$ should not correspond to any physical fields, but are needed for the implementation of the Bianchi identity, which we come to next.

\section{Including the Bianchi identity}
\label{subsec:Anomaly}

We now wish to include the Bianchi identity 
\begin{equation}
\label{eq:BIanomaly3}
\d H = -2 i \partial\bar\partial\omega = \frac{\a}{4}\,\left(\tr\,F\wedge F - \tr\,R \wedge R\right).
\end{equation}
into the story. We will see that the proceeding structure of extensions is useful in this regard. That is, we will construct a holomorphic extension bundle $\Q$ of $E$ such that $\Q$ has a holomorphic structure, and which allows for the implementation of the Bianchi identity \eqref{eq:BIanomaly3}. Moreover, using deformation theory of holomorphic bundles, we will show that this construction results in a description of the moduli space of heterotic compactifications $(X,V)$.  

Before we move on, we make the following observation that the map $\B$ also acts naturally as maps
\begin{equation*}
\B \::\:H^{(0,q)}(\g)\rightarrow H^{(0,q+1)}(T^*X)\:.
\end{equation*}
It therefore seems natural to extend the bundle $E$ to
\begin{equation*}
\Q=T^*X\oplus\g\oplus TX
\end{equation*}
as a smooth bundle, and consider holomorphic connections on $\Q$. Written as an extension bundle, we have
\begin{equation}
0\rightarrow T^*X\xrightarrow{\iota}\Q\xrightarrow{\pi}E\rightarrow0\:,\label{eq:sesH}
\end{equation}
with extension class which we call $\H$, that we shall return to below.


The most general $(0,1)$-connection we can write on $\Q$, as an extension bundle given by \eqref{eq:sesH}, is\footnote{A general connection on $\Q$ can again locally be put in this form, upon a local $\Q$-valued coordinate transformation. This can again be obstructed globally \cite{delaOssaSoon}.}
\begin{equation*}
\bD=\bar\nabla+\B+\hat{H}\:,
\end{equation*}
where $\B\in\Omega^{(0,1)}(T^*X\otimes\g)$, $\hat{H}\in\Omega^{(0,1)}(T^*X\otimes T^*X)$. Here $\bar\nabla$ is the connection on the individual bundles. Again, in matrix form this looks like
 \begin{equation}
 \label{eq:barDmatrix}
\bD\;=\; \left[ \begin{array}{ccc}
\bar\nabla_{T^*X} & \B_2 & \hat{H} \\
0 & \bar\nabla_{\g} & \B_1 \\
0  & 0 & \bar\nabla_{TX}  \end{array} \right]
\end{equation}
where both $\B_i\in\Omega^{(0,1)}(T^*X\otimes\g)$. It should be noted that in our case $\B_1=\B_2=\B$. This is equivalent to that $\Q$ is self-dual as a holomorphic bundle. Note also that this condition is not spoiled under infinitesimal deformations, as this is a condition on the {\it extension classes} which do not change under infinitesimal deformations. 


Requiring $\bD$ to be holomorphic, i.e. $\bD^2=0$, we first get
\begin{equation*}
\bar\nabla^2=0\:,
\end{equation*}
so we may again set $\bar\nabla=\bp$ by a choice of holomorphic trivialisation. Furthermore, we need $\bp\B=0$, and so 
\begin{equation*}
[\B]\in H^{(0,1)}(T^*X\otimes\g)=H^{(0,1)}(T^*X\otimes\End(V))+H^{(0,1)}(T^*X\otimes\End(TX))\:.
\end{equation*}
We thus have
\begin{equation*}
\B=c_1\F+c_2\R\:,
\end{equation*}
i.e. $\B$ is the given by curvatures of connections on the bundles, where we have included some overall scalings $c_i$, needed in order to relate the structure to the heterotic Bianchi identity. Note that these scalings are unimportant as far as the deformation problems of holomorphic structures in the previous sections go. 

Finally, we also get the condition
\begin{equation}
\label{eq:pont0}
\bp\hat{H}+\B\wedge\B=0\:.
\end{equation}
Note that this condition forces $\hat H\in \Omega^{(0,1)}(\Lambda^2T^*X)$.\footnote{In principle, it is possible to add a $\bp$-closed part to $\hat H$ which is non-skew. We will not consider this in this thesis.} By choosing the constants $c_i$ appropriately, this can be related to the heterotic Bianchi identity. Indeed, by setting $c_1=\frac{\sqrt{\a}}{2}$ and $c_2=i\frac{\sqrt{\a}}{2}$, \eqref{eq:pont0} can be rewritten as
\begin{equation}
\label{eq:redefB}
\frac{\a}{4}\Big(\tr\:F\wedge F-\tr\:R \wedge R \Big)=-2\bp\hat H\:,
\end{equation}
where
\begin{equation*}
\hat H=\frac{1}{2}\hat H_{\bar c ab}\d z^{\bar c ab}\:.
\end{equation*}
We proceed to relate this to the heterotic Bianchi identity \eqref{eq:BIanomaly3}.

Note that up until now, the holomorphic structures constructed have not included the metric $g$. Strictly speaking, there is no need of a metric in order to construct a holomorphic structure on a bundle. We shall continue with this philosophy, by letting the metric, or hermitian structure $\omega$, be defined {\it as part of the holomorphic structure} $\bar D$. It should further be noted that the heterotic Bianchi Identity  \eqref{eq:BIanomaly} is a real equation. So far in the deformation stary, we have been working in the holomorphic parts of the bundles where everything is complex. In order to get to the heterotic Bianchi Identity, we need to take the real part of \eqref{eq:redefB} to get
\begin{equation}
\label{eq:redefB2}
-2\bp\hat H-2\p\bar{\hat H}=\frac{\a}{4}(\tr\:F^2+\tr\:\bar F^2-\tr\:R^2-\tr\:\bar R^2)\:,
\end{equation}
where ${\bar F_{\bar i}}^{\bar j}={{\p\bar A}_{\bar i}}^{\bar j}$, ${\bar R_{\bar\alpha}}^{\bar\beta}={\p\bar\Theta_{\bar\alpha}}^{\bar\beta}$, and $\bar A$ and $\bar\Theta$ are connections on the anti-holomorphic bundles $\End(\bar V)$ and $\End(\overline{TX})$ respectively. Choosing $\hat  H=-\frac{i}{2}\p\gamma+\bp\textrm{-closed}$, where $\gamma\in\Omega^{(1,1)}(X)$ and where ${\rm Re}(\gamma)=\omega$ is the heterotic hermitian form, we arrive at the heterotic Bianchi Identity \eqref{eq:BIanomaly}.


Having shown that the the holomorphic conditions of the Strominger system  and the Bianchi Identity embeds into the holomorphic structure $\bar D$, one may wonder if any choice of holomorphic structure $\bar D$ corresponds to some Strominger system? We now argue that this is indeed the case, provided we are in the large volume limit for $X$, where the $\a$-expansion is valid. 

Indeed, as is well known, the zeroth order in $\a$ solution of heterotic supergravity requires the compact space $X$ to be a Calabi-Yau. In the large volume limit, one assumes that $\a$-corrections are small. In particular, we assume that the corrections are purely geometric in nature. It follows that there exists a closed K\"ahler form $\tilde\omega$, making $(X,\tilde\omega)$ a Calabi-Yau, and so that
\begin{equation*}
\omega=\tilde\omega+\a\omega_1=\tilde\omega+\OO(\a)\:.
\end{equation*}
Furthermore, the existence of a K\"ahler form is enough to ensure that $X$ satisfys the $\p\bp$-lemma. It follows that in the large volume limit, we can assume that $X$ satisfies the $\p\bp$-lemma. As the right hand side of \eqref{eq:redefB2} is $\d$-closed, using the $\p\bp$-lemma, \eqref{eq:redefB2} can be rewritten as
\begin{equation*}
-2i\p\bp\omega_1=\frac{1}{4}\Big(\tr\:F\wedge F-\tr\:R\wedge R\Big)\:,
\end{equation*}
where the curvatures are now the curvatures on the full bundles. $\omega_1$ then denotes the first-order non-closed correction to the hermitian form. It should also be noted that modulo diffeomorphisms, we can also assume that the dilaton is constant, that is $\omega$ is balanced \cite{Gillard:2003jh, Anguelova:2010ed}.

Outside of the large volume limit the story is more subtle. Indeed, here we should really include higher orders in $\a$-effects as well, making the Strominger system void. There is a question as to what the correct heterotic ``supergravity" theory is in this case, but whatever the theory is, the Strominger system should correspond to supersymmetric vacuum solutions of such a theory in the large volume limit. Rewriting the vacuum as a condition on the connection $\bar D$ might give a clue to what the solution looks like outside of the large volume limit. Indeed, by replacing the Bianchi Identity with a holomorphic condition on $\bar D$, we have effectively removed $\a$ from the system. As there is no inherent size associated to a holomorphic structure, this condition also makes sense outside of the large volume limit. Such a condition therefore has a shot of describing the correct vacuum to all orders in $\a$. We discuss this more in the discussion section.

\subsection{The Extension Class $\H$}
Note that we may also write
\begin{equation}
\label{eq:barD}
\bD=\bp+\B+\hat H=\bp_2+\H\:,
\end{equation}
where $\bp_2=\bp+\B$ acts as $\bp$ on $TX$, and where $\B$ acts on $E$ by acting on $TX$-valued forms as before. Here we have defined the extension class $\H\in\Omega^{(0,1)}(E^*\otimes T^*X)$, extending $E$ by $T^*X$. That is,
\begin{equation}
\label{eq:classH}
\H=\hat{H}+\B=\hat{H}+c_1\F+c_2\R\:,
\end{equation}
where now $\B$ in this expression acts on endomorphism valued forms by the trace.
Writing $\bar D$ as \eqref{eq:barD} is better when we later come to compute the infinitesimal deformations of $\bD$ in section \ref{sec:defD}, by means of long exact sequences of cohomologies.
\begin{Proposition}
\label{prop:five}
\begin{equation}
\bp({\H}(x)) = - {\H}(\bp_2(x))~,\quad\forall x\in \Omega^{(0,q)}(E)~, 
\label{eq:Hcohom} 
\end{equation}
and therefore the map $\H$ is a map between cohomologies
\begin{equation}
{\H}\;:\;H_{\bp_2}^{(0,q)}(E)\longrightarrow H_{\bp}^{(0,q+1)}(T^*X).\label{eq:mapHcohom}
\end{equation}
\end{Proposition}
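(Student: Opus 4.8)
The plan is to recognize that the claimed identity $\bp(\H(x)) = -\H(\bp_2(x))$ is precisely the off-diagonal content of the holomorphicity condition $\bD^2=0$, so that the proof runs exactly parallel to that of Proposition \ref{tm:one} for the Atiyah map $\F$, now augmented by the term $\hat H$ and the Bianchi identity. First I would use the decomposition $\bD=\bp_2+\H$ of \eqref{eq:barD}, where $\bp_2=\bp+\B$ preserves the sub-bundle $E$ and $\H$ is the degree-raising map $E\to T^*X$ read off from the top row of the matrix \eqref{eq:barDmatrix}. The key observation is that for $x\in\Omega^{(0,q)}(E)$, viewed inside $\Q$ with vanishing $T^*X$-component, the $T^*X$-component of $\bD^2 x$ is reached by exactly two routes — apply $\H$ and then $\bp$, or apply $\bp_2$ (staying in $E$) and then $\H$ — so this component is precisely $\bp(\H(x)) + \H(\bp_2(x))$. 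Hence $\bD^2=0$ immediately gives the desired relation, and I would only need to make this explicit to pin down signs.

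Next I would unpack the identity by writing $x=(x_\g,x_{TX})$ with $x_\g\in\Omega^{(0,q)}(\g)$ and $x_{TX}\in\Omega^{(0,q)}(TX)$, so that $\H(x)=\B(x_\g)+\hat H(x_{TX})$ (with $\B$ acting by the trace, as in \eqref{eq:classH}) and $\bp_2 x=(\bp x_\g+\B x_{TX},\,\bp x_{TX})$. Applying $\bp$ and the graded Leibniz rule, the $\B(x_\g)$ term contributes $-\B(\bp x_\g)$ because the curvatures are holomorphic, i.e. $\bp\B=0$; this is the same input used in Proposition \ref{tm:one}, with the sign absorbed into the $(-1)^q$ built into the definitions of $\F$ and $\R$. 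The $\hat H(x_{TX})$ term contributes $-(\B\wedge\B)(x_{TX})-\hat H(\bp x_{TX})$, where the first piece uses $\bp\hat H=-\B\wedge\B$, which is exactly the Pontryagin/Bianchi relation \eqref{eq:pont0}. Comparing with $\H(\bp_2 x)=\B(\bp x_\g)+(\B\wedge\B)(x_{TX})+\hat H(\bp x_{TX})$, every term cancels in pairs, yielding $\bp\H(x)=-\H(\bp_2 x)$.

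Finally I would draw the cohomological conclusion. Since $\bp_2^2=0$ (the holomorphic structure on $E$ established in the previous section) and $\bp^2=0$ on $T^*X$, the groups $H^{(0,q)}_{\bp_2}(E)$ and $H^{(0,q+1)}_{\bp}(T^*X)$ are defined, and the identity just proved shows that $\H$ sends $\bp_2$-closed forms to $\bp$-closed forms and $\bp_2$-exact forms to $\bp$-exact forms. Therefore $\H$ descends to a well-defined homomorphism $H^{(0,q)}_{\bp_2}(E)\to H^{(0,q+1)}_{\bp}(T^*X)$, as claimed in \eqref{eq:mapHcohom}.

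I expect the only genuine obstacle to be sign bookkeeping in the graded Leibniz rule — the same subtlety that forces the $(-1)^q$ into the definitions of $\F$ and $\R$. One must check that the sign produced when $\bp$ is commuted past the one-form-valued maps $\B$ and $\hat H$ acting on a degree-$q$ form is consistent across both the $\g$-block (governed by $\bp\B=0$) and the $TX$-block (governed by \eqref{eq:pont0}); treating the two blocks separately, as above, makes the cancellation transparent and ensures the two contributions add to zero rather than reinforce. The real mathematical content — that the curvature-squared term $\B\wedge\B$ is annihilated — is nothing but the anomaly/Bianchi identity, which holds by construction of $\bD$, so no geometric input is needed beyond what \eqref{eq:pont0} already supplies.
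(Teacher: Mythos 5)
Your proposal is correct and takes essentially the same route as the paper: the paper's proof is precisely your first paragraph, namely expanding $0=\bD^2=\bp_2^2+\bp\H+\H\bp_2$ via the decomposition $\bD=\bp_2+\H$ of \eqref{eq:barD} and using $\bp_2^2=0$ to read off $\bp\H=-\H\bp_2$. Your component-wise verification using $\bp\B=0$ and \eqref{eq:pont0} is a sound expansion of that one-line argument (it re-derives the ingredients of $\bD^2=0$ already established in section \ref{subsec:Anomaly}), and the cohomological conclusion follows as you state.
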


\begin{proof}

This is a direct consequence of the fact that $\bD^2=0$. Indeed, we have by \eqref{eq:barD}
\begin{equation*}
0=\bD^2=\bp_2^2+\bp\H+\H\bp_2=\bp\H+\H\bp_2\:.
\end{equation*}


\end{proof}




The Atiyah map $\H$ is well defined as a map between cohomologies.  To see this we need to prove that the class ${\H}(x)\in H_{\bp}^{(0,q+1)}(T^*X)$ 
is invariant under gauge transformations. As gauge transformations generically spoil the holomorphic trivialisation, we return to using the operators $\bp_A$ and $\bp_\Theta$. Consider an element $\F(\alpha)$ as part of $\H(x)$, where $\alpha\in\Omega^{(0,q)}(\End(V))$. 
Recall that under a gauge transformation
\[ {\cal A} \mapsto \Phi({\cal A} - \Phi^{-1}\bp \Phi)\Phi^{-1}~,\]
where $\Phi$ takes values in the Lie algebra of the structure group of the bundle $V$.  
This implies that
\[ \alpha_t \mapsto  \Phi(\alpha_t - \bp_{A}(\Phi^{-1}\p_t \Phi))\Phi^{-1}~.\]
It can then be shown that under a gauge transformation
\[ \alpha \mapsto \Phi(\alpha - \bp_{A}Y)\Phi^{-1}~,\]
for some $Y\in\Omega^{(0,q-1)}(\End(V))$.  Hence
\begin{equation*}
\begin{split}
 \F(\alpha)=\tr(f_a\wedge\alpha)&\mapsto
 \tr(\Phi f_a\Phi^{-1}\wedge\Phi(\alpha - \bp_{A}Y)\Phi^{-1}) \\ 
&\mapsto \tr(f_a\wedge\alpha) + \bp(\tr(f_a\wedge Y)) - \tr(\bp_{A}f_a\wedge Y)~,
 \end{split}
 \end{equation*}
 where $f_a=F_{a\bar b}\,\d z^{\bar b}$. As the last term vanishes due to the Bianchi identity for $F$, we find that under a gauge transformation ${\H}(x)$ changes only by a $\bp$-exact part, and therefore the class ${\H}(x)\in H_{\bp}^{(0,q+1)}(T^*X)$ is gauge invariant. The argument for the other term $\R(\kappa)$, $\kappa\in\Omega^{(0,q)}(\End(TX))$, in $\H(x)$ is similar.

We note that the Atiyah class $\H$ should also be an element of some cohomology. Indeed, 
by Proposition \ref{prop:five}, we have for $x\in \Omega^{(0,q)}(E)$,
\begin{equation*}
0=\bp(\H(x))+\H(\bp x+\B x)=(\bp\H)(x)+\H (\B x)=(\bp\H+\B\H)(x)=(\bp_2\H)(x)\:,
\end{equation*}
where now $\B$ acts as a connection on $\H$, which has values in $E^*$. It follows that 
\begin{equation*}
\bp_2\H=0\:,
\end{equation*}
and so
\begin{equation}
\label{eq:AtiyahHcohom}
[\H]\in H^{(0,1)}_{\bp_2}(E^*\otimes T^*X)\:,
\end{equation}
as was also shown in \cite{2013arXiv1308.5159B, Anderson:2014xha}. 

For completeness, we now compute this cohomology following \cite{Anderson:2014xha}. First dualise the extension sequence \eqref{eq:ses2}, and tensor this by $T^*X$ to get
\begin{equation}
\label{eq:sesExtH}
 0\rightarrow TX^*\otimes T^*X\xrightarrow{\iota_2} E^*\otimes T^*X \xrightarrow{\pi_2} \g\otimes T^*X \rightarrow 0\:.
\end{equation}
The Atiyah map is again given by $\B=c_2\R+c_1\F$. By a long exact sequence in cohomology of \eqref{eq:sesExtH}, we get that
\begin{equation*}
H^{(0,1)}_{\bp_2}(E^*\otimes T^*X)\cong \Big[H^{(0,1)}(TX^*\otimes T^*X)\Big/\Im(\B)\Big]\oplus\ker(\B) \:.
\end{equation*}
Note then that the $\B$-part of $\H=\B+\hat H$ is clearly in $\ker(\B)$ by the Bianchi identity \eqref{eq:pont0}. These are the allowed field strengths of the bundles. The part $H^{(0,1)}(TX^*\otimes T^*X)$ then corresponds to the $\bp$-closed element $\hat H_0$ of $\hat H$, which can be added without changing this Bianchi identity. To understand why we should mod out by $\Im(\B)$, consider
\begin{equation*}
\B(x)\in\Im(\B)\subseteq H^{(0,1)}(TX^*\otimes T^*X)\:.
\end{equation*}
where $x\in H^0(\g\otimes T^*X)$. 
As a map, this acts on $TX$-valued forms. Let $\delta\in\Omega^*(TX)$. It is easy to see that
\begin{equation*}
\B(x)(\delta)=\B(x_a\delta^a)\:.
\end{equation*}
The map $\B(x)$ can therefore be regarded as part of the $\B$-map of $\H$ instead. 
It follows that we should mod out $H^{(0,1)}(T^*X\otimes T^*X)$ by maps in $\Im(\B)$.


\subsection{The Strominger System as a Yang-Mills Connection}
\label{sec:HYM}

Having seen that most of the Strominger system can be put in terms of a holomorphic connection $\bD$ on the extension bundle $\Q$, we also want to implement the Yang-Mills conditions into the story. We do this by imposing the instanton condition on $\bD$. This might seem like it could impose extra conditions, but we will see that this is in fact not the case.

We may assume that $\bD$ is the $(0,1)$-part of some connection
\begin{equation*}
\D=\bD+D
\end{equation*}
on $\Q$. Here $D$ is of type $(1,0)$, and $D^2=0$. Recall that
\begin{equation*}
\bD=\bar\nabla+\B+\H\:,
\end{equation*}
where
\begin{equation*}
\B\::\:\Omega^{(0,q)}(TX)\rightarrow\Omega^{(0,q+1)}(\g)\:,\;\;\;\;\H\::\:\Omega^{(0,q)}(E)\rightarrow\Omega^{(0,q+1)}(T^*X)\:,
\end{equation*}
and similar for $D$. $\bar\nabla$ is the connection on the individual bundles.

For the gauge bundle $\End(V)$, the connection $A$ is required to satisfy the Yang-Mills condition. We shall also require that it is hermitian. We have not specified the connections on $TX$ and $T^*X$ parts of $\Q$ yet, other than that they agree with the holomorphic structures on these bundles and we assume they are hermitian connections. In order then for $\D$ to satisfy the Yang-Mills condition, it is easy to see from Appendix \ref{app:HYMconfbal} that we also need the connections on $TX$ and $T^*X$ to do so. Luckily, we know from section \ref{eq:usual} that the Chern-connection $\nabla^c$ corresponding to $\omega$ is Yang-Mills to zeroth order in $\a$,
\begin{equation}
\label{eq:YMZero}
\omega\lrcorner R^c=\OO(\a)\:.
\end{equation}
Performing an infinitesimal gauge transformation on the left-hand side of \eqref{eq:YMZero}, this reads as is shown in Appendix \ref{app:Primitive}
\begin{equation*}
\delta_\epsilon\left(\omega\lrcorner R^c\right)=\Delta_{\bp_{\nabla^c}}\epsilon\:,
\end{equation*}
where $\epsilon$ is proportional to the gauge transformation. If we let $\epsilon=\a\epsilon_1$ we get
\begin{equation*}
\delta_\epsilon\left(\omega\lrcorner R^c\right)=\a\Delta_{\bp_{\nabla^c_0}}\epsilon_1+\OO(\a^2)\:,
\end{equation*}
where the zero denotes the zeroth order geometry. As $\Delta_{\bp_{\nabla^c_0}}$ spans $\Omega^0(\End(TX_0))$ for the zeroth order geometry it follows that the right hand side of \eqref{eq:YMZero} can be chosen to be $\OO(\a^2)$ by an appropriate gauge transformation. In other words, we can take the connections on $TX$ and $T^*X$ to satisfy the Yang-Mills condition to the order we are working. Finally, recall also that the connection on the $\End(TX)$-part of $\Q$ is required to satisfy the Yang-Mills as well. As we will see in the next chapter and as noted above, a change of this connection corresponds to a field redefinition. We may as well choose our fields so that the connection on the $\End(TX)$ part of $\Q$ is Yang-Mills as well.


In appendix \ref{app:HYMconfbal} we show that, on conformally balanced manifolds, the condition of an extended connection to satisfy the Yang-Mills condition is equivalent to the connections on each individual bundles to do so, in addition to the extension class to be harmonic with respect to the Laplacian defined by the Gauduchon metric $\hat g$. As each extension class has a harmonic representative, this does not introduce further constraints. The connection $\D$ on $\Q$ is constructed in a stepwise manner by extensions, as shown above. Provided then that we choose each representative of extension classes harmonic with respect to the Gauduchon Laplacian, it follows that we can assume $\D$ to satisfy a instanton condition on its own. That is, in addition to being holomorphic, we have
\begin{equation}
\label{eq:Dinstanton}
\widehat\omega\wedge\widehat\omega\wedge F_\D=0\:,
\end{equation}
where $F_\D=\D^2$ is the curvature of $\D$. Furthermore, the condition of holomorphy, and the Yang-Mills condition can be derived from an instanton condition
\begin{equation*}
F_{\D mn}\gamma^{mn}\eta=0\:.
\end{equation*}
Interestingly then, the Strominger system can be repackaged in terms of an instanton connection on $\Q$. We discuss possible implications of this in the discussion section \ref{sec:discussSS}. 

The condition \eqref{eq:Dinstanton} appears to be a very natural BPS-condition for heterotic supergravity, potentially valid outside of the large volume limit. It should however be emphasized that in choosing $\D$ to satisfy the Yang-Mills condition we used the fact that the zeroth order geometry is Calabi-Yau, assuming that a large volume limit exists. It would be interesting to see if this is a sensible condition for examples where no zeroth order geometry exisits. E.g. the examples of \cite{Dasgupta:1999ss, Becker:2006et} or non-compact solutions \cite{Carlevaro:2009jx}. This is a subject of further study.


\subsection{Deforming $\bD$}
\label{sec:defD}

We now proceed to deform the holomorphic structure $\bD$, in order to compute the first order deformation space $T\M_{\bD}$. As we shall see, this space corresponds to the first order deformation space of the Strominger system. 

Deformations of the holomorphic structure determined by 
\begin{equation*}
\bD\;=\; \left[ \begin{array}{cc}
\bp & \;{\H}\\
0  & \bp  \end{array} \right]\::\;\;\;\;\Omega^{(0,q)}(\Q)\rightarrow\Omega^{(0,q+1)}(\Q)\:,
\end{equation*}
correspond to elements of $H^{(0,1)}_{\bD}(\Q)$. We will compute this cohomology by the usual means of a long exact sequence in cohomology. Above we have defined a short exact extension sequence \eqref{eq:sesH}, with extension class $\H$. This gives rise to a long exact sequence in cohomology
\begin{equation} 
\begin{split}
& 0 \rightarrow H^0(T^*X)\xrightarrow{\iota'} H^0(\Q) \xrightarrow{\pi'} H^0(E) \\
&\xrightarrow{{\H}_0} H^{(0,1)}(T^*X) \xrightarrow{\iota'} H^{(0,1)}(\Q) \xrightarrow{\pi'} H^{(0,1)}(E) \\
&\xrightarrow{{\H}_1} H^{(0,2)}(T^*X) \rightarrow H^{(0,2)}(\Q) \rightarrow \ldots\:,
\end{split}
\label{eq:les3}
\end{equation}
where the connecting homomorphism  is $\H$, and where we denote by ${\H}_q$ the map $\H$ when we need to make it clear that it is acting on $(0,q)$-forms.


We are now ready to write the infinitesimal moduli space of holomorphic structures of the extension $\Q$. By exactness of the sequence \eqref{eq:les3}, it follows that
\begin{equation}
\label{eq:defBI}
H_{\bD}^{(0,1)}(\Q)\cong\Im(i')\oplus\Im(\pi')\cong \Big[H^{(0,1)}(T^*X)\Big/\textrm{Im}({\H}_0)\Big]\oplus\ker({\H}_1)\:,
\end{equation}
is the tangent space to the moduli space of deformations of the holomorphic structure defined by $\bD$  on $\cal Q$.
The condition 
\begin{equation}
\label{eq:condkerH}
x\in\ker({\H}_1)\subseteq H_{\bp_2}^{(0,1)}(E)
\end{equation}
is required for the deformations $x$ of the holomorphic structure on $E$ to preserve the holomorphic structure $\bD$, including the heterotic Bianchi Identity, and the elements in the factor 
\[ {\cal M}_{HS} = \Big[H_{\bp}^1(T^*X)\Big/\textrm{Im}({\H}_0)\Big]\]
are the moduli of the (complexified) hermitian structure. In the following subsections we interpret in detail the elements in $H_{\bD}^{(0,1)}(\Q)$,  which by construction should be precisely the infinitesimal moduli space of the Strominger system. 

\subsection{Explicit Deformations}

Firstly, we would like to compare our results with those obtained by directly varying the anomaly cancelation condition.  Recall that
\begin{equation}
H =  i\, (\p- \bp)\, \omega = \J(d\omega) = \d B + {\cal C S}~.\label{eq:anomalyA}
\end{equation}
where  
\[ {\cal C S} = \frac{\alpha'}{4}\, (\omega_{CS}^A-\omega^\nabla_{CS})~.\]
The variations of equation \eqref{eq:anomalyA} are given by \cite{Anderson:2014xha, Candelas2014}
\begin{equation}
\begin{split}
\p_t H &= \J(\d(\p_t\omega)) + (\Delta_t + \Delta^*_t)^p\wedge H_{pmn}\, \d x^m\wedge\d x^n\\
&= \frac{\alpha'}{2}\, \left(\tr (\p_t A\wedge F) - \tr(\p_t\Theta \wedge R )\right) + \d\tilde\B_t~,\label{eq:varanomaly}
\end{split}
\end{equation}
where
\begin{equation}
\tilde\B_t = \p_tB - \frac{\alpha'}{4}\, \left(\tr (A\wedge \p_t A) - \tr(\Theta \wedge\p_t\Theta )\right)~,\label{eq:varB}
\end{equation}
and
\[ \Delta_t = (\p_t Z^A) \Delta_A~,\qquad \Delta^*_t = (\p_t \bar Z^{\bar A})\, \bar\Delta_{\bar A}~,\]
with $\bar\Delta_{\bar A}$ the complex conjugate of $\Delta_A$. Let
\begin{equation*}
{\cal D}_t = \tilde B_t + i \p_t\omega~.
\end{equation*}
Separating equation \eqref{eq:varanomaly} by type we find
\begin{equation}
\begin{split}
(0,3)\, {\rm part:}\qquad &\bp{\cal D}_t^{(0,2)} = 0\\
(1,2)\, {\rm part:}\qquad &\p{\cal D}_t^{(0,2)} +  \bp{\cal Z}_t^{(1,1)}  = 2{\H}(x_t)_a\wedge\d z^a\:,
\label{eq:eqsforZ}
\end{split}
\end{equation}
where we have further separated out a complex structure dependence of the $(1,1)$-part of the $B$-field, which appears when $B^{(2,0)}\neq0$,
\begin{equation*}
(\p_tB)^{(1,1)}=(\p_tB_{a\bar b})\,\d z^{a\bar b}+\Delta^a_t\wedge B_{ab}\,\d z^b\:.
\end{equation*}
$\bp B^{(0,2)}$ then gives the $\bp$-closed part of the $\hat H$-part of the map $\H$ defined in equation \eqref{eq:classH}.  Here we let
\begin{equation}
\label{eq:complexherm}
\Z_t^{(1,1)}=\p_t(B_{a\bar b})\,\d z^{a\bar b}- \frac{\alpha'}{4}\, \left(\tr (a\wedge \alpha_t) - \tr(\theta \wedge\kappa_t)\right)+i\p_t\omega^{(1,1)}=\B_t^{(1,1)}+i\p_t\omega^{(1,1)}\:,
\end{equation}
where $\kappa_t=\p_t\Theta^{(0,1)}$ and $\alpha_t=\p_tA^{(0,1)}$ are taken to be deformations of connections on the holomorphic part of the bundles.

The map now reads
\begin{align}
{\H}(x)_a =\hat H_{\bar c ab}\,\d z^{\bar c}\wedge \Delta{}^b+ \frac{\sqrt{\alpha'}}{2}\, \Big(\tr(f_a\wedge\tilde\alpha) +i\,\tr(r_a\wedge\tilde\kappa)\Big)\:,
\label{eq:Hdef}
\end{align}
for a generic $x=(\tilde\kappa,\tilde\alpha,\Delta)\in\Omega^{(0,q)}(E)$. Here $f_a=F_{a\bar b}\,\d z^{\bar b}$, $r_a=R_{a\bar b}\,\d z^{\bar b}$, and 
\begin{equation*}
\hat H = -i\p\omega-\hat H_0\:,
\end{equation*}
where $[\hat H_0]=[\bp B^{(2,0)}]\in H^{(2,1)}(X)\cong H^{(0,1)}(T^*X\otimes T^*X)$. In particular, for deformations of the holomorphic structures $(\kappa_t,\alpha_t,\Delta_t)$, we have
\begin{equation*}
x_t=(\tilde\kappa_t,\tilde\alpha_t,\Delta_t)=\Big(i\,\frac{\sqrt{\a}}{2}\kappa_t,\frac{\sqrt{\a}}{2}\alpha_t,\,\Delta_t\Big) \in H_{\bp_2}^{(0,1)}(E)\:,
\end{equation*} 
and the map reads
\begin{equation*}
{\H}(x_t)_a=\hat H_{\bar c ab}\,\d z^{\bar c}\wedge \Delta_t{}^b+ \frac{\alpha'}{4}\, \big(\tr(f_a\wedge\alpha_t) + \tr(r_a\wedge\kappa_t)\big)\:.
\end{equation*}
Note how the rescaling of the map $\F$ in $\bp_2$ also requires a rescaling of the corresponding elements $\alpha_t$ in $x_t$, as can be easily seen from \eqref{eq:atiyahF}. The same holds true for $\R$ and $\kappa_t$.  


Note how the condition \eqref{eq:condkerH} is naively more restrictive than equation \eqref{eq:eqsforZ}. This means that the representative of the class $\Z^{(0,2)}\in H^{(0,2)}(X)$  must be such that the $\bp$-closed form $\p\Z_t^{(0,2)}$ is $\bp$-exact if the deformed structure is to remain a holomorphic structure on Q. This subtlety regarding deformations of the anomaly cancellation condition versus deformations of the holomorphic structure $\bD$ deserves a bit more attention. First recall that $\bD$ is a holomorphic structure on $\Q$ if and only if the Bianchi identities hold. Deformations of $\bD$, which are the elements of $H^{(0,1)}_{\bD}(\Q)$, therefore includes deformations of the Bianchi identity.  These correspond to deformations of the anomaly cancellation modulo $\d$-exact terms. One might think that in our scheme the deformations of the anomaly cancellation condition are only defined modulo $\d$-closed terms. However, due to flux quantisation, which states that the harmonic part of $H$ is quantised, we find that closed infinitesimal deformations of the anomaly cancellation condition must be exact.

It follows that the elements of $H^{(0,1)}_{\bD}(\Q)$, i.e. deformations of the Strominger system, which of course includes the Bianchi identity, only define deformations of the anomaly cancellation modulo $\d$-exact terms. We can use this ambiguity to get rid of the $\p$-exact $(1,2)$-piece of the deformation of the anomaly cancellation condition. We might also get an extra $\bp$-exact piece which can be absorbed into $\bp\Z_t^{(1,1)}$. In this way the $\p$-exact piece is trivial from the point of view of deformations of $\bD$, and may equivalently be set to zero. Equations \eqref{eq:eqsforZ} then give a good interpretation of the elements in $H^{(0,1)}(T^*X)$ in the moduli space as the parameters for $\bp$-closed part of the complexified hermitian structure ${\cal Z}_t^{(1,1)}$ in equation \eqref{eq:complexherm}.\footnote{This is also obtained in\cite{Candelas2014} from the dimensional reduction of the 10 dimensional heterotic string theory.}



\subsection{An Interesting Subcase}
We would now like to discuss an interesting subcase where all deformations of the anomaly cancellation correspond to deformations of $\bD$, irrespective of adding $\d$-exact pieces. That is, deformations of the anomaly cancellation condition, and deformations of the Bianchi identity are equivalent. 

A mild assumption on the cohomology of $X$ would guarantee that this condition is satisfied. Suppose that
\begin{equation}
 H_{\bp}^{(0,1)}(X) = 0~.\label{eq:novectors}
 \end{equation}
This condition is very  interesting regarding deformations of the heterotic $SU(3)$-structure of the manifold $X$. 
It is not too hard to prove that this  is enough to guarantee that
\begin{equation*}
H_{\bp}^{(2,1)}(X)=H^{(2,1)}_{\d}(X)\:,
\end{equation*}
so that the allowed complex structure variations in this case are counted by the dimension of $H^{(0,1)}_{\bp}(TX)$, and not a subset of this (see section \ref{sec:varsJ} on deformations of the complex structure of $J$).  These matters are discussed further in\cite{DKS2014}.

Note that
\begin{equation*}
H^{(0,2)}_{\bp}(X)=H^{(3,1)}_{\bp}(X)=H^{(0,1)}_{\bp}(X)\:,
\end{equation*}
where the first equality follows from taking the Hodge-dual and complex conjugating. The second equality is clear by the existence of a nowhere vanishing holomorphic three-form~$\Omega$.

Therefore, by \eqref{eq:novectors}, 
\[ H_{\bp}^{(0,2)}(X)= 0~.\]

Returning now to the moduli space, this result means that ${\D}_t^{(0,2)}$ must be $\bp$-exact, by \eqref{eq:eqsforZ}. Hence, the requirement \eqref{eq:condkerH} and \eqref{eq:defBI} are equivalent and the {\it infinitesimal} moduli space is given by equation \eqref{eq:defMSS}.  The condition \eqref{eq:novectors} is therefore sufficient to ensure that {\it deformations of the anomaly cancelation condition} are equivalent to deformations of the holomorphic structure $\bD$ on $\cal Q$.

\subsection{The Quotient by $\textrm{Im}({\H}_0)$}
\label{sec:polystabH}
In the computation leading to  \eqref{eq:defBI}, we found that we need to take the quotient by 
\begin{equation*}
\textrm{Im}({\H}_0)\cong\{\tr({\H}_0({x}))\:\vert\:\ x\in H^0(E)\}\:.
\end{equation*}
Noting that (see equation \eqref{eq:Hdef})
\[ {\H}(x)_a\wedge\d z^a = \frac{\sqrt{\a}}{2}\, \tr(F\,\tilde\alpha)~,\quad \tilde\alpha\in H^0(\End(V))\]
we find that
\[ \textrm{Im}({\H}_0)\cong\{\tr(F\, \tilde\alpha)\:\vert\:\ \tilde\alpha\in H^0(\End(V))\}\subset H_{\bp}^{(1,1)}(X)\:.\]
which may be non-trivial whenever $H^0(\End(V))$ is non-trivial, that is, when the bundle $V=\oplus_iV_i$ is poly-stable with bundle factors $V_i$ for the which ${\rm End}(V_i)$ has non-vanishing traces. 
Let $V_i$ be such a stable bundle where $\End(V_i)$ has non-vanishing trace, and let
\begin{equation*}
\tilde\alpha_i\in H^0(\End(V_i))=\mathbb{C}\:,
\end{equation*}
where the $\mathbb{C}$ corresponds to the trace of the endomorphisms. These correspond to sections of $\End(V_i)$ by the Dolbeault Theorem. Without loss of generality, we may assume that this section takes the form $c_i I_i$, where $c_i$ is a constant, and $I_i$ is the identity isomorphism, which is part of the Lie-algebra for algebras of non-trivial trace. We may therefore assume that a generic section takes the form
\begin{equation}
\tilde\alpha=\sum_i c_i I_i\:,\label{eq:gensec}
\end{equation}
where the constants  $c_i$ are chosen so that $\tilde\alpha$ is traceless. It follows that the elements in $\textrm{Im}({\H}_0)$ are of the form\footnote{
Note that ${\rm Im}({\H}_0)=\{\sum_ic_i\tr F_i\}$ without any further constraints on the constants $c_i$. This is due to the fact that 
$\sum_ic_i\tr F_i = \sum_i(c_i+K)\tr F_i$  for any constant $K$, as  $\sum_i\tr F_i = 0$.}
\begin{equation*}
\label{eq:Dtermcond}
[h]=\sum_i c_i[\tr(F_i)]\:,
\end{equation*}
where the brackets refer to cohomology classes. 

Recall that the hermitian moduli are of the form
\begin{equation}
\label{eq:hermmod}
\Z_t^{(1,1)}=\B_t^{(1,1)}+i\p_t\omega^{(1,1)}\:.
\end{equation}
We claim that this constraint on the $\p_t\omega$-part of the moduli \eqref{eq:hermmod} is enforced by the Yang-Mills condition. In appendix \ref{app:Primitive} we show in Theorem \ref{tm:zero} that the Yang-Mills conditions pose no extra conditions on the moduli for {\it stable bundles}. If, on the other hand, the vector bundle is {\it poly-stable}, then these conditions may introduce constraints on the moduli. The constraint is exactly of the form above, and we take a moment to explain~why. 

Let $V_i$ be a stable bundle of nonzero trace. As $V=\oplus_i V_i$ is poly-stable of zero slope,
\[\mu(V_i) = \mu(V) = 0~, \]
we must have that the Yang-Mills condition for a bundle $V_i$ is,
\begin{equation*}
\omega\lrcorner F_i=0\:.
\end{equation*}
As noted before, it is only the trace part of the bundle that can impose non-trivial constraints from this condition. Taking the trace and using instead the Gauduchon metric $\hat\omega$ this condition becomes
\begin{equation}
\label{eq:YMi}
\hat\omega\hat\lrcorner\,\tr\:F_i=0\:.
\end{equation}
Varying equation \eqref{eq:YMi}, and performing a computation similar to that leading to equation \eqref{eq:varF11final}, we obtain that on a {\it conformally balanced manifold}
\begin{equation*}
\p_t\omega\,\tilde\lrcorner\,\tr F_i\:\in\:\textrm{Im}(\tilde\Delta_\p)+\textrm{Im}(\tilde\Delta_{\bp})\:,
\end{equation*}
where we recall that $\tilde g=e^{-2\phi}g$. Equivalently, this condition means that
\begin{equation*}
(\p_t\omega,\tr\:F_i) = 0\:,
\end{equation*}
where the integration is done with respect to $\tilde g$. Considering the Hodge decomposition of $\p_t\hat\omega$ with respect to the $\bp$ operator and $\tilde g$, it is easy to see that the $\bp^{\tilde\dagger}$-exact piece  drops out from the inner product.  Hence, only the $\bp$-closed part contributes, corresponding to in $H_{\bp}^{(1,1)}(X)$\footnote{By Proposition \ref{prop:CBcond}, the $\bp$-exact part is determined entirely by deformations of the complex structure and the dilaton.}.  These correspond to the hermitian moduli.  Then the vanishing of the inner product implies that we should mod out by forms proportional to $\tr\:F_i$ in the hermitian moduli, or more generally, by terms proportional to $\sum_ic_i\tr\:F_i$. 

Interestingly, by computing the first cohomology $H^1_{\bD}(\Q)$, which gives the tangent space $T\M$ of the moduli space of holomorphic structures on $\Q$ at $\bD$, we find that the Yang-Mills condition gets implemented for free. This is not surprising, as discussed in the next section, where we consider $T\M$  in more detail. As we will see, this is naturally included in the quotient by $\bD$-exact terms.

The constraint of modding out by $\Im(\H_0)$ is however given for the full hermitian modulus $\Z_t^{(1,1)}$, not just the $\p_t\omega^{(1,1)}$-part. The constraint on the $\B_t^{(1,1)}$-part of the moduli \eqref{eq:hermmod} is due to gauge transformations. Recall the definition of $\tilde\B_t$ through \eqref{eq:varB}. It follows from this, and the gauge transformations of the B-field \eqref{eq:gaugeB}, that trivial deformations $\tilde\B_t$, corresponding to gauge transformations, take the form
\begin{equation}
\label{eq:varBgauge}
\tilde\B_{t\,\textrm{gauge}}=-\frac{\a}{2}\big(\tr\,F\epsilon_t-\tr\,R\eta_t)+\frac{\a}{4}\d\big(\tr\,A\epsilon_t-\tr\,\Theta\eta_t\big)\:,
\end{equation}
where $\epsilon_t$ and $\eta_t$ correspond to infinitesimal gauge transformations of $A$ and $\Theta$ respectively. At this point, we are not interested in gauge transformations that change the holomorphic structures, corresponding to $\bp$-exact terms. That is, we set $\Delta_t=\bp\epsilon_t=\bp\eta_t=0$. It follows that $\eta_t=0$ by stability of $TX$. The same is true for $\epsilon_t$ if $V$ is stable. If $V=\oplus_iV_i$ is poly-stable, we may assume by \eqref{eq:gensec} that
\begin{equation*}
\epsilon_t=\sum_i c_i I_i\:,
\end{equation*}
and 
\begin{equation*}
\tilde\B_{t\,\textrm{gauge}}=\B^{(1,1)}_{t\,\textrm{gauge}}=-\frac{\a}{2}\sum_ic_i\tr\,F_i+\frac{\a}{4}\sum_ic_i\d\tr\,A_i=-\frac{\a}{4}\sum_ic_i\tr\,F_i\:,
\end{equation*}
where we have used that $\d\tr A_i=\tr F_i$ by symmetry of the trace. It follows that any term in $\Z_t^{(1,1)}$ which lies in $\Im(\H_0)$ should be considered trivial, and can thus be modded out.

\subsection{The Moduli Space of the Strominger System}

We now claim that the tangent space of the moduli space of the Strominger system, is given by $H_{\bD}^{(0,1)}(\Q)$ in equation \eqref{eq:defBI}. As we have seen, the extension bundle $\cal Q$, with extension class $\H$ in equations \eqref{eq:sesH} and \eqref{eq:classH}, with the holomorphic structure $\bD$ determined by the Bianchi identities, together with the requirement that $\bD$ is an instanton \eqref{eq:Dinstanton}, is equivalent to the Strominger system\footnote{We suppress the issue regarding the existence of a holomorphic three-form on $X$, discussed in section \ref{sec:varsJ}.}.


Consider the elements in the cohomology
\begin{equation}
\label{eq:defMSS}
H^{(0,1)}_{\bD}(\Q)\cong {\cal M}_{HS}\oplus\ker({\H}_1)\:,
\qquad  {\cal M}_{HS} = \Big[H^{(0,1)}(T^*X)\Big/\textrm{Im}({\H}_0)\Big]~,
\end{equation}
which we would like to interpret as the moduli of the Strominger system. The cohomology group  $H_{\bD}^{(0,1)}( \Q)$ is of course the tangent space to the moduli space of deformations of the holomorphic structure on
$\cal Q$ given by the differential operator $\bD$ in equation \eqref{eq:barD}. The key issue here is that by preserving the holomorphic structure on $\cal Q$ these moduli correspond to deformations which preserve the Bianchi identities.

We begin with the $\bD$-closed elements
\begin{equation}
 {\H}_1(x_t)_a = - \bp y_{t\, a}~,\qquad \bp_2 x_t = 0~,\label{eq:Dclosed}
\end{equation} 
for $x_t\in \Omega^{(0,1)}(E)$ and $y_t\in \Omega^{(0,1)}(T^*X)$. Clearly, the  left hand side of the first equation only involves $x_t\in H_{\bp_2}^{(0,1)}(E)$, that is, only involves variations of the holomorphic structure of $E$. Hence, the moduli in
\[ \ker(\H_1)\subseteq H_{\bp_2}^{(0,1)}(E)~,\]
represent those deformations of the holomorphic structure of $E$ which preserve the full holomorphic structure $\bD$ on $\Q$.
On the other hand, for a fixed holomorphic structure on $E$, that is for $x_t = 0$, we have that $\bp y_t = 0$ and so 
the moduli in 
\[{\cal M}_{HS}= \Big[H^{(0,1)}(T^*X)\Big/\textrm{Im}({\H}_0)\Big]~\]
correspond to the (complexified) hermitian moduli, that is
\begin{equation*}
y_{t\, a}\d z^a=\Z_t^{(1,1)}\:.
\end{equation*}

Consider now the $\bD$-exact forms. Let
\begin{equation*}
(y_t,x_t)
\in \Omega^{(0,1)}(\Q)~,\qquad
x_t = 
(\tilde\kappa_t,\tilde\alpha_t,\Delta_t)
\in \Omega^{(0,1)}(E)~,
\end{equation*}
and 
\begin{equation*}
(f_t,\xi_t)\in\Omega^0({\cal Q})~,\qquad
\xi_t  = (i\,\frac{\sqrt{\a}}{2}\eta_t,\frac{\sqrt{\a}}{2}\epsilon_t,\delta_t) \in \Omega^0(E)~.
\end{equation*}
The $\bD$-exact forms satisfy
\begin{equation}
\left( \begin{array}{c}
 y_t \\ x_t
\end{array}\right)
=
\left( \begin{array}{cc}
\bp f_t + {\H}_0(\xi_t) \\
\bp_2  \xi_t \end{array} \right).\label{eq:Dexact}
\end{equation}
The second equation in \eqref{eq:Dexact} are the trivial deformations of the holomorphic structure on $E$ corresponding to changes in $J$ due to diffeormophisms
\[\Delta_t = \bp \delta_t~,\]  
changes of the gauge fields from gauge transformations and trivial deformations of $J$
\[  \tilde\alpha_t = \frac{\sqrt{\a}}{2}\big(\bp_{A}\epsilon_t + {\F}(\delta_t)\big)~,\]
and a similar equation for the trivial deformations of the tangent bundle
\[\tilde\kappa_t = i\,\frac{\sqrt{\a}}{2}\big(\bp_\Theta\eta_t + \R(\delta_t)\big)~.\]

The first equation in \eqref{eq:Dexact} can be written as
\begin{equation}
(y_t)_a = \bp (f_t)_a + {\H}_0(\xi_t)_a = \bp (f_t)_a + \hat H_{\bar c ab}\d z^{\bar c}\delta^b+\frac{\a}{4}\Big( \tr(f_a\epsilon_t) - \tr(r_a\eta_t)\Big)~. \label{eq:trivialy}
\end{equation}
The first term in equation \eqref{eq:trivialy} enforces the triviality of $\bp$-exact terms, corresponding to trivial deformations of the holomorphic structure on $T^*X$, so that hermitian moduli take values in $H^{(0,1)}(T^*X)$. This is also related to the preservation of the conformally balanced condition. 
Indeed, that the $\bp$-exact part of $y_t$ is trivial enforces Proposition \ref{prop:CBcond}. In fact, Proposition \ref{prop:CBcond} means that, as long as $TX$ is stable, the preservation of the conformally balanced condition ($\d\p_t\hat\rho = 0$) determines the $\bp$-exact part of $\p_t\omega^{(1,1)}$.
We should hence mod out by $\bp$-exact terms in $\p_t\omega^{(1,1)}$, as these is already given by the preservation of the conformally balanced condition. Note also that the $(1,1)$-part of the gauge transformation of the B-field \eqref{eq:gaugeB0} also gives rise to $\bp$-exact $(1,1)$-forms, i.e. $\delta B^{(1,1)}_{gauge}=\bp\lambda^{(1,0)}+\p\lambda^{(0,1)}$, where the $\p$-exact piece corresponds to a trivial deformation of the $(1,0)$-type anti-holomorphic structure on $T^*X$. The remaining $\bp$-exact term corresponds to a trivial deformation of the holomorphic structure on $T^*X$.
We should therefore also mod out by $\bp$-exact terms when considering the $\B_t^{(1,1)}$-part of the hermitian moduli.


The last three terms come from trivial deformations of the holomorphic structure of $E$, that is $\bp_2\xi_t$. Keeping fixed the deformations of the holomorphic structure on $E$, that is, setting 
\[\bp_2\xi_t = 0~,\]
we see that the second and last term vanishes due to the stability of the tangent bundle $TX$.
The third term corresponds to the discussion in the previous section.  In fact, since a generic section of ${\rm End}(V)$ takes the form in equation \eqref{eq:gensec} we have that  this term is of the form
\[ \tr(\epsilon_t F) = \sum_i\,\tr( c_i F_i) = [h]~, \]
where $[h]$ represents a class in $H^{(1,1)}(X)$. In particular, as we argued in the previous section, this implements the Yang-Mills condition on the slope zero poly-stable bundle $V$.

\section{Discussion, and Future Directions}
\label{sec:discussSS}
We complete the chapter with a review and a discussion of the results. As we have seen, the Strominger system can be put in terms of a hermitian Yang-Mills operator $\bD$ on an extension bundle $\Q$ over $X$. The infinitesimal moduli space of the Strominger system is in term computed by the first cohomology 
\begin{equation*}
T\M=H^{(0,1)}(\Q)\:,
\end{equation*}
where $H^{(0,1)}(\Q)$ is given by a subset of the usual cohomologies $H^{(0,1)}(T^*X)$ (K\"ahler moduli), $H^{(0,1)}(TX)$ (complex structure moduli), $H^{(0,1)}(\End(V))$ (bundle moduli), and $H^{(0,1)}(\End(TX))$ which are ``moduli" related to deformations of $\nabla$. We stress that the elements of $H^{(0,1)}(\End(TX))$ do not correspond to physical fields. Rather, they are field redefinitions as we shall see in the next chapter. 

\subsection*{Relation to the Four-Dimensional Theory}
Now that we know what the infinitesimal moduli space of $N=1$ heterotic compactifications $T\M$ is, the next step is to relate it to a four-dimensional $N=1$ effective theory. Indeed, as discussed in the introduction, this is the part of the phenomenological goal of heterotic supergravity. Here, we make some brief remarks about this. In doing so, we also comment on future directions that are worth investigating. The story presented in this chapter is only a first step in figuring out what the correct low energy theory is, and there are plenty of unresolved issues to be worked out. Putting the system in terms of a holomorphic structure $\bD$ can hopefully serve to help with this.


When compactifying the heterotic string to maximally symmetric four-dimensional spacetime, the corresponding four-dimensional supergravity is an $N=1$ supergravity with K\"ahler potential $K(\X,\bar\X)$, and superpotential $W(\X)$. Here the low energy fields $\X$ correspond to the moduli-fields that we found in the previous section. A priori, the moduli fields $\X$ take the form
\begin{equation*}
\X=(y,x)\in H^{(0,1)}(\Q)\:,
\end{equation*}
where
\begin{equation*}
y\in H^{(0,1)}(T^*X)\Big/\Im(\H_0)\:,\;\;\;\;x=(\kappa,\alpha,\Delta)\in H^{(0,1)}(E)\:.
\end{equation*}

Moreover, the Bianchi Identities impose the following conditions
\begin{equation*}
\Delta\in\ker(\F)\cap\ker(\R)\:,\;\;\;\; x\in\ker(\H)\:.
\end{equation*}
These equations have interesting implications in terms of moduli stabilisation. Indeed, we expect them to be related to F-term conditions in the low energy theory. The superpotential of the low energy theory is conjectured to have the following Gukov-Vafa-Witten-type formula \cite{Gukov:1999ya, CyrilThesis}:
\begin{equation}
\label{eq:suppot}
W=\int_X\Big(H+i\d\omega\Big)\wedge\Omega=\int_X\Big(H_0+\frac{\a}{4}(\omega_{CS}^A-\omega_{CS}^\nabla)+i\d\omega\Big)\wedge\Omega\:,
\end{equation}
where $H_0=\d B$. This superpotential was used in \cite{Anderson:2010mh} to relate the conditions on the simultaneous deformations of the bundle $X$ and the complex structure on $X$ to F-terms in the lower energy four-dimensional theory. That is, for $\Delta\in H^1(TX)$, we should have $\Delta\in\ker(\F)$. This analysis generalises easily for the tangent bundle connection, and one would get F-terms for complex structure moduli for which $\Delta\;\slashed\in\;\ker(\R)$. In a similar fashion, it would be interesting to see if the superpotential \eqref{eq:suppot} can be used to generate F-terms for the moduli-fields $x$, and if these can be related to the requirement that $x\in\ker(\H)$. We leave this for future work. 

Recall also the condition we found for the hermitian moduli $\Z_i$, that we should mod out by forms of type $\tr\:F_i$ in the case of poly-stable bundles $V=\oplus_iV_i$. We can relate this to the Yang-Mills conditions, which can be shown to correspond to D-term conditions in the low energy theory, at least in the Calabi-Yau case \cite{Anderson:2009sw, Anderson:2011cza}. We expect this to be true in the non-K\"ahler case as well. In this way, one may relate the cohomology of the allowed infinitesimal deformations
\begin{equation*}
\X\in H^{(0,1)}(\Q)
\end{equation*}
to the low energy theory.

Furthermore, we would like to consider obstructions to higher order deformations of $\bD$. From a deformation theory point of view, these should be valued in $H^{(0,2)}_{\bD}(\Q)$. Returning to the case of a bundle $V\rightarrow X$, it turns out that the obstructions to the deformations of the bundle also gives rise to a non-vanishing superpotential in the low energy theory \cite{Berglund:1995yu, Anderson:2011ty}. It would be interesting to see if the superpotential \eqref{eq:suppot} generate similar terms when we calculate the obstructions for the generic deformations $\X$.

Finally, in order to write down the full low energy 4d supergravity, we also need to know what the K\"ahler potential $K$ is. To find this, one needs to do the dimensional reduction of the ten-dimensional theory down to four-dimensions, and read off the corresponding kinetic terms \cite{McOrist2014, Candelas2014}. This is well beyond the scope of this thesis. However, the fact that we have put the Strominger system in terms of a holomorphic structure $\bD$ on $\Q$ might give us a hint of what this metric could look like. Indeed, investigations into the existence of K\"ahler metrics on the moduli space of holomorphic bundles over complex manifolds has been carried out in the mathematical literature before. Particularly in the case when the compact manifold $X$ is K\"ahler \cite{kobayashi1987differential, kim1987moduli, schumacher1993moduli, huybrechts2010geometry}. 
Hopefully, the correct K\"ahler metric for the 4d theory will come from some suitable generalisation of these structures. What exactly this generalisation is remains to be seen.

\subsection*{The Yang-Mills Connection $\D$}
As we saw in section \ref{sec:HYM}, and appendix \ref{app:HYMconfbal}, we can without loss of generality assume that $\bD$ is an instanton. Indeed, this implements the instanton conditions for the connections $A$ and $\nabla$. Curiously then, it seems then that the heterotic supergravity, including anomalies, can perhaps be written as a Yang-Mills theory for some connection $\D$ on the extension bundle $\Q$. Indeed, the supersymmetry condition for such a connection would read
\begin{equation*}
F_{\D\,AB}\Gamma^{AB}\epsilon=0\:,
\end{equation*}
where $A,B,..$ denote flat indices, so that $\Gamma^M={e_A}^M\Gamma^A$, where $\{{e_A}^M\}$ is a ten-dimensional vielbein frame. The six-dimensional compact supersymmetric solutions of such a theory would then precisely correspond to the Strominger system, at least in the large volume limit. It also has the potential of describing the vacuum {\it outside the large volume limit}, as the instanton condition  can be defined without invoking $\a$. Indeed, holomorphic structures have no inherent size associated with them. Note however that {\it curvatures of the usual dynamical fields} are involved in the construction of $\D$. There is therefore a question as to what exactly the dynamical field of such a theory should be.

Taking such a viewpoint might however have interesting consequences, both from a four-dimensional point of view, but also from a world-sheet point of view. Indeed, due to lack of supersymmetry, the heterotic world-sheet theory has far more flexibility then its type II cousins, making it harder to compute quantum corrections, i.e. $g_s$-corrections to these theories. Putting the structure in terms of a single Yang-Mills connection might help in this direction. Indeed, world-sheet models for Yang-Mills connections have been studied in the literature before, see e.g. \cite{Park:1993ga, Park:1993fy, Hofman:1999dt, Hofman:2000yx}. It would be interesting to see if the present case can be put in this framework.


There is also a question remaining as to how general the story we have presented here is. Indeed, as we have seen, the Strominger system corresponds to a holomorphic structure $\bD$ on $\Q$, which is compatible with $\Q$ as an the extension bundle. This requires $\bD$ to take the upper block-diagonal form \eqref{eq:barDmatrix}. Clearly, not all connections on $\Q$ are of this form. Any matrix can however locally be put in this form, upon some local $\Q$-valued transformation. Note however that there might be global obstructions to doing so.
These transormations are furthermore expected to be related to a heterotic form of T-duality \cite{2013arXiv1308.5159B, Bedoya:2014pma, Coimbra:2014qaa}, and we intend to investigate this more in the future \cite{delaOssaSoon}.

\begin{subappendices}

\section{Instanton Connections on Extensions over Conformally Balanced Manifolds}
\label{app:HYMconfbal}
In this appendix we discuss the hermitian Yang-Mills conditions on connections on extended bundles. It turns out that on conformally balanced manifolds, this comes down to a choice of representative of the extension class, and hence poses no extra constraint on the geometry. We use this in section \ref{sec:HYM} in relation to the Yang-Mills conditions.

Let $X$ be a complex space with hermitian two-form $\widehat\omega$ which is balanced, that is
\begin{equation*}
\d(\frac{1}{2}\hat\omega\wedge\hat\omega)=\hat\omega\wedge\d\hat\omega=0\:.
\end{equation*}
The corresponding metric $\hat g$ is Gauduchon. Let $E_1$ and $E_2$ be vector bundles over a manifold $X$, with connections $\nabla_1=\d+A_1$, and $\nabla_2=\d+A_2$. Consider the split exact extension sequence
\begin{equation}
\label{eq:genericExt}
0\rightarrow E_1\rightarrow \Q\rightarrow E_2\rightarrow 0\:,
\end{equation}
where we define a connection on $\Q$ as
\begin{equation}
\D\;=\; \left[ \begin{array}{cc}
\nabla_2 & \,{\X} \\
0  & \,\nabla_1  \end{array} \right]\:.
\label{eq:D}
\end{equation}
This is the most generic form a connection on the extension bundle $\Q$ can take\footnote{It should again be noted that any connection can be written in this way, upon a $\Q$-valued coordinate redefinition.}. Note that when $E_2=TX$ and $E_1$ is some vector bundle with a Lie-bracket, e.g. $E_1=\End(W)$ for some vector bundle $W$, the sequence \eqref{eq:genericExt} is the Atiyah sequence of $\Q$ as a Lie algebroid.

Next, we decompose
\begin{equation*}
\D=\bD+D\:,
\end{equation*}
in terms of it's $(1,0)$-part $D$, and $(0,1)$-part $\bD$ with respect to the complex structure $J$ on $X$. We also decompose
\begin{align*}
\X&=\bar\alpha+\alpha\\
\nabla_1&=\p_1+\bp_1\\
\nabla_2&=\p_2+\bp_2\:,
\end{align*}
where $\alpha$ is type $(0,1)$. Let us then compute, for $(x,y)\in\Omega^{(0,q)}(\Q)$
\begin{equation*}
\bD^2\left[ \begin{array}{c} x \\ y\end{array} \right]
\;=\;F_{\D}^{(0,2)}\left[ \begin{array}{c} x \\ y\end{array} \right]\;=\; 
\left[ \begin{array}{c} \bp_2^2x +\bp_2\alpha(y)+\alpha(\bp_1y) \\ \bp_1^2y\end{array} \right]
\end{equation*}
where $F_{\D}$ is the curvature of $\D$. We see that in order for $\D$ to be holomorphic, we need
\begin{align}
\label{eq:holbunds}
\bp_1^2=\bp_2^2&=0\\
\label{eq:classa}
\bp_2\alpha+\alpha\bp_1 &=0\:.
\end{align}
The first equation implies that $\nabla_1$ and $\nabla_2$ are holomorphic connections, so we may set $\bp_1=\bp_2=\bp$ by a choice of holomorphic trivialization.
It then follows from equation \eqref{eq:classa} that
\begin{equation*}
\bp\alpha=0\:,
\end{equation*}
or
\begin{equation*}
[\alpha]\in\textrm{Ext}^1(E_1,E_2)=H^{(0,1)}(E_2\otimes E_1^*)\:.
\end{equation*}
Note that we also have
\begin{align}
\label{eq:holbunds}
\p_1^2=\p_2^2&=0\\
\label{eq:classa}
\p_V\bar\alpha&=0\:,
\end{align}
by requiring $F_\D$ to be type $(1,1)$. 

We also want to check the Yang-Mills conditions. That is, we will impose that the connections on both bundles $E_i$ are instantons. We will see that this is indeed equivalent to choosing $\bD$ an instanton, provided we choose the correct extension class $\alpha$. We then compute
\begin{equation}
\label{eq:ExtCurv}
\D^2\left[ \begin{array}{c} x \\ y\end{array} \right]\;=\; 
\left[ \begin{array}{c}
F_{\nabla_2}(x) + (\bp_{V}\bar\alpha)(y)+(\p_{V}\alpha)(y) \\ F_{\nabla_1}(y)  \end{array} \right]\:,
\end{equation}
where we have set $V=E_2\otimes E_1^*$ for ease of notation. It is clear that we need both $\nabla_1$ and $\nabla_2$ to be instantons. It follows from the Li-Yau theorem, Theorem \ref{tm:LiYau}, that both $E_i$ are poly-stable of slope zero.


We additionally assume that we have hermitian structures $h_i$ on the bundles $E_i$ with respect to which $\nabla_i$ are hermitian. We then have well-defined inner-products given by
\begin{equation*}
(\beta_1,\beta_2)_i=\int_X\beta_1^j\wedge\hat*\bar\beta_2^{\bar k}h_{i,j\bar k}\:,
\end{equation*}
for $\beta_m\in\Omega^{(0,q)}(E_i)$. Here the Hodge-dual is with respect to $\hat\omega$. These in turn induce inner products on $\Omega^{(0,q)}(V)$. We can use this to define a Laplacian
\begin{equation}
\label{eq:lapl}
\hat\Delta_V=\bp^{\hat\dagger}_V\bp_V+\bp_V\bp_V^{\hat\dagger}\:,
\end{equation}
where the adjoint is defined with respect to the inner-product. That is
\begin{equation*}
\bp_V^{\hat\dagger}=-\hat*\p_V\hat*\:,
\end{equation*}
The Laplacian \eqref{eq:lapl} is elliptic with a finite-dimensional kernel, and it induces the usual Hodge-decomposition of $V$-valued forms
\begin{equation*}
\Omega^{(0,q)}(V)=\ker(\hat\Delta_V)\oplus\Im(\bp_V)\oplus\Im(\bp_V^{\hat\dagger})\:,
\end{equation*}
where the forms in $\ker(\hat\Delta_V)$ are harmonic with respect to the Gauduchon metric $\hat g$. Note that by choosing the harmonic representative for $\alpha$, i.e.
\begin{equation*}
0=\bp_{V}^{\hat\dagger}\alpha=i\hat*\p_{V}(\widehat\omega^2\wedge\alpha)=i\hat*(\widehat\omega^2\wedge\p_{V}\alpha)\:,
\end{equation*}
and similarly choosing $\bar\alpha$ $\p_V$-harmonic, we see from \eqref{eq:ExtCurv} that we get
\begin{equation*}
\widehat\omega\wedge\widehat\omega\wedge F_\D=0\:.
\end{equation*}
I.e. the connection $\D$ is an instanton.

\section{Stability and Variations of the Primitivity Conditions for the Curvatures.}
\label{app:Primitive}

In this appendix, we discuss variations of the primitivity conditions for a curvature of a {\it stable bundle} $V$
\[ \omega\lrcorner F = 0 ~.\]
This condition should be preserved under a general deformation, in particular under the deformations of the bundle $E$, but also including deformations of the hermitian parameters. 
We show that on a conformally balanced manifold, a general variation of the primitivity condition of the curvature does not pose any constraints on the first order moduli space whenever the bundle is stable. 

Under a general variation the Yang-Mills equation becomes
\[0 = \p_t(\omega\lrcorner F)
= \half\,\p_t\left(\omega^{mn}F_{mn}\right)
= \half\, \left((\p_t\omega^{mn})  F_{mn} + \omega^{mn}\p_t F_{mn}\right)
~, \]
and therefore
\begin{equation}
 \omega\lrcorner \p_t F = -  \half\, \p_t(\omega^{mn})  F_{mn} =   (h_t^{(1,1)})\lrcorner F ~.
  \label{eq:varinstone}
 \end{equation}
 This equation means that $F$ acquires a non-primitive part under a general deformation
 \[(\partial_t F)^{(1,1)} = \frac{1}{3}\, \big((h_t^{(1,1)})\lrcorner F\big)\, \omega + f_t~,\]
 where $f_t$ is a primitive $(1,1)$-form, $\omega\lrcorner f_t = 0$. 
 Note that this non-primitive part of $\p_t F$ depends on the variations of the hermitian form
 and it is needed so that $F_t$ is primitive with respect to $\omega_t$.
 
On the other hand, considering instead a general variation of $F$ using equation \eqref{eq:curvV}. We find
\begin{equation}
(\partial_t F)^{(1,1)} =  \bp_A b_t + \p_A \alpha_t~,\label{eq:varinsttwo}
\end{equation}
where $\p_A=\p+a$ is the $(1,0)$-part of $\d_A$, $\alpha_t$ is the $(0,1)$-part of $\p_t A$ as before, and $b_t$ is the $(1,0)$-part of $\p_t A$. We are interested in the infinitesimal deformations of the holomorphic bundle counted by $\alpha_t$, and we therefore set $b_t=0$. $\p_A$ is given by, for $\beta\in \Omega^*(\End(V))$,
\begin{equation} 
\partial_A\beta = \p\beta + [a, \beta] ~,
\label{eq:pAdag}
\end{equation}
where the brackets act as commutators or anti-commutators, depending on if $\beta$ is even or odd respectively.
Again, it is easy to prove that this operator also squares to zero, $\p_A^2 = 0$ if and only if $F^{(2,0)} = 0$. 

Putting together equations \eqref{eq:varinstone} and \eqref{eq:varinsttwo} we obtain a relation
\begin{equation}
\label{eq:varF11}
 (h_t^{(1,1)})\lrcorner F = \omega\lrcorner\p_A \alpha_t~,
\end{equation}
which seems to represent a constraint on the moduli space of hermitian structures $h_t$.  However for stable bundles this is not the case.  

\begin{Theorem}\label{tm:zero}
On a {\it conformally balanced manifold}, with a stable holomorphic vector bundle $V$, such that the endomorphisms of $V$ are traceless, there are no gauge bundle parameters on the right hand side of equation \eqref{eq:varF11}, and there is always a gauge transformation so that \eqref{eq:varF11} is satisfied for any variation $h_t$ of the hermitian structure $\omega$.
\end{Theorem}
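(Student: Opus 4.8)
The plan is to read \eqref{eq:varF11} as an equation in $\Omega^0(\End(V))$ and to show that its right-hand side $\omega\lrcorner\p_A\alpha_t$ can be made to equal any prescribed $(h_t^{(1,1)})\lrcorner F$ by exploiting the gauge freedom in $\alpha_t$, while the genuine bundle moduli drop out entirely. First I would observe that both sides of \eqref{eq:varF11} are $\End(V)$-valued functions, the left-hand side being fixed once $h_t$ and the primitive curvature $F$ are given. The problem therefore reduces to analysing the operator
\begin{equation*}
P\;:\;\Omega^{(0,1)}(\End(V))\longrightarrow\Omega^0(\End(V))\:,\qquad P(\alpha)=\omega\lrcorner\p_A\alpha\:.
\end{equation*}
The key technical input is a conformally balanced version of the \kahler identity: on a \kahler manifold $\omega\lrcorner\p_A\alpha = i\,\bp_A^{\dagger}\alpha$ for $\alpha\in\Omega^{(0,1)}(\End(V))$, and on a conformally balanced $X$ the same relation holds, up to a nonzero constant, once the adjoint $\bp_A^{\tilde\dagger}$ is taken with respect to the rescaled metric $\tilde g=e^{-2\phi}g$, the torsion corrections being annihilated precisely by the conformally balanced condition $\d(e^{-2\phi}\omega\wedge\omega)=0$. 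This is the computation ``similar to that leading to \eqref{eq:varF11final}'' and is the step where the conformally balanced hypothesis is used.

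Granting this identification, I would Hodge-decompose $\alpha_t$ with respect to $\bp_A$ and $\tilde g$ as $\alpha_t = \alpha_t^{\,\mathrm{harm}} + \bp_A\xi + \bp_A^{\tilde\dagger}\zeta$, with $\xi\in\Omega^0(\End(V))$. Because $P$ is proportional to $\bp_A^{\tilde\dagger}$, the harmonic piece is killed ($\bp_A^{\tilde\dagger}\alpha_t^{\mathrm{harm}}=0$) and so is the co-exact piece (since $(\bp_A^{\tilde\dagger})^2=0$), leaving $P(\alpha_t)\propto\bp_A^{\tilde\dagger}\bp_A\xi=\tilde\Delta_{\bp_A}\xi$. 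The harmonic representatives are exactly the physical bundle moduli $H^{(0,1)}(\End(V))$, so they make no contribution to \eqref{eq:varF11}; this is the assertion that ``there are no gauge bundle parameters on the right-hand side.'' The remaining $\bp_A$-exact term $\bp_A\xi$ is precisely an infinitesimal gauge transformation of $A$, so solving \eqref{eq:varF11} amounts to finding $\xi$ with $\tilde\Delta_{\bp_A}\xi$ equal to a constant multiple of $(h_t^{(1,1)})\lrcorner F$.

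Finally, this is solvable if and only if the right-hand side is orthogonal (in $\tilde g$) to $\ker\tilde\Delta_{\bp_A}$ on $\Omega^0(\End(V))$, which is the space of holomorphic sections $H^0(\End(V))$. Here the stability hypothesis enters: for a stable bundle $V$ with traceless endomorphisms one has $H^0(\End(V))=0$ (the only holomorphic endomorphisms are scalars, which are excluded by tracelessness), so $\tilde\Delta_{\bp_A}$ is invertible on $\Omega^0(\End(V))$ and the required $\xi$ exists for every variation $h_t$. I expect the main obstacle to be the first step, namely establishing the modified \kahler identity on the non-\kahler, conformally balanced background and checking that all torsion contractions genuinely cancel; once $P$ is identified with $\bp_A^{\tilde\dagger}$ up to a nonvanishing factor, the remainder is standard Hodge theory for the elliptic Laplacian together with the vanishing $H^0(\End(V))=0$.
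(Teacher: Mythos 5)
Your proposal follows the paper's own proof almost step for step: the paper likewise rewrites $\omega\lrcorner\p_A\alpha_t$ as an adjoint operator using the conformally balanced condition, Hodge-decomposes $\alpha_t$, notes that the harmonic (moduli) and co-exact pieces drop out, and solves the resulting elliptic equation $\widehat\Delta_{\bp_A}\epsilon_t\propto \hat h_t^{(1,1)}\,\widehat\lrcorner\, F$ using self-adjointness plus $H^0(\End(V))=0$ for a stable bundle with traceless endomorphisms. The one genuine slip is in the step you yourself flag as the crux: the conformal rescaling that makes the identification exact is the \emph{Gauduchon} metric $\hat g=e^{-\phi}g$, not $\tilde g=e^{-2\phi}g$. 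Concretely, in six dimensions $\hat*\beta=e^{(p-3)\phi}*\beta$ on $p$-forms, so on a $(0,1)$-form $\hat*\alpha=e^{-2\phi}*\alpha$; since $*\omega=\tfrac{1}{2}\omega\wedge\omega$ and $\tfrac{1}{2}\omega^2\wedge\alpha\propto i\,{*\alpha}$ (one-forms are automatically primitive), one finds
\begin{equation*}
\omega\lrcorner\p_A\alpha=*\Bigl(\p_A\alpha\wedge\tfrac{1}{2}\omega^2\Bigr)
=e^{2\phi}\,*\,\p_A\Bigl(e^{-2\phi}\,\tfrac{1}{2}\omega^2\wedge\alpha\Bigr)
\propto e^{2\phi}\,*\,\p_A\bigl(e^{-2\phi}*\alpha\bigr)
= -\,e^{-\phi}\,\hat*\,\p_A\,\hat*\,\alpha
\propto e^{-\phi}\,\bp_A^{\hat\dagger}\alpha\:,
\end{equation*}
where the second equality uses precisely $\p(e^{-2\phi}\omega\wedge\omega)=0$; this is how the dilaton terms cancel. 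With your weight $\tilde g=e^{-2\phi}g$ one instead has $\tilde*\alpha=e^{-4\phi}*\alpha$ on one-forms and $\tilde*=e^{6\phi}*$ on six-forms, giving $\bp_A^{\tilde\dagger}\alpha=-e^{2\phi}*\bigl(\p_A{*\alpha}-4\,\p\phi\wedge*\alpha\bigr)$, while $\omega\lrcorner\p_A\alpha\propto *\bigl(\p_A{*\alpha}-2\,\p\phi\wedge*\alpha\bigr)$: an uncancelled $\p\phi\wedge*\alpha$ term survives, so the claimed identity fails for that metric unless the dilaton is constant.

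Two smaller remarks. First, even with the correct weight the proportionality factor is the nowhere-vanishing \emph{function} $e^{-\phi}$, not a constant; the paper absorbs it by rescaling $\hat h_t=e^{-\phi}h_t$, which is harmless for the surjectivity argument. Second, once the Gauduchon weight is substituted, the rest of your argument — $(\bp_A^{\hat\dagger})^2=0$ killing the co-exact piece, the harmonic part being the bundle moduli in $H^{(0,1)}(\End(V))$ and contributing nothing (this is exactly the theorem's claim that no gauge bundle parameters appear), the reduction to $\widehat\Delta_{\bp_A}\epsilon_t$ with $\epsilon_t$ an infinitesimal gauge transformation, and invertibility from the triviality of $\ker\widehat\Delta_{\bp_A}=H^0(\End(V))$ by stability — coincides exactly with the paper's proof, so the fix is local and your overall strategy is sound.
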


\begin{proof}
Let $\hat g_{mn} = e^{- \phi}\, g_{mn}$ be the Gauduchon metric and $\widehat\omega = e^{-\phi}\, \omega$ be the corresponding Gauduchon hermitian form.  Let $\widehat\lrcorner$ and $\hat *$ be the contraction operator and the Hodge dual operator with respect to the Gauduchon metric respectively.  Then
\begin{equation*}
(h_t^{(1,1)})\lrcorner F 
= *\left(\p_A\alpha_t\wedge *\omega\right)
= e^{2\phi}\, *\left(\p_A\alpha_t \wedge\hat\rho \right)~,
\end{equation*}
where 
\begin{equation*}
\hat\rho = e^{-2\phi}\, \rho~,\qquad\rho = *\omega= \half\, \omega\wedge\omega~.
\end{equation*}
Because on a conformally balanced manifold $\d\hat\rho = 0$, we have
\begin{align*}
(h_t^{(1,1)})\lrcorner F 
&= e^{2\phi}\,*\left(\bp_A(\p_A (\alpha_t \wedge \hat\rho)\right)
= e^{2\phi}\,* \left(\p_A\hat * (J(\alpha_t ))\right)\\
&= i\, e^{-\phi}\,\hat* \left(\p_A\, \hat * \, \alpha_t \right)~,
\end{align*}
where we have used the fact that $a_t$ is a $(0,1)$-form.  We have also used
\begin{equation}
\label{eq:confdual}
\hat *\beta = e^{(p-3)\phi}\, *\beta\:,
\end{equation}
which is true for any $p$-form $\beta$ in six-dimensions. We now note that the operator on the right hand side in the last equality is the adjoint, with respect to $\hat g$, of the differential operators $\bp_A$ given by
\begin{align*}
\bp_A^{\hat\dagger} &= - \hat*\p_A\hat*\:.
\end{align*}
Using this we now have
\begin{equation*}
(h_t^{(1,1)})\lrcorner F = 
i\, e^{-\phi}\,\bp_A^{\widehat\dagger}\,\alpha_t~,
\end{equation*}
where $\widehat\dagger$ means the adjoint of the operators taken with respect to the Gauduchon metric. Consider now the Hodge decomposition of $\alpha_t$ \[ \alpha_t = \bp_A\epsilon_t + \bp_A^{\hat\dagger}\, \eta_t + \alpha_t^{har}~,\]
where $\epsilon\in \Omega^0(\End(V))$, $\eta\in \Omega^{(0,2)}(\End V)$ and $\alpha_t^{har}$ is the $\bp_A$-harmonic part of $\alpha_t$ (using the Gauduchon metric).  
Then
\begin{align*}
\bp_A^{\widehat\dagger}\, \alpha_t  &=\bp_A^{\widehat\dagger}\bp_A\epsilon_t~,\\
\end{align*}
Then, equation \eqref{eq:varF11} becomes
\begin{equation}
i\, e^{\phi}\, (h_t^{(1,1)})\lrcorner F = 
-\, \bp_A^{\widehat\dagger}\bp_A\epsilon_t\:.
~.\label{eq:varF11prefinal}
\end{equation}
Any variation $\alpha_t$ of $A$ corresponding to a gauge transformation, and which is therefore trivial,
is of the form
\begin{equation}
\label{eq:complexgauge}
\alpha_t = \bp_A\epsilon_t~,
\end{equation}
for some $\epsilon_t\in\Omega^0(\End(V))$. 
Consider the Laplacian
\[\Delta_{\bp_A} = \bp_A^\dagger\, \bp_A
+ \bp_A\, \bp_A^\dagger~,
\]
and let $\widehat\Delta_{\bp_A}$ be the corresponding Laplacian with respect to the Gauduchon metric.
Then we can write equation \eqref{eq:varF11prefinal} as
\begin{equation}
e^{\phi}\, (h_t^{(1,1)})\lrcorner F = \hat h_t^{(1,1)}\,\widehat\lrcorner F = 
i\, \widehat\Delta_{\bp_A}\epsilon_t\:,
\label{eq:varF11final}
\end{equation}
where $\hat h_t = e^{-\phi}\, h_t$ and $\widehat\lrcorner$ is the contraction operator with respect to the Gauduchon metric.

This equation means that $\hat h_t^{(1,1)}\,\widehat\lrcorner F$, which belongs to the space $ \Omega^0(\End(V))$, is in the image of this Laplacian which is a self-adjoint elliptic operator.  Therefore, whenever the kernel of this Laplacian is trivial, the image of the Laplacian spans all of the space $ \Omega^0(\End(V))$ and equation \eqref{eq:varF11final} always has a solution for any  $\hat h_t^{(1,1)}$.  This is precisely the case for a {\it stable} bundle $V$ because $H^0(\End V) = 0$ for traceless endomorphisms. We conclude that, for stable vector bundles $V$ with traceless endomorphisms, equation \eqref{eq:varF11final} poses no constraints on the deformations of the hermitian moduli. 

\end{proof}

We see then that variations of the Yang-Mills equation imposes no constraints on the variations $h_t$ of the hermitian form $\omega$, 
provided the bundles are {\it stable} with traceless endomorphisms. It should be noted however that infinitesimal deformations may be obstructed at higher orders, and that stability or the Yang-Mills condition may be spoiled.



\end{subappendices}





\chapter{Connection Redundancy, and Higher Orders in $\a$}
\label{ch:connredef}

We will now shed more light on the ambiguity which appeared in the previous chapter, concerning moduli related to deformations of $TX$ as a holomorphic bundle defined by the holomorphic connection $\nabla$. We argued that these moduli could not be physical, but where needed for the mathematical structure presented in the last chapter to work out. This chapter concerns the physical meaning of the ``moduli" related to $\nabla$. We also discuss higher orders in $\a$-corrections to heterotic supergravity, and we argue that the structure we found at $\OO(\a)$ persists, at least to second order in $\a$, provided one chooses this connection appropriately. Heterotic supergravity has been considered at higher orders in $\a$ before \cite{Witten:1986kg, Bergshoeff:1988nn, Bergshoeff1989439, Gillard:2003jh, Anguelova:2010ed}.
Ambiguities concerning the connection $\nabla$ have also been discussed extensively in the literature before, both from a world-sheet perspective  \cite{Hull1986187, Sen1986289, 0264-9381-4-6-027, Melnikov:2012cv, Melnikov:2012nm}, where a change of this connection can be shown to correspond to a field redefinition, or from the supergravity point of view  \cite{Hull198651, Hull1986357, Becker:2009df, Melnikov:2014ywa}, where a change of connection choice can be shown to correspond to a change of regularisation scheme in the effective action. We will review and extend some of these results. This chapter is based on \cite{delaossa2014}.


\section{Introduction}
We begin in section \ref{sec:FirstOrder1} with a discussion of the connection choice $\nabla$ on the tangent bundle $TX$ needed for the supersymmetry equations and equations of motion to be compatible. This leads to the an instanton condition on $\nabla$ \cite{Ivanov:2009rh, Martelli:2010jx, delaOssa:2014cia}
\begin{equation*}
R_{mn}\Gamma^{mn}\eta=0\:,
\end{equation*}
where $R_{mn}$ is the curvature two-form of $\nabla$ and $\eta$ is the spinor parametrising supersymmetry on $X$. This condition has an associated moduli space of infinitesimal deformations 
\begin{equation}
\label{eq:Mnabla}
T\M_\nabla\cong H^{(0,1)}_{\bp_\Theta}(\End(TX))\:,
\end{equation}
which we considered in the last chapter. These moduli cannot be physical, and the main purpose of this chapter is to understand their appearance. 

It is known that in order to have a consistent supergravity at $\OO(\a)$, the connection must be the Hull connection \cite{Hull1986357, Bergshoeff1989439}. A deformation of this connection is equivalent to a field redefinition or a change of the regularisation scheme in the effective action \cite{Sen1986289, Hull198651}. We are interested in the space of allowed deformations, for which there are supersymmetric solutions to the supergravity equations of motion. We find that even though we need to deform the supersymmetry transformations accordingly, as was also pointed out in \cite{Melnikov:2014ywa}, the conditions for preservation of supersymmetry may be assumed to be the same. Moreover, the space of connections which allow for such supersymmetric solutions to exist is again given by \eqref{eq:Mnabla}. 

In section \ref{sec:higher} we discuss extensions of these results to second order in $\a$. We find that the choice of the Hull connection, which was required at at $\OO(\a)$, should in general be corrected at higher orders. Indeed, as we shall see, insisting on the Hull connection can put additional constraints on the higher order geometry. This was also noted in \cite{Ivanov:2009rh}, where the first order geometry was taken as exact, resulting in a Calabi-Yau geometry. 

At $\OO(\a^2)$, we can assume that compact supersymmetric solutions satisfy the Strominger system, provided the connection $\nabla$ satisfies the instanton condition. This condition looks surprisingly like a supersymmetry condition corresponding to the connection $\nabla$ as if it was a dynamical field. Indeed, it was precisely this fact that $(\nabla,\psi_{IJ})$, where $\psi_{IJ}$ is the supercovariant curvature, transforms as an $SO(9,1)$-Yang-Mills multiplet at $\OO(\a)$ which lead to the construction of the $\OO(\a)$-action in the first place \cite{Bergshoeff1989439}. As also noted in \cite{Bergshoeff1989439}, this is symmetric with the gauge sector of the theory, and it is natural to assume this symmetry to higher orders in $\a$. This also prompts us to make a conjecture for what the connection choice should be at higher orders in $\a$.

We review the results of the chapter and discuss future directions in the discussion, section \ref{sec:discConn}. We have left some technical details to the appendices.

\section{First Order Heterotic Supergravity}
\label{sec:FirstOrder1}

Let us begin by recalling the bosonic part of the action at this order \cite{Bergshoeff1989439}\footnote{Although this action is valid to second order in $\a$, we only need it to first order in this section.}
\begin{align}
\label{eq:actionRedef}
S=&\int_{M_{10}}e^{-2\phi}\Big[*\mathcal{R}-4\vert\d\phi\vert^2+\frac{1}{2}\vert H\vert^2+\frac{\alpha'}{4}(\tr\vert F\vert^2-\tr\vert R\vert^2)\Big]+\OO(\a^3),
\end{align}
where $R$ is the curvature of the connection $\nabla$ on $TX$. The choice of connection $\nabla$ on $TX$ is a subtle question. Firstly it cannot be a dynamical field, as there are no modes in the corresponding string theory corresponding to this. Hence, $\nabla$ must depend on the other fields of the theory in some particular way. This dependence is forced upon us once the supergravity action and supersymmetry transformations are specified.




Let us also recall the supersymmetry variations \cite{Strominger1986, Bergshoeff1989439},
\begin{align}
\label{eq:O1spinorsusyConn1}
\delta\psi_M &=\nabla^+_M\epsilon=\Big(\nabla^{\hbox{\tiny{LC}}}_M+\frac{1}{8}\H_M\Big)\epsilon+\OO(\a^2)\\
\label{eq:O1spinorsusyConn2}
\delta\lambda&=\Big(\slashed\nabla^{\hbox{\tiny{LC}}}\phi+\frac{1}{12}\H\Big)\epsilon+\OO(\a^2)\\
\label{eq:O1spinorsusyConn3}
\delta\chi&=-\frac{1}{2}F_{MN}\Gamma^{MN}\epsilon+\OO(\a),
\end{align}

If we want the supergravity action to be invariant under the supersymmetry transformations \eqref{eq:O1spinorsusyConn1}-\eqref{eq:O1spinorsusyConn3} at $\OO(\a)$, we need a particular choice of connection in the action, namely the Hull connection $\nabla^-$, given by \eqref{eq:Hullconn}. This connection is needed in order that $(\nabla^-,\psi_{IJ})$ transforms as an $SO(9,1)$ Yang-Mills multiplet to the given order, as explained in \cite{Bergshoeff1989439}. Here $\psi_{IJ}$ is the supercovariant curvature given by
\begin{equation}
\label{eq:scurv}
\psi_{IJ}=\nabla^+_I\psi_J-\nabla^+_J\psi_I\:.
\end{equation}
With this, the full first order heterotic action is invariant under the supersymmetry transformations \eqref{eq:O1spinorsusyConn1}-\eqref{eq:O1spinorsusyConn3}, together with the corresponding bosonic transformations which we omit to write down for brevity.

\subsection{Instanton Condition}
\label{sec:instcond}
Let us now proceed to compactify the theory to four-dimensional Minkowski space, following the last chapter. The reader is referred to Section \ref{sec:hetcomp} for details. As we shall see in section \ref{sec:supsol1}, Proposition \ref{prop:Strom}, we may assume that a supersymmetric solution is a solution of the Strominger system, even when the connection on $TX$ is not the Hull connection.
Furthermore, supersymmetry should be compatible with the bosonic equations of motion derived from \eqref{eq:actionRedef}. This leads to a condition on $\nabla$ known as the instanton condition \cite{Ivanov:2009rh, Martelli:2010jx}, which we also give a proof of in appendix \ref{app:proof}. It should be noted that for supersymmetric solutions, as we also show in Appendix \ref{app:Hull}, the Hull connection does satisfy the instanton condition to $\OO(\a)$ \cite{Martelli:2010jx}.

Requiering $\nabla$ to be an instanton implies that it satisfies the conditions
\begin{equation}
\label{eq:inst2}
R\wedge\Omega=0,\;\;\;\;R\wedge\omega\wedge\omega=0\:,
\end{equation}
which are similar to those for the field-strength $F$. 
The first condition in \eqref{eq:inst2} implies that $R^{(0,2)}=0$. Therefore there is a holomorphic structure $\bp_\Theta$ on $TX$. We denote $TX$ with this holomorphic structure as $(TX,\nabla)$. 


The second condition of \eqref{eq:inst2} says that the connection $\nabla$ is Yang-Mills, more precisely, $\nabla$ is an instanton. By Theorem \ref{tm:LiYau}, such a connection exists if and only if the holomorphic bundle $(TX,\nabla)$ is poly-stable. Moreover, the connection is the unique hermitian connection with respect to the corresponding hermitian structure on $TX$.

As shown in appendix \ref{app:Primitive}, the Yang-Mills condition is stable under infinitesimal deformations. The infinitesimal deformation space for this connection is therefore
\begin{equation}
\label{eq:modulitheta}
T\M_{\nabla}=H^{(0,1)}_{\bp_\Theta}(X,\End(TX))\:.
\end{equation}
More explicitly, in appendix \ref{app:Primitive} we showed that for each $[\kappa]\in T\M_{\nabla}$,\footnote{Here $[\kappa]$ denote equivalence classes of deformations modulo gauge transformations.} where $\kappa=\delta\Theta^{(0,1)}$, there is a corresponding element $\kappa\in[\kappa]$ so that the Yang-Mills condition is satisfied. Starting from the instanton connection, there is then an infinitesimal moduli space $T\M_\Theta$ of connections for which the equations of motion are satisfied.

As mentioned, for the supergravity action to be invariant under the supersymmetry transformations \eqref{eq:O1spinorsusy1}-\eqref{eq:O1spinorsusy3}, the choice of connection is reduced further. In particular, invariance of the first order action forces the connection to be the Hull connection $\nabla^-$ \cite{Bergshoeff1989439}. Under these supersymmetry transformations, we therefore cannot choose any element in $T\M_{\nabla}$ when deforming the Strominger system. Rather we have to choose the element corresponding to a deformation of the Hull connection. 



\subsection{Changing the Connection}
We could ask what happens if we deform the connection in the action? Firstly, such deformations do not correspond to physical fields. Indeed, we shall see that they are equivalent to field redefinitions \cite{Sen1986289}. 
Secondly, insisting upon changing this connection means that we need to change the supersymmetry transformations correspondingly. It will however turn out that the conditions for supersymmetric solutions can be assumed to remain the same. Moreover, the condition that the new connection allows for supersymmetric solutions to the theory forces the new connection precisely to satisfy the instanton condition.

Let us discuss what happens when we change the connection $\nabla$ used in the action. That is, we let
\begin{equation}
\label{eq:connchange}
\nabla=\nabla^-+t\theta\:,
\end{equation}
where $\theta=\theta(\Phi)$ is a function of all the other fields of the theory, which we collectively have denoted by $\Phi$, and $t$ is an initesimal parameter. In the next section, we will take $t=\OO(\a)$, but for now we just assume it corresponds to an infinitesimal deformation of the connection. We are interested in what happens to the theory under such a small deformation.

Under supersymmetry, the new connection one-forms ${\Theta_I}^{JK}$ together with the supercovariant curvature $\psi_{IJ}$ transform as
\begin{align*}
\delta{\Theta_I}^{JK}=&{(\delta\Theta^-+t\delta\theta)_I}^{JK}=\frac{1}{2}\bar\epsilon\Gamma_I\psi^{JK}+t{\delta\theta_I}^{JK}+\OO(\a)\\
\delta\psi_{IJ}=&-\frac{1}{4}R^+_{IJKL}\Gamma^{KL}\epsilon=-\frac{1}{4}R^-_{KLIJ}\Gamma^{KL}\epsilon+\OO(\a)\\
=&-\frac{1}{4}\Big(R_{KLIJ}-t(\d_{\Theta^-}\theta)_{KLIJ}\Big)\Gamma^{KL}\epsilon+\OO(\a)
\end{align*}
where we have used \eqref{eq:curvatureid}. As the connection $\nabla$ always appears with a factor of $\a$, the $\OO(\a)$-terms can be neglected to the order we are working at, but they will become important in the next section when we discuss the theory to higher orders in $\a$. We thus see that $(\Theta,\psi_{IJ})$ transforms as an $SO(9,1)$-Yang-Mills multiplet, modulo $\OO(t)$ and $\OO(\a)$-terms. As noted, the $\OO(\a)$-terms can be ignored for now, but the $\OO(t)$-terms will have to be dealt with. This is done by changing the supersymmetry transformations accordingly as we shall see below\footnote{That a change of the connection requires a change of the supersymmetry transformations in order to have a supersymmetry invariant action was also noted in \cite{Melnikov:2014ywa}.}.

A lemma of Bergshoeff and de Roo \cite{Bergshoeff1989439} (see also \cite{Andriot:2011iw}) states that the action deforms as
\begin{equation}
\label{eq:lemmabos}
\frac{\delta S}{\delta\nabla^-}\propto\a\textrm{B}_\OO(\a^2)\:,
\end{equation}
under an infinitesimal deformation of the Hull connection. Here $\textrm{B}_0$ denotes a combination of zeroth order bosonic equations of motion. As the correction to the action due to the change of connection \eqref{eq:connchange} is proportional to the equations of motion, the change of connection $t\theta$ may equivalently be viewed as an infinitesimal field redefinition of order $\OO(t,\a)$, and is therefore non-physical.\footnote{That deformations of the connection corresponds to a field redefinition has been noted in the literature before, see e.g. \cite{Hull1986187, Sen1986289, Becker:2009df, Melnikov:2014ywa}.} Moreover, {\it any} infinitesimal $\OO(\a)$ field redefinition can similarly be absorbed in a change of the Hull connection. These sets are therefore equivalent, modulo $\OO(\a^2)$ terms.



We want to consider what happens to the theory under these deformations of the connection. In particular, we are interested in the allowed deformations of the connection, or equivalently field redefinitions, for which supersymmetric solutions of the Strominger system exist. We expect this to be related to the moduli space of connections \eqref{eq:modulitheta} studied in the last chapter. We see that this is indeed the case.

From \eqref{eq:lemmabos} it follows that the change to the action due to the correction,
\begin{equation*}
{\delta_t(\delta\Theta)_I}^{JK}=t{\delta\theta_I}^{JK}\:,
\end{equation*}
of the transformation of $\Theta$ can be absorbed in a redefinition of the bosonic supersymmetry transformations by a similar procedure as is done in \cite{Bergshoeff1989439} for the $\OO(\a^2)$-corrections to the supersymmetry transformations. Similarly, we also have
\begin{equation}
\label{eq:lemmaferm}
\frac{\delta S}{\delta\psi_{IJ}}\propto\a\Psi^{IJ}_\OO(\a^2)\:,
\end{equation}
by \cite{Bergshoeff1989439}, where $\Psi_0$ is a combination of zeroth order fermionic equations of motion. It follows that the change in the action due to the correction,
\begin{equation*}
\delta_t(\delta\psi)_{IJ}=\frac{t}{4}(\d_{\Theta^-}\theta)_{KLIJ}\Gamma^{KL}\epsilon\:,
\end{equation*}
may be absorbed into a redefinition of the fermionic supersymmetry transformations. The corrected transformations read
\begin{align}
\delta\psi_M=&\Big(\nabla^+_M+\frac{t}{4}\mathcal{C}_M\Big)\epsilon+\OO(\a^2)\notag\\
\label{eq:2spinorsusy1def}
=&\Big(\nabla^{\hbox{\tiny{LC}}}_M+\frac{1}{8}(\H_M+2t\mathcal{C}_M)\Big)\epsilon+\OO(\a^2)\\
\label{eq:2spinorsusy2def}
\delta\lambda=&-\frac{1}{2\sqrt{2}}\Big(\slashed\nabla^{\hbox{\tiny{LC}}}\phi+\frac{1}{12}(\H+3t\mathcal{C})\Big)\epsilon+\OO(\a^2)\\
\label{eq:2spinorsusy3def}
\delta\chi=&-\frac{1}{2}F_{MN}\Gamma^{MN}\epsilon+\OO(\a)\:,
\end{align}
where
\begin{equation}
\label{eq:C}
C_{MAB}=\a12e^{2\phi}\nabla^{+L}e^{-2\phi}\Big((\d_{\Theta^-}\theta)_{ABLM}\Big)\:.
\end{equation}
Here $\mathcal{C}_M=C_{MAB}\Gamma^{AB}$ and $\mathcal{C}=C_{MAB}\Gamma^{MAB}$, and $A,B,..$ denote flat indices, that is, $\Gamma^M=e^M_A\Gamma^A$, where $\{e^M_A\}$ is a ten-dimensional viel-bein frame. We have written the corrections in this way to be able to compare with the higher order $\a$-corrections in the next section. With the new supersymmetry transformations \eqref{eq:2spinorsusy1def}-\eqref{eq:2spinorsusy3def}, the action with the new connection is again invariant.

We also note that, as we saw above, deforming the connection $\nabla^-\rightarrow\nabla^-+t\theta$ really just corresponds to an $\OO(\a)$-field redefinition. Hence, the supersymmetry algebra above, including the bosonic transformations which we did not write down for brevity, should just be the old algebra written in terms of the new fields. There are therefore no issues concerning closedness of the algebra.

\subsection{Supersymmetric Solutions}
\label{sec:supsol1}
Let us look for four-dimensional supersymmetric maximally symmetric compact solutions to the $t$-adjusted theory. This ammounts to setting the transformations \eqref{eq:2spinorsusy1def}-\eqref{eq:2spinorsusy3def} to zero. We consider solutions such that
\begin{equation}
\label{eq:Stromtcorr}
\nabla^+_m\,\eta=\OO(\a^2)\:.
\end{equation}
Given the redefined supersymmetry transformations, this might seem like a restriction of allowed supersymmetric solutions. However, this is not the case, at least for compact solutions. Indeed, we have the following proposition.

\begin{Proposition}
\label{prop:Strom}
Consider heterotic compactifications to four dimensions on a smooth compact space $X$ at $\OO(\a^{2n-1})$ or less. If
\begin{equation*}
\nabla^+\eta=\OO(\a^n)\:,
\end{equation*}
then we may without loss of generality assume that
\begin{equation*}
\nabla^+\eta=0\:,
\end{equation*}
i.e. the solutions are solutions of the Strominger system.
\end{Proposition}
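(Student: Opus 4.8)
The plan is to show that the $\OO(\a^n)$ failure of the gravitino Killing spinor equation is pure gauge, in the sense that it can be removed by an $\OO(\a^n)$ field redefinition of exactly the type discussed in the previous subsections. First I would expand every field in powers of $\a$ and observe that $\nabla^+\eta=\OO(\a^n)$ means the Strominger conditions $\nabla^+\eta=0$ hold identically through order $\a^{n-1}$, with the first obstruction appearing at order $\a^n$. I write this deviation as a spinor-valued one-form,
\begin{equation*}
\nabla^+_m\eta = \a^n\,\Xi_m + \OO(\a^{n+1})\:,
\end{equation*}
so that the task is to absorb $\Xi_m$ into a redefinition of the bosonic data $(g,B,\phi)$ and of the connection $\nabla^-$ appearing in the action, accompanied by the correction to the supersymmetry transformations \eqref{eq:2spinorsusy1def}--\eqref{eq:2spinorsusy3def}, so that $\delta\psi=0$ still characterises a genuine supersymmetric solution.

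Then I would decompose $\Xi_m$ into the spinor representations adapted to the $SU(3)$-structure determined by $\eta$. The component proportional to $\eta$ itself is removable by rescaling $\eta\to(1+\a^n f)\eta$ together with a compensating $\OO(\a^n)$ shift of the dilaton $\phi$, since such a rescaling changes $\nabla^+\eta$ precisely by a term along $\eta$. The transverse components correspond, through the bilinears $\omega_{mn}=-i\,\eta^\dagger\gamma_{mn}\eta$ and $\Psi_{mnp}=\eta^T\gamma_{mnp}\eta$, to order-$\a^n$ deviations of $\d\omega$ and $\d\Psi$ away from their Strominger values; these are cancelled by order-$\a^n$ shifts of the hermitian form, the $B$-field and the metric, which alter the torsion $H=i(\p-\bp)\omega$ and hence $\nabla^+=\nabla^{\hbox{\tiny{LC}}}+\half H$. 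The crucial point is the order counting: by the Bergshoeff--de Roo lemma \eqref{eq:lemmabos}--\eqref{eq:lemmaferm}, an $\OO(\a^n)$ change of $\nabla^-$, equivalently an $\OO(\a^n)$ field redefinition, modifies the action only through terms proportional to the lower-order equations of motion, and it alters the Killing spinor equation at leading order by exactly the term needed to cancel $\a^n\Xi_m$. Its residual back-reaction is quadratic in the order-$\a^n$ redefinition, hence of order $\a^{2n}$, which lies beyond the working order $\a^{2n-1}$. Thus a single such redefinition suffices, and in the new field variables $\nabla^+\eta=0$ holds to the order considered.

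The hard part will be establishing that $\Xi_m$ genuinely lies in the image of allowed field redefinitions, i.e. that there is no cohomological obstruction coming from a harmonic piece of the deviation that no local redefinition of $(g,B,\phi)$ can reach. Here I would exploit the compactness of $X$ and the conformally balanced condition, running an integration-by-parts argument analogous to the computation of $\|e^{-\phi}H\|^2$ in Section~\ref{eq:usual}: pairing the deviation with $\eta$ and integrating against the weight $e^{-2\phi}$ produces a total derivative up to terms forced to be of higher order, showing that the potentially obstructed part of $\Xi_m$ must itself vanish to the working order. Combined with the hypothesis that we work at $\OO(\a^{2n-1})$ or less, this closes the argument: every component of the $\OO(\a^n)$ deviation is either removable by a field redefinition or vanishes by compactness, so we may indeed assume $\nabla^+\eta=0$, that is, that the solution solves the Strominger system.
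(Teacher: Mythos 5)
Your proposal has a genuine gap at precisely the step you flag as ``the hard part'': you never establish that the deviation $\Xi_m$ lies in the image of allowed field redefinitions, and the mechanism you invoke does not in fact produce an arbitrary cancellation. An $\OO(\a^n)$ shift $\theta$ of the connection enters the gravitino variation only through the correction term $\mathcal{C}_M$ of \eqref{eq:2spinorsusy1def}, which by \eqref{eq:C} has the very specific form of a covariant divergence of $\d_{\Theta^-}\theta$ (and, as shown in section \ref{sec:supsol1}, it \emph{vanishes} on instanton backgrounds); there is no argument that a generic spinor-valued one-form $\Xi_m$ is in the image of this map, and the analogous surjectivity claim for shifts of $(g,B,\phi)$ acting on $\d\omega$, $\d\Psi$ is asserted, not proved. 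Your proposed repair --- pairing $\Xi_m$ against $\eta$ with weight $e^{-2\phi}$ and integrating by parts --- controls only the single scalar component $\eta^\dagger\gamma^m\Xi_m$ (or similar contractions), not the full deviation, so a ``harmonic'' obstruction is not excluded. The analogy with the $\Vert e^{-\phi}H\Vert^2$ computation of section \ref{eq:usual} fails because that argument relied on $H$ being Hodge-dual to $e^{2\phi}\d(e^{-2\phi}\omega)$, a structure with no counterpart for a general $\Xi_m$.

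The paper's proof shows that none of this machinery is needed: the statement is spectral, not deformation-theoretic. From $\nabla^+\eta=\OO(\a^n)$ one gets, by integration by parts on compact $X$ (valid because the Bismut torsion is totally antisymmetric, so $\nabla^+_mv^m=\nabla^{\hbox{\tiny{LC}}}_mv^m$),
\begin{equation*}
(\eta,\Delta_+\eta)=(\nabla^+_m\eta,\nabla^{+m}\eta)=\OO(\a^{2n})\:,
\end{equation*}
with $\Delta_+$ positive semi-definite. Expanding $\eta$ in eigenmodes, each term $\lambda_k\vert\alpha_k\vert^2$ is separately $\OO(\a^{2n})$; since $\Vert\eta\Vert^2=\OO(1)$ some $\alpha_k=\OO(1)$, forcing the corresponding eigenvalue $\lambda_k=\OO(\a^{2n})$, which at working order $\OO(\a^{2n-1})$ may be set to zero. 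Hence $\ker\nabla^+$ is nonempty and one simply \emph{replaces} $\eta$ by an exactly parallel spinor --- no field redefinition, no $SU(3)$-representation decomposition, no Bergshoeff--de Roo lemma. You correctly identified where the hypothesis ``$\OO(\a^{2n-1})$ or less'' must enter (the $\a^{2n}$ versus $\a^{2n-1}$ counting), but in the paper that counting discards a small eigenvalue rather than a quadratic back-reaction, and it is the eigenmode expansion that localises the smallness onto the spectrum --- the step your sketch is missing.
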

\begin{proof}
First note that since $\nabla^+_m\:\eta=\OO(\a^n)$, it follows that
\begin{align*}
\OO(\a^{2n})=(\nabla^+_m\:\eta,\nabla^{+m}\:\eta)=\int_X(\nabla^+_m\:\eta)^\dagger\nabla^{+m}\:\eta=\int_X\eta^\dagger\Delta_+\eta=(\eta,\Delta_+\eta)\:,
\end{align*}
upon an integration by parts\footnote{As the Bismut connection has antisymmetric torsion, it follows that $\nabla^+_mv^m=\nabla^{\hbox{\tiny{LC}}}_mv^m$ for some vector field $v^i$. This allows the integration by parts to be carried out.}. Here 
\begin{equation*}
\Delta_+\eta=-\nabla_m^+\nabla^{+m}\:\eta\:,
\end{equation*}
and $\Delta_+$ is the Laplacian of the Bismut connection.

Next expand $\eta$ in eigen-modes of $\Delta_+$,
\begin{equation*}
\vert\eta\rangle=\sum_n\alpha_n\vert\psi_n\rangle\:,
\end{equation*}
where $\{\vert\psi_n\rangle\}$ is an orthonormal basis of eigenspinors of $\Delta_+$ with corresponding eigenvalues ${\lambda_n}$, and where we have gone to bra-ket notation for convenience. We can then compute
\begin{equation*}
(\eta,\Delta_+\eta)=\langle\eta\vert\Delta_+\vert\eta\rangle=\sum_n\lambda_n\vert\alpha_n\vert^2=\OO(\a^{2n})\:.
\end{equation*}
Note that $\lambda_n\ge0$ as $\Delta_+$ is positive semi-definite. From this it follows that each term in the above sum is of $\OO(\a^{2n})$\footnote{We treat $\a$ as a formal expansion parameter, so sums of terms of higher orders in $\a$ cannot decrease the order in $\a$.}. That is  
\begin{equation}
\label{eq:sumtermO4}
\lambda_n\vert\alpha_n\vert^2=\OO(\a^{2n})\;\;\;\;\forall\;\;\;\;n.
\end{equation}
Moreover, we know that $\vert\eta\rangle=\OO(1)$, which implies
\begin{equation*}
\vert\vert\eta\vert\vert^2=\sum_n\vert\alpha_n\vert^2=\OO(1)\:.
\end{equation*}
It follows that at least one $\alpha_k=\OO(1)$. Then, from \eqref{eq:sumtermO4}, the corresponding eigenvalue is $\lambda_k=\OO(\a^{2n})$. At the given order in $\a$, we may without loss of generality set $\lambda_k=0$. It follows that there is a spinor in the kernel of $\nabla^+$, which we may take to be $\eta$.
\end{proof}
We now use equation \eqref{eq:2spinorsusy1def}, and Proposition \ref{prop:Strom} with $n=1$, to get \eqref{eq:Stromtcorr}.
It also follows from equation \eqref{eq:2spinorsusy1def} that we need
\begin{equation}
\label{eq:vanishC}
\mathcal{C}_m\eta=\OO(\a^2)\:,
\end{equation}
for the solution to be supersymmetric. From Appendix \ref{app:proof} it then follows that the corrected connection $\nabla=\nabla^-+t\theta$ should be an instanton.

It is easy to see that \eqref{eq:vanishC} is satisfied, once we know that we are working with supersymmetric solutions of the Strominger system. Plugging the connection $\nabla$ into the instanton condition, and using that $\nabla^-$ is an instanton at this order, we find
\begin{equation}
(\d_{\Theta^-}\theta)_{mn}\Gamma^{mn}\eta=\OO(\a)\:,
\end{equation}
precicely the condition for the deformed connection to remain an instanton. From this, it also follows that
\begin{align*}
\mathcal{C}_m\eta=&\a12e^{2\phi}\nabla^{+n}e^{-2\phi}\Big((\d_{\Theta^-}\theta)_{ABnm}\Big)\Gamma^{AB}\eta\\
=&\a12e^{2\phi}\nabla^{+n}e^{-2\phi}\Big((\d_{\Theta^-}\theta)_{ABnm}\Gamma^{AB}\eta\Big)\\
=&\OO(\a^2)\:,
\end{align*}
as desired.
 
Finally, we remark that as noted in section \ref{sec:instcond}, there is an infinitesimal moduli space 
\begin{equation}
\label{eq:Mtheta}
T\M_{\nabla^-}=H^{(0,1)}_{\bp_{\nabla^-}}(X,\End(TX))
\end{equation}
of connections satisfying this condition, where the tangent space is taken {\it at the Hull connection}. Each element $\theta\in T\M_{\nabla^-}$ corresponds to an infinitesimal $\OO(\a)$ field redefinition of the supergravity with the corresponding change of the supersymmetry transformations, \eqref{eq:2spinorsusy1def}-\eqref{eq:2spinorsusy3def}. Compact supersymmetric solutions of these equations may by Proposition \ref{prop:Strom} be assumed to be solutions of the Strominger system, and they also solve the equations of motion provided $\theta\in T\M_{\nabla^-}$. From this perspective, the moduli space \eqref{eq:Mtheta} is unphysical. That is, the moduli space \eqref{eq:Mtheta} may then be viewed as the space of allowed infinitesimal $\OO(\a)$ field redefinitions for which the equations of motion and supersymmetry are compatible.\footnote{We suppress issues concerning the preservation of the Bianchi identity \eqref{eq:BIanomaly} in this chapter. That is, as we saw in the last chapter, to preserve the Bianchi identity we should require $\theta\in\ker(\H)$.}

\section{Higher Order Heterotic Supergravity}
\label{sec:higher}
Having discussed the first order theory, we now consider heterotic supergravity at higher orders in $\alpha'$. We continue our investigation from a ten-dimensional supergravity point of view, by a similar analysis as that of Bergshoeff and de Roo \cite{Bergshoeff1989439}, where the Hull connection was used at higher orders as well. We wish to generalize this analysis a bit, and allow for a more general connection choice in the action, as was done in the previous section. In order not to overcomplicate matters unnecessarily, we return to letting the $TX$-connection be the Hull connection at $\OO(\a)$, which is needed in order that the full action be invariant under the usual supersymmetry transformations \eqref{eq:O1spinorsusyConn1}-\eqref{eq:O1spinorsusyConn3} at $\OO(\a)$. We will however allow this connection to receive corrections at $\OO(\a^2)$. 


There are two important points which we wish to emphasise in this section. Firstly, as we saw in the last section, we may deform the tangent bundle connection away from the Hull connection provided we deform the supersymmetry transformations correspondingly. We take a similar approach in this section, where we deform away from the Hull connection by an $\a$-correction, $\nabla=\nabla^-+\theta$, where now $\theta=\OO(\a)$ depends on the fields of the theory in some way. Our findings from the previous section also persist in this section. That is, the deformation $\theta$ now corresponds to an $\OO(\a^2)$ field redefinition, and $\theta$ is therefore non-physical in this sense. Moreover, the supersymmetry transformations also change with $\theta$, in accordance with the deformed fields. However, as we shall see again, not all field choices allow for supersymmetric solutions of the type we consider.


Secondly, we note there is a symmetry between the tangent bundle connection $\nabla$ and gauge connection $A$ in the $\OO(\a)$ action. As a guiding principle, as is also done in \cite{Bergshoeff1989439}, we would like to keep this symmetry to higher orders. With this philosophy it seems natural to choose $\nabla$ so it satisfies its own equation of motion similar to that of $A$, whenever the equations of motion are satisfied. Note that this is true for the Hull connection at $\OO(\a)$, by equation \eqref{eq:lemmabos}. 


Moreover, this also seems to be the connection choice we need in order for the supersymmetry conditions to hold at the locus of equations of motion. Indeed, we find the following
\begin{Theorem}
\label{tm:inst}
Strominger system type supersymmetric solutions, where $\nabla^+\epsilon=0$ for heterotic compactifications on a compact six-dimensional manifold $X$, survive as solutions of heterotic supergravity at $\OO(\a^2)$ if and only if the connection $\nabla$ is an instanton, satisfying it's own ``supersymmetry condition"
\begin{equation}
R_{mn}\Gamma^{mn}\epsilon=0\:.
\end{equation}
Compact $\OO(\a^2)$-supersymmetric solutions can be assumed to be of this type without loss of generality. Moreover, $\nabla$ satisfies it's own equation of motion for these solutions.
\end{Theorem}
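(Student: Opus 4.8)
The plan is to establish the three assertions separately, reducing each to the instanton condition on the internal six-manifold. First I would dispose of the "without loss of generality" clause by a direct appeal to Proposition \ref{prop:Strom}. Working at $\OO(\a^2)$ (hence, a fortiori, at $\OO(\a^3)$ or less), the deformed gravitino variation \eqref{eq:2spinorsusy1def} gives $\nabla^+\eta=\OO(\a^2)$ for any compact supersymmetric background, since the correction $\theta=\OO(\a)$ enters $\mathcal{C}$ with a further explicit factor of $\a$. Applying the proposition with $n=2$ upgrades this to $\nabla^+\eta=0$ on the nose, so the relevant backgrounds are genuine Strominger-system solutions: $X$ is complex, conformally balanced, with holomorphically trivial canonical bundle, exactly as in Chapter \ref{ch:SS}.

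For the "iff", I would reprise the argument of Appendix \ref{app:proof} but carry it one order higher. The content of that appendix is that the gravitino and dilatino Killing-spinor equations, together with the Bianchi identity, force the bosonic equations of motion \eqref{eq:eom1}--\eqref{eq:eom4} to hold up to terms proportional to the left-hand side of the instanton condition \eqref{eq:instanton10d}. At $\OO(\a)$ these leftover terms carry an explicit factor of $\a$, because $\nabla$ enters the action only at $\OO(\a)$, so $R_{MN}\Gamma^{MN}\epsilon=\OO(\a)$ already suffices. Pushing to $\OO(\a^2)$ removes this cushion: the residual obstruction becomes $\a\,R_{mn}\Gamma^{mn}\epsilon$ evaluated through first order, and demanding that this vanish at $\OO(\a^2)$ forces $R_{mn}\Gamma^{mn}\epsilon=0$ exactly to $\OO(\a)$. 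The forward direction (the solution survives $\Rightarrow$ instanton) then simply reads off the vanishing of this obstruction, while the backward direction (instanton $\Rightarrow$ the solution survives) substitutes $R_{mn}\Gamma^{mn}\epsilon=0$ back into the reprised identities and verifies that all of \eqref{eq:eom1}--\eqref{eq:eom4} close at $\OO(\a^2)$.

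For the final clause, that $\nabla$ satisfies its own equation of motion, I would exploit the $SO(9,1)$ Yang-Mills multiplet structure $(\nabla,\psi_{IJ})$ stressed throughout the chapter, which puts $\nabla$ on the same footing as the gauge connection $A$. Formally varying the action \eqref{eq:actionRedef} with respect to $\Theta$ produces a Yang-Mills-type equation $e^{2\phi}\d_\Theta(e^{-2\phi}* R)-R\wedge * H=\OO(\a)$ mirroring \eqref{eq:eom4}, and I would show that the instanton conditions \eqref{eq:inst2} (holomorphy and primitivity of $R$) on the conformally balanced $X$ imply this equation precisely as they do for $F$, so that the instanton $\nabla$ automatically extremises the action. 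Combined with the Bergshoeff--de Roo lemmas \eqref{eq:lemmabos} and \eqref{eq:lemmaferm}, which assert that the on-shell variation of the action with respect to $\nabla^-$ is already $\OO(\a^2)$, this shows the corrected connection continues to solve its own equation of motion to the working order.

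The main obstacle I anticipate lies in the bookkeeping of the second step: one must track how the $\OO(\a)$ correction $\theta$ in $\nabla=\nabla^-+\theta$, together with the induced deformations of the supersymmetry transformations \eqref{eq:2spinorsusy1def}--\eqref{eq:2spinorsusy3def}, feed into the second-order equations of motion and into the Bianchi identity, and to confirm that no hidden independent $\OO(\a^2)$ source spoils the cancellation. Establishing that the sole surviving obstruction is genuinely $\a\,R_{mn}\Gamma^{mn}\epsilon$, with no further terms at this order, is where the real work resides.
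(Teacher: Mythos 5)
Your proposal follows the paper's own route essentially step for step: Proposition \ref{prop:Strom} with $n=2$ for the without-loss-of-generality clause, the action-variation argument of Appendix \ref{app:proof} (whose rewriting \eqref{eq:6daction} already holds including $\OO(\a^2)$ terms, so no separate order-raising is needed, and the explicit $\a$ prefactor on the curvature terms gives the instanton condition to $\OO(\a^2)$ rather than $\OO(\a)$) for the ``if and only if'', and the primitivity identity $*R=-\omega\wedge R$ from the Weil relation \eqref{eq:Weil} for the claim that the instanton solves its own equation of motion \eqref{eq:nablaeom}. The one piece you flag as the real remaining work --- checking that the $\theta$-deformed supersymmetry transformations introduce no independent $\OO(\a^2)$ obstruction --- is exactly what the paper settles by showing $\tilde P_{mAB}\Gamma^{AB}\eta=\OO(\a^3)$ follows from the instanton condition via the curvature identity \eqref{eq:curvatureid}.
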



Note then that our choice of connection is as if the connection $\nabla$ was dynamical. We again stress that this is not the case. Indeed, $\nabla$ must depend on the other fields of the theory, as there are no more fields around. We only require the connection to satisfy an equation of motion as if it was dynamical, and this then relates to \it how\rm\;$\nabla$ depends on the other fields. 

With these observations, we make the following conjecture
\begin{Conjecture}
\label{con:higherorder}
At higher orders in $\a$, the correct connection choice/field choice is the choice which preserves the symmetry between $\nabla$ and $A$. That is, $\nabla$ should be chosen as if it was dynamical, satisfying it's own equation of motion. Moreover, for supersymmetric solutions, $\nabla$ should be chosen to satisfy it's own supersymmetry condition, similar to the one satisfied by $A$.
\end{Conjecture}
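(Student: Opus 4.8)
The plan is to argue for the conjecture inductively in powers of $\a$, using the field-redefinition freedom established in the previous section together with the symmetry principle that the gauge connection $A$ and the tangent-bundle connection $\nabla$ enter the $\OO(\a)$ action on an equal footing (with opposite signs in the Chern--Simons and curvature-squared terms). The base of the induction is already in hand: at $\OO(\a)$ the pair $(\nabla^-,\psi_{IJ})$ transforms as an $SO(9,1)$ Yang--Mills multiplet, exactly as $(A,\chi)$ does, and Theorem \ref{tm:inst} shows that the instanton/``supersymmetry'' condition $R_{mn}\Gamma^{mn}\epsilon=0$ is precisely what is required at $\OO(\a^2)$ for Strominger-type solutions to persist. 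The first step is thus to formalise the symmetry as a $\mathbb{Z}_2$-type involution exchanging $(A,F,\chi)\leftrightarrow(\Theta,R,\psi_{IJ})$ (up to the relative sign), and to verify that the full $\OO(\a)$ and $\OO(\a^2)$ actions and transformations are invariant under it.

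Second, I would set up the inductive step. Assume that through order $\a^{n}$ the fields and transformations have been chosen so that the symmetry holds and $\nabla$ obeys its own equation of motion on the locus where the bosonic equations of motion hold. At order $\a^{n+1}$ new terms appear in the action; I would invoke a higher-order generalisation of the Bergshoeff--de Roo lemma, $\delta S/\delta\nabla \propto \a\,(\mathrm{EOM})$, to argue that the variation of the action with respect to $\nabla$ remains proportional to lower-order equations of motion. This is the statement that $\nabla$ never becomes genuinely dynamical: any apparent dynamics is removable by a field redefinition. Exactly as in \eqref{eq:connchange}--\eqref{eq:2spinorsusy3def}, a shift $\nabla\to\nabla+\a^{n}\theta$ is an $\OO(\a^{n+1})$ field redefinition, absorbable into the bosonic and fermionic transformations via \eqref{eq:lemmabos} and \eqref{eq:lemmaferm}; I would use this freedom to enforce the symmetric choice at the new order.

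Third, for the supersymmetry half of the statement, I would re-run the argument of Theorem \ref{tm:inst} and Proposition \ref{prop:Strom} order by order: compactness forces $\nabla^+\eta=0$ (the Strominger system) whenever $\nabla^+\eta=\OO(\a^n)$, and then the vanishing of the corrected gravitino variation forces the combination $\mathcal{C}_m\eta=\OO(\a^{n+1})$, which---through the structure of \eqref{eq:C}---translates into the instanton condition $R_{mn}\Gamma^{mn}\epsilon=0$ for the corrected connection. Maintaining the $\nabla$--$A$ symmetry then guarantees that this condition is the exact analogue of the gaugino condition $F_{mn}\Gamma^{mn}\epsilon=0$, which is the content of the conjecture.

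The hard part---and the reason this remains a conjecture rather than a theorem---is the inductive step at arbitrary order: one does not have a closed-form all-orders heterotic action, and it is not known whether genuinely new independent invariants appear at high order in $\a$ that cannot be organised symmetrically in $(A,\nabla)$, nor whether the Bergshoeff--de Roo lemma survives unmodified beyond the orders where it has been checked. Establishing that the field-redefinition freedom is always sufficient to restore the symmetry, and that no obstruction obstructs the symmetric completion at some order, is the crux. A complete proof would likely require either an all-orders world-sheet derivation controlling the $g_s\to0$ effective action, or an identification of the symmetric structure with a geometric object (for example an instanton condition on the extension bundle $\Q$ of Chapter \ref{ch:SS}) that is manifestly order-independent; the latter is the strategy hinted at in the discussion section, since holomorphic structures carry no intrinsic scale and hence no $\a$.
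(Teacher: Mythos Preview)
This statement is a \emph{Conjecture}, not a theorem, and the paper does not prove it. The paper's support for the conjecture consists of: (i) the explicit verification at $\OO(\a)$ and $\OO(\a^2)$ culminating in Theorem~\ref{tm:inst}, (ii) the observation that the $(\mathrm{curvature})^4$ terms determined by string amplitude calculations preserve the symmetry between the Lorentz and Yang--Mills sectors, and (iii) the heuristic that since $(\nabla^-,\psi_{IJ})$ transforms as an $SO(9,1)$ Yang--Mills multiplet to leading order, it is natural to expect this to persist. The paper explicitly leaves the cubic and higher-order checks to future work.

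Your proposal is consistent with this: you outline an inductive strategy built on the same ingredients the paper uses at low order (the Bergshoeff--de Roo lemma, field-redefinition freedom, Proposition~\ref{prop:Strom}), and you correctly identify in your final paragraph precisely why the induction cannot currently be closed---no all-orders action, no guarantee that the lemma survives, possible new invariants. That diagnosis is accurate and matches the paper's own reasons for stating this as a conjecture rather than a theorem. In that sense your write-up is not a proof but a well-organised motivation, which is all that can be expected here. If anything, your framing as a formal induction with a $\mathbb{Z}_2$ involution is more systematic than the paper's discussion, though the paper adds one piece of evidence you do not mention: the string-amplitude determination of the quartic curvature terms, which independently confirms the $A\leftrightarrow\nabla$ symmetry at the next nontrivial order.
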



\subsection{The Second Order Theory}
According to Bergshoeff and de Roo \cite{Bergshoeff1989439} the bosonic part of the heterotic action does not receive corrections at this order, and is still given by \eqref{eq:actionRedef}. Bergshoeff and de Roo used the Hull connection when writing down the action, but we shall be more generic, choosing a connection $\nabla$ that only differs from the Hull connection by changes of $\OO(\a)$,
$\nabla=\nabla^-+\OO(\a)$.

The supersymmetry transformations do receive corrections. What these corrections are again depend crucially on which connection is chosen in the action as we will discuss in the next section. Using the Hull connection $\nabla=\nabla^-$, they are given in \cite{Bergshoeff1989439} and read\footnote{It should be noted that the specific form of these corrections, where there are no covariant derivatives of the spinor in the $\OO(\a^2)$-correction requires an addition of an extra term of $\OO(\a^2)$ to the fermionic sector action \cite{Bergshoeff1989439}.}
\begin{align}
\delta\psi_M=&\Big(\nabla^+_M+\frac{1}{4}\P_M\Big)\epsilon\notag\\
\label{eq:2spinorsusy1}
=&\Big(\nabla^{\hbox{\tiny{LC}}}_M+\frac{1}{8}(\H_M+2\P_M)\Big)\epsilon+\OO(\a^3)\\
\label{eq:2spinorsusy2}
\delta\lambda=&-\frac{1}{2\sqrt{2}}\Big(\slashed\nabla^{\hbox{\tiny{LC}}}\phi+\frac{1}{12}\H+\frac{3}{12}\P\Big)\epsilon+\OO(\a^3)\\
\label{eq:2spinorsusy3}
\delta\chi=&-\frac{1}{2}F_{MN}\Gamma^{MN}\epsilon+\OO(\a^2),
\end{align}
where
\begin{equation}
P_{MAB}=-\a6e^{2\phi}\nabla^{+L}(e^{-2\phi}\d H_{LMAB})\:.
\end{equation}
Here $\P_M=P_{MAB}\Gamma^{AB}$ and $\P=P_{MAB}\Gamma^{MAB}$. Here $A, B,\:..$ denote flat indices, while $I, J,\: ..$ denote space-time indices. Note again the reduction in $\a$ for the gauge-field transformation \eqref{eq:2spinorsusy3}.

\subsection{Second Order Equations of Motion}
We now derive the equations of motion of the action \eqref{eq:actionRedef}. As the action is the same as the first order action, one might guess that the equations of motion will be the same too. This is not quite correct, and we take a moment to explain why.

When deriving the first order equations of motion, one relies on the lemma of \cite{Bergshoeff1989439}, equation \eqref{eq:lemmabos}, from which it follows that the Hull connection satisfies an equation of motion of its own, whenever the other fields do. As a necessary condition to satisfying the first order equations of motion is that the zeroth order equations of motion are of $\OO(\a)$, the variation of the action with respect to $\nabla^-$ can be ignored as it is of $\OO(\a^2)$. This simplifies matters when deriving the first order equations of motion. At second order however, such terms will have to be included, potentially leading to a more complicated set of equations. 


We note that the $\OO(\a^2)$-corrections to the equations of motion come from the variation of the action with respect to $\nabla$. What they are will crutially depend on what connection $\nabla$ is used. Let us write the connection one-form of $\nabla$ as
\begin{equation*}
\Theta=\Theta^-+\theta,
\end{equation*}
where $\Theta^-$ are the connection one-forms of the Hull connection, and $\theta=\OO(\a)$, and depends on the other fields of the theory in some unspecified way. The action then takes the form
\begin{equation}
\label{eq:thetaaction}
S=S[\nabla^-]+\delta_\theta S+\OO(\a^3)\:.
\end{equation}
Let us compute $\delta_\theta S$. We find
\begin{equation*}
\delta_\theta S=\int_{M_{10}}e^{-2\phi}\Big[\delta_\theta H\wedge*H+\frac{\a}{2}\tr[\d_{\Theta^-}\theta\wedge*R^-]\Big]\:.
\end{equation*}
Now
\begin{equation*}
\delta_\theta H=-\frac{\a}{4}\delta_\theta\omega_{CS}^\nabla=\frac{\a}{2}\tr[\theta\wedge R^-]+\frac{\a}{4}\d\tr[\theta\wedge\Theta^-]\:.
\end{equation*}
Inserting this back into the action, we find
\begin{align*}
\delta_\theta S=\frac{\a}{2}\int_{M_{10}}e^{-2\phi}\tr\:\theta\wedge\Big[e^{2\phi}\d_{\Theta^-}(e^{-2\phi}*R^-)-R^-\wedge*H+\Theta^-\wedge e^{2\phi}\d(e^{-2\phi}*H)\Big]\:.
\end{align*}
We write this as
\begin{equation}
\label{eq:dstheta}
\delta_\theta S=\frac{\a}{2}\int_{M_{10}}e^{-2\phi}\tr\:\theta\wedge\textrm{B}_0 +\OO(\a^3)\:,
\end{equation}
since the expression in brackets is proportional to a combination of zeroth order bosonic equations of motion according to \eqref{eq:lemmabos}. The change of connection $\theta$ may be thought of as an $\OO(\a^2)$ field redefinition, as this is precisely how the action gets corrected when we perform an $\OO(\a^2)$ field redefinition. This is similar to the $\OO(t,\a)$ field redefinitions we described in the previous section. It follows that the change of the connection $\theta$ is unphysical in this sense. 

Let us next compute the variation of the action \eqref{eq:thetaaction} with respect to the connection $\nabla$, assuming that the first order equations of motion are satisfied. Recall that such a variation is of $\OO(\a^2)$, provided the zeroth order equations of motion are satisfied. Hence, it could be dropped when only considering the theory to $\OO(\a)$. Using the first order equations of motion, in particular
\begin{equation*}
\d(e^{-2\phi}*H)=\OO(\a^2)\:,
\end{equation*}
we find
\begin{align}
\delta_\nabla S\big\vert_{\delta S=\OO(\a^2)}=&\frac{\a}{2}\int_{M_{10}}e^{-2\phi}\tr\:\delta\Theta^-\wedge\Big[[\theta,*R^-]+e^{2\phi}\d_{\Theta^-}(e^{-2\phi}*\d_{\Theta^-}\theta)\notag\\
&-\d_{\Theta^-}\theta\wedge*H+e^{2\phi}\d_{\Theta^-}(e^{-2\phi}*R^-)-R^-\wedge*H\Big]+\OO(\a^3)\:.
\label{eq:deltaSnabla}
\end{align}
Note that any variations depending on $\delta\theta$ drop out of this expression. This is due to \eqref{eq:dstheta} and that $\theta$ is of order $\a$, which implies that variations of the action with respect to $\theta$ are of $\OO(\a^3)$ at the locus of the first order equations of motion. We therefore only need to worry about the $\delta\Theta^-$-part when varying the action with respect to $\nabla$.

Equation \eqref{eq:lemmabos} also guarantees that the expression in \eqref{eq:deltaSnabla} is of $\OO(\a^2)$. The change of the $\OO(\a^2)$ equations of motion depend on what the expression in the brackets is, which again depends on our connection choice. It should be stressed that even though $\theta$ corresponds to a field choice, this does not mean that any field choice will do. {\it We want to choose our fields so that supersymmetry, and in particular the Strominger system, is compatible with the equations of motion}. 

For instance, insisting upon the Hull connection $\nabla^-$ ($\theta=0$) would in general change the equations of motion. It also restricts the allowed supersymmetric solutions as we shall see below. The connection choice therefore appears to require corrections from the Hull connection at higher orders. Motivated by the symmetry between $A$ and $\nabla$ in the action, we try a more ``symmetric" connection choice.


Recall that the $\beta$-functions of the world-sheet sigma-model correspond to the heterotic supergravity equations of motion. In \cite{Foakes1988335} it was noted that the three-loop $\beta$-function of the gauge connection equal the two-loop $\beta$-function.\footnote{The $n$-loop $\beta$-functions of the sigma models correspond to the $\OO(\a^{n-1})$ supergravity}
That is, the $\beta$-function of the gauge field does not receive corrections at this order, so nor should the corresponding supergravity equation of motion. This is consistent with the supergravity point of view \cite{Bergshoeff1989439}. Motivated by this, and guided by the symmetry between $\nabla$ and the gauge connection in the action, it seems natural to choose $\nabla$ so that it satisfies it's own equation of motion
\begin{equation}
\label{eq:nablaeom}
e^{2\phi}\d_\Theta(e^{-2\phi}*R)-R\wedge*H=\OO(\a^2),
\end{equation}
at this order. This is exactly the equation one gets when varying the action with respect to $\nabla$, and which is indeed satisfied by the Hull connection at first order. It is easy to see that choosing this connection is in fact equivalent to choosing $\theta$ so that the expression in brackets in \eqref{eq:deltaSnabla} vanishes, modulo higher orders. This again implies that all the first order equations of motion remain the same to $\OO(\a^2)$.


Of course, changing the connection also requires that we change the supersymmetry variations appropriately, in order that the full action remains invariant under supersymmetry transformations at $\OO(\a^2)$. This also relates to how we correct the connection outside of the locus of equations of motion. We will return to this below, when we consider supersymmetric solutions. 

Note that supersymmetric solutions may be assumed to be solutions of the Strominger system without loss of generality. Indeed, an $\OO(\a)$ change to $\nabla^-$ implies an $\OO(\a^2)$ change to the action, and hence an $\OO(\a^2)$ change to teh supersymmetry transformations. Using Proposition \ref{prop:Strom} with $n=2$, we see that we may assume $\nabla^+\eta=0$. By appendix \ref{app:proof}, such solutions exist if and only if $\nabla$ is an instanton, and in particular \eqref{eq:nablaeom} is satisfied. This is in complete analogy with the gauge connection, as the supersymmetry condition for $A$ is that $F$ remains an instanton at $\OO(\a^2)$ as well.

Note further that insisting on the Hull connection in the second order theory will in general impose extra constraints. Consider supersymmetric solutions of the Strominger system at $\OO(\a^2)$. By appendix \ref{app:proof}, we require
\begin{equation*}
R^-_{mn}\Gamma^{mn}\eta=\OO(\a^2)\:.
\end{equation*}
However, by \eqref{eq:curvatureid}, it then follows that we need
\begin{equation*}
\d H=-2i\p\bp\omega=\OO(\a^2)\:,
\end{equation*}
an unnecessary extra constraint on the geometry. This was also argued in \cite{Ivanov:2009rh}, where the first order theory was taken to be exact, resulting in a Calabi-Yau geometry.

\subsection{Choosing Other Connections}
\label{sec:differentcon}
We now consider in detail what happens if a different connection, other than the Hull connection is chosen, that is $\theta\neq0$. We work at $\OO(\a^2)$ for the time being, and leave the cubic and higher order corrections for future work. 

As argued in \cite{Bergshoeff1989439}, the higher order corrections to the supersymmetry transformations come from the failure of $(\Theta,\psi_{IJ})$ to transform as a $SO(9,1)$ Yang-Mills multiplet, where $\psi_{IJ}$ is the supercovariant curvature given by \eqref{eq:scurv}. $(\Theta,\psi_{IJ})$ then transforms under supersymmetry as
\begin{align*}
\delta {\Theta_I}^{JK}=&{(\delta\Theta^-+\delta\theta)_I}^{JK}\\
=&\frac{1}{2}\bar\epsilon\Gamma_I\psi^{JK}+\OO(\a)\\
\delta\psi_{IJ}=&-\frac{1}{4}R^+_{IJKL}\Gamma^{KL}\epsilon\\
=&-\frac{1}{4}\Big({R}_{KLIJ}+\frac{1}{2}\d H_{KLIJ}-\d_{\Theta^-}\theta_{KLIJ}\Big)\Gamma^{KL}\epsilon\notag\\
=&-\frac{1}{4}{R}_{KLIJ}\Gamma^{KL}\epsilon+\OO(\a)\:,
\end{align*}
where \eqref{eq:curvatureid} has been used in the second equality of the expression for $\delta\psi_{IJ}$. Note that without the $\a$-effects, the multiplet transforms as a $SO(9,1)$ Yang-Mills multiplet. This is how the symmetry in the action between the gauge connection and tangent bundle connection arises at $\OO(\a)$.

The $\OO(\a)$ correction to the transformation of ${\Theta_I}^{JK}$ depends on how the correction $\theta$ of the connection is defined in terms of the other fields of the theory. This correction is what makes the action fail to be invariant under supersymmetry transformations. However, this failure of the action to be invariant may be absorbed into an $\OO(\a^2)$-redefinition of the bosonic supersymmetry transformations due to \eqref{eq:lemmabos}, as is done in \cite{Bergshoeff1989439} for the case of the Hull connection. 

The same holds for the $\OO(\a)$ correction to $\delta\psi_{IJ}$,
\begin{equation*}
\delta_{\a}(\delta\psi_{IJ})=-\frac{1}{8}\Big(\d H_{KLIJ}-2\d_{\Theta^-}\theta_{KLIJ}\Big)\Gamma^{KL}\epsilon\:.
\end{equation*}
This can be absorbed into a redefinition of the supersymmetry transformations of the fermions due to \eqref{eq:lemmaferm}. For the more general connection choice, it turns out that the correction we need only requires a change of the three-form $P$,
\begin{align}
\label{eq:changeP}
P_{MAB}\rightarrow\tilde P_{MAB}=-\a6e^{2\phi}\nabla^{+L}e^{-2\phi}\Big(\d H_{LMAB}-2(\d_{\Theta^-}\theta)_{ABLM}\Big)\:,
\end{align}
but otherwise the transformations \eqref{eq:2spinorsusy1}-\eqref{eq:2spinorsusy3} remain the same. Note also that as the deformation of the connection can again be viewed as an $\OO(\a^2)$ field redefinition, the new supersymmetry algebra is again closed.

We now compactify our theory on a complex three-fold $X$. As noted above, we can again assume without loss of generality that
\begin{equation*}
\nabla^+_m\eta=\OO(\a^3)\:.
\end{equation*}
By the rewriting of the bosonic action \eqref{eq:6daction}, which we again stress holds true at $\OO(\a^2)$, we find that for the equations of motion to hold we need $R=R^-+\d_{\Theta^-}\theta+\OO(\a^2)$ to satisfy the instanton condition, 
\begin{equation}
\label{eq:O2instanton}
R_{mn}\Gamma^{mn}\eta=\OO(\a^2)\:.
\end{equation}
Note the similarity between this condition and the supersymmetry condition for the gauge field \eqref{eq:2spinorsusy3}. 


Supersymmetry now also requires
\begin{equation*}
\tilde P_{mAB}\Gamma^{AB}\eta=\OO(\a^3)\;,
\end{equation*}
by \eqref{eq:changeP} and \eqref{eq:2spinorsusy1}. Here $A,\:B$ denote flat indices on $X$.  This equation is however trivial, once we know that $R$ is an instanton. Indeed, we have
\begin{align*}
\tilde P_{mAB}\Gamma^{AB}\eta=&-\a6e^{2\phi}\nabla^{+n}e^{-2\phi}\Big(\d H_{nmAB}-2(\d_{\Theta^-}\theta)_{ABnm}\Big)\Gamma^{AB}\eta\\
=&\a12e^{2\phi}\nabla^{+n}e^{-2\phi}\Big(R_{ABnm}\Gamma^{AB}\eta\Big)+\OO(\a^3)\\
=&\OO(\a^3)\;,
\end{align*}
where we again used \eqref{eq:curvatureid} in the second equality.

It should also be mentioned that the instanton connection also solves the $\nabla$-equation of motion, as shown in \cite{Gauntlett:2002sc, Gillard:2003jh}. Indeed in dimension six, by the supersymmetry conditions, it follows that
\begin{equation*}
e^{2\phi}\d_\Theta(e^{-2\phi}*R)-R\wedge*H=e^{2\phi}\d_\Theta e^{-2\phi}(*R+R\wedge\omega)\:.
\end{equation*}
As $R$ is both of type $(1,1)$, and primitive, we have the identity $*R=-\omega\wedge R$ by \eqref{eq:Weil}. It follows that the instanton connection satisfies the $\nabla$-equation of motion, and the first order equations of motion do not receive corrections. 



We have thus gone through the proof of the statements in Theorem \ref{tm:inst}. Next, we want to consider their interpretation and give a discussion of the results. In doing so we also give our reasons for proposing Conjecture \ref{con:higherorder}.

\section{Discussion}
\subsection{Summary of Results}
\label{sec:discConn}
In the first order theory, we saw that the connection $\nabla=\nabla^-+t\theta$, where $\theta$ depends on the fields of the theory in some way, should satisfy the instanton condition whenever the solution is supersymmetric. As shown in the last chapter, this condition has an infinitesimal moduli space of the form
\begin{equation}
\label{eq:Mdnabla}
T\M_{\nabla^-}\cong H^{(0,1)}_{\bp_\Theta}(X,\End(TX))\:,
\end{equation}
where the tangent space is taken {\it at the Hull connection} $\nabla^-$. At first order, the requirement that the full supergravity action should be invariant under the usual supersymmetry transformations reduces the choice to the Hull connection. Hence, the $t$-deformed theory requires changes to the supersymmetry transformations, and we found what these where. We also found the allowed deformation space of connections, for which supersymmetric solutions of the Strominger system exist, was given by \eqref{eq:Mdnabla}. Recall also that supersymmetric solutions may be assumed to be solutions of the Strominger system by Proposition \ref{prop:Strom}. Moreover, by the lemma of Bergshoeff and de Roo \cite{Bergshoeff1989439}, these deformations correspond to infinitesimal $\OO(\a)$ field redefinitions. 


Returning to the usual form of the first order supergravity, we saw that at second order the theory can again be corrected appropriately for any $\OO(\a)$-change $\theta$ of the Hull connection $\nabla^-$, corresponding to $\OO(\a^2)$ field redefinitions. Supersymmetric solutions could again be assumed to be solutions of the Strominger system, and the equations of motion are compatible with supersymmetry if and only if $\nabla=\nabla^-+\theta$ satisfies the instanton condition again.

\subsection{Higher orders}
Let us now take a moment to discuss higher orders in $\a$. Note that the condition we find for compatibility between supersymmetry and equations of motion, \eqref{eq:O2instanton}, is exactly the supersymmetry condition we would get from this ``connection sector" if $\nabla$ was part of a dynamical superfield, very much analogous to the gauge sector. Indeed, the fact that $(\nabla^-,\psi_{IJ})$ transforms as an $SO(9,1)$-Yang-Mills multiplet to $\OO(\a)$ is what motivated the construction of the action of \cite{Bergshoeff1989439} in the first place. From the discussion above, it appears that supersymmetric solutions behave as if this where the case, at least for compact solutions including $\OO(\a^2)$. The question then arises what happens at $\OO(\a^3)$ and higher?

It should first be noted that at higher orders, the form of the supergravity action is no longer unique, and undetermined $(\textrm{curvature})^4$-terms appear \cite{Bergshoeff1989439}. The form of these terms may however be determined through other means such as string amplitude calculations \cite{Gross198741, Cai1987279}, which was also used in \cite{Bergshoeff1989439}, and these terms indeed preserve the symmetry between the Lorentz and Yang-Mills sectors.

With this, it therefore seems natural to conjecture that the above structure also survives to higher orders. That is, the natural connection $\nabla$ used to calculate the curvatures should be chosen so that it satisfies an equation of motion similar to that of $A$, whenever the other equations of motion are satisfied. Moreover, for supersymmetric solutions, $\nabla$ should satisfy a supersymmetry condition similar to that of $A$. We also conjecture that, as seen to order $\OO(\a^2)$, the moduli of this ``supersymmetry condition" are equivalent to field redefinitions, and therefore do not correspond to physical lower energy fields in any sense.  

\subsection{Future directions}
Having reviewed our results, and discussed higher orders in $\a$, there are a few unanswered questions which we would like to look into in the future. Firstly, it would be interesting to check the proposed conjecture to the next order in $\a$. This should not be very difficult, as the cubic theory was laid out in general in \cite{Bergshoeff1989439}, and we only have to repeat their analysis using a more general connection. It should be noted that  at this order, the supersymmetry condition for the gauge field does receive corrections, 
and we expect this to be true for the tangent bundle connection as well.

Next, it would be interesting to return to the first order theory, and consider higher order deformations of the Hull connection. Indeed, in section \ref{sec:FirstOrder1} we only considered infinitesimal deformations away from the Hull connection of this theory. That is, we considered the tangent space of the moduli space of connections {\it at the Hull connection}, which we saw corresponded to infinitesimal $\OO(\a)$ field redefinitions. It would be interesting to perform higher order deformations of the connection, i.e. deformations of $\OO(t^2)$ and above, and to see how this relates to obstructions of the corresponding deformation theory. Moreover, do such ``finite" deformations still correspond to field redefinitions? 

It would also be interesting to consider our findings in relation to the sigma model. As was pointed out in \cite{Sen1986289} for the first order theory, and as we also find, changing the connection $\nabla$ corresponds to $\OO(\a)$ field redefinitions. Requiring world-sheet conformal invariance, i.e. the ten-dimensional equations of motion, in addition to space-time supersymmetry, puts conditions on the connection. As we have seen, and as was first noted in \cite{Hull1986357}, it is sufficient to use the Hull connection at first order. This connection was also necessary modulo field redefinitions. We found that the allowed field redefinitions correspond to the moduli space \eqref{eq:Mdnabla}, and it would be interesting to see if this moduli space can be retrieved from the sigma model point of view as well.

At next order, we found that the Hull connection was not a good field choice, provided we want supersymmetric solutions to the Strominger system. Still, we found the necessary and sufficient condition for compatibility was that $\nabla$ should satisfy the instanton condition. Moreover, $\nabla$ is related to the Hull connection by a corresponding $\OO(\a^2)$ field redefinition. But for supersymmetric solutions of the Strominger system, the Hull connection lead to too stringent constraints on the geometry. It would be interesting to investigate this further from a sigma-model point of view. In particular, it would be interesting to see what the more ``physical field choices", i.e. connections satisfying the instanton condition, look like in this picture.

\begin{subappendices}

\section{Proof of Instanton Condition}
\label{app:proof}
In this appendix we repeat the proof of \cite{delaOssa:2014cia}, showing that supersymmetric solutions of the Strominger system and the equations of motion are compatible if and only if $\nabla$ is an instanton. We consider the theory including $\OO(\a^2)$ terms.

Recall first that the second order bosonic action is the same as the first order action \cite{Bergshoeff1989439}. According to \cite{Cardoso:2003af}, the six-dimensional part of the action \eqref{eq:action} may be written in terms of $SU(3)$-structure forms as
\begin{align}
\label{eq:6daction}
S_6=&\frac{1}{2}\int_{X_6}e^{-2\phi}\Big[8\vert d\phi-W_1^\omega\vert^2+\omega\wedge\omega\wedge\mathcal{\hat R}-\vert H-e^{2\phi}*\d(e^{-2\phi}\omega)\vert^2\Big]\notag\\
&-\frac{1}{4}\int\d ^6y\sqrt{g_6}{N_{mn}}^pg^{mq}g^{nr}g_{ps}{N_{nq}}^s\notag \\
&-\frac{\alpha'}{2}\int\d ^6y\sqrt{g_6}e^{-2\phi}\Big[\tr\vert F^{(2,0)}\vert^2+\tr\vert F^{(0,2)}\vert^2+\frac{1}{4}\tr\vert F_{mn}\omega^{mn}\vert^2\Big]\notag\\
&+\frac{\alpha'}{2}\int\d ^6y\sqrt{g_6}e^{-2\phi}\Big[(\tr\vert{R}^{(2,0)}\vert^2+\tr\vert {R}^{(0,2)}\vert^2+\frac{1}{4}\tr\vert R_{mn}\omega^{mn}\vert^2\Big]+\OO(\a^3)\:,
\end{align}
where the Bianchi Identity has been applied. $\mathcal{\hat R}$ is now the Ricci-form of the unique connection $\hat\nabla$ with totally antisymmetric torsion, for which the complex structure is parallel. For supersymmetric solutions of the Strominger system $\nabla^+$ coincides with $\hat\nabla$, which is known as the Bismut connection in the mathematics literature. The Ricci-form is
\begin{equation*}
\mathcal{\hat R}=\frac{1}{4}\hat R_{pqmn}\omega^{mn}\d x^p\wedge\d x^q,
\end{equation*}
while ${N_{mn}}^p$ is the Nijenhaus tensor for this almost complex structure. Note that
\begin{equation*}
\mathcal{\hat R}=0
\end{equation*}
is an integrability condition for supersymmetry.

Performing a variation of the action at the supersymmetric locus, we find that most of the terms vanish. The only surviving terms are
\begin{align}
\label{eq:6dvary}
\delta S_6=&\frac{1}{2}\int_{X_6}e^{-2\phi}\omega\wedge\omega\wedge\delta\mathcal{\hat R}\notag\\
&+\frac{\alpha'}{2}\delta\int\d ^6y\sqrt{g_6}e^{-2\phi}\Big[(\tr\vert{R}^{(2,0)}\vert^2+\tr\vert {R}^{(0,2)}\vert^2+\frac{1}{4}\tr\vert R_{mn}\omega^{mn}\vert^2\Big]+\OO(\a^3).
\end{align}
In \cite{Cardoso:2003af} it is shown that $\delta\mathcal{\hat R}$ is exact, and therefore the first term vanishes using supersymmetry by an integration by parts. If the equations of motion are to be satisfied to the order we work at, we therefore find
\begin{equation*}
R_{mn}\Gamma^{mn}\eta=\OO(\a^2)\:,
\end{equation*}
i.e. the instanton condition. Note the reduction in orders of $\a$ due to the factor of $\a$ in front of the curvature terms in the action.

\section{The Hull connection}
\label{app:Hull}
For completeness, we also repeat the argument of \cite{Martelli:2010jx} that the Hull connection does indeed satisfy the instanton condition for the $\OO(\a)$-theory, whenever we have supersymmetry. 

It is easy to prove that
\begin{equation}
\label{eq:curvatureid}
R^+_{MNPQ}-R^-_{PQMN}=\frac{1}{2}\d H_{MNPQ}.
\end{equation}
At zeroth order we get
\begin{equation*}
R^+_{MNPQ}=R^-_{PQMN}+\OO(\a),
\end{equation*}
by the Bianchi Identity. Contracting both sides with $\Gamma^{PQ}\epsilon$, and using 
\begin{equation*}
R^+_{MNPQ}\Gamma^{PQ}\epsilon=\OO(\a^2)
\end{equation*}
at the supersymmetric locus, we find
\begin{equation*}
R^-_{PQ}\Gamma^{PQ}\epsilon=\OO(\a)\:,
\end{equation*}
as required.

\end{subappendices}

\part{Moduli in Domain Wall Compactifications}
\label{part:3d}

\chapter{Non-Maximally Symmetric Compactifications and Four-Dimensional Supergravity}
\label{ch:nonmaxcomp}


In this part of the thesis we change gear and discuss a different issue that appears in string-compactifications, namely that of moduli stabilisation. As we saw in chapter \ref{ch:SS}, string-compactifications generically produce a lot of moduli fields. These fields are massless to first approximation. However, as they are not observed in accelerators, they need to be given a mass in order to be ``lifted" away, to agree with experiment. In this part of the thesis, we discuss different ways of achieving this. We will focus on the moduli associated to the compact space $X_6$, referred to as geometric, or gravitational moduli. The bundle moduli are often associated to fields in the observable spectrum, and to include these would require a full treatment in the spirit of chapter \ref{ch:SS}. In particular, one would need to do the full dimensional reduction, including the bundle, which is beyond the scope of this thesis. We shall however do a couple of consistency checks when considering explicit models. E.g., we check by the Dirac index that we have the correct number of generations. 

We will discuss two different ingredients used to stabilise moduli, fluxes and torsion. Let's begin with torsion. It turns out that by allowing the internal geometry to be non-K\"ahler, the superpotential induces terms proportional to the torsion $\d\omega$,
\begin{equation*}
W\propto\int_{X_6}\Big(H+i\d\omega\Big)\wedge\Psi\:,
\end{equation*}
where now $H$ is given by \eqref{eq:anomalycancellation}, that is it includes the connections on the bundles in general. $W$ can be used to stabilize moduli, and we will consider an explicit example of this in chapter \ref{ch:HFNK}. Unfortunately, not all torsional configurations allow for a maximally symmetric vacuum. Indeed, as we saw in chapter \ref{ch:SS}, requiring Minkowski space reduces $X_6$ to be a {\it heterotic} $SU(3)$-structure manifold. We want to extend this by allowing $X_6$ to have a more general $SU(3)$-structure. Specifically, we consider compactifications on manifolds of \it half-flat\rm\:$SU(3)$-structure.

To achieve this, we must forgo the assumption of a Minkowski vacuum, and consider non-maximally symmetric perturbative vacua.\footnote{Perturbative here means in terms of the $\a$-expansion, not including non-perturbative effects like gaugino condensates or branes.} The simplest generalisation of this is a domain wall, which has half-flat structures as its most generic supersymmetric solution. In these solutions, the fields depend on a particular direction, which we denote the $y$-direction. From a four-dimensional perspective, these solutions only preserve half of the supercharges, and are therefore known as {\it half}-BPS.

In the next chapter, we will consider half-flat torsional geometries known as cosets. These have nice geometrical properties, as explained in the introduction. We will see that with a combination of torsion and $\a$-effects, it is possible to stabilise all geometric moduli, leaving a $y$-dependent dilaton. We will show this both from a ten-dimensional and four-dimensional perspective. Also, by addition of non-perturbative effects, we will see that it is possible to lift the vacuum to a maximally symmetric one in a heterotic KKLT scenario. 

In chapter \ref{ch:CY} we discuss K\"ahler Calabi-Yau solutions. The purpose of this chapter is to show that fluxes can be used as a viable tool for moduli stabilisation, even for Calabi-Yau compactifiations. This has both pragmatic and calculational advantages. Indeed, in Part~\ref{part:4d} of the thesis, we attempted to make modest progress in the study of {\it heterotic} $SU(3)$-structures. A full understanding of the moduli space of such geometries is however far from complete. For Calabi-Yau's on the other hand, far more is known, and keeping the internal space Calabi-Yau is therefore an advantage when doing phenomenology. 

 As noted, heterotic theory only has NS flux available, and this vanishes for supersymmetric Minkowski solutions with an internal Calabi-Yau. To use flux, we therefore need to sacrifice the maximally symmetric assumption. We again consider the example of a domain wall. We show that the ten-dimensional and four-dimensional solutions can be matched everywhere in moduli space, for generic fluxes. We work at zeroth order in $\a$ in chapter \ref{ch:CY}, sufficient for the point we wish to make.




\section{Heterotic Half--BPS Domain Wall Solutions}
We now briefly discuss the general setting of $N=1$ heterotic domain wall solutions. Half-flat manifolds and, in particular, the nearly K\"ahler manifolds that we shall be concerned with in the next chapter, form solutions to the heterotic equations provided they are combined with a four-dimensional domain-wall solution~\cite{Lukas:2010mf, Klaput:2011mz}. In this case, the variation of the half-flat manifold along the direction transverse to the domain wall is described by Hitchin flow equations, or generalisations thereof. 

Our 10-dimensional solutions consist of a six-dimensional space with $SU(3)$-structure 
and a four-dimensional domain wall, as described in Refs.~\cite{Lukas:2010mf, Klaput:2011mz, Gray:2012md, delaOssa:2014lma}. This amounts to choosing the $1 + 2$ dimensions along the domain wall to be maximally symmetric and the remaining seven dimensions to form a non-compact $G_2$-structure manifold. The space-time now takes the form of a product
\begin{equation*}
M_{10}=X_7\times M_3\:,
\end{equation*}
where $X_7=Y$ is a seven-dimensional non-compact space with $G_2$-structure, and $M_3$ is three-dimensional Minkowski space. The associated metric takes the form
\begin{align}\label{domainwallansatz}
\d s^2=\eta_{\alpha\beta}\d x^\alpha\d x ^\beta+\underbrace{\d y^2+\underbrace{g_{uv}(x^m)\d x^u\d x^v}_{\text{$X,\;SU(3)$-structure}}}_{\text{$Y,\;G_2$ structure}}\:.
\end{align}
Here $\alpha$, $\beta$, ... range from 0 to 2 and label the domain wall coordinates, $y=x^3$ is the remaining four-dimensional direction, transverse to the domain wall, and $u$, $v$, ... run from 4 to 9 and label coordinates of the internal compact manifold $X$. The indices $m$, $n$, ... run from 3 to 9 and label all seven directions of the $G_2$-structure manifold $Y$. 

As evident from the above equation, the seven-dimensional $G_2$-structure manifold $Y$ is a warped product of the $y$-direction and the $SU(3)$-structure manifold $X$. To describe this structure mathematically, it is most convenient to formulate the $G_2$ and $SU(3)$-structures in terms of differential forms, which we will do in the next section.

\section{$G_2$- and $SU(3)$-Structure from the Supersymmetry Conditions}
\label{sec:G2susy}

We now briefly review how the conditions for unbroken supersymmetry,  \eqref{eq:O1spinorsusy1}-\eqref{eq:O1spinorsusy3}, give rise to the $G_2$- and $SU(3)$-structures of the domain wall solution \eqref{domainwallansatz}, mainly following Ref.~\cite{Lukas:2010mf}.

The general ten dimensional Majorana-Weyl spinor $\varepsilon$ which appears in the supersymmetry conditions \eqref{eq:O1spinorsusy1}-\eqref{eq:O1spinorsusy3} is decomposed in accordance with our metric Ansatz \eqref{domainwallansatz} as
\begin{align}
\varepsilon(x^m)=\rho\otimes\eta(x^m)\otimes\theta\:.
\end{align}
Here $\theta$ is an eigenvector of the third Pauli matrix $\sigma^3$, whose eigenvalue determines the chirality of $\epsilon$, while $\eta(x^m)$ is a seven dimensional spinor, and $\rho$ is a constant Majorana spinor in 2+1 dimensions and represents the two preserved supercharges of the solution. Hence, from the viewpoint of four-dimensional $N=1$ supergravity, the solution is $\frac{1}{2}$--BPS.

The spinor $\eta(x^m)$ can be used to define a three-form
\begin{equation}
\varphi_{mnp}=-i\eta^\dagger\gamma_{mnp}\eta
\end{equation}
and a four-form
\begin{equation}
\Phi_{mnpq}=\eta^\dagger\gamma_{mnpq}\eta
\end{equation}
where $\gamma_{m\dots n}:=\gamma^m\dots\gamma^n$ is a product of seven dimensional Dirac matrices. The two forms $\varphi$ and $\Phi$ define a $G_2$-structure and are both Hodge dual to each other with respect to the metric $g_7=\d y^2+g_{uv}(x^m)\d x^u\d x^v$, that is, $\varphi=*_7 \Phi$. Therefore, this is the metric compatible with the so defined $G_2$-structure on $Y=\{y\} \ltimes X$ \cite{Joyce}.

Now, it can be shown that the first two supersymmetry conditions\footnote{Together with the requirement that the $H$-flux has only legs in the compact directions.} \eqref{eq:O1spinorsusy1} and \eqref{eq:O1spinorsusy2} are satisfied if and only if \cite{Lukas:2010mf, Gran:2005wf, Gauntlett:2002sc, 2003JGP....48....1F}
\begin{align}
\label{eq:killing1}
\d_7\varphi&=2\d_7\phi\wedge\varphi-*_7H 
\\
\d_7 *_7\varphi&=2\d_7\phi\wedge*_7\varphi
\\
\varphi\wedge H&=2*_7\d_7\phi\:,\\
\label{eq:killing4}
*_7\varphi\wedge H&=0 \:.
\end{align}
Here, $*_7$ is the seven-dimensional Hodge-dual with respect to the metric $g_7$ and and $\d_7=dx^m\p_m$ is the seven-dimensional  exterior derivative.

To focus on the compact space $X$, we will now decompose these equations by performing a $6+1$ split. The forms $\varphi$ and $\Phi$ can be written in terms of six dimensional forms~as
\begin{align}\label{phi:Hitchinflow}
\varphi&=-\d y\wedge\omega+\Psi_-\\
*_7\varphi&=\d y\wedge\Psi_++\frac{1}{2}\omega\wedge\omega\:,
\end{align}
where $\omega$ is a two-form and $\Psi=\Psi_++\I\,\Psi_-$ a complex three-form which, together, define an $SU(3)$-structure on $X$.
In terms of these forms, Eqs.~(\ref{eq:killing1})-(\ref{eq:killing4}), can be re-written~as
\begin{align}
\label{eq:killingspinor:1}
\d\Psi_-&=2\d\phi\wedge\Psi_-
\\\label{eq:killingspinor:2}
\d\omega&=2\p_y\phi\,\Psi_--\p_y\Psi_--2\d\phi\wedge\omega+*H
\\\label{eq:killingspinor:3}
\omega\wedge\d\omega&=\omega\wedge \omega\wedge\d\phi
\\\label{eq:killingspinor:4}
\d\Psi_+&=\omega\wedge\p_y\omega-\p_y\phi\, \omega\wedge\omega+2\d\phi\wedge\Psi_+
\\\label{eq:killingspinor:5}
\omega\wedge H&=*\d\phi
\\\label{eq:killingspinor:6}
\Psi_-\wedge H&=(2\p_y\phi)\;*1
\\\label{eq:killingspinor:7}
\Psi_+\wedge H&=0
\end{align}
where all symbols and forms are quantities on the six-dimensional compact internal space $X$. In particular, $*$ denotes the six-dimensional Hodge dual with respect to the metric $g_6=g_{uv}(x^m)\d x^u\d x^v$.

An $SU(3)$-structure can be characterised by the decomposition of the torsion tensor into irreducible $SU(3)$-representations, as reviewed in section \ref{subsec:Xgeom}. The structure decomposes into five torsion classes, which are related to the exterior derivatives of $\omega$ and $\Psi$ via \eqref{eq:tordom} and \eqref{eq:tordPsi}. Using these relations, it can be shown that the supersymmetry conditions \eqref{eq:killingspinor:1}-\eqref{eq:killingspinor:7} restrict the torsion classes to
\begin{align}\label{torsionclasses:genhf}
W_0^-=0 \qquad  W_2^-=0  \qquad W_1^\omega=\d\phi \qquad W_1^\Psi=2\d\phi\;
\end{align}
while the remaining classes are arbitrary. For the special case $H=0$, $d\phi=0$ this means that all but $W_0^+$ and $W_2^+$ vanish and such SU(3) structures are referred to as \emph{half-flat}. 
Without such a restriction, SU(3) structures satisfying \eqref{torsionclasses:genhf} are often referred to as {\it generalized half-flat}. In the next chapter, we shall also be concerned with geometries where all but one torsion class vanishes, namely $W_0$. Such geometries are referred to as {\it nearly K\"ahler}.

Recall that the Strominger system is characterized by the stronger conditions
\begin{align}\label{torsionclasses:strominger}
W_0=0 \qquad  W_2=0  \qquad W_1^\omega=\d\phi \qquad W_1^\Psi=2\d\phi\:.
\end{align}
Therefore, the Strominger system -- which results from a metric Ansatz with a maximally symmetric four-dimensional space-time -- is seen to be a special case of the more general Ansatz \eqref{domainwallansatz}, as one would have expected. Specializing \eqref{torsionclasses:strominger} further and setting $H=0$, $d\phi=0$ forces all torsion classes to vanish which corresponds to the case of Calabi-Yau manifolds times four-dimensional Minkowski space.

In addition to the above conditions which restrict the gravitational sector of the supergravity, there is also the instanton condition for the gauge field, coming from setting \eqref{eq:O1spinorsusy3} to zero. For a gauge field lying purely in the compact space $X$, this condition is equivalent to the conditions
\begin{align}
\label{eq:inst1}
F\wedge\Psi&=0\\
\label{eq:inst3}
\omega\lrcorner F&=0\:.
\end{align}
The last of these equations is referred to as the Yang-Mills condition. Solving these equations turns out to be a technical challenge in any heterotic compactification. For compactifications on Calabi-Yau manifolds, or Strominger-type compactifications, these are usually solved using the Donaldson-Uhlenbeck-Yau Theorem or its generalisation, Theorem \ref{tm:LiYau}, respectively. The geometries \eqref{torsionclasses:genhf} are in general not of Strominger type (and not even complex, since $W_0\not =0$ and $W_2\not = 0$) and, therefore, these theorems do not apply. However, explicit solutions to the instanton equations for Abelian gauge fields on homogeneous half-flat manifolds have been obtained in Ref.~\cite{Klaput:2011mz}. Taking into account the order $\alpha'$ backreaction of these gauge fields via the Bianchi identity is one of the main purposes of the next chapter.

Finally, we also need to make sure that we satisfy the heterotic Bianchi identity \eqref{eq:bianchi}, which we repeat here for convenience. At $\OO(\a)$, this reads
\begin{equation}
\label{eq:bianchi3d}
\d H=\frac{\a}{4}(\tr\:F\wedge F-\tr\:R^-\wedge R^-)+\OO(\a^2)\:,
\end{equation}
where $R^-$ is the curvature of the Hull connection, required under the supersymmetry transformations \eqref{eq:O1spinorsusy1}-\eqref{eq:O1spinorsusy3}. Moreover, we also need to satisfy the equations of motion \eqref{eq:eom1}-\eqref{eq:eom4}. This is however trivial, as the theorem by Ivanov \cite{Ivanov:2009rh} (see also \cite{Martelli:2010jx}) guarantees that the equations of motion are satisfied to the order we require, as long as the connection appearing in the Bianchi identity and action is an instanton
\begin{align}
\label{eq:instR3d}
R^-_{mn}\Gamma^{mn}\eta=\OO(\a)\:.
\end{align}
As is shown in appendix \ref{app:Hull}, the Hull connection is an instanton for supersymmetric solutions, to the order we are concerned with.

\section{The Four-Dimensional Theory}
\label{sec:4ddimred}
Having discussed supersymmetric compactifications of the heterotic string on domain-wall geometries, we would like to review some of the details of the dimensional reduction of this theory. As noted above, the geometries we encounter will in general have non-trivial torsion, generalising the usual Calabi-Yau compactifications. We therefore need a framework where this is taken into account. The framework we will employ is that of half-flat mirror manifolds \cite{Gurrieri:2002wz}, which is general enough for our purposes, and which reduces to the usual Calabi-Yau reduction for the torsion-free story. 

\subsection{Half-flat Mirror Geometry}
\label{sec:mirrorgeom}

We now take a moment to review a convenient language in which to formulate the more general compactifications discussed above, namely that of {\it half-flat mirror manifolds}. This will also be convenient when we later discuss the dimensional reduction of the theory. The language is analogous to Calabi-Yau manifolds, and indeed generalises this setting. This language also applies to the explicit examples of nearly K\"ahler coset considered in the next chapter \cite{Klaput:2011mz}, and is hence the most generic framework we need.

Half-flat mirror manifolds were introduced in Refs.~\cite{Gurrieri:2002wz, Gurrieri:2004dt, Gurrieri:2007jg} in the context of type II mirror symmetry with NS fluxes. Specifically, they are mirrors of Calabi-Yau compactifications of type II compactifications with electric NS flux, which under the mirror map turns into torsion of the compact manifold. These manifolds are equipped with a set, $\{\omega_i\}$, of two-forms, and a dual set, $\{\tilde\omega^i\}$, of four forms. They also have a symplectic set, $\{\alpha_A,\beta^B\}$, of three-forms, as in the Calabi-Yau case. These forms satisfy the following relations
\begin{eqnarray}\label{eq:mirrorgeombasisrelations1}
\int_X\omega_i\wedge\tilde\omega^j=\delta_i^j,\;\;\int_X\alpha_A\wedge\alpha_B=0,\;\;\int_X\beta^A\wedge\beta^B=0,\;\;\int_X\alpha_A\wedge\beta^B=\delta_A^B,
\end{eqnarray}
similar to the harmonic and symplectic basis forms on a Calabi-Yau manifold. Furthermore, we define intersection numbers $d_{ijk}$ analogous to the Calabi-Yau case by writing
\begin{equation}\label{eq:intersectionnumbers}
\omega_i\wedge\omega_j\equiv d_{ijk}\,\tilde\omega^k\,,
\end{equation}
for the half-flat mirror two-forms $\omega^i$. In contrast to Calabi-Yau manifolds, however, these forms are not harmonic anymore in general. Instead, the non-closed forms satisfy the differential relations
\begin{equation}\label{eq:mirrorgeombasisrelations2}
\d\omega_i =e_i\beta^0\:,\quad \d\alpha_0=e_i\tilde\omega^i\:.
\end{equation}
The coefficients $e_i$ are constants on $X$ and parametrize the intrinsic torsion of the manifolds. The $SU(3)$-structure forms $\omega$ and $\Psi$ can be expanded in this basis
\begin{equation}
\label{eq:exphf}
\omega=v^i\omega_i\; ,\quad\Psi =Z^A\alpha_A+\I\, \,\G_A\,\beta^A\; ,
\end{equation}
where the fields $v^i$ are analogous to the K\"ahler moduli, the $Z^A$ analogous to the complex structure moduli and $\G_A$ are derivatives of the pre-potential $\G(Z^A)$, $\G_A=\p_A\G$. Taking the exterior derivative we get
\begin{equation}\label{eq:mirrorderivative}
\d\omega=v^ie_i\beta^0\; ,\quad\d\Psi =Z^0e_i\tilde\omega^i\:.
\end{equation}
By comparing with Eqs.~\eqref{eq:tordom} and \eqref{eq:tordPsi}, these results can be used to read off the torsion classes of half-flat mirror manifolds. In particular, we see that the constants $e_i$ indeed measure the intrinsic torsion of the manifolds.

\subsection{Reductions on Half-Flat Mirror Manifolds}
\label{sec:4dred}
We would now like to perform the dimensional reduction of the heterotic supergravity on such half-flat mirrors. We shall use this in the following chapters when considering the corresponding four-dimensional theories. We will not go through the full reduction in detail, but summarise the results. The procedure is very similar to the usual case of Calabi-Yau reductions, and can be found in \cite{Gurrieri:2004dt, Gurrieri:2007jg}. As stated above, we omit the gauge-bundle moduli, as these overcomplicate the situation and are besides the points we wish to make in this part of the thesis. To include these, knowledge of the full reduction of $N=1$ heterotic supergravity is needed, which is beyond the scope of this thesis.

We begin by explaining the relation between four- and ten-dimensional quantities, following the conventions of four-dimensional supergravity laid out in \cite{wess1992supersymmetry}.
A set of fields, $v^i$, analogous to the K\"ahler moduli of CY manifolds, appear in the expansion
\begin{eqnarray}\label{eq:mirrorexpansion:J}
\omega=v^i\omega_i\:.
\end{eqnarray}
of the $SU(3)$-structure form $\omega$ with respect to the two-forms $\omega_i$ of the half-flat mirror basis introduced in section \ref{sec:mirrorgeom}. We also introduce the standard quantity
\begin{equation}
\label{eq:stVol}
 \V=\frac{1}{6}d_{ijk}v^iv^jv^k\; ,
\end{equation} 
proportional to the volume of the compact space. This allows us to define the four-dimensional dilaton $s$ in terms of its ten-dimensional counterpart $\phi$ as
\begin{eqnarray*}
s=e^{-2\phi}\frac{\V}{\vol}\:,
\end{eqnarray*}
where $\vol$ is some reference volume. Next, we expand the $B$-field as 
\begin{equation*}
B=b^i\omega_i+B_{(4)}\:,
\end{equation*}
where $B_{(4)}$ is a two-form in four dimensions. The field strength has the following expansion
\begin{equation}
\label{eq:10dfstrength}
H=\d B+\epsilon_A\beta^A+\mu^B\alpha_B=\d b^i\wedge\omega_i+b^ie_i\beta^0+\d B_{(4)}+\epsilon_A\beta^A+\mu^B\alpha_B\:,
\end{equation}
where we have also included ``explicit flux" terms, parameterised by $\{\epsilon_A,\mu^B\}$. These terms correspond to a $B$-field that is only locally well-defined, and are not included in the above expansion of $B$. $\d$ in the above expression refers to the ten-dimensional exterior derivative in general. Note that $\d H\neq 0$, by the presence of $\alpha_0$. This is needed to parameterise the non-closeness of $H$ at this order. We will use this in the next chapter to parametrise the non-trivial Bianchi identity at $\OO(\a)$.

Excluding bundle moduli, we are interested in a set of chiral fields, $\Phi^{X}=\{S, T^i, X^A\}$, where
\begin{equation*}
S=a+is\:,\;\;\;\;T^i=b^i+iv^i\:,\;\;\;\;X^A=c^A+iw^A\:,
\end{equation*}
where $X^A=Z^A/\F$ are the four-dimensional fields corresponding to the complex structure moduli, and $a$ is the axion, dual to $B_{(4)}$. $\F$ here is some appropriate normalisation factor of the projective coordinates $Z^A$, not to be confused with the Atiyah class of section \ref{subsec:defsV}, and appropriately chosen for the matching of the ten-dimensional and four-dimensional theories. Typically, we have $\F=Z^0$, but as we will see in chapter \ref{ch:CY}, this can be generalised where appropriate.

The K\"ahler potential $K=K(\Phi^X,\bar{\Phi}^{\bar{X}})$ is then given by
\begin{equation}
\label{eq:kahlerpotfull}
K=-\log\Big(i(\bar S-S)\Big)-\log(8\V)-\log\Big[\frac{i}{\F^2}\:\int_X\Psi\wedge\bar\Psi\Big]\:.
\end{equation}
The theory also has a corresponding Gukov-Vafa-Witten superpotential $W=W(\Phi^X)$, given by \cite{Gukov:1999ya,Gurrieri:2004dt, CyrilThesis}
\begin{eqnarray}
\label{eq:gukovvafa}
W=\frac{\sqrt{8}}{\F}\int_{X}(H+i\:\d\omega)\wedge\Psi\:,
\end{eqnarray}
normalised by $\F$, and where the factor $\sqrt{8}$ is conventional. Note the appearance of the torsion term $\d\omega$ in \eqref{eq:gukovvafa}, which can potentially be used to stabilise moduli.

The scalar potential is then given by the usual formula
\begin{eqnarray}
V=e^{K}\Big(K^{X\bar Y}F_X\bar{F}_{\bar{Y}}-3\vert W\vert^2\Big)+\frac{1}{2}\DD_a \DD^a\; ,
\end{eqnarray}
where the F-terms are defined as $F_X=D_XW=\p_XW+K_XW$, with $K_X=\p_XK$. Further, $K_{X\bar Y}\equiv\p_X\p_{\bar Y}K$ is the K\"ahler metric, $K^{X\bar{Y}}$ is its inverse, and $D_a$ are the D-terms, which can originate from internal line-bundles of the ten-dimensional theory.


Let $\chi^X$ be the superpartner of $\Phi^X$. We write the four-dimensional gravitino as $\psi_\mu$, where greek letters denote four-dimensional indices. The supersymmetry transformations of the four-dimensional supergravity theory then read
\begin{align}
\label{eq:4dsusyvar1}
\delta\chi^X&=i\sqrt{2}\sigma^\mu\bar\kappa\p_\mu\Phi^X-\sqrt{2}e^{K/2}K^{X\bar Y}\bar F_{\bar Y}\kappa\\
\label{eq:4dsusyvar2}
\delta\psi_\mu&=2\D_\mu\kappa+ie^{K/2}W\sigma_\mu\bar\kappa\:,
\end{align}
where the Weil spinor $\kappa$ parametrises four-dimensional supersymmetry, and $\sigma^\mu=\{I_2,\sigma^\alpha\}$, where $\sigma^\alpha$ are Pauli matrices. Here the covariant derivative $\D_\mu$ is given by
\begin{equation*}
\D_\mu=\p_\mu+\omega_\mu+\frac{1}{4}\Big(K_X\p_\mu\Phi^X-K_{\bar X}\p_\mu\bar\Phi^{\bar X}\Big)\:,
\end{equation*}
where $\omega_\mu$ is the spin-connection.

Supersymmetry requires that we set the variations \eqref{eq:4dsusyvar1}-\eqref{eq:4dsusyvar2} to zero. Maximally symmetric four-dimensional supersymmetric solutions then requires
\begin{equation*}
W=\p_XW=0\:,
\end{equation*}
from \eqref{eq:4dsusyvar1}-\eqref{eq:4dsusyvar2}. These are usually referred to as the F-term conditions. Performing a reduction on generic torsional half-flat spaces with fluxes, these equations appear hard to satisfy. Indeed, from what we have learned already, this should not be possible without including non-perturbative effects or reducing the system to that of the Strominger system and a heterotic $SU(3)$-structure. Domain-wall solutions are however allowed, and the most generic four-dimensional domain wall solution, with $M_3$ flat space, was given in \cite{Lukas:2010mf}, which we briefly recall here.

\subsection{Four-Dimensional Domain Walls}
\label{sec:4dDW}
The four-dimensional metric is assumed to have the form
\begin{equation*}
\d s_4^2=e^{-2B(y)}(\d y^2+\eta_{\alpha\beta}\d x^\alpha\d x^\beta)\:,
\end{equation*}
where $\{\alpha,\beta\}\in\{0,1,2\}$, $\eta_{\alpha\beta}$ is the three-dimensional Minkowski metric, and $B(y)$ is some warp-factor.

Under this assumption, the general equations for supersymmetric solutions then read
\begin{align}
\label{eq:4dsusyvarDM1}
\p_y\Phi^X&=-ie^{-B}e^{K/2}K^{X\bar Y}\bar F_{\bar Y}\\
\label{eq:4dsusyvarDM2}
\p_yB&=ie^{-B}e^{K/2}W\\
\label{eq:4dsusyvarDM3}
\Im(K_X\p_y\Phi^X)&=0\\
\label{eq:4dsusyvarDM4}
2\p_y\kappa&=-\p_y B\kappa\:.
\end{align}
In the case of a axio-dilaton independent superpotential, one finds that the second equation is exactly the flow equation of the four-dimensional dilaton $\phi_4$, where $s=e^{-2\phi_4}$, but with $B$ replaced with $\phi_4$. In this case we can identify the warp factor $B$ with $\phi_4$. 

Furthermore, the spinor $\kappa$ also satisfies the constraint
\begin{equation*}
\kappa=\sigma^2\bar\kappa\:,
\end{equation*}
reducing its number of independent components to two. These solutions are therefore known as half-BPS. A study of such solutions at zeroth order in $\a$ was performed in \cite{Lukas:2010mf}, where a matching was found between the ten-dimensional equations \eqref{eq:killingspinor:1}-\eqref{eq:killingspinor:7} and the four-dimensional equations \eqref{eq:4dsusyvarDM1}-\eqref{eq:4dsusyvarDM4} in the large complex structure limit.

In the next chapter, we will go one step further and include $\a$-effects. We will consider the explicit example of coset spaces, and show that by including these effects it is possible to stabilise all geometric moduli of the theory. The perturbative solution is still a domain wall, as the dilaton has a non-trivial profile in this case. We find that by including non-perturbative effects to the four-dimensional theory, like e.g. a gaugino condensate, it is possible to lift this vacuum to a maximally symmetric one.

In chapter \ref{ch:CY} we will also consider domain wall solutions on Calabi-Yau's with fluxes. We match the ten-dimensional and four-dimensional killing-spinor equations at a generic point in moduli space, extending the analysis of \cite{Lukas:2010mf}. We conclude that fluxes can be used to stabilise moduli, even in Calabi-Yau compactifications, provided one sacrifices a maximally symmetric space-time, at least perturbatively.

\chapter{Moduli Stabilisation in Half-Flat Compactifications}
\label{ch:HFNK}


In this chapter we will study heterotic domain wall compactifications on torsional half-flat manifolds, with particular emphasis on the inclusion of $\alpha'$ corrections and moduli stabilisation. As we shall see, the zeroth order solution is not sufficient to stabilise the geometric moduli of the internal space $X_6$, as has been pointed out before \cite{Klaput:2011mz, CyrilThesis}. This changes when including $\a$-effects, and we will see that flux induced through the heterotic Bianchi identity together with torsion of the internal space can indeed be used to stabilise all geometric moduli, leaving a $y$-dependent dilaton. Next, we will address the question as to whether the domain wall can be ``lifted" to a maximally symmetric vacuum via stabilization of all moduli. For the examples studied the answer is a cautious ``yes". We will see how a combination of $\alpha'$ and non-perturbative effects can indeed lift the runaway directions of the original, lowest-order perturbative potential and lead to a supersymmetric AdS vacuum in a heterotic KKLT-type scenario. The chapter is based on \cite{Klaput:2012vv}.

\section{Introduction}
An explicit study of $\alpha'$ corrections and the required construction of gauge fields requires an explicit and accessible set of half-flat manifolds. For this reason, we will focus on cosets which admit half-flat structures and, specifically, on $SU(3)/\U{1}^2$ which provides the greatest flexibility among the cosets for building gauge fields via the associated bundle construction. As noted in the introduction, cosets have nice geometrical descriptions in terms of $G$-invariant forms, where a lot of calculations can be done explicitly. The $G$-invariant construction also respects the half-flat mirror manifold description of section \ref{sec:mirrorgeom}, which is useful when we come to dimensionally reduce the theory.

Following Ref.~\cite{Klaput:2011mz}, we construct explicit gauge bundles consisting of sums of line bundles. The conditions for these gauge fields to be supersymmetric, the D-term conditions from a four-dimensional point of view, fix two of the three T-moduli, thereby restricting the half-flat structure to be nearly K\"ahler. We will see that the anomaly condition can be satisfied for appropriate bundle choices and we solve the Bianchi identity explicitly for such choices. This results in a non-harmonic H-flux, induced by the bundle flux, which leads to a correction to the metric and the dilaton profile at order $\alpha'$. These corrections preserve the nearly K\"ahler structure on the coset space. 

From a four-dimensional point of view, the bundle-induced H-flux leads to an additional, constant term in the superpotential. This term can stabilise the remaining T-modulus, but the dilaton is still left as a runaway direction. We will see that upon inclusion of non-perturbative effects, such as a gaugino condensate, all moduli can be stabilised in a supersymmetric AdS vacuum. For appropriate bundle choices this stabilisation does arise in a consistent part of moduli space, that is, at weak coupling and for moderately large internal volume. However, as we shall see there is a tension in that it is not possible, for the specific examples analysed, to make the volume very large (so that there is no doubt about the validity of the $\alpha'$ expansion) and keep the theory at weak coupling. 

The chapter is organised as follows. In section \ref{sec:GeoPre} we begin with a review of coset geometry, and we review the lowest order solution in section \ref{section:lowestorder}, following Ref.~\cite{Klaput:2011mz}. Next in section \ref{sec:alphaprimecorr} we consider what $\alpha'$-corrections do to the solution, and we compare this to the four-dimensional supergravity in section \ref{sec:NK4d}. We review and discuss future directions in section \ref{sec:Disc}.

\section{Geometric Preliminaries}
\label{sec:GeoPre}


Before we move on to construct explicit solutions, we review some of the coset geometry needed for this chapter. We have tried to make it a short but self-consistent review for the reader, leaving some of the technical details to appendix \ref{app:coset}. Extensive reviews of coset geometry and the $G$-invariant formalism we employ have appeared in the literature before, and the reader is referred to \cite{1998math.ph...7026O, castellani2001g, Kapetanakis19924, 0264-9381-5-1-011, Roberto19901, Castellani1984394, Lust:1985be, Lust:1986ix, KashaniPoor:2007tr, Castellani:1999fz} for more details.

Of the known nearly K\"ahler homogeneous spaces $SU(3)/\U{1}^2$, $\Sp{2}/\SU{2}\times \U{1}$, $\Gtwo/SU(3)$ and $\SU{2}\times\SU{2}$, only the first two spaces allow for line bundles using the construction method we employ. A study of the expected number of generations, using the index of the Dirac operator, 
shows that only $SU(3)/\U{1}^2$ admits bundles with three generations \cite{Lust:1986ix, Klaput:2011mz}. Hence, in our analysis we will focus on the case $SU(3)/\U{1}^2$, even though many of the results can be extended in a straightforward way to include all four spaces \cite{Klaput:2012vv}. 

We will start with a brief review of coset geometry, the construction of $SU(3)$-structures on cosets and the relation to half-flat mirror geometry. Then, we discuss the construction of vector bundles and, in particular, line bundles on cosets. In the following sections we will combine these ingredients with a four-dimensional domain wall, and we construct ten-dimensional solutions with two supercharges to the heterotic string.

\subsection{$SU(3)$-Structures on the Coset}\label{section:coset}
We begin with a review of coset space differential geometry and, in particular, the construction of the corresponding $G$-invariant half-flat $SU(3)$-structures. 
A coset space $G/H$ is obtained by identifying all elements of the Lie group manifold $G$ which are related by the action of the subgroup $H \subset G$. For the construction of bundles on $G/H$ later on, it will be useful to view $G$ as a principal fibre bundle over $G/H$ with fibre $H$, that is, $G=G(G/H,H)$. The base space $G/H$ admits a natural frame of vielbeins, which descend from the left-invariant Maurer-Cartan forms on $G$ and will be denoted by $e^u$ \cite{Castellani:1999fz}. These one-forms are, in general, no longer left-invariant under the action of $G$. However, in the cases of interest, there exist $G$-(left)-invariant two-, three- and four-forms. 

The space of $G$-invariant two- and three-forms for $SU(3)/\U{1}^2$ is spanned  by
\begin{align*}
\{\, e^{12}\, ,\; e^{34}\, ,\; e^{56}\,\}\:,\quad \{\, e^{136}-e^{145}+e^{235}+e^{246}\,,\;  e^{135}+e^{146}-e^{236}+e^{245}\,\}\:,
\end{align*}
where $e^{u_1\dots u_n}:=e^{u_1} \wedge \dots \wedge e^{u_n}$. The $G$-invariant four-forms which can be obtained from the above $G$-invariant two-forms via Hodge duality can be found in appendix \ref{app:coset}.

Requiring the $SU(3)$-structure to be compatible with the given group structure of the coset implies that the forms $\omega$ and $\Psi$ can be expressed in terms of the above forms. Indeed, one finds that the most general $G$-invariant $SU(3)$-structure forms for $SU(3)/\U{1}^2$ are given by
\begin{equation}
\label{eq:zerothsolutionsu3}
\begin{array}{lll}
\omega&=&R_1^2e^{12}-R_2^2e^{34}+R_3^2e^{56}\\
\Psi&=&R_1R_2R_3\Big[\left(e^{136}-e^{145}+e^{235}+e^{246})+\I\, (e^{135}+e^{146}-e^{236}+e^{245}\right)\Big]\:,
\end{array}
\end{equation}
with independent parameters $R_1$, $R_2$ and $R_3$. It should be noted that the form of $\omega$ given above is {\it negative definite}, and so differs from the usual hermitian form by a change of sign. With this convention, the compatibility relation \eqref{eq:volume} reads\footnote{In this, and the following chapter, we use the convenient normalisation $\vert\vert\Psi\vert\vert^2=8$.}
\begin{equation}
\label{eq:volume3d}
{\rm dvol}_X=-\frac{1}{6}\omega\wedge\omega\wedge\omega=\frac{i}{8}\Psi\wedge\bar\Psi\:.
\end{equation}
This change of convention is historical for coset geometry, and will be employed throughout this chapter.

From the above $SU(3)$-structure forms we can construct a unique compatible metric \cite{2001math......7101H}, which coincides with the most general $G$-invariant metric on $G/H$. The metric is given by
\begin{align}\label{def:ginvmetric}
\d s^2
&=
R_1^2\,(e^1\otimes e^1+e^2\otimes e^2)
+
R_2^2\,(e^3\otimes e^3+e^4\otimes e^4)
+
R_3^2\,(e^5\otimes e^5+e^6\otimes e^6)\;
\end{align}
We recognise the parameters $R_i$ as \quot{radii} of the coset, determining the volume and shape of the space. We shall later see that they are related to K\"ahler moduli.

Having introduced $G$-invariant geometry and $SU(3)$-structure on our coset, all required tools to solve the geometric sector of the heterotic string, that is, the Killing spinor equations \eqref{eq:killingspinor:1}-\eqref{eq:killingspinor:7}, are available. This has been known for some time and was first realised in Ref.~\cite{Lust:1986ix}. The additional technical difficulty of \emph{heterotic} string compactifications is the construction of vector bundles which satisfy the instanton equations \eqref{eq:inst1}, \eqref{eq:inst3}. In past works, this has usually been approached using an Ansatz similar to the standard embedding. We will adopt the bundle construction developed in Ref.~\cite{Klaput:2011mz} which contains the standard embedding Ansatz as special case. 

\subsection{Half-Flat Mirror Geometry of the Coset}\label{sec:mirrorgeom2}


We recall from section \ref{sec:mirrorgeom} that half-flat mirror geometry, in analogy with Calabi-Yau manifolds, is defined by a set of two-forms, $\{\omega_i\}$, a set of dual four-forms, $\{\tilde{\omega}^i\}$, and a set $\{\alpha_A,\beta^B\}$ of symplectic three-forms. Unlike in the Calabi-Yau case, these forms are, in general, no longer closed but instead satisfy a set of differential relations~\eqref{eq:mirrorgeombasisrelations2} which involve the torsion parameters $e_i$.

It turns out that for the coset under consideration, there is only a single pair, $\{\alpha_0,\beta^0\}$, of symplectic three-forms in addition to a certain number of two- and four-form pairs, $\{\omega_i,\tilde{\omega}^i\}$. A subset,$\{\omega_r\}$ of the two-forms which we label by indices $r,s,\ldots$ are, in fact, closed. For $SU(3)/U(1)^2$ these forms are given in appendix \ref{app:coset}, equation \eqref{eq:Ginvforms}. In particular, there are three pairs of two- and four-forms in this case. The exterior derivatives of $\omega_3$ and $\alpha_0$ are given by $d\omega_3=\beta^0$ and $d\alpha_0=\tilde{\omega}^3$, while all other forms are closed. This means the closed two-forms are $\omega_r$, where $r=1,2$. Comparing with the general differential relations~\eqref{eq:mirrorgeombasisrelations2} for half-flat mirror geometry this shows that the three torsion parameters are given by $(e_1,e_2,e_3)=(0,0,1)$.

It can be shown that the above forms indeed satisfy all the relations for half-flat mirror geometry given in section \ref{sec:mirrorgeom}. In particular, the $SU(3)$-structure forms on the coset given in the previous subsection can be re-written in half-flat mirror form as
\begin{equation}
\omega= v^i\omega_i\:,\quad \Psi = Z\,\alpha_0+\I\, G\,\beta^0\:, \label{JO}
\end{equation}
where $Z$ is the single ``complex structure" modulus and $G$ the derivative of the pre-potential. From \eqref{eq:ZG}, we see that these two quantities are related by
\begin{align*}
Z = \frac{\vol}{\pi^2} G\:,
\end{align*}
where $\vol$ is the coordinate volume, a specific number whose value differs depending on the coset. For $SU(3)/U(1)^2$ it is given in appendix~\ref{app:coset}, and reads $\V_0=4(2\pi)^3$. 

It is also easy to verify from the above expressions for the forms that
\begin{equation}
\label{eq:mirrorsu3}
 \omega_i\wedge\alpha_0=\omega_i\wedge\beta_0=0
\end{equation}
for all $i$, in analogy with the Calabi-Yau case. These relations are also expected from the absence of $G$-invariant 5-forms on our coset space. A further useful relation can be deduced from the $SU(3)$-structure compatibility relation~\eqref{eq:volume3d}. Inserting the expansions~\eqref{JO} for $\omega$ and $\Psi$ into this relation leads to
\begin{equation}
\label{compzeroth}
d_{ijk} v^i v^j v^k=-\frac{3}{2} ZG=-\frac{3\,\vol}{2 \pi^2}\, G^2\:.
\end{equation}
This shows that $Z$ is determined by the ``K\"ahler moduli'' $v^i$, and hence no independent ``complex structure" moduli exist in our coset model.

\subsection{Levi-Civita Connection}
The Levi-Civita connection is the unique torsion-free and metric compatible connection on the tangent bundle. On our space, with the most general $G$-invariant metric \eqref{def:ginvmetric}, the Levi-Civita connection one-form is
\begin{align}
\label{def:levicevita}
{{\omega^{LC}}_u}^v=\frac{1}{2} {f_{wu}}^v e^w+{f_{iu}}^v\varepsilon^i\:.
\end{align}
The $\varepsilon^i$ are the Maurer-Cartan left-invariant one-forms on $G$ along the directions of the generators $H_i$ of the sub-group $H$. On $G/H$ these can be written in terms of the forms $e^u$, but the explicit expressions are not required.

The Levi-Civita connection enters the Bianchi identity \eqref{eq:bianchi3d} as part of the connection one-form $\omega^-$ defined in \eqref{eq:Hullconn}. As we will see below, our spaces do not allow for H-flux at lowest order in $\alpha'$, and we can therefore set $R^- = R^{LC}$ to first order in the Bianchi identity. For $SU(3)/\U{1}^2$ this means that the contribution to the Bianchi identity is given as
\begin{align}
\label{eq:trRRLeviCevita}
\tr\, R^{LC}\wedge R^{LC} = -\frac{9}{4}\,\frac{\vol}{\pi}\;\tilde{\omega}^3\:.
\end{align}

\subsection{Vector Bundles on the coset}
We now turn to the problem of finding appropriate gauge bundles on the coset, which can satisfy the instanton equations \eqref{eq:inst1}, \eqref{eq:inst3}. Such bundles have been explicitly constructed in \cite{Klaput:2011mz}, based on the well-known relation between vector bundles and principal fibre bundle. The principal fibre bundle in our case is $G=G(G/H,H)$ and any representation $\rho\,:\,H\to\;\mathbb{C}^n$ uniquely defines a rank $n$ vector bundle which is referred to as an {\em associated} vector bundle. Moreover, any connection defined on $G$ uniquely defines a connection on every associated vector bundle. We shall require the structure of the bundle to be compatible with the group structure of $G/H$. This leads to a natural connection on $G=G(G/H,H)$, related to the reductive decomposition of the Lie algebra, given by
\begin{eqnarray*}
A=\varepsilon^i H_i\:,
\end{eqnarray*}
where $H_i$ are the generators of the Lie algebra of $H$ and the $\varepsilon^i$ are the Maurer-Cartan left-invariant one-forms on $G$ along the directions of the generators $H_i$. As before, their explicit form in terms of the vielbein $e^u$ will not be required.

On an associated vector bundle defined by the representation $\rho$, the connection associated to $A$ is then
\begin{eqnarray*}
A_\rho=\varepsilon^i\rho(H_i)\;
\end{eqnarray*}
with field strength
\begin{eqnarray}
\label{eq:random23523}
F=-\frac{1}{2}{f_{uv}}^i\rho(H_i)e^u\wedge e^v\:.
\end{eqnarray}
Note that the one-forms $\varepsilon^i$ have dropped out. This construction holds in general for every representation $\rho$ of $H$. 

We would like to add a few remarks on the \quot{standard embedding}, a choice of gauge connection frequently made in the literature. For this choice, the bundle curvature $F$ and the Riemann curvature $R^{LC}$ are set equal, which solves the Bianchi identity \eqref{eq:bianchi3d} for $H=0$.\footnote{It solves the Bianchi identity for $\d H=0$, but as we shall see by Theorem \ref{tm:noflux}, it follows that $H=0$.} However, in the present context, such a choice leads to a problem. Since our spaces are not Ricci-flat, the so-chosen field strength $F$ does not satisfy the instanton equations, so that the solution is not supersymmetric. If we instead embed the gauge connection in the tangent bundle connection given by
\begin{align}
\label{eq:somechoice}
{\rho(H_i)_u}^v={f_{iu}}^v\:,
\end{align}
then the curvature \eqref{eq:random23523} satisfies the instanton equations.
Note that \eqref{eq:somechoice} does not solve the Bianchi identity for $H=0$ anymore. However, since this connection only differs from the Levi-Civita connection \eqref{def:levicevita} by a torsion term, both choices yield the same cohomology class for $\tr F\wedge F$ and $\tr R^{LC} \wedge R^{LC}$. This means that the topological constraint arising from the Bianchi identity is satisfied, while the exact identity is only satisfied to lowest order in $\alpha'$. This has been the case for most heterotic bundle constructions in past works. In contrast, we will construct $\OO(\a)$ solutions to the the Bianchi identity, and to the supersymmetry constraints.

\subsection{Line Bundle Sums}
When constructing a solution to the $E_8 \times E_8$ heterotic string, the structure group of a vector bundle has to be embedded in $E_8$ and the resulting low energy gauge group will be given by the commutant of the structure group within $E_8$. Recently, it has been noted that vector bundles which consist of sums of line bundles provide a fertile class of models which can be studied systematically~\cite{Anderson:2011ns, Anderson:2012yf}. Such line bundle sums have been used for the half-flat compactifications in Ref.~\cite{Klaput:2011mz}, and will also be the focus of the present chapter. 

Let us first focus on a single line bundle, $L$, defined by a one-dimensional representation $\rho\,:\,H\to\;\mathbb{C}$. For $SU(3)/\U{1}^2$, such a representation is characterized by two integers, $p^r$, where $r=1,2$, representing the first Chern class of the bundle. Writing
\begin{align*}
\rho(H_7) = -\I\,(p^1+p^2/2)\qquad \rho(H_8)=-\I\, p^2/(2\sqrt{3})\;,
\end{align*}
and using equation~\eqref{eq:random23523} the first Chern class of such a line bundle becomes
\begin{equation*}
 c_1(L)=\frac{\I}{2\pi}[F]=p^r [\omega_r]\:.
\end{equation*}
Hence, the integers ${\bf p}=(p^r)$ indeed label the first Chern class of the line bundles and we can adopt the notation $L={\cal O}_X({\bf p})$. A sum of line bundles
\begin{eqnarray*}
V=\bigoplus_{a=1}^n {\cal O}_X({\bf p}_a)
\end{eqnarray*}
is, therefore, characterized by the set, $\{p_a^r\}$, of integers and its total first Chern class is given by
\begin{equation*}
 c_1(V)=\sum_{a=1}^np_a^r [\omega_r]\:.
\end{equation*}
Given that there are no $G$-invariant exact two-forms on our spaces, it follows that the field strength for the connection on $V$ is given by
\begin{eqnarray}
F=[F]=-2\pi i\sum_ap^r_a\omega_r\:.
\label{eq:F}
\end{eqnarray}

To ensure that the structure group of $V$ can be embedded into $E_8$, we impose the vanishing of the first Chern class, $c_1(V)=0$. This condition restricts the integers $p_a^r$ by
\begin{equation*}
 \sum_{a=1}^np_a^r=0\qquad \forall\,r\:.
\end{equation*}
Then, the structure group of $V$ is $S(U(1)^n)$ which is indeed a sub-group of $E_8$ for $1<n\leq 8$. Further, for $n=3,4,5$, the commutant of $S(U(1)^n)$ within $E_8$ is given by $S(U(1)^3)\times E_6$, $S(U(1)^4)\times SO(10)$, and $S(U(1)^5)\times SU(5)$, respectively. These are the phenomenologically interesting GUT gauge groups, and for the ``visible" $E_8$ we will therefore focus on line bundle sums of rank $3$, $4$ or $5$. 

Subsequently, we will require the vector bundle contribution to the Bianchi identity~\eqref{eq:bianchi3d}. We evaluate this contribution for a sum of line bundles on $SU(3)/\U{1}^2$. Writing $(p_a,q_a)=(p_a^1,p_a^2)$ for ease of notation, we find
\begin{align}
\tr\, F \wedge F &=- \frac{\vol}{8\pi}\Big[\sum_a(6 p_a^2 + q_a^2+6p_a q_a)\tilde{\omega}^1\notag\\
&+\sum_ap_a(3 p_a +2 q_a)\tilde{\omega}^2+\frac{4}{3}\sum_a(3p_a^2 + q_a^2 + 3p_a q_a)\tilde{\omega}^3\Big]
\label{eq:trFF}
\end{align}
Note that we will, of course, have two different bundles, one for each $E_8$ factor, corresponding to the visible and hidden sectors of the theory. Hence, the Bianchi identity has two contributions of the form~\eqref{eq:trFF}, each controlled by its own set of integers. 

The Bianchi identity also gives an additional integrability condition on the bundle. We require the Bianchi identity to be satisfied in cohomology, giving 
\begin{equation*}
p_1(V)=p_1(X)=0\:,
\end{equation*}
where we have used that $p_1(X)=0$, which is true for the coset under consideration. This leads to relations between the observable bundle parameters $\{p_a,q_a\}$ and hidden bundle parameters $\{\tilde p_a,\tilde q_a\}$, which can be written as
\begin{align}
\label{eq:constraintpq1}
\sum\limits_{a=1}^{n}(6p_a^2 + q_a^2+6p_a q_a)+\sum\limits_{a=1}^{\tilde{n}}(6\tilde p_a^2 + \tilde q_a^2+6\tilde p_a \tilde q_a)&=0
\\\label{eq:constraintpq2}
\sum\limits_{a=1}^{n}p_a(3p_a +2q_a)+\sum\limits_{a=1}^{\tilde{n}}\tilde{p}_a(3\tilde p_a +2\tilde q_a)&=0\:.
\end{align}
Solutions to these equations exist, and explicit examples will be considered later. Note that the presence of the hidden bundle is helpful in that is can be used to cancel the observable bundle contributions which may be somewhat constrained by model building considerations. 

Another basic phenomenological requirement on the visible vector bundle is the presence of three chiral generations. The number of generations is given by the chiral asymmetry, which is counted by the index of the Dirac operator. This can be computed using the Atiyah-Singer index theorem. For a sum of line bundles, this leads to \cite{Nash:1991pb, Kimura:2007xa, Klaput:2011mz}
\begin{equation*}
 {\rm ind}(V)=-\frac{1}{6}d_{rst}\sum_{a=1}^np_a^rp_a^sp_a^t\:, 
\end{equation*}
where $d_{ijk}$ are the intersection numbers. Specializing to $SU(3)/\U{1}^2$ gives
\begin{equation*}
{\rm ind}(V)=-\sum\limits_{a=1}^{n}\left(p_a^3+\frac{1}{2}p_aq_a(q_a+3p_a)\right)\:. 
\end{equation*}

\section{Solutions to Lowest Order in $\alpha'$}\label{section:lowestorder}
We have now collected all ingredients to solve the heterotic string on our coset. In this section we will review the solution at lowest order in $\alpha'$ which has been found in Ref.~\cite{Klaput:2011mz}.

As discussed in Chapter \ref{ch:nonmaxcomp}, finding a supersymmetric vacuum of the heterotic string is equivalent to finding fields which satisfy the Bianchi identity \eqref{eq:bianchi3d}, the Killing spinor equations \eqref{eq:killingspinor:1}-\eqref{eq:killingspinor:7}, the instanton equations \eqref{eq:inst1}, \eqref{eq:inst3} and the integrability condition \eqref{eq:instR3d}. The discussion in appendix \ref{app:Hull} shows that the integrability condition is satisfied to first order. This means solving the Killing spinor equations and the Bianchi identity implies that the equations of motion are satisfied to lowest order as well. For clarity, we will label the lowest order solution by $(0)$, except for the bundle\footnote{We will see later that the solution at first order requires all fields to change apart from the gauge field strength.} which we will still denote by $F$. The relevant objects are then $H^{(0)}$, $\phi^{(0)}$, $\omega^{(0)}$, $\Psi^{(0)}$, $g^{(0)}$ and $F$.

\subsection{Bianchi Identity}
Let us consider the Bianchi identity first. At lowest order in $\alpha'$ it is
\begin{eqnarray*}
\d H^{(0)}=0\:.
\end{eqnarray*}
We then have the following no-go theorem
\begin{Theorem}[No-Go Theorem]
\label{tm:noflux}
If the compact space has vanishing third Betti number, $b_3=0$, then a solution to the Killing-Spinor equations \eqref{eq:killingspinor:1}-\eqref{eq:killingspinor:7} with $\d H=0$ requires $H=0$.
\end{Theorem}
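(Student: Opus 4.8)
The plan is to prove the no-go by a weighted $L^2$-norm argument on the compact internal manifold $X$, in the spirit of the Stokes-theorem computation of Ref.~\cite{Gauntlett:2002sc} used for the Strominger system. Since the $H$-flux has legs only along the compact directions it is a three-form on $X$, and the quantity $\vert\vert e^{-\phi}H\vert\vert^2=\int_X e^{-2\phi}\,H\wedge *H$ is manifestly non-negative, vanishing if and only if $H=0$. The idea is to rewrite the integrand using the Killing spinor equations so that it splits into a total derivative, a piece annihilated by the assumption $b_3=0$, and a manifestly non-positive term; the resulting clash of signs then forces everything to vanish.

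First I would solve \eqref{eq:killingspinor:2} for $*H$, multiply by $e^{-2\phi}$, and regroup the result as
\begin{equation*}
e^{-2\phi}*H=\d(e^{-2\phi}\omega)+4\,e^{-2\phi}\,\d\phi\wedge\omega+\p_y(e^{-2\phi}\Psi_-)\:,
\end{equation*}
where the first grouping uses $\d(e^{-2\phi}\omega)=e^{-2\phi}(\d\omega-2\,\d\phi\wedge\omega)$ and the last uses $\p_y(e^{-2\phi}\Psi_-)=e^{-2\phi}(\p_y\Psi_--2\,\p_y\phi\,\Psi_-)$. Wedging with $H$ and integrating over $X$ then produces three contributions. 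The term $\int_X H\wedge\d(e^{-2\phi}\omega)$ is a total derivative after one integration by parts that invokes the hypothesis $\d H=0$, hence vanishes by Stokes' theorem. For the middle term, the Yang--Mills-type relation \eqref{eq:killingspinor:5}, $\omega\wedge H=*\d\phi$, together with a reordering of the forms gives $H\wedge\d\phi\wedge\omega=-\vert\d\phi\vert^2\,{*1}$, so this piece contributes $-4\int_X e^{-2\phi}\vert\d\phi\vert^2\,{*1}$.

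The crux is the transverse-derivative piece $\int_X H\wedge\p_y(e^{-2\phi}\Psi_-)$, and this is exactly where the topological hypothesis enters. Equation \eqref{eq:killingspinor:1} states that $e^{-2\phi}\Psi_-$ is closed; with $b_3=0$ it is therefore exact, $e^{-2\phi}\Psi_-=\d\sigma$, so that $\p_y(e^{-2\phi}\Psi_-)=\d(\p_y\sigma)$ is again exact. Wedging with the closed form $H$ and integrating by parts once more (again using $\d H=0$) shows this piece vanishes by Stokes as well. Assembling the three contributions yields
\begin{equation*}
0\le\vert\vert e^{-\phi}H\vert\vert^2=\int_X e^{-2\phi}\,H\wedge *H=-4\int_X e^{-2\phi}\,\vert\d\phi\vert^2\,{*1}\le 0\:,
\end{equation*}
so both sides must vanish; hence $H=0$ (and, as a by-product, $\d\phi=0$ on $X$).

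I expect the main obstacle to be the correct handling of the transverse term $\p_y(e^{-2\phi}\Psi_-)$: unlike in the maximally symmetric Strominger case, the domain-wall/Hitchin-flow structure introduces $y$-derivatives into the supersymmetry conditions that cannot be integrated by parts over $X$. Their resolution hinges precisely on the interplay of \eqref{eq:killingspinor:1} with $b_3=0$, which converts this term into an exact form and is the only step that genuinely requires a vanishing third Betti number. A secondary point demanding care is the bookkeeping of signs coming from $**$ and from reordering odd-degree forms, which I would pin down using the convention $H\wedge *H=\vert H\vert^2\,{*1}\ge 0$ throughout; I would also note that deriving the same conclusion via the flux equation of motion \eqref{eq:eom3} is less direct here, since $e^{-2\phi}*H$ is \emph{not} closed under the supersymmetry conditions alone, which is why the norm computation rather than a harmonicity argument is the natural route.
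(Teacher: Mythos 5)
Your proof is correct and follows essentially the same route as the paper's: solve \eqref{eq:killingspinor:2} for $e^{-2\phi}*H$, use \eqref{eq:killingspinor:1} together with $b_3=0$ to render the $\p_y(e^{-2\phi}\Psi_-)$ piece exact (equivalently, one can note this form is closed at each fixed $y$ and hence exact, sidestepping the choice of a $y$-smooth primitive $\sigma$), then wedge with $H$ and apply Stokes using $\d H=0$ on the compact $X$. If anything you are more careful than the paper, whose displayed rearrangement $\d(e^{-2\phi}\omega)=-\p_y(e^{-2\phi}\Psi_-)+e^{-2\phi}*H$ silently drops the $4\,e^{-2\phi}\d\phi\wedge\omega$ term that you retain and dispose of via \eqref{eq:killingspinor:5}, yielding the manifestly non-positive contribution $-4\int_X e^{-2\phi}\vert\d\phi\vert^2\,{*1}$ so that the sign argument still closes, with the correct by-product $\d\phi=0$.
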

\begin{proof}
Take a look at the first two Killing spinor equations \eqref{eq:killingspinor:1}, \eqref{eq:killingspinor:2}
\begin{align*}
\d(e^{-2\phi}\Psi_-)&=0\\
\d(e^{-2\phi}\omega)&=-\p_y(e^{-2\phi}\Psi_-)+*He^{-2\phi}.
\end{align*}
Since $H^3(X)=0$, these equations show that $*He^{-2\phi}$ is the sum of two exact forms and, hence, an exact form itself. Using this, we have
\begin{eqnarray*}
\vert\vert He^{-\phi}\vert\vert^2=\int_{X}H\wedge*e^{-2\phi}H=0\:,
\end{eqnarray*}
after partial integration. It follows that
\begin{align*}
H=0\:.
\end{align*}
\end{proof}
\noindent In particular, as our coset has $H^3_d(X)=0$, we see that the flux vanishes at zeroth order in $\a$. It follows that no nontrivial $H$-flux can be present in geometries where $b_3=0$ at lowest order. 
Similar no-go theorems have appeared in other contexts in the literature before \cite{Gauntlett:2002sc, Kimura:2006af, Gray:2012md}.

\subsection{Instanton Equations}\label{section:HYM}
The gauge bundle has to satisfy the instanton conditions $\omega\lrcorner F=0$ and $F\wedge\Psi=0$. The second of these conditions is automatically satisfied for the holomorphic three-form \eqref{eq:zerothsolutionsu3}, and field strength~\eqref{eq:random23523}. The first condition, however, leads to an additional constraint on the parameters appearing in the $SU(3)$-structure \cite{Klaput:2011mz}. To see this, note that $\omega\lrcorner F=0$ is equivalent to
\begin{eqnarray}
F\wedge \omega\wedge\omega=0\:.
\label{eq:dual-inst}
\end{eqnarray}
Inserting $\omega=v^i \omega_i$, with the $G$-invariant two-forms $\omega_i$ and the field strength (\ref{eq:F}) into equation~\eqref{eq:dual-inst} gives
\begin{align}
\label{eq:HYM:slope}
d_{rjk}\, p^r_av^jv^k=0\;\;\mbox{for all }\;a\:.
\end{align}
Here $d_{rjk}$ are the intersection numbers (see appendix \ref{app:coset}). 
The solution to Eqs.~\eqref{eq:HYM:slope}, for generic values of the integers $p_a^r$, is to set all $v^r$ to zero. For $SU(3)/\U{1}^2$ this leaves us with one remaining non-zero modulus $v^3$ corresponding to the non-harmonic two-form $\d\omega_3\neq0$. Therefore, from the relations \eqref{eq:moduliRSU3}-\eqref{eq:moduliRSU3last} between the K\"ahler moduli $v^i$ and the radii $R_i$ we see that the Yang-Mills equations are solved if
\begin{align*}
R_1^2&=R_2^2=R_3^2\equiv R^2&\text{for }&SU(3)/\U{1}^2\:.
\end{align*}
It then follows, using the relations \eqref{eq:tordom}, \eqref{eq:tordPsi} between the torsion classes and the $SU(3)$-structure forms, that the only non-vanishing torsion class is the real part of the first class $W_0^+=1/R$ \cite{1126-6708-2009-09-077, Klaput:2011mz}. This means that the $SU(3)$-structure of $X$ is nearly K\"ahler at this locus, that is $X$ has a half-flat $SU(3)$-structure where all torsion classes but $W_0$ vanish.

There is a subtlety in the case $SU(3)/\U{1}^2$. First note that \eqref{eq:HYM:slope} can be written, in the case of a line bundle, as
\begin{equation*}
\omega^{uv}{f_{uv}}^i\rho_a(H_i)=\Big(\frac{2}{R_1^2}-\frac{1}{R_2^2}-\frac{1}{R_3^2}\Big)\rho_a(H_7)+\sqrt{3}\Big(\frac{1}{R_3^2}-\frac{1}{R_2^2}\Big)\rho_a(H_8)=0\:,
\end{equation*}
where we have employed \eqref{eq:intersectionnums} and \eqref{eq:moduliRSU3}-\eqref{eq:moduliRSU3last}. If then $q_a=-2 p_a$ or $q_a=0$ for all $a$, the parameters $R_i$ do not have to be all equal. From now on we exclude these special cases, unless otherwise stated and we will return to this possibility when we discuss the four-dimensional effective supergravity in section \ref{sec:NK4d}.

\subsection{Killing Spinor Equations}
\label{section:hitchinflowzero}
Having solved the integrability condition, the instanton conditions and the Bianchi identity, we now turn to solving the Killing spinor equations. To lowest order the two Killing spinor equations \eqref{eq:killingspinor:5} and \eqref{eq:killingspinor:6} read
\begin{align*}
0&=*_{(0)}\,\d\phi^{(0)}
\\
0&=(2\p_y\phi^{(0)})\;*_{(0)}1 \:,
\end{align*}
or equivalently $\d\phi^{(0)}=\p_y\phi^{(0)}=0$. This means that, in addition to vanishing H-flux, the dilaton is constant. The Killing spinor equations \eqref{eq:killingspinor:1}-\eqref{eq:killingspinor:7} then reduce to the Hitchin flow equations\cite{2001math......7101H}
\begin{align}\label{eq:killing:simplified}
\d\Psi_-^{(0)}&=0
\\\label{eq:killing:simplified2}
\d\omega^{(0)}&=-\p_y\Psi_-^{(0)}
\\\label{eq:killing:simplified3}
\omega^{(0)}\wedge\d\omega^{(0)}&=0
\\\label{eq:killing:simplified4}
\d\Psi_+^{(0)}&=\omega^{(0)}\wedge\p_y \omega^{(0)}
\:.
\end{align}

As can be explicitly checked, the Hitchin flow equations \eqref{eq:killing:simplified}-\eqref{eq:killing:simplified4} are solved by the $G$-invariant $SU(3)$-structure \eqref{eq:zerothsolutionsu3}, provided the parameters $R_i$ assume a certain $y$-dependence.
To work this out, we insert the half-flat mirror geometry expansion (which we introduced in sections \ref{sec:mirrorgeom} and \ref{sec:mirrorgeom2}) into these flow equations. The two equations \eqref{eq:killing:simplified} and \eqref{eq:killing:simplified3} are automatically satisfied. 
The other two equations become
\begin{align}
\label{eq:hitch2zeroth}
Z\,e_i\,\tilde\omega^i&=d_{ijk}\,v^i(\p_yv^j)\,\tilde\omega^k\:.
\\
\label{eq:hitch1zeroth}
v^ie_i\,\beta^0&=-\p_yG\;\beta^0
\end{align}
Multiplying with $\wedge (v^l \omega_l)$ on both sides of \eqref{eq:hitch2zeroth} and integrating gives
\begin{align*}
Ze_k v^k=d_{ijk} v^i (\p_y v^j) v^k\:.
\end{align*}
Now, using the compatibility relation \eqref{compzeroth} we can express this as
\begin{align}
\label{eq:Rydependence}
e_k v^k=-\p_y G\:,
\end{align}
which shows that equations \eqref{eq:hitch1zeroth} and \eqref{eq:hitch2zeroth} are, in fact, equivalent. 

We have seen previously, that the presence of the gauge fields force all radii to be equal. The $y$-dependence should therefore reside in this overall modulus $R=R(y)$ and we write the $SU(3)$-structure forms as
\begin{align*}
\omega^{(0)} = R^2\,\tilde{v}^k \omega_k\:,\qquad \Psi^{(0)}=R^3 \left(\tilde{Z}\, \alpha_0 + \I\, \tilde{G} \,\beta^0 \right)\:,
\end{align*}
with  $(\tilde{v}^k)=(0,0,\tilde{v})$ for $SU(3)/\U{1}^2$ and a constant $\tilde{v}$. The values of $\tilde{Z}$ and $\tilde{G}$ follow from this choice via equation~\eqref{compzeroth}. From \eqref{eq:Rydependence}, the $y$-dependence of $R$ is determined by 
\begin{align}
\label{eq:ydependzerothorder}
\p_y R = - \frac{\tilde{v}}{3\,\tilde{G}}\:.
\end{align}
Since the right-hand side of this equation is a non-zero constant, the solutions for $R$ are linear in $y$ and diverging as $y\to\pm\infty$. We will see later that the $\alpha'$ corrections can remove this divergent behaviour. 

\subsection{The Four-Dimensional Perspective}
The divergent behaviour observed in the previous section can also be seen from a four-dimensional perspective. Following section \eqref{sec:4ddimred}, the NS field-strength is now assumed to have the following expansion
\begin{eqnarray*}
H=b^ie_i\beta^0+\d b^i\wedge\omega_i+\d B_4\:,
\end{eqnarray*}
leading to the following superpotential
\begin{equation*}
W=\sqrt{8}e_iT^i=\sqrt{8}T\:,
\end{equation*}
where we have set the scale-factor $\F=Z$, and $T^3=T$. Clearly, this does not allow for maximally symmetric supersymmetric solutions. It does however allow for domain wall solutions. In this case, the four-dimensional equations \eqref{eq:4dsusyvar1}-\eqref{eq:4dsusyvar2} can be shown to match the above ten-dimensional flow, as was done in \cite{Lukas:2010mf, CyrilThesis}.

In section \ref{sec:alphaprimecorr} we show that $\a$-corrections can be used to stabilise the remaining geometric modulus $v^3$ of the theory. The price we pay is that the dilaton becomes $y$-dependent, and so the domain wall persists perturbatively. In section \ref{sec:NK4d} we will consider the corresponding four-dimensional theory by performing the dimensional reduction, following section \ref{sec:4ddimred}. By adding non-perturbative effects to the theory, like e.g. a gaugino condensate, we will see that it is possible to lift this vacuum to a maximally symmetric one. 

\subsection{Side Issues: Kaluza-Klein Gauge Group and Wilson Lines}
Before we discuss $\a$-corrections, we take a moment to consider a couple of side issues that might worry the reader. An obvious question is whether the symmetries of our coset $G/H$ lead to a Kaluza-Klein gauge group in four dimensions, in addition to the remnants of the $E_8\times E_8$ gauge group. It turns out that Kaluza-Klein gauge fields from such spaces take values in the quotient $N(H)/H$ where $N(H)$ is the normaliser of $H$ in $G$ \cite{Coquereaux:1984mj}. For our coset, this quotient is merely a discrete group. Indeed, with $H=\U{1}^2$, one finds that $N(H)/H\cong S_3$, the permutation group of three elements. Hence, a Kaluza-Klein gauge group in four dimensions does not arise.

The standard method to break GUT gauge groups in heterotic constructions is to include a Wilson line in the gauge bundle. This requires a non-trivial first fundamental group of the underlying space. However, all coset studied here are simply connected and, hence, do not admit any Wilson lines. Alternatively, if the space admits a freely-acting symmetry a closely related compactification can be defined on the quotient manifold which has a non-trivial first fundamental group and, hence, allows for the inclusion of Wilson lines. However, for our coset it has been shown that only torsion-free discrete groups can have a free action on $G/H$ \cite{kobayashi1}. That is, groups which do not posses any cyclic elements. In particular, this excludes all finite groups. The mathematical literature provides an existence theorem for a freely acting infinite but finitely generated discrete freely-acting group on every coset of compact groups $G$, $H$. However, we have not been able to find such a group explicitly for one of our coset. For this reason, Wilson line breaking of the GUT symmetry is not currently an option. Instead, flux in the standard hypercharge direction might be used. Such details of particle physics model building are not the primary concern of the present chapter and will not be discussed further here.

\section{Solutions on Homogeneous Spaces Including $\alpha'$-Corrections}
\label{sec:alphaprimecorr}

In the previous sections we have seen how to construct lowest order solutions to the heterotic string on homogeneous spaces, using the associated vector bundle construction on cosets. It turns out that the four-dimensional space-time is a domain wall and that the radius, $R$, of the internal space varies linearly with $y$, the coordinate transverse to the domain wall.

How do we expect this to change if we include first order $\alpha'$ corrections? In our discussion before, we saw that the Bianchi identity \eqref{eq:bianchi3d} at lowest order requires the three-form flux $H$ to be closed, which forces $H$ to vanish at lowest order by theorem \ref{tm:noflux}. Now, at the next order the Bianchi identity is
\begin{align}
\label{eq:bianchi3dfull}
\d H=\frac{\alpha'}{4}\left(\tr F\wedge F-\tr R^{LC}\wedge R^{LC}\right)+\OO(\a^2)
\end{align}
and we expect a non-zero $H$ which is not closed. From a four-dimensional point of view, flux will contribute to the (super)-potential and we therefore expect some effect on moduli. Of course, the non-zero $H$ also feeds into the gravitino and dilatino Killing spinor equations and will change the gravitational background. 

In order to work this out, we first need to find solutions to the Bianchi identity \eqref{eq:bianchi3dfull} and then solve the Killing spinor equations \eqref{eq:killingspinor:1}-\eqref{eq:killingspinor:7} and the instanton equations \eqref{eq:inst1}-\eqref{eq:inst3}. The equations of motion are guaranteed by the integrability condition \eqref{eq:instR3d}, using the Hull connection, which at this order in $\a$ may be taken to be the Levi-Civita connection. 

\subsection{Full Solution Ansatz}
Note first that the instanton equations \eqref{eq:inst1}-\eqref{eq:inst3} again forces all the radii to be equal, as in the zeroth order case. We are therefore left with an overall radius $R(y)$, and potentially a $y$-dependent dilaton $\phi(y)$. We then make the following Ansatz for our solution, which we require to be $G$-invariant\footnote{This is the most general $G$-invariant ansatz for the solution, as $*H$ is required to be exact by equation \eqref{eq:killingspinor:2}.}:
\begin{equation}
\label{eq:Ansatz:hflux}
\begin{array}{lll}
\omega&=&R(y)^2\, \tilde{v}^i\omega_i\\
\Psi&=&R(y)^3 \left(\tilde{Z}\alpha_0+\I\,\tilde{G}\,\beta^0\right)\\
H &=& \mathcal{C}(R, \alpha')\,\frac{\vol}{\pi}\,\alpha_0\\
\phi &=&\phi(y)
\end{array}
\end{equation}
for $\{(\omega,\Psi), H, \phi\}$. The bundle is defined to be the same as at lowest order. The function $\mathcal{C}(R,\alpha')$ in the Ansatz for $H$ also depends on the bundle parameters and, along with $R(y)$, it has to be determined for a full solution. The tilded parameters have been defined in Section~\ref{section:hitchinflowzero}. In the following, we present explicit expressions for the space $SU(3)/\U{1}^2$. 

We consider our Ansatz for the Bianchi identity \eqref{eq:bianchi3dfull}. Recall that $\tr\:R^{LC}\wedge R^{LC}$ is given by \eqref{eq:trRRLeviCevita}. For $\tr F\wedge F$ we get the same result~\eqref{eq:trFF} as before. Including observable and hidden sector and assuming that the integrability conditions \eqref{eq:constraintpq1} and \eqref{eq:constraintpq2} are satisfied it can be written as
\begin{align*}
\tr F\wedge F = \mathcal{A}(\mathbf{p},\mathbf{q}, \tilde{\mathbf{ p}}, \mathbf{\tilde q})\,\frac{\vol}{\pi}\,\d\alpha_0
\end{align*}
where 
\begin{align}
\label{eq:Aparameter2}
\mathcal{A}(\mathbf{p},\mathbf{q}, \tilde{\mathbf{ p}}, \mathbf{\tilde q})
=-\frac{1}{12}\left[\sum\limits_{a=1}^{n}q_a^2+\sum\limits_{a=1}^{\tilde{n}}\tilde q_a^2\right] \:.
\end{align}
It may seem that this only depends on the bundle parameters $q_a,\,\tilde q_a$, but not on $p_a,\,\tilde p_a$. However, note that this result only hold for consistent bundles satisfying the integrability conditions \eqref{eq:constraintpq1} and \eqref{eq:constraintpq2}, which relate $p_a$, $\tilde{p}_a$ with $q_a$, $\tilde q_a$.

With these results, the Bianchi identity shows that $\mathcal{C}$ is given by
\begin{align}
\label{eq:Cexact}
\mathcal{C} = \a\B\:+\OO(\a^2)\:,
\end{align}
where 
\begin{equation}
\B=\frac{4\mathcal{A}+9}{16}\label{Bdef}
\end{equation}
is determined by the bundle parameters $\mathcal{A}$, equation~\eqref{eq:Aparameter2}. Note that the dependence of $\mathcal{C}$ on the radius $R$ is a higher order in $\a$ effect. We can therefore assume that $\mathcal{C}$ is a moduli-independent constant to the order we are working.

\subsection{Hitchin Flow Revisited} 

Apart from a non-vanishing $H$ and $y$-dependence of $R$, our Ansatz~\eqref{eq:Ansatz:hflux} remains unchanged from its lowest order form. This means that all equations \eqref{eq:killingspinor:1} - \eqref{eq:killingspinor:7} which do not contain $y$ derivatives or $H$ are automatically satisfied. 

The remaining three equations,  \eqref{eq:killingspinor:2}, \eqref{eq:killingspinor:4} and \eqref{eq:killingspinor:6}, lead to differential equations for the $y$-dependence of $R(y)$ and $\phi(y)$ and inserting the Ansatz~\eqref{eq:Ansatz:hflux} into these gives
\begin{eqnarray}
R^2 \tilde v^i e_i\,\beta^0 &=&\left( 2 \p_y\phi\, R^3\tilde G-3 R^2\p_y R\,\tilde G\right)\beta^0+\mathcal{C} \pi \beta^0\label{eq:hitchinrev1}\\
R^3 \tilde Z e_i \tilde{\omega}^i &=&d_{ijk} \tilde{v}^i \tilde{v}^j\tilde{\omega}^k\left(2 R^3 \p_y R -\p_y\phi R^4\right)\label{eq:hitchinrev2}\\
-\frac{\vol}{\pi}\,\tilde{G} \mathcal{C} R^3 \alpha_0\wedge \beta^0&=&2\,\p_y\phi\,(*1)_0\label{eq:hitchinrev3}\:.
\end{eqnarray}
A direct evaluation yields $\omega\wedge \omega\wedge\omega= -6\,R^6 (*1)_0$, where $ (*1)_0=e^{123456}$. Equation~\eqref{compzeroth} then yields the relation $(*1)_0 = \frac{\vol}{4\pi^2}\tilde{G}^2 \alpha_0\wedge \beta^0$. If we insert this last relation into the third flow equation \eqref{eq:hitchinrev3} and then use the result in equation~\eqref{eq:hitchinrev1}, we obtain 
\begin{align}
\p_y \phi &= -\frac{\mathcal{C}}{R^3}\label{fephi}\\
\p_y R &=-\frac{1}{6 \pi}\left[\tilde v + \frac{3\pi\,\mathcal{C}}{R^2}\right]\:.\label{feR}
\end{align}
Here, we have set $\tilde{G}=2\pi$, the value appropriate for $SU(3)/\U{1}^2$. These two equations already fully determine $R(y)$ and $\phi(y)$ and equation~\eqref{eq:hitchinrev2} yields no additional information. This can be seen after multiplying it with $\wedge (\tilde v^k \omega_k)$ and making use of the compatibility relation \eqref{compzeroth}, in complete analogy with the lowest order analysis in Section~\ref{section:hitchinflowzero}.

\subsection{Solving the Flow Equations}\label{sec:solvehitchin}
We now solve the above differential equations~\eqref{fephi} and \eqref{feR} for the $y$-dependence of the radius $R$ and the dilaton $\phi$ to order $\alpha'$. Inserting the leading term in $\mathcal{C}$ from equation~\eqref{eq:Cexact} into Eqs.~\eqref{fephi} and \eqref{feR} leads to
\begin{align}
\label{eq:alphaprimeflowexp1}
\p_y \phi
&=
-\frac{\B}{R^3}\alpha'
\\\label{eq:alphaprimeflowexp2}
\p_y R 
&= 
-\frac{1}{6 \pi}\left[\tilde v + \frac{3\pi\, \B\,\alpha'}{R^2}\right]
\:,
\end{align}
where $\B=(4\mathcal{A}+9)/16$, not to be confused with the Atiyah map of section \ref{subsec:defsTX}. The structure of the solutions to these equations depends crucially on the sign of $\B$ and we distinguish the three cases
\begin{equation*}
\begin{array}{ll}
 \mbox{Case 1:}&\B=0\\
 \mbox{Case 2:}&\B<0\\
 \mbox{Case 3:}&\B>0\:.
\end{array}
\end{equation*} 
Note from equation~\eqref{eq:Aparameter2}, that $\B$ is a function of the bundle parameters and that, for $SU(3)/\U{1}^2$,  all three cases can indeed be realized for appropriate bundle choices. Let us now discuss the solution for each of these cases in turn. 

\subsubsection*{Case 1, $\B=0$}
In this case, $H=0$, and Eqs.~\eqref{fephi} and \eqref{feR} revert to their zeroth order counterparts discussed in Section~\ref{section:hitchinflowzero}. This means that, due to a special choice of bundle, the $\alpha'$ corrections vanish and we remain with a constant dilaton and a linearly diverging radius $R$.

\subsubsection*{Case 2, $\B<0$}\label{case2}
In this case, equation~\eqref{eq:alphaprimeflowexp2} allows for a special $y$-independent solution where $R$ assumes the constant value
\begin{align}
\label{eq:staticR10d}
R_0^2 = \frac{3\pi\, |\B|\,\alpha'}{\tilde v}
\:.
\end{align}
For this static solution, where all geometric moduli are stabilised, the $\phi$ equation can then be easily integrated and we obtain a linear dilaton
\begin{align}
\label{eq:staticR10dphi}
\phi(y) = \frac{|\B|}{R_0^3}\alpha'\,y
\:.
\end{align}
The behaviour of this solution is radically different from what we have seen at zeroth order. There, the radius $R$ was linearly divergent and the dilaton constant. For the above solution, this situation is reversed with $R$ constant and the dilaton linearly diverging. Note that a linear dilaton will give rise to regions of strong coupling, where the solution cannot be trusted. Indeed, as $y$ approaches zero, the coupling constant $e^{-2\phi}$ becomes of $\OO(1)$. We can interpret this as the solution approaching the domain wall.

We can integrate equation~\eqref{eq:alphaprimeflowexp2} in general, to obtain the implicit solution
\begin{eqnarray*}
y-y_0=6\pi\,\Big[-\frac{R}{\tilde v}+\frac{\sqrt{3|\B|\alpha'}\;}{\tilde v^{3/2}}\operatorname{arctanh}\left(\sqrt{\frac{\tilde v}{3|\B|\alpha'}}R\right)\Big].
\end{eqnarray*}
Here $y_0$ is an arbitrary integration constant which corresponds to the position of the domain wall and will be set to zero for convenience. This solution has the generic form displayed in Fig.~\ref{fig:Rstab} (solid line) and exhibits a kink at $y=y_0=0$, further indicating the position of the domain wall. It approaches the above constant solution \eqref{eq:staticR10d} for $R$ as $|y|\to\infty$, that is, far away from the domain wall. In this limit, the dilaton asymptotes the linearly divergent behaviour~\eqref{eq:staticR10dphi}.

\subsubsection*{Case 3, $\B>0$}
No constant solution for $R$ exists in this case and integrating equation~\eqref{eq:alphaprimeflowexp2} gives
\begin{align*}
y-y_0=6\pi\,\left[-\frac{R}{\tilde v}+\frac{\sqrt{3\B\alpha'}\;}{\tilde v^{3/2}}\operatorname{arctan}\left(\sqrt{\frac{\tilde v}{3\B\alpha'}}R\right)\right]
\,.
\end{align*}
This solution is plotted in Fig.~\ref{fig:Rstab} (dashed line) for $y_0=0$. For $\vert y\vert \to \infty$, $R$ diverges linearly and in fact approaches the zeroth order solution \eqref{eq:ydependzerothorder}, while the dilaton becomes constant. Hence, we see that, far away from the domain wall, we recover the zeroth order solution, with a constant dilaton and a linearly divergent radius $R$.

\begin{figure}[h!]
\begin{center}
\includegraphics[width=90mm]{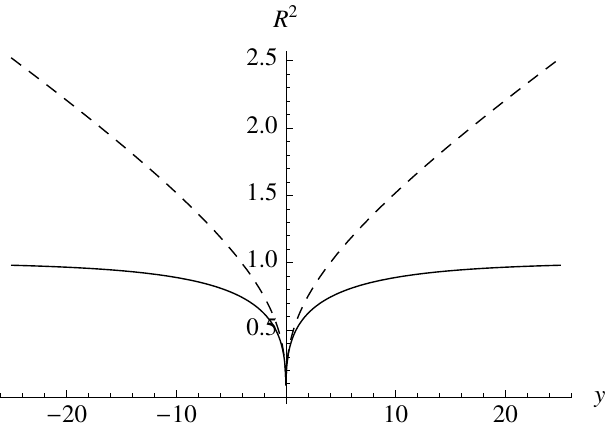}
\caption{{\it Plot of the radial modulus $R^2$ as a function of the distance, $y$, from the domain wall at $y=0$ for $\B<0$ (solid line), and $\B>0$ (dashed line). For convenience, we have set $R_0^2=1$.}}
\label{fig:Rstab}
\end{center}
\end{figure}

To summarize, we have seen that the qualitative behaviour of the moduli on $y$, the coordinate transverse to the domain wall, is controlled by the gauge bundle via the quantity $\B=(4\mathcal{A}+9)/16$ for the case of $SU(3)/\U{1}^2$, where $\mathcal{A}$ is defined in equation~\eqref{eq:Aparameter2}. For $\B=0$ the solution is, in fact, unchanged from the zeroth order one which has a constant dilaton and a linearly divergent radius $R$. For $\B>0$ the solution is modified due to $\alpha'$ effects close to the domain wall but approaches the zeroth order solution far away from the domain wall. The behaviour is quite different for $\B<0$ which, asymptotically, leads to a constant radius $R$ and a linearly diverging dilaton.

We see that $\alpha'$ effect can have a significant effect on moduli and their stabilisation. Indeed, by a suitable bundle choice, it is possible to stabilise all geometric moduli. From a four-dimensional viewpoint this should be encoded in a (super-) potential which appears at order $\alpha'$. We will now discuss this in detail by considering the four-dimenional $N=1$ supergravity associated to our solutions.

\section{The Four-Dimensional Effective Theory}
\label{sec:NK4d}
Above, we have found $\mathcal{O}(\alpha')$ corrected solutions to the ten-dimensional heterotic string. In this section, we will examine the corresponding four-dimensional effective supergravity theories and their vacua. In particular, we would like to verify that our ten-dimensional results can be reproduced from this perspective.

Before we begin we make a couple of comments about consistently reducing the theory to four dimensions. By performing the compactification, we should integrate out all the higher massive string modes and massive Kaluza-Klein degrees of freedom. These are the modes with masses above the compactification scale. We do not integrate out all the massive modes however. Indeed, as we shall see, for consistency with the ten-dimensional theory, it is essential that we include fields with masses of $\OO(\a)$ and below. As the potential generated is of $\OO(\a)$, this is precisely the mass-scale for the stabilised modulus of the previous section. For consistency, we therefore need to include these in the four-dimensional theory. We now turn to the explicit four-dimensional solution.

\subsection{Four-Dimensional Supergravity and Fields}

We now follow the reduction procedure laid out in \ref{sec:4dred} for our coset. The relevant moduli superfields are $(\Phi^X)=(S,T^i)$ with the dilaton $S$ and T-moduli $T^i$. There are no independent complex structure moduli. 

We assume the following expansion of the ten-dimensional three-form field strength
\begin{eqnarray}
\label{eq:hfluxmostgen}
H=-b^ie_i\beta^0-\d b^i\wedge\omega_i+\frac{\pi^2}{\vol}\,\mu\,\alpha_0+\d B_4\:,
\end{eqnarray}
where the minus is due to the opposite convention for the hermitian form. The factor in front of the flux parameter $\mu$ is conventional in order to simplify later expressions. The first term in this expansion is due to the non-vanishing torsion of the internal space and $e_i$ are the torsion parameters. We recall that they are given by $(e_1,e_2,e_3)=(0,0,1)$ for $SU(3)/\U{1}^2$. The third term in equation~\eqref{eq:hfluxmostgen} is a result of the non-vanishing H-flux induced via the Bianchi-identity. Its coefficient, $\mu$, can be read off from Eqs.~\eqref{eq:Ansatz:hflux}, \eqref{eq:Cexact} and is explicitly given by
\begin{align}
\label{eq:mudefinition}
\mu=\pi\alpha'\B\:,
\end{align}
where the quantity $\B=(4\mathcal{A}+9)/16$ depends on parameters of the gauge bundle as in equation~\eqref{eq:Aparameter2}. Given these preparations, we can identify the (bosonic parts of the) four-dimensional superfields as
\begin{equation*}
 S=a+\I  s\:,\quad T^i=b^i+\I\, v^i\:
\end{equation*}
As there are no complex structure moduli, the K\"ahler potential now reads
\begin{equation*}
 K=-\log\left(i(\bar{S}-S)\right)-\log(8\V)\:,
\end{equation*}
where $\V\rightarrow-\V$ in \eqref{eq:stVol} due to the opposite convention for $\omega$, while the superpotential \eqref{eq:gukovvafa}, with $\omega$ replaced by $-\omega$, is given by
\begin{equation}
\label{eq:suppot2}
 W=\sqrt{8}(e_iT^i+i\mu)\:,
\end{equation}
where we have set $\F=Z$. The superpotential \eqref{eq:suppot2} is obtained by inserting the various forms from equation~\eqref{JO} and \eqref{eq:hfluxmostgen} and using equation~\eqref{compzeroth} as well as the properties of the half-flat mirror basis given in section \ref{sec:mirrorgeom}. The first term arises from the non-vanishing torsion of the internal space and the second term is due to the non-vanishing H-flux induced by the gauge bundle and Bianchi identity.

\subsection{D-Term Conditions}
The ${\rm S}(\U{1}^n)$ and ${\rm S}(\U{1}^{\tilde{n}})$ structure groups of our observable and hidden line bundle sums also appear as gauge symmetries in the four-dimensional theory. Their associated D-terms have a Fayet-Illiopoulos (FI) terms \cite{Dine1987589}, and in general matter field terms which involve gauge bundle moduli. Switching on these moduli deforms the gauge bundle to a one with non-Abelian structure group, a possibility which we will not consider here. Focusing on the FI terms, one finds that for the observable sector
\begin{equation}
\label{eq:Dterm}
 D_a\sim \frac{d_{rij}p_a^rv^iv^j}{\V}\:,
\end{equation}
and similarly for the hidden sector. The D-flat conditions, $D_a=0$, hence implement the conditions~\eqref{eq:HYM:slope} from a four-dimensional viewpoint. Therefore, generically the D-flat conditions imply that all but the last modulus, $v=e_iv^i$, vanish as we have seen in section \ref{section:HYM}. The associated axions are absorbed by the gauge fields so we remain with a single T-modulus superfield $T=e_iT^i=b+iv$ and, of course, the dilaton $S$. In terms of these ``effective" fields the K\"ahler potential and superpotential read
\begin{equation}
 K=-\log(S+\bar{S})-\log 8(T+\bar{T})^3\:,\quad W=\sqrt{8}(T+ \mu)\:, \label{Weff}
\end{equation}
where we have switched to the ``phenomenological" definition $S=s+ia$ and $T=v+ib$ of the superfields, obtained from the previous one by multiplying the superfields by $-i$ and changing the signs of the axions.

It is worth noting that the above D-terms receive a dilaton-dependent correction at one loop~\cite{Lukas:1999nh,Blumenhagen:2005ga}. This correction is small in the relevant part of moduli space and will not change our conclusions, qualitatively. For simplicity, we will therefore neglect this correction.

Moreover, recall that for specific choices of the bundle parameters it is possible to satisfy \eqref{eq:Dterm} and leave more than just one of the moduli non-zero, as we pointed out at the end of section \ref{section:HYM}. However, for supersymmetric solutions, the corresponding F-terms
\begin{align*}
F_{T^s}\propto \frac{1}{\V}W\p_{T^s}\V\propto d_{sij}v^i v^j\:.
\end{align*}
for these moduli drive the model back to the nearly-K\"ahler locus where only the last $v^i$ is non-zero. Therefore, starting from this locus covers already the most general supersymmetric case.

\subsection{F-Term Conditions}
The superpotential~\eqref{Weff} is $S$-independent, and it is therefore expected that the dilaton cannot be stabilised. Below we will add a gaugino condensation term to $W$ in order to improve on this. However, it is still instructive at this stage to consider the F-term equations which follow from~\eqref{Weff}. For the $T$ modulus we have
\begin{equation*}
 F_T\propto\Big(1+\frac{3\mu}{v}-\frac{3ib}{v}\Big)\:.
\end{equation*} 
Hence, $F_T=0$ implies a vanishing T-axion, $b=0$, and
\begin{equation*}
 v=-3\mu\:. \label{v4d}
\end{equation*} 
Since $v>0$ this solution is only physical provided that $\B<0$ and we have seen that this can be achieved for appropriate bundle choices. Indeed, this is precisely the case discussed in Section~\ref{case2} which led to a domain solution with an asymptotically constant volume given by equation~\eqref{eq:staticR10d}. This asymptotic value is, in fact, identical to our four-dimensional result~\eqref{v4d}, as one would expect. Of course, $F_S\sim W\neq 0$ for this value of $v$ so that we do not have a full solution to the F-term conditions but, rather, a runaway in the dilaton direction. The ``simplest" solution for this type of potential is a domain wall, which is precisely what we found previously from a ten-dimensional viewpoint.

\subsection{Including a Gaugino Condensate}
We will now attempt to lift the dilaton runaway by adding a gaugino condensate term to the superpotential, employing a scenario similar to the KKLT scenario of type IIB \cite{Kachru:2003aw}. $W$ in equation~\eqref{Weff} is then replaced by
\begin{equation*}
 W=\sqrt{8}(T+ \mu+ke^{-cS})\:.
\end{equation*}
Here, $\mu$ is defined in equation~\eqref{eq:mudefinition}, $k$ is a constant of $\OO(1)$ and $c$ is a constant depending on the condensing gauge group, with typical values  $c_{\SU{5}}=2\pi/5$, $c_{{\rm E}_6}=2\pi/12$, $c_{{\rm E}_7}=2\pi/18$ and $c_{{\rm E}_8}=2\pi/30$. In the following, it will be useful to introduce the re-scaled components
\begin{equation*}
 x=cs\:,\quad y=ca
\end{equation*}
of the dilaton superfield. With those variables, the dilaton F-term equations, $F_S=0$, then read
\begin{align*}
v+\mu+(1+2x)ke^{-x}\textrm{cos}(y)&=0\\
b-(1+2x)ke^{-x}\textrm{sin}(y)&=0\, ,
\end{align*}
while $F_T=0$ leads to
\begin{align*}
v+3\mu+3ke^{-x}\textrm{cos}(y)&=0\\
b-ke^{-x}\textrm{sin}(y)&=0\, .
\end{align*}
The vanishing of the superpotential, $W=0$, is equivalent to the conditions
\begin{align*}
v+\mu+ke^{-x}\textrm{cos}(y)&=0\\
b-ke^{-x}\textrm{sin}(y)&=0\, .
\end{align*}
The simplest type of vacuum is a supersymmetric Minkowski vacuum, that is a solution of $F_S=F_T=W=0$. It is easy to see that this can only be achieved for $s=0$, which corresponds to the limit of infinite gauge coupling, and is therefore discarded. 

Next, we should consider supersymmetric AdS vacua, which are stable by the Breitenlohner-Freedman criterion \cite{Breitenlohner:1982jf, Breitenlohner:1982bm}. These are solutions of $F_S=F_T=0$. It follows immediately that the axions are fixed by $\cos (y)=-{\rm sign}(k)$ and $b=0$, while $x$ and $v$ are determined by
\begin{equation}
 f(x)\equiv (1-x)e^{-x}=\frac{\mu}{k}\:,\qquad v=\frac{3x}{1-x}\mu\:.\label{min}
\end{equation} 
Normally, we require a solution with $x>1$ in order to be at sufficiently weak coupling, and we will focus on this case. Then, for a positive $v$ we need the flux parameter $\mu$ to be negative and, hence, the constant $k$ to be positive. A negative value for $\mu$ is indeed possible. Provided this choice of signs, the equations~\eqref{min} have two solutions, one with a value of $x$ satisfying $1<x<2$ which is an AdS saddle and another one with $x>2$ which is an AdS minimum. The cosmological constant at those vacua is given by
\begin{equation*}
 \Lambda=-\frac{3c\mu^2}{4v^3x}\left(\frac{1+x}{1-x}\right)^2\:.
\end{equation*} 
We note that $v$ is stabilized perturbatively while stabilization of the dilaton involves the gaugino condensation term. 

It has of course been observed some time ago~\cite{Dine:1985rz} that the dilaton in heterotic CY compactifications can be stabilized by a combination of a constant, arising from H-flux, and gaugino condensation in the superpotential. The situation here is different from these early considerations in two ways.
\begin{itemize}
 \item There is an additional T-dependent term in the superpotential which arises from the non-vanishing torsion of the internal space. 
 \item The flux term in the superpotential does not arise from harmonic H-flux but from bundle flux through the Bianchi identity.
\end{itemize} 
It is important to check that the above vacuum can be in a acceptable region of field space where all consistency conditions are satisfied. To discuss this, we set $\alpha'=1$ from hereon. We need that $x>1$ to be at weak coupling, $v\gg1$ so that the $\alpha'$ expansion is sensible, $k\exp (-x)<1$ so that the condensate is small and $\vert\Lambda\vert<<1$ for a small vacuum energy. Eqs.~\eqref{min} immediately point to a tension in satisfying the first two of these constraints. While $v$ is proportional to the bundle flux $\mu$ and, hence, prefers a large value of $\mu$, a large value of the dilaton requires $\mu$ to be small.

Let us consider this in more detail. For concreteness we use a minimum value of $v=9$, a sufficiently large value for the $\alpha'$ expansion to be sensible. This implies the constraint
\begin{equation}
\label{eq:condmu1}
\vert\mu\vert\ge\frac{3(x-1)}{x}
\end{equation}
on the flux $\mu$. We also require the non-perturbative effects to be weak, that is $k\exp (-x)<1$, which leads to the condition
\begin{equation}
\label{eq:condmu2}
\vert\mu\vert\le x-1.
\end{equation}
Combining both conditions, it follows that $x\ge3$. We also find that $\Lambda\leq\OO(0.1)$, consistent with a small vacuum energy. Hence, the two conditions \eqref{eq:condmu1} and \eqref{eq:condmu2} are necessary and sufficient to guarantee a consistent vacuum.
\begin{figure}[h!]
\begin{center}
\includegraphics[width=85mm]{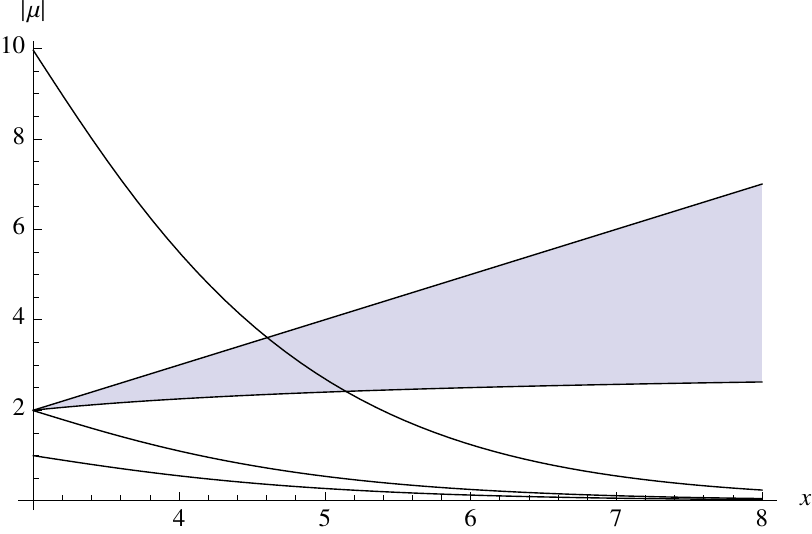}
{\caption{\it Plot of the consistent values for $\vert\mu\vert$. The shaded part is defined by the conditions \eqref{eq:condmu1} and \eqref{eq:condmu2}. The other three lines represent the condition~\eqref{eq:condk} for values $k_{\rm max}=10$ (bottom line), $k_{\rm max}=20$ (middle line) and $k_{\rm max}=100$ (top line). Consistent values for the flux $|\mu|$ are, hence, defined by the shaded part located below the line for the value of $k_{\rm max}$ under consideration.}}
\label{fig:consistent}
\end{center}
\end{figure}

There is a further condition, concerning the constant $k$ in the gaugino condensation term, whose value for a given vacuum is given by 
\begin{equation*}
k=\frac{\vert\mu\vert e^x}{x-1}.
\end{equation*}
The general expectation is for $k$ to be of around $\OO(1)$, so requiring it to be less than some maximum value $k_{\textrm{max}}$ implies
\begin{equation}
\label{eq:condk}
\vert\mu\vert\le k_{\textrm{max}}(x-1)e^x\,.
\end{equation}
Fig.\ \ref{fig:consistent} shows the restriction on $|\mu|$ for different values of $k_{\textrm{max}}$. We see that simultaneous solutions to \eqref{eq:condmu1}, \eqref{eq:condmu2} and \eqref{eq:condk} only exist if $k_{\textrm{max}}\ge20$. For $k_{\rm max}={\cal O}(100)$ the consistent flux values are in the range $2\leq |\mu|\leq 4$. 

\subsection{Supersymmetric AdS Example}
We would now like to show that the required values for the flux can indeed be obtained for appropriate choices of the gauge bundle. 
On the coset $\SU3/\U1^2$ we choose observable and hidden line bundle sums defined by the parameters
\begin{align*}
(p_i)&=(-2,0,0,0,2) & (q_i)&=(1,-2,1,2,-2)\\
(\tilde{p}_i)&=(2,2,0,-2,-2) & (\tilde{q}_i)&=(-3,-4,-1,4,4)\:.
\end{align*}
For this choice, the anomaly constraints~\eqref{eq:constraintpq1} and \eqref{eq:constraintpq2} are satisfied and the chiral asymmetry in the observable sector is three. Since both line bundle sums have rank five the gauge group in both sectors is $\mathrm{S}(\U{1}^5)\times\SU{5}$.
Computing the flux $\mu=\pi{\mathcal B}$ from equation~\eqref{Bdef} for this bundle choice leads to 
\begin{align*}
\mu=-\frac{15\pi}{16}\approx-2.95.
\end{align*}
This value is negative, as required, and indeed within the consistent range for $|\mu|$. Both the AdS saddle and the AdS minimum can be realized for this value of $\mu$, as can also be seen from Fig.~\ref{fig:susypot}. Many more consistent examples can be found for the coset $\SU3/\U1^2$. See \cite{Klaput:2012vv} for more details.

\begin{figure}[h!]
\begin{center}
\includegraphics[width=100mm]{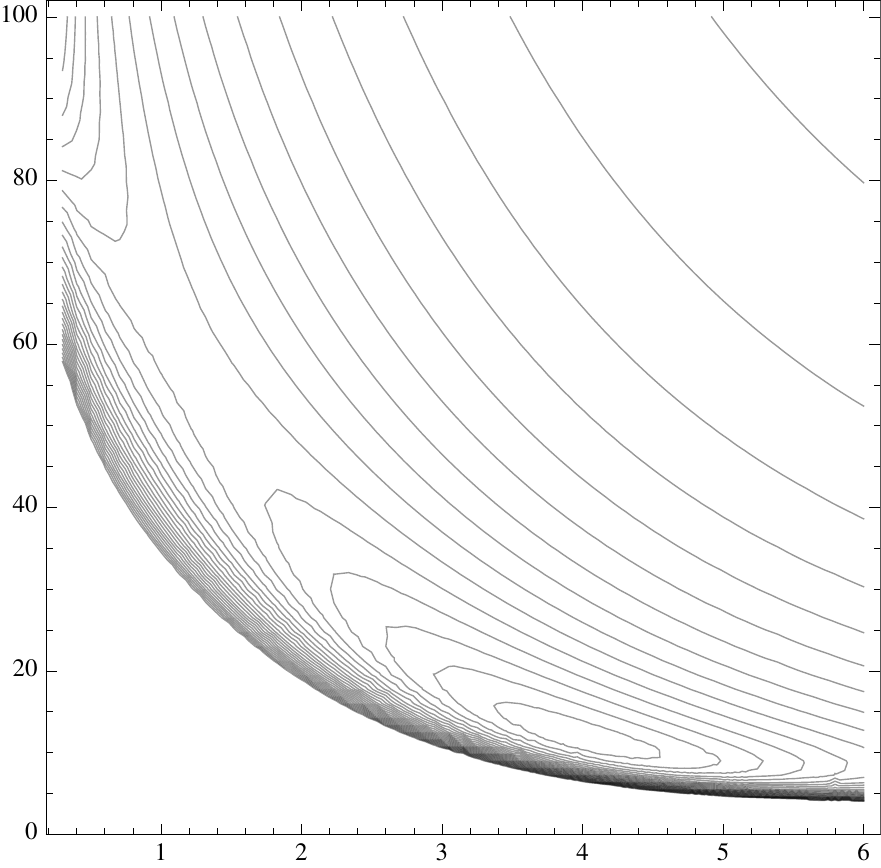}
{\caption{\it Contour plot of the potential with $\textrm{cos}(y)=-1$, for $k=53.4$ and $\mu=-15\pi /16$. This potential has a supersymmetric AdS minimum at $(x,v)\simeq (4,11.8)$, and also a supersymmetric AdS saddle at $(x,v)\simeq (1.18,58)$.}\label{fig:susypot}}
\end{center}
\end{figure}

\section{Discussion and Outlook}
\label{sec:Disc}
We have thus seen that a combination of $\alpha'$ and non-perturbative effects can indeed lift the runaway directions of the original, lowest-order perturbative potential and lead to a supersymmetric AdS vacuum. For appropriate bundle choices this stabilisation does arise in a consistent part of moduli space, that is, at weak coupling and for moderately large internal volume. However, there is a tension in that it is not possible, for the specific examples analysed, to make the volume very large (so that there is no doubt about the validity of the $\alpha'$ expansion) and keep the theory at weak coupling. 



These results provide the first concrete indication that maximal symmetry at lowest order in a string solution might not be a necessary condition for a physically acceptable vacuum. This, in turn, would mean that much larger classes of internal manifolds, such as half-flat manifolds and their generalizations, are relevant to string phenomenology.  A central question in this context is, of course, how the domain wall tension, essentially set by the torsion of the manifold, can be made sufficiently small so that other effects can compete and lift the vacuum. In our examples, this can be arranged  -- at a marginal level -- by a choice of gauge bundles, although it is not possible to stabilise the theory at parametrically large volume. However, it should be kept in mind that the spaces under consideration, i.e. cosets, have a rather limited pattern of torsion and flux parameters available. It remains to be seen whether other half-flat manifolds offer more flexibility in this regard.

Note also that in this chapter we have only discussed supersymmetric solutions to the corresponding four-dimensional supergravity. One might wonder if non-supersymmetric vacua exist. Indeed, supersymmetry should be broken from a phenomenological perspective. Moreover, once supersymmetry is established, it is notoriously difficult to break it, especially in models obtained by string compactification. Looking for string vacua where supersymmetry is broken at the compactification scale is therefore an intriguing possibility, even if it means that supersymmetry can no longer be used as a solution to the hierarchy problem. Specifically, it would be interesting to look for non-supersymmetric vacua for supereravities of the type described here. A study of this kind is underway.

\begin{subappendices}


\section{The Coset $SU(3)/U(1)^2$}\label{app:coset}
This appendix provides a short summary of all relevant data for the coset considered in this chapter, namely $SU(3)/\U{1}^2$. More details and derivations can be found in Ref.~\cite{Klaput:2011mz, Klaput:2012vv} and references therein. The data given here includes the generators of the Lie-group, relevant topological data and the half-flat mirror structure defined by the two-forms $\{\omega_i\}$, their four-form duals $\{\tilde\omega_i\}$ and the symplectic set $\{\alpha_0, \beta^0\}$. In accordance with our index convention, the reductive decomposition of the Lie algebra of $G$ is given by $\{T_A\}=\{K_u,H_i\}$, where the $K_u$, $u=1,\dots,6$ denote the coset generators and $H_i$ the generators of the sub-group $H$.


A possible choice of $SU(3)$ generators is provided by the Gell-Mann matrices
\begin{eqnarray*}
&\lambda_1
=
-\frac{i}{2}\left(
\begin{array}{ccc}
 0 & 1 & 0 \\
 1 & 0 & 0 \\
 0 & 0 & 0
\end{array}
\right)
,\;
\lambda_2
=
\frac{1}{2}\left(
\begin{array}{ccc}
 0 & -1 & 0 \\
 1 & 0 & 0 \\
 0 & 0 & 0
\end{array}
\right)
,\;
\lambda_3
=
-\frac{i}{2}\left(
\begin{array}{ccc}
 1 & 0 & 0 \\
 0 & -1 & 0 \\
 0 & 0 & 0
\end{array}
\right)
,
\\
&\lambda_4
=
-\frac{i}{2}\left(
\begin{array}{ccc}
 0 & 0 & 1 \\
 0 & 0 & 0 \\
 1 & 0 & 0
\end{array}
\right)
,\;
\lambda_5
=\frac{1}{2}\left(
\begin{array}{ccc}
0 & 0 & -1 \\
 0 & 0 & 0 \\
 1 & 0 & 0

\end{array}
\right)
,\;
\lambda_6
=-\frac{i}{2}\left(
\begin{array}{ccc}
 0 & 0 & 0 \\
 0 & 0 & 1 \\
 0 & 1 & 0
\end{array}
\right)
,
\\
&
\lambda_7
=\frac{1}{2}\left(
\begin{array}{ccc}
  0 & 0 & 0 \\
 0 & 0 & -1 \\
 0 & 1 & 0
\end{array}
\right)
,\;
\lambda_8
=-\frac{i}{2\sqrt{3}}
\left(
\begin{array}{ccc}
 1 & 0 & 0 \\
 0 & 1 & 0 \\
 0 & 0 & -2
\end{array}
\right)
\:.
\end{eqnarray*}
The two $\U{1}$ sub-groups are generated by $\lambda_3$ and $\lambda_8$. Hence, we choose as generators the re-labelled Gell-Mann matrices
\begin{align}
\begin{aligned}
K_1&=\lambda_1\qquad K_2=\lambda_2\qquad K_3=\lambda_4\qquad K_4=\lambda_5\\
K_5&=\lambda_6\qquad K_6=\lambda_7\qquad H_7=\lambda_3\qquad H_8=\lambda_8\:.
\end{aligned}
\end{align}
The geometry of the homogeneous space $SU(3)/\U{1}^2$ is determined by the structure constants which, relative to the basis $\{K_u, H_i\}$, are given by
\begin{eqnarray}\nonumber
&&f_{12}^{\phantom{12}7}=1\\
&&f_{13}^{\phantom{13}6}=
-f_{14}^{\phantom{14}5}=
f_{23}^{\phantom{23}5}=
f_{24}^{\phantom{24}6}=
f_{73}^{\phantom{73}4}=
-f_{75}^{\phantom{75}6}=1/2\\\nonumber
&&f_{34}^{\phantom{34}8}=
f_{56}^{\phantom{56}8}=\sqrt{3}/2
\:.
\end{eqnarray}
A basis of $G$-invariant two-, three- and four-forms is given by
\begin{equation}
\label{eq:Ginvforms}
\begin{array}{lllllll}
  \omega_1&=&-\frac{1}{2\pi}\Big(e^{12}+\frac{1}{2}e^{34}-\frac{1}{2}e^{56}\Big)&&\tilde\omega^1&=&\frac{4\pi}{3\vol}\Big(2e^{1234}+e^{1256}-e^{3456}\Big) \\
 \omega_2&=&-\frac{1}{4\pi}\Big(e^{12}+e^{34}\Big)&&\tilde\omega^2&=&-\frac{4\pi}{\vol}\Big(e^{1234}+e^{1256}\Big) \\
\omega_3&=&\frac{1}{3\pi}\Big(e^{12}-e^{34}+e^{56}\Big)&&\tilde\omega^3&=&\frac{\pi}{\vol}\Big(e^{1234}-e^{1256}+e^{3456}\Big) \\
\alpha_0&=&\frac{\pi}{2\vol}\Big(e^{136}-e^{145}+e^{235}+e^{246}\Big)&&\beta^0&=&\frac{1}{2\pi}\Big(e^{135}+e^{146}-e^{236}+e^{245}\Big)
\end{array}
\end{equation}
where $e^{i_1\dots i_n}:= e^{i_1} \wedge\dots\wedge e^{i_n}$ and the dimensionless volume $\vol$ is given by
\begin{eqnarray}\label{eq:VolX:su3}
\vol=\int_Xe^{123456}=\int_X(*1)_0=4(2\pi)^3\:.
\end{eqnarray}
This $G$-invariant basis forms fulfil the half-flat mirror relations in section \ref{sec:mirrorgeom} with torsion parameters $(e_1,e_2,e_3)=(0,0,1)$ and intersection numbers
\begin{align}\label{eq:intersectionnums}
\begin{aligned}
d_{111}&=6\qquad d_{112}=3\qquad d_{113}=4\qquad d_{122}=1\qquad d_{123}=2\qquad d_{133}=0\\
d_{222}&=0\qquad d_{223}=\frac{4}{3}\qquad d_{233}=0\qquad d_{333}=-\frac{64}{9}\:.
\end{aligned}
\end{align}
The only non-zero Betti numbers are $b_0=1$, $b_2=2$, $b_4=2$ and $b_6=1$ so that the Euler number is $\chi=6$.
The most general $G$-invariant $SU(3)$-structure forms are given by
\begin{align}
\begin{aligned}
\omega&=R_1^2e^{12}-R_2^2e^{34}+R_3^2e^{56}=v^i\omega_i,\\
\Psi&=R_1R_2R_3\Big((e^{136}-e^{145}+e^{235}+e^{246})+\I\, (e^{135}+e^{146}-e^{236}+e^{245})\Big)=Z\,\alpha_0+\I\, G\,\beta^0\;
\end{aligned}
\end{align}
with associated $G$-invariant metrics
\begin{align}
\d s_0^2
&=
R_1^2\,(e^1\otimes e^1+e^2\otimes e^2)
+
R_2^2\,(e^3\otimes e^3+e^4\otimes e^4)
+
R_3^2\,(e^5\otimes e^5+e^6\otimes e^6)
\:.
\end{align}
In these relations, the $R_i$ are three arbitrary ``radii" of the coset space which are related to the moduli $v^i$ by
\begin{align}\label{eq:moduliRSU3}
v^1&=-\frac{4\pi}{3}(R_1^2+R_2^2-2R_3^2)\\
v^2&=4\pi(R_2^2-R_3^2)\\
v^3&=\pi(R_1^2+R_2^2+R_3^2)\label{eq:moduliRSU3last}
\end{align}
and to $(Z,G)$ by
\begin{align}
\label{eq:ZG}
Z=\frac{2\vol}{\pi}R_1R_2R_3\:,\qquad
G=2\pi R_1R_2R_3\:.
\end{align}

\end{subappendices}

\chapter{Calabi-Yau Compactifications with Flux}
\label{ch:CY}

Having discussed non-maximally symmetric compactifications on more general non-complex torsional spaces, we now return to considering compactifications of a simpler type. Namely Calabi-Yau compactifications. We are again interested in the geometric moduli sector of the Calabi-Yau $X$. As we noted in the introduction, moduli-stabilisation is notoriously difficult in heterotic Calabi-Yau compactifications. It is the purpose of this chapter to argue that NS flux,  which naively one would think is not present in such compactifications, can be added to the toolbox provided we give up a maximally symmetric (Minkowski) space-time. This allows for more flexibility in heterotic model building. We will mostly work at zeroth order in $\a$ in this chapter, as we do not need to go to higher orders for the point we want to make. The chapter is based on \cite{Klaput:2013nla}.

\section{Introduction}
We show that an internal Calabi-Yau manifold is consistent with the presence of NS flux provided four-dimensional space-time is taken to be a domain wall. These Calabi-Yau domain wall solutions can still be associated with a covariant four-dimensional N=1 supergravity. In this four-dimensional context, the domain wall arises as the ``simplest" solution to the effective supergravity due to the presence of a flux potential with a runaway direction. The main message is that {\it NS flux is a legitimate ingredient for moduli stabilisation in heterotic Calabi-Yau models}. Ultimately, the success of such models depends on the ability to stabilise the runaway direction and thereby ``lift" the domain wall to a maximally supersymmetric vacuum, as was done in the last chapter with use of non-perturbative effects. 

The chapter is organised as follows. In section \ref{sec:Ricci-flat} we give a short recap of why Ricci-flat maximally symmetric compactifications do not allow for flux, before we argue how non-maximally symmetric compactifications avoid this no-go result. In section \ref{sec:cydomain10d} we specialise to domain wall compactifications and show that for every Calabi-Yau space there exists a domain wall solution with given harmonic NS flux. In section \ref{sec:lowenergy} we argue that the four-dimensional effective theories of regular Calabi-Yau vacua and Calabi-Yau domain wall vacua, differ essentially by the presence of a non-vanishing superpotential for the complex structure moduli. The proof that our constructions are valid away from the large complex structure limit in moduli space is given in the appendix. We discuss our results in section \ref{sec:conclCY}.

\section{Maximally Symmetric Space-Time}\label{sec:Ricci-flat}
As a warm-up, we begin by reviewing the standard arguments for why NS flux is inconsistent with an internal Calabi-Yau manifolds, provided the four-dimensional space is maximally symmetric. It is then shown that these arguments break down if we allow the four-dimensional space-time to be a domain wall. The full ten-dimensional Calabi-Yau domain wall solution is presented in the next section. 

We begin with the standard assumption that ten-dimensional space is a (possibly warped) product of a compact six-dimensional space $X_6$ and four-dimensional maximally symmetric space-time $M_4$ with metric
\begin{equation}
\label{eq:ansatzmaxsymm}
ds^2_{10} = e^{2A(x^m)}\left(g_{\mu\nu}(x^\mu)\,\d x^\nu \otimes \d x^\nu + g_{mn}(x^m)\,\d x^m \otimes \d x^n\right)\; .
\end{equation}
Here $A$ is a warp factor, $g_{\mu\nu}$ with indices $\mu,\nu,\dots =0,1,2,3$ is a maximally symmetric metric on $M_4$ and $g_{mn}$ with indices $m,n,\dots =4,\ldots ,9$ an unspecified metric on $X_6$. As usual, we demand that some supersymmetry is unbroken by the compactification. Recall the corresponding conditions, from the supersymmetry transformations of the gravitino and the dilatino
\begin{align}
\label{eq:killing1CY}
\Big(\nabla_M+\frac{1}{8}\H_M\Big)\epsilon=0\\
\label{eq:killing2CY}
\Big(\slashed\nabla\phi+\frac{1}{12}\H\Big)\epsilon=0\; .
\end{align}

The standard course of action is to set $H=0$ in those equations and study the resulting implications, leading to the well-known conclusion that $X_6$ must be a Calabi-Yau manifold with a Ricci-flat metric $g_{mn}$. Here, we are interested in the converse, namely assuming that $X_6$ is a Calabi-Yau manifold and analysing the implications for $H$. Any components of $H$ with four-dimensional indices must, of course, vanish due to four-dimensional maximal symmetry so we can focus on the purely internal components $H_{mnp}$. 


\begin{Theorem}[No-Go Theorem]
\label{tm:nogo}
Maximally symmetric heterotic Calabi-Yau compactifications do not allow for NS-flux.
\end{Theorem}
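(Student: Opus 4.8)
The plan is to prove Theorem \ref{tm:nogo} by a Stokes-theorem/integration-by-parts argument, exactly analogous to the no-go result already established in Theorem \ref{tm:noflux} of Chapter \ref{ch:HFNK}, but now adapted to the maximally symmetric Calabi-Yau setting. The key input is that on a Calabi-Yau manifold the metric $g_{mn}$ is Ricci-flat and K\"ahler, so the hermitian form $\omega$ is closed, $\d\omega = 0$, and the dilaton is forced to be constant. The strategy is to show that the supersymmetry conditions \eqref{eq:killing1CY}--\eqref{eq:killing2CY}, once we impose that $X_6$ is Calabi-Yau, force the three-form flux $H$ to be both harmonic and exact, whence it must vanish.

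First I would recall from section \ref{subsec:Xgeom} that maximal four-dimensional symmetry kills all flux components carrying space-time indices, so only the purely internal $H_{mnp}$ survive, and these define a three-form $H$ on $X_6$. Next, analysing the gravitino and dilatino variations \eqref{eq:killing1CY}--\eqref{eq:killing2CY} with the Calabi-Yau ansatz, I would show that consistency of the covariantly-constant spinor $\eta$ with the Ricci-flat K\"ahler structure forces $\d\phi = 0$ (constant dilaton, hence constant warp factor $A$) and identifies $H$ with the torsion via $H = i(\p-\bp)\omega = \d_c\omega$ as in equation \eqref{eq:Bismut}. Since $\omega$ is closed on a Calabi-Yau, this immediately gives $H = 0$ at the level of the torsion; but to make the argument watertight (and to parallel Theorem \ref{tm:noflux}) I would instead keep $H$ general and derive that $*H$ is exact. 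Concretely, the dilatino equation together with the balanced/K\"ahler condition shows $\omega\lrcorner H$ vanishes and that $e^{-2\phi}H$ is co-closed, while the gravitino equation relates $H$ to $\d$ of the $SU(3)$-structure forms in such a way that $H$ is forced to be $\d$-exact.

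Having reduced the problem to showing $H$ is simultaneously harmonic (closed and co-closed by the Bianchi identity $\d H = 0$ at zeroth order together with the equation of motion \eqref{eq:eom3}) and exact, the conclusion follows from a one-line Hodge-theoretic argument identical in spirit to the proof of Theorem \ref{tm:noflux}:
\begin{equation*}
\vert\vert e^{-\phi}H\vert\vert^2 = \int_{X_6} e^{-2\phi}\, H\wedge *H = -\int_{X_6} H\wedge \d(e^{-2\phi}\omega) = 0\:,
\end{equation*}
after integration by parts, using $\d\omega = 0$ and $\d\phi = 0$ on the Calabi-Yau. Positive-definiteness of the norm then forces $H=0$.

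The main obstacle I anticipate is not the integration by parts itself but the bookkeeping needed to establish rigorously that the Calabi-Yau assumption together with unbroken supersymmetry really does collapse the general $SU(3)$-structure torsion classes down to the Calabi-Yau locus where $\d\omega=0$ and $\d\phi=0$ hold \emph{as a consequence} rather than as an extra assumption. In other words, the delicate point is to run the supersymmetry analysis ``in reverse'': instead of setting $H=0$ and deriving Calabi-Yau, I assume Calabi-Yau (Ricci-flat K\"ahler) and must carefully show that the Killing spinor equations \eqref{eq:killing1CY}--\eqref{eq:killing2CY} leave no room for a nonzero $H$. Once the torsion classes \eqref{torsionclasses:strominger} are pinned to their vanishing Calabi-Yau values, the norm computation closes the argument cleanly; the contrast with the domain-wall case (where the extra $\p_y$-terms in \eqref{eq:killingspinor:2} and \eqref{eq:killingspinor:6} spoil exactly this cancellation) is precisely what will be exploited in section \ref{sec:cydomain10d} to evade the no-go.
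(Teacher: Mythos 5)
Your proposal is correct in substance, but it takes a genuinely different route from the paper's proof. The paper argues spinorially: contracting the gravitino variation \eqref{eq:killing1CY} with $\Gamma^M$ and combining with the dilatino variation \eqref{eq:killing2CY} gives $\slashed\nabla\tilde\epsilon=0$ for the rescaled spinor $\tilde\epsilon=e^{-3\phi/2}\epsilon$; on a compact Ricci-flat space the Lichnerowicz/Weitzenb\"ock argument promotes this harmonic spinor to a Levi-Civita-parallel one, and then the decomposition of $\H_m\tilde\epsilon$ into holomorphic and anti-holomorphic indices forces $H_{m\bar a\bar b}=0$, hence $H=0$ by reality, with $\d\phi=0$ emerging as a corollary. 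Your route instead imports the Strominger-system torsion relation \eqref{eq:flux2} (equivalently $e^{-2\phi}*H=-\d(e^{-2\phi}\omega)$) and closes with the Stokes computation of Theorem~\ref{tm:noflux}; this is essentially the argument of section~\ref{eq:usual} (first given in Gauntlett et al.), and it is valid at the zeroth order in $\alpha'$ at which the chapter works, since the integration by parts needs $\d H=0$. The trade-off: the paper's proof is self-contained at the level of the Killing spinor equations plus compactness and Ricci-flatness, and never invokes the Bianchi identity; yours presupposes the full Strominger torsion analysis of chapter~\ref{ch:SS} but is then a one-line positivity argument. One observation that actually strengthens your proposal and dissolves the difficulty you flag: your Stokes computation does not need the Calabi-Yau hypothesis at all. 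If you integrate by parts against $\d H=0$ rather than asserting $\d\omega=0$ and $\d\phi=0$ in the final display, you obtain $\|e^{-\phi}H\|^2=\int_{X_6}\d H\wedge e^{-2\phi}\omega=0$ for \emph{any} compact supersymmetric maximally symmetric solution, so the Calabi-Yau property becomes a conclusion rather than an input. The delicate ``reverse'' step you worry about --- identifying the spinor-defined hermitian form with the Ricci-flat K\"ahler form so as to conclude $\d\omega=0$ and $\d\phi=0$ before knowing $H=0$ --- is only needed for your shortcut $H=i(\p-\bp)\omega=0$ via \eqref{eq:Bismut}, and is precisely what the paper's parallel-spinor step supplies; in the Stokes version you can simply bypass it. Note finally that the paper also records a third, non-supersymmetric argument via the dilaton equation of motion, eqs.~\eqref{eq:dilatoneom}--\eqref{eq:normHzero}, which is the closest in spirit to your positivity-of-norm strategy.
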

\begin{proof}
We begin by contracting eq.~(\ref{eq:killing1CY}) with $\Gamma^M$ and using eq.~(\ref{eq:killing2CY})  to get
\begin{equation*}
\Big(\slashed\nabla-\frac{3}{2}\slashed\nabla\phi\Big)\epsilon=0\:,
\end{equation*}
where the contractions are now over indices on the internal space $X_6$. For the re-scaled spinor $\tilde\epsilon=e^{-\frac{3}{2}\phi}\epsilon$ this implies $\slashed\nabla\tilde\epsilon=0$. On compact Ricci-flat, and in particular Calabi-Yau spaces, this implies that $\nabla_m\tilde\epsilon=0$. We conclude that $\tilde{\epsilon}$ is a covariantly constant spinor under the Levi-Civita connection. After a suitable $SO(6)$ redefinition of the gamma matrices we may assume that $\Gamma^a\tilde\epsilon=0$, where $a,b,\dots$ are holomorphic internal indices. Then, eq.~(\ref{eq:killing1CY}) leads to
\begin{equation*}
\Big(3\nabla_m\phi+\frac{1}{4}\H_m\Big)\tilde\epsilon=0.
\end{equation*}
Expanding in holomorphic and anti-holomorphic indices, and using $\{\Gamma^a,\Gamma^{\bar b}\}=2g^{a\bar b}$, this becomes
\begin{equation*}
\Big(3\nabla_m\phi+\frac{1}{2}H_{ma\bar b}g^{a\bar b}+\frac{1}{4}H_{m\bar a\bar b}\Gamma^{\bar a\bar b}\Big)\tilde\epsilon=0.
\end{equation*}
The last term implies $H_{m\bar a\bar b}=0$ and, since $H$ is  a real form, this leads to $H=0$. Then, it follows from the first term that $\nabla_m\phi=0$. Hence, we conclude that solving the supersymmetry conditions for a maximally symmetric four-dimensional space-time and an internal Calabi-Yau space, requires us to set $H=0$. This is the standard no-go theorem for flux on Calabi-Yau manifolds.
\end{proof}
This Theorem also makes sense from the perspective of chapter \ref{ch:SS}. Recall that for supersymmetric Minkowksi compactifications we have
\begin{equation}
\label{eq:Bismut2}
H=i(\p-\bp)\omega\:.
\end{equation}
It follows that if $X$ is K\"ahler, which is required by Calabi-Yau, then $H=i(\p-\bp)\omega=0$. Theorem \ref{tm:nogo} can also be deduced from a four-dimensional perspective. Recall the superpotential \eqref{eq:gukovvafa},
\begin{equation*}
W\propto\int_{X_6}(H+i\:\d\omega)\wedge\Psi=\int_{X_6}(\CS+\d T)\wedge\Psi\:,
\end{equation*}
of the four-dimensional theory. Here we have used \eqref{eq:anomalycancellation}, and we have set $T=B+i\omega$. We have also included the $\a$-correction Chern-Simons term $\CS$ for the time being. 

Recall that maximal symmetry and supersymmetry requires
\begin{equation*}
W=\p_X W=0\:.
\end{equation*}
Varying $W$ with respect to $T$, we find
\begin{equation*}
0=\delta_TW=-\int_{X_6}\delta T\wedge\d\Psi\:.
\end{equation*}
It follows that we need
\begin{equation*}
\d\Psi=0\:,
\end{equation*}
i.e. $\Psi$ is holomorphic. Similarly, a variation of $W$ with respect to the complex structure $J$, noting that $\delta_J\Psi=K\Psi+\chi$ where $\chi$ is of type $(2,1)$, we get
\begin{equation*}
0=\int_X(H+i\d\omega)\wedge(K\Psi+\chi)\:.
\end{equation*}
For generic $K$ and $\chi$, it follows that
\begin{align*}
(H+i\d\omega)^{(0,3)}&=H^{(0,3)}=0\\
(H+i\d\omega)^{(1,2)}&=0\:.
\end{align*}
Using that $H$ is real, it follows from these equations that
\begin{equation*}
H=i(\p-\bp)\omega\:,
\end{equation*}
in agreement with \eqref{eq:Bismut2}. It follows that $H$ can only be non-zero for Minkowski solutions if the internal space is non-K\"ahler.

Is it possible to avoid this conclusion by relaxing the condition of unbroken supersymmetry? There is a simple argument \cite{Gauntlett:2003cy}, which shows that this does not change the situation, at least at zeroth order in $\alpha'$. To see this, let us recall that the dilaton equation of motion reads to zeroth order in $\alpha'$
\begin{equation}
\label{eq:dilatoneom}
\nabla^2\,\e ^{-2\phi}=\e^{-2\phi} *(H\wedge * H)\:,
\end{equation}
where $\nabla_M$ is the covariant derivative of the  Levi-Civita connection on $M_{10}$. With the ansatz \eqref{eq:ansatzmaxsymm} it becomes
\begin{equation*}
-\d\left(\e^{4A}* \d \e ^{-2\phi}\right)=\e^{4A-2\phi} H\wedge * H\:,
\end{equation*}
Integrating over $X_6$ we obtain
\begin{equation}
\label{eq:normHzero}
-\int_{X_6}\d\left(\e^{4A}*\d\e ^{-2\phi}\right)=\int_{X_6}\e^{4A}\e^{-2\phi} (H\wedge *H)=\|\e^{2A}\e^{-\phi} H\|^2\:.
\end{equation}
However, since $X_6$ is compact the integral on the left-hand side must vanish, which implies that $H=0$. This is a special case of a more general theorem, which says that there are no smooth maximally symmetric flux compactifications in supergravity \cite{Maldacena:2000mw}.

\section{Calabi-Yau Domain Walls and Flux}\label{sec:cydomain10d}
In the previous section we saw that Calabi-Yau compactifications of the heterotic string with maximally symmetric four-dimensional space-time are inconsistent with the presence of flux. If we would like to add flux, we have to relax one of the underlying conditions. We will relax the condition of four-dimensional maximal symmetry. Instead, we assume that four-dimensional space-time has the structure of a domain wall, $M_4=M_3\times Y$, with the associated ten-dimensional Ansatz \eqref{domainwallansatz} for the metric, and a non-constant dilaton and non-zero flux $H_{uvw}$ on the internal space $X_6$ only. 

\subsection{Basic Equations}
We would now like to ask if the system of equations \eqref{eq:killingspinor:1}-\eqref{eq:killingspinor:7} can be solved for non-zero $H$, provided that $X_6$ is a Calabi-Yau manifold and $(\omega,\Psi)$ is the integrable $SU(3)$-structure with $\d\omega=\d\Psi=0$. Then, the second eq.~\eqref{eq:killingspinor:2} implies that the dilaton is a constant on $X_6$, $\d\phi=0$ and, as a result, the first three equations~\eqref{eq:killingspinor:1}--\eqref{eq:killingspinor:3} are satisfied. The remaining four equations specialize to the flow equations
\begin{align}
\label{eq:killingspinorCY:1}
{\Psi'}_-&=2\phi '\Psi_-+*H
\\\label{eq:killingspinorCY:2}
\omega\wedge\omega'&=\phi' \omega\wedge\omega\;
\\\label{eq:killingspinorCY:3}
\Psi_-\wedge H&=2\phi'\;*1\:,
\end{align}
and the constraint
\begin{align}
\label{eq:10dconstraints}
\Psi_+\wedge H =0 \:.
\end{align}

The equations~\eqref{eq:killingspinorCY:1}--\eqref{eq:killingspinorCY:3} are first-order differential equations which describe the variation of the $SU(3)$-structure $(\omega,\Psi)$ and the dilaton $\phi$ along the $y$-direction. By expanding $\omega$ and $\Psi$ into a basis of harmonic two- and three-forms they can be broken up into a set of first-order differential equations whose solutions exist locally from general theorems. Flux quantisation \cite{Rohm1986454}, also requires that $H$ is quantised, and therefore a constant along $y$. Eq.~\eqref{eq:10dconstraints} represents an additional constraint on the complex structure. Let us now analyse this in more detail.

\subsection{Existence of Solutions}
\label{sec:exist}
To analyse eqs.\ \eqref{eq:killingspinorCY:1}--\eqref{eq:10dconstraints} in detail, we introduce a symplectic basis $\{\alpha_A,\beta^B\}$ of harmonic three forms and a basis $\{\omega_i\}$ of harmonic two forms on $X_6$. As usual, the $SU(3)$-structure forms $(\omega,\Psi)$ can then be expanded as
\begin{equation}\label{eq:expOmZ}
 \omega=v^i\omega_i\; ,\qquad \Psi=Z^A\alpha_A-\mathcal{G}_{B}\beta^B\; ,
\end{equation} 
where $v^i$ and $Z^A$ are the K\"ahler and complex structure moduli, respectively, and the functions $\mathcal{G}_{B}$ are the first derivatives of the prepotential ${\mathcal G}={\mathcal G}(Z)$. Note that the pre-potential is rescaled by a factor of $i$ from \eqref{eq:exphf}. This is to get the form of $\mathcal{G}$ in conventional Calabi-Yau compactifications.

We also introduce the volume
\begin{equation}
\V=\frac{1}{6}\int_{X_6} \omega\wedge\omega \wedge\omega=\frac{1}{6}d_{ijk}v^iv^jv^k\; , \label{Vdef}
\end{equation} 
with the triple intersection numbers $d_{ijk}$. For more details on the description of the Calabi-Yau moduli space, see Appendix \ref{app:intro}. Likewise, the expansion of the flux in terms of the symplectic basis can be written as
\begin{equation}
\label{eq:fluxexp}
H =\mu^A\alpha_A+\epsilon_B\beta^B\; ,
\end{equation}
where $\mu^A$ and $\epsilon_A$ are the flux parameters. It is useful to introduce the re-scaled complex structure moduli $X^A=e^{-2\phi}Z^A$. Since the functions ${\mathcal G}_A$ are homogeneous of degree one it follows that ${\mathcal G}_A(X)=e^{-2\phi}{\mathcal G}_A(Z)$. We also define a new coordinate $z$ by 
\begin{equation}
\label{eq:variablechange}
 \frac{dy}{dz}=e^{2\phi}\; .
\end{equation} 

Using the above expansions and definitions, and assuming that the Hodge-dual and $\p_z$ anti-commute when acting on $\tilde\Psi=e^{-2\phi}\Psi$, which is shown to be a valid assumption in appendix \ref{app:commute}, the flow equations~\eqref{eq:killingspinorCY:1}--\eqref{eq:killingspinorCY:3} can be re-written in the form
\begin{eqnarray}
\label{eq:killingmod1}
 \partial_z\,{\rm Re}(X^A)&=&-\mu^A\\
\label{eq:killingmod2}
 \partial_z\,{\rm Re}({\mathcal G}_A)&=&\epsilon_A\\
 \partial_z\, v^i&=&\partial_z\phi\\
 \partial_z\phi&=&\frac{e^{4\phi}}{2\V}\textrm{Im}\left(\epsilon_AZ^A + \mu^A\mathcal{G}_A \right)\label{dileq}
\end{eqnarray} 
while the constraint \eqref{eq:10dconstraints} takes the form
\begin{equation}
 \textrm{Re}\left(\epsilon_AX^A +  \mu^A\mathcal{G}_A\right)=0\; . \label{cons}
\end{equation} 
Here and in the following $\mathcal{G}_A$ should be interpreted as functions of the re-scaled complex structure moduli $X^A$. The first three of these equations are easily integrated leading to
\begin{equation}
 \textrm{Re}(X^A)=-\mu^Az-\gamma^A\:,\quad
 \textrm{Re}(\mathcal{G}_B)=\epsilon_Bz+\eta_B\:,\quad
 v^i=e^\phi v^i_0\; , \label{vsol}
\end{equation} 
where $\gamma^A$, $\eta_B$ and $v_0^i$ are integration constants.  

For a given Calabi-Yau three-fold, the $\mathcal{G}_A$ are known (although complicated) functions of the complex structure moduli $X^A$.  Hence, the above equations implicitly determine the $z$-dependence of $X^A$. With these solutions eq.~\eqref{cons} turns into
\begin{equation}
\label{eq:cons}
-\gamma^A\epsilon_A+\eta_B\mu^B=0\:,
\end{equation}
that is, a condition on the integration constant which can be satisfied by a suitable choice of these constants\footnote{In ref.~\cite{Lukas:2010mf}, further constraints for the existence of a solution, in addition to \eqref{eq:cons}, are given. These arise due to the assumption that the flux components $\{\epsilon_0,\mu^0\}$ vanish, which is required for the half-flat compactifications discussed in ref.~\cite{Lukas:2010mf} but can be avoided for the Calabi-Yau compactifications discussed here.}. Finally, we need to discuss the dilaton equation~\eqref{dileq}. First, we note that, from eqs.~\eqref{vsol} and \eqref{Vdef}, the volume is given by $\V=\V_0e^{3\phi}$, where $\V_0$ is a constant explicitly given by $\V_0=d_{ijk}v_0^iv_0^jv_0^k/6$. Inserting this into the dilaton flow equation~\eqref{dileq} we obtain
\begin{equation}
\label{eq:eqdilaton}
\partial_ze^{-\phi}=\frac{1}{2\V_0}(\epsilon_A\,\textrm{Im}\;X^A+\mu^B\,\textrm{Im}\;\mathcal{G}_B)\; .
\end{equation}
With the explicit solutions for $X^A$ this leads to an explicit, although complicated first order differential equation for the $z$-dependence of the dilaton which can, at least in principle, be integrated.

In summary, we have established the existence of supersymmetric domain wall solutions for any choice of Calabi-Yau manifold and any harmonic flux on it, under the constraint \eqref{eq:10dconstraints}.

\subsection{Asymptotic Behaviour of Solutions}
Existence of solutions to the flow equations \eqref{eq:killingspinorCY:1}--\eqref{eq:killingspinorCY:3} is always guaranteed as we have demonstrated above. However, explicit integration requires detailed knowledge of the pre-potential and can only be done on a case-by-case basis. Still, we can deduce the properties of the solution in the limit of large $y$, that is, the behaviour of the fields $\{\phi,X^A, v^i\}$ far away from the domain wall.

To do this, we return to the flow equations \eqref{eq:killingspinorCY:1}--\eqref{eq:killingspinorCY:3} for a moment. Equation \eqref{eq:killingspinorCY:1} is equivalent to
\begin{equation}
\label{eq:2flow2}
\p_y(e^{-2\phi}\Psi_-)=e^{-2\phi}*H\:.
\end{equation}
Multiplying \eqref{eq:killingspinorCY:3} with $e^{-2\phi}$ and applying $\p_y$, we get using \eqref{eq:2flow2}
\begin{equation}
\label{eq:ddphi}
\p_y^2(e^\phi)=-\frac{1}{2e^{2\phi}\V_0}\int_XH\wedge*H=-\frac{1}{2e^{2\phi}\V_0}\vert\vert H\vert\vert^2\:,
\end{equation}
where we also have integrated over $X$. Note that \eqref{eq:ddphi} implies that the strictly positive function $e^\phi$ has a negative second order derivative. It then follows from \eqref{eq:ddphi} that $e^\phi$ must reach zero at least once along the flow, at the position of a domain wall. From the domain wall, this solution can either rise to a maximum before it drops again to zero, at the position of a new domain wall, or the solution approaches a linear function from below. 

The latter case is most interesting from a phenomenological point of view, describing an infinite universe bounded by a domain wall. It is the solution we will focus on. Here, the derivative $\p_ye^\phi$ approaches a constant from above. For non-vanishing flux this constant cannot be zero since eq.~\eqref{eq:ddphi} would then imply
\begin{equation*}
\lim_{y\rightarrow\infty}\vert\vert H\vert\vert^2=0\:.
\end{equation*}
As $H$ is constant, this can only be true if $H=0$. Hence, with non-vanishing flux on $X$, the dilaton $e^\phi$ approaches a linear increasing function as $y\rightarrow\infty$. The coupling $e^{-2\phi}$ therefore goes to zero, and we get the weak coupling limit at infinity. The generic $y$-dependence of $e^\phi$ and its derivative has been plotted below.
\begin{figure}[h!]
\begin{center}
\includegraphics[height=40mm]{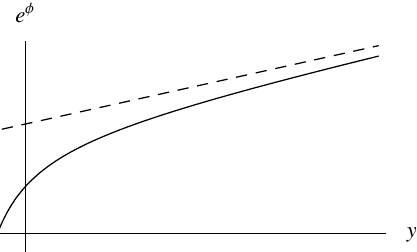}\;\;\;\;\;\;\;\;\;\;\;\;
\includegraphics[height=40mm]{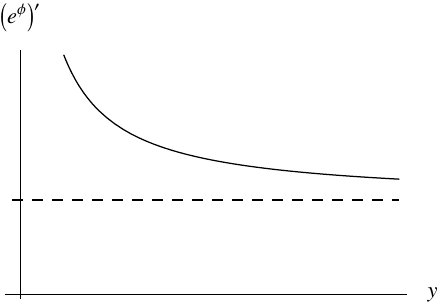}
\label{fig:asymptote}
\caption{\it Plot of the generic asymptotic behaviour of $e^\phi$ and its derivative $(e^\phi)'$ as $y\rightarrow\infty$.}
\end{center}
\end{figure}
Furthermore, from the definition of $z$ in eq.~\eqref{eq:variablechange} we see that in the limit $y\rightarrow\infty$ the behaviour of $e ^\phi$ implies that $z$ approaches a constant as $y\rightarrow\infty$. Accordingly, it follows that the rescaled fields $X^A$ approach constant values, while the original moduli $Z^A$ diverge. This means that the solution approaches the large complex structure limit far away form the domain wall, where the pre-potential can generically be approximated by
\begin{equation*}
\mathcal{G}(Z)=\K_{ABC}Z^AZ^BZ^C\:,
\end{equation*}
with intersection numbers $\K_{ABC}$. This observation allows us to check consistency with the results obtained in ref.~\cite{Lukas:2010mf}. Indeed, inserting the above form of the pre-potential into eqs.~\eqref{eq:killingmod1}--\eqref{cons} and, in addition, setting $\epsilon_0=\mu^0=0$, yields precisely the solution given in ref.~\cite{Lukas:2010mf}.

\section{Four-Dimensional Low Energy Theory}\label{sec:lowenergy}
We will now discuss the effective four-dimensional theories associated to Calabi-Yau domain wall solutions. These four-dimensional theories are covariant, $N=1$ supergravities, identical to the ones obtained from compactification on Calabi-Yau manifolds without flux, apart from the presence of a non-vanishing superpotential for the complex structure moduli. The domain wall can be recovered as a BPS-solution of the four-dimensional theory which couples to this superpotential. It should be mentioned that similar constructions have been found in the literature before, also from the $N=2$ supergravity and type II strings point of view. See in particular ref.~\cite{Curio:2000sc, Behrndt:2001qa, Behrndt:2001mx}. In particular, ref.~\cite{Behrndt:2001mx} has a solution to the BPS-equations which is along the same lines as the one we found in section \ref{sec:exist}.

We begin by reviewing the structure of the four-dimensional effective theory and its domain wall solution. Then we discuss the matching of the ten-dimensional Calabi-Yau domain wall solution, introduced in section 3, with the four-dimensional domain wall solution. We leave any technicalities for Appendix \ref{app:comparison} as they would distract from the main point of the section. Our discussion extends the results of ref.~\cite{Lukas:2010mf} where matching has been shown in the large complex structure limit. Here we find that these results can be extended to the entire moduli space. We also comment on the asymptotic behaviour found in the previous section, now from a four-dimensional perspective. 

\subsection{Four-Dimensional Effective Theory}\label{sec:4deff}
Upon dimensional reduction of the heterotic supergravity with a Calabi-Yau internal space $X_6$ one obtains a four-dimensional, $\mathcal{N}=1$ supergravity theory. It contains a set of chiral superfields $\Phi^X=(S,T^i,X^A)$ which correspond to the axio-dilaton $S=a+ i\: \,e^{-2\phi_4}$, the K\"ahler moduli $T^i$ and complex structure moduli $X^A$. The K\"ahler potential now reads
\begin{align}
\label{eq:kaehler4d}
K(\Phi^X,{\bar \Phi} ^ {\bar X})
=
K^{S}+K^{T}+K^{X} = - \log  i\: ({\bar S}-S)- \log(8\V) - \log i\:\Big[\mathcal{G}_B\bar X^B-X^A\mathcal{\bar G}_A\Big]\:.
\end{align}
Here $\V$ corresponds to the volume of $X_6$ and can be expressed in terms of it's intersection numbers $d_{ijk}$, i.e. $\V=\frac{1}{6}d_{ijk}t^i t^j t^k$ with $t^i = {\rm Im}\, T^i$. 

The superpotential of the theory is now given by
\begin{equation}
\label{eq:superpotential}
W=\sqrt{8}(\epsilon_A X^A+\mu^A\mathcal{G}_A),
\end{equation}
where $\mu^A, \epsilon_A$ are the flux parameters as defined in eq.~\eqref{eq:fluxexp}, $\mathcal{G}=\mathcal{G}(X^A)$ is the prepotential for the complex structure moduli and $\mathcal{G}_A=\frac{\partial}{\partial X^A}\mathcal{G}$ its derivatives. Recall that $\mathcal{G}$ is a homogeneous function of degree two. 

\subsection{Domain Wall Solution}
As we saw in section \ref{sec:4dDW}, and as also was shown in \cite{Lukas:2010mf}, the four-dimensional theories just described have $1/2$-BPS domain wall solutions with metric
\begin{align}
\label{eq:4ddomainwallmetric}
\mathrm{d}s^2=\e^{-2\phi(y)}\left(\eta_{\alpha\beta}\mathrm{d}x^\alpha\mathrm{d}x^\beta + \mathrm{d}y^2\right)
\end{align}
where $\eta_{\alpha\beta}\mathrm{d}x^\alpha\mathrm{d}x^\beta $ is the 1+2 dimensional Minkowski metric and $y=x^3$. With this metric, the Killing-spinor equations given by setting \eqref{eq:4dsusyvar1}-\eqref{eq:4dsusyvar2} to zero, reduce to
\begin{align}
\label{eq:4dKS}
\p_{y} \Phi^X=-ie^{-\phi_4}e^{K/2}K^{X\bar Y}D_{\bar Y}{\bar W}\:,
\end{align}
together with the constraint that the superpotential $W$ has to be purely imaginary and the axionic components of all fields are constant. Here, $D_{ Y}W = \partial_Y W + K_Y W$ as usual.

Furthermore, it was shown in ref.~\cite{Lukas:2010mf} that such four-dimensional domain wall solutions match their ten-dimensional counterparts, discussed in section~\ref{sec:cydomain10d}. The matching in ref.~\cite{Lukas:2010mf} was carried out only in the large complex structure limit, as is appropriate for the half-flat compactifications discussed there. For Calabi-Yau manifolds a restriction to large large complex structure is unnecessary. Fortunately, it turns out that this requirement is merely technical and that the matching can be shown to hold everywhere in complex structure moduli space. This is proven in Appendix \ref{app:comparison}, but we briefly review the procedure and results here.

\subsection{Comparing Four- and Ten-Dimensional Solutions}

The matching between the four- and ten-dimensional domain-wall vacua is carried out by showing that the respective Killing spinor equations are equivalent under appropriate field redefinitions. We merely present the results here, while the full proof is given in Appendix \ref{app:comparison}.

It turns out that for the matching to work we need to relate four- and ten-dimensional fields as
\begin{align}
\label{eq:10d4d:redef:1}
\e^{2\phi}&=\e^{2\phi_4}\V/\V_0
\\\label{eq:10d4d:redef:2}
Z^A&=\e^{2\phi}X^A
\\\label{eq:10d4d:redef:3}
v^i&=t^i
\:.
\end{align}
Note here the normalisation factor is generalised to $\F=e^{2\phi}$, which is proportional to $Z^0$ in the case of vanishing flux parameters, $\alpha_A=\beta^B=0$. This demonstrates that the low energy description of heterotic domain walls are given by the domain wall solutions of the $N=1$ four-dimensional supergravity theories discussed in the previous section. The matching holds everywhere in complex structure moduli space and for general harmonic flux, modulo the constraint \eqref{eq:10dconstraints}.

We also comment briefly on the large $y$-behaviour we found in the previous section. We saw that the fields $X^A$ stabilise at constant values as $y\rightarrow\infty$ far away from the domain-wall, where its influence is negligible. This is expected from a four-dimensional point of view, as we have introduced a superpotential for these fields. No such superpotential has been introduced for the dilaton or the K\"ahler moduli, which remain unstabilised.

As for the four-dimensional dilaton, we saw in the last section that the ten-dimensional dilaton diverges as $y\rightarrow\infty$. From \eqref{eq:10d4d:redef:1} and the fact that $V=e^{3\phi}V_0$, we see that
\begin{equation*}
\p_y\phi=-2\p_y\phi_4\; .
\end{equation*}
Hence, the four-dimensional dilaton goes to negative infinity, and we thus approach the weak coupling regime far away from the domain wall.

\section{Discussion}
\label{sec:conclCY}
In this chapter, we have shown that heterotic Calabi-Yau compactifications with flux exist, provided that we relax the condition of having a maximally symmetric four-dimensional space-time. Using a four-dimensional domain-wall ansats instead, we have found Calabi-Yau domain wall solutions for any harmonic flux and throughout complex structure moduli space. This extends previous results obtained in the large complex structure limit.

The main message is that harmonic NS flux is a legitimate ingredient in heterotic Calabi-Yau compactifications and can be added to the model without deforming the Calabi-Yau to a non-Kahler manifold. This means that the powerful set of model-building tools on Calabi-Yau manifolds is available, while NS flux can be added as a useful ingredient for moduli stabilisation.

Ultimately, the success of these models depends on the ability to lift these domain wall vacua to maximally symmetric ones which amounts to stabilising the remaining moduli, that is, the dilaton and the T-moduli. In the previous chapter, we saw that this can indeed be achieved in certain half-flat domain wall compactifications based on group coset spaces. Whether these results carry over to the present Calabi-Yau domain wall solutions is a subject of future study.

Another obvious generalization is to search for ten-dimensional heterotic solutions based on Calabi-Yau manifolds, harmonic flux and more general four-dimensional BPS-solutions, including, for example, four-dimensional cosmic string and black hole solutions. Especially, Calabi-Yau black hole solutions might be interesting in this context, as they might turn out to be consistent with the present universe without the need to ``lift" to a maximally symmetric four-dimensional space-time. Work in this direction is currently underway.

\begin{subappendices}
\section{Calabi-Yau Symplectic Geometry}\label{app:intro}
Here, we briefly summarize some useful facts about the symplectic geometry of Calabi-Yau moduli spaces~\cite{candelas1991moduli}. Overviews may be found, for example, in refs.~\cite{CyrilThesis, MatthiasThesis, Ceresole:1995ca}.

\subsection{Harmonic Expansion}
The K\"ahler form $\omega$ is expanded in a basis of harmonic $(1,1)$-forms
\begin{equation*}
J=v^i\omega_i,
\end{equation*}
where $\omega_i\in H^{(1,1)}(X)$. Likewise the harmonic (3,0)-form $\Psi$ is expanded as
\begin{equation*}
\Psi=Z^A\alpha_A-\mathcal{G}_B\beta^B,
\end{equation*}
where $\{\alpha_A,\beta^B\}\in H^3(X)$ is a real symplectic basis such that
\begin{equation*}
\int_X\alpha_A\wedge\beta^B=\delta_A^B,
\qquad
\int_X\alpha_A\wedge\alpha_B=\int_X\beta^A\wedge\beta^B=0.
\end{equation*}
The complex structure moduli space is a K\"ahler manifold, described by a holomorphic pre-potential $\mathcal{G}=\mathcal{G}(Z)$ which is a homogeneous function of degree two. Its derivatives are denoted by $\mathcal{G}_A=\p_A\mathcal{G}=\frac{\p\mathcal{G}}{\p Z^A}$.

Note that, in the context of the Calabi-Yau domain walls we discuss, the $SU(3)$-structure forms $\omega$ and $\Psi$ will depend on $y$, the direction normal to the domain wall. However, the basis forms $\{\omega_i\}$ and $\{\alpha_A,\beta^B\}$ are related to cycles of the Calabi-Yau manifold and are, hence, independent of $y$. Consequently, the $y$-dependence entirely resides in the moduli-fields $\{v^i,Z^A\}$.

\subsection{Some Symplectic Geometry and the Hodge-Dual}
As $\Psi$ is a $(3,0)$-form, we have
\begin{equation*}
*\Psi=-i\Psi.
\end{equation*}
The Hodge stars of the symplectic basis $\{\alpha_A,\beta^B\}$ are given by
\begin{align*}
*\alpha_A&={A_A}^B\alpha_B+B_{AB}\beta^B\\
*\beta^A&=C^{AB}\alpha_B+{D^A}_B\beta^B,
\end{align*}
where ${A_A}^B=-{D^A}_B$. These matrices may be written in terms of the matrix $N_{AB}$ given by
\begin{equation*}
N_{AB}=\mathcal{\bar G}_{AB}+2i\frac{\textrm{Im}(\mathcal{G}_{AC})Z^C\textrm{Im}(\mathcal{G}_{BD})Z^D}{\textrm{Im}(\mathcal{G}_{CD})Z^CZ^D}.
\end{equation*}
The corresponding expressions are
\begin{align}
\label{eq:A}
A&=(\textrm{Re}N)(\textrm{Im}N)^{-1}\\
\label{eq:B}
B&=-(\textrm{Im}N)-(\textrm{Re}N)(\textrm{Im}N)^{-1}(\textrm{Re}N)\\
\label{eq:C2}
C&=(\textrm{Im}N)^{-1}.
\end{align}

Next, we give some identities which will be useful in the next sections. We first define the complex structure K\"ahler potential
\begin{equation}
\label{eq:cplxK}
\K=\log\Big(\frac{i}{\F^2}\int\Psi\wedge\bar{\Psi}\Big).
\end{equation}
Next we define the parameters
\begin{equation}
\label{eq:fs}
f_A^B=\p_AZ^B+K_AZ^B=D_AZ^B.
\end{equation}
It may then be shown that the matrix $N$, the K\"ahler potential, the parameters $f_A^B$ and the pre-potential satisfy the following identities\footnote{Note that these identities do not depend on the rescaling $\F$ in \ref{eq:cplxK}.}
\begin{align}
\label{eq:usefulid1}
\K_{\bar BC}&=-\frac{1}{4\V}(\textrm{Im}N)_{DE}\bar f^{\bar D}_{\bar B}f^E_C\\
\label{eq:usefulid3}
\bar N_{\bar A\bar B}f^B_C&=\mathcal{G}_{AB}f^B_C\\
\label{eq:usefulid2}
(\textrm{Im}N_{AB})\bar f^{\bar A}_{\bar C}\bar Z^{\bar B}&=0.
\end{align}

\subsection{Hodge-Dual and $y$-Derivatives}\label{app:commute}
In this Appendix, we wish to show that we can assume
\begin{align}
\label{eq:commute1}
\p_z*\tilde\Psi=-*\p_z\tilde\Psi\:,
\end{align}
by an apropriate choice of ten-dimensional fields and coordinates. Here $z$ is defined by \eqref{eq:variablechange}, and $\tilde\Psi\propto\Psi$ with proportionality factor to be defined below. We also let $*$ denote the six-dimensional Hodge dual on the Calabi-Yau manifold. This relation will be useful for proving the results in Appendix \ref{app:comparison}.

Note first that
\begin{equation}
\label{eq:Covpsi}
D_y\Psi=\p_y\Psi+\K_y\Psi
\end{equation}
is a primitive $(2,1)$-form. Here $\K_y=\p_y\K$. Hence, by \eqref{eq:Weil}, we have
\begin{equation*}
*D_y\Psi=iD_y\Psi\:.
\end{equation*}
We may rewrite \eqref{eq:Covpsi} as
\begin{equation*}
D_y\Psi=e^{-\K}\p_y(e^{\K}\Psi)=\p_{\tilde y}\tilde\Psi\:,
\end{equation*}
where $\tilde\Psi=e^{\K}\Psi$, and the new coordinate $\tilde y$ is defined by
\begin{equation*}
\frac{\p\tilde y}{\p y}=e^{-\K}\:.
\end{equation*}
With appropriate choice of coordinates and fields, we may take $\tilde y=z$ and $\tilde\Psi=e^{-2\phi}\Psi$.

\section{Matching Ten- and Four-Dimensional Equations}
\label{app:comparison}
In this Appendix, we would like to show that the Killing-Spinor equations in 10 and four dimensions match, under a suitable field redefinition. 

Let us start by clearly stating the field redefinitions which will be necessary to relate both solutions. The dilaton $\phi$, K\"ahler moduli $v^i$ and complex structure moduli $Z^A$ of the ten-dimensional theory are related to the respective fields, $\phi_4$, $t^i$, $X^A$ of the four-dimensional theory via
\begin{align}
\label{eq:10d4d:redefinitions:1}
\e^{2\phi}&=\e^{2\phi_4}\V/\V_0
\\\label{eq:10d4d:redefinitions:2}
Z^A&=\e^{2\phi}X^A
\\\label{eq:10d4d:redefinitions:3}
v^i&=t^i
\:,
\end{align}
where again $V$ is the volume of the Calabi-Yau manifold $X_6$ and $V_0$ is some fixed reference volume. From now on, we set $\V_0=1$ for convenience. With these identifications, the equations for the K\"ahler moduli \eqref{eq:killingspinorCY:2} and \eqref{eq:4dKS} can be easily confirmed to match, in complete analogy to the proof in ref.~\cite{Lukas:2010mf}. 

Let us now demonstrate the matching of the Killing spinor equations for the dilaton whose four-dimensional version \eqref{eq:4dKS} becomes
\begin{align}
\label{eq:dilatonflow4d:generic1}
\p_y\phi_4=\frac{ i\:\,\e^{2\phi_4}}{4}W\; .
\end{align}
Here, we have used the K\"ahler potential and superpotential from eqs.~\eqref{eq:kaehler4d} and \eqref{eq:superpotential}. From the relation \eqref{eq:10d4d:redefinitions:1} between the 10- and four-dimensional dilaton, and the $y$-dependence of the volume 
\begin{equation*}
\p_y\V=3\p_y\phi \V\; ,
\end{equation*}
implied by eq.~\eqref{eq:killingspinorCY:2}, it follows that $\partial_y\phi=-2\partial_y\phi_4$. With the last relation it can be easily seen that \eqref{eq:dilatonflow4d:generic1} matches the ten-dimensional  dilaton equation \eqref{eq:killingspinorCY:3}, upon integrating the latter equation over $X$.

Next, let us show the matching of the Killing spinor equations for the complex structure moduli. To see this, we start with the ten-dimensional equation~\eqref{eq:killingspinorCY:1}, which can be written as
\begin{align}
\label{eq:OmegaH}
\p_y\Psi=2(\p_y\phi)\,\Psi - (H - i\, *H)\:,
\end{align}
where we have used \eqref{eq:commute1}. If we expand $\Psi$ and $H$ with respect to a symplectic basis $(\alpha_A,\beta^A)$ as before, that is,
\begin{align}
\label{eq:omegaHexpansions}
\Psi=Z^A\left(\alpha_A - \mathcal{G}_{AB}\beta^B\right)\; ,\qquad
H=\mu^A\alpha_A + \epsilon_A\beta^A
\; ,
\end{align}
we can turn \eqref{eq:OmegaH} into an the equation
\begin{equation}
\p_y(e^{-2\phi}Z^A)=-e^{-2\phi}(\mu^A-i\tilde\mu^A).
\label{eq:10dflow}
\end{equation}
for the complex structure moduli $Z^A$.  Here $\tilde\mu^A=C^{AB}\epsilon_B+A_B^A\mu^B$. With the complex structure K\"ahler potential \eqref{eq:cplxK}, equation \eqref{eq:10dflow} can be written in terms of complex structure moduli space geometry as 
\begin{align*}
\p_y(e^{-2\phi}Z^A)
&=
-e^{-2\phi}\K^{A\bar B}\K_{\bar B C}C^{CB}\Big((C^{-1})_{BD}\mu^D
-i\epsilon_B-i(\textrm{Re}N)_{BD}\mu^D\Big)
\\
&=
-i\frac{e^{-2\phi}}{4\V}\K^{A\bar B}(\textrm{Im}N)_{DE}
\bar f^{\bar D}_{\bar B}
f^E_CC^{CB}
\Big(
i(C^{-1})_{BD}\mu^D+\epsilon_B+(\textrm{Re}N)_{BD}\mu^D
\Big),
\end{align*}
where the first equality follows from ${A_C}^A=C^{AB}(\textrm{Re}N)_{BC}$, and for the second equality we have used equation of \eqref{eq:usefulid1}. Using \eqref{eq:usefulid2}, and the fact that $C=(\textrm{Im}N)^{-1}$, we see that
\begin{align}
\label{eq:Killing10}
\p_y(e^{-2\phi}Z^A)&=-\frac{ie^{-2\phi}}{4\V}\K^{A\bar B}\bar f^{\bar C}_{\bar B}\Big(\epsilon_C+N_{CD}\mu^D\Big)\notag\\
&=-\frac{ie^{-2\phi}}{4\V}\K^{A\bar B}\bar f^{\bar C}_{\bar B}\Big(\epsilon_C+\mathcal{\bar G}_{\bar C\bar D}\mu^D\Big),
\end{align}
where in the last equality we have used \eqref{eq:usefulid3}.

We want to compare this the to the $4d$ Killing spinor equation \eqref{eq:4dKS} for the moduli $X^A$, which reads 
\begin{align}
\label{eq:4dKSeq}
\p_{y}X^A=-\frac{i}{4} \e^{2\phi_4} K^{A\bar B}D_{\bar{B}}\bar{W}
=
-\frac{i}{4} \e^{2\phi_4} K^{A\bar B}D_{\bar{B}}X^{\bar{C}}(\epsilon_{\bar C}+\mathcal{\bar G}_{\bar C\bar D}\mu^{\bar D})
\:.
\end{align}
If we now use \eqref{eq:10d4d:redefinitions:2} to re-express all $\partial/\partial X^A$ derivatives into $\partial/\partial Z^A$ and the fact that $f_A^B=\p_A Z^B+\p_A \K Z^B=D_A Z^B$, then we see that in fact \eqref{eq:Killing10} and \eqref{eq:4dKSeq} are equal.

Note also that the constraint \eqref{eq:10dconstraints} gives rise to a purely imaginary superpotential by the Gukov-Vafa-Witten formula
\begin{align*}
W\propto\int_{X_6}H\wedge\Psi\:,
\end{align*}
as required by the four-dimensional theory.

\end{subappendices}

\part{Conclusions}
\label{part:concl}


\chapter{Conclusion}
In this thesis, we have considered heterotic supergravity at $\OO(\a)$ and above. We have mainly been concerned with moduli associated to this theory, in relation to compactifications to a four-dimensional non-compact space-time, which is not necessarily maximally symmetric. We devided the thesis in two parts: Maximally symmetric (Part~\ref{part:4d}) and non-maxiamlly symmetric compactifications (Part~\ref{part:3d}).

In part Part~\ref{part:4d} we considered compactifications of the theory to Minkowski space-time at $\OO(\a)$, commonly referred to as the Strominger system. Specifically, in chapter \ref{ch:SS}, we where interested in the moduli space, or spectrum, of such compactifications. As noted in the introduction, knowledge of the full spectrum of the Strominger system has for a long time been lacking. Although the spectrum has been known for the zeroth order Calabi-Yau solutions since the late 80's/early 90's, how to include $\a$-effects have remained an elusive problem up until now. Specifically, the non-K\"ahlarity of the compact space and the non-trivial Bianchi identity complicates matters greatly.

We made progress in this direction by showing that the system can be rephrased in terms of a holomorphic structure $\bD$ over some generalised bundle $\Q$ over the compact space $X$. We saw that the issues concerning the non-K\"ahlerness and non-trivial Bianchi identity where naturally included as part of the construction of $\bD$. In particular, it is a holomorphic structure, i.e. $\bD^2=0$, iff the Bianchi identities are satisfied. Using this, we where able to compute the infinitesimal moduli space as the first cohomology of this bundle,
\begin{equation*}
T\M=H^{(0,1)}_{\bD}(\Q)\:.
\end{equation*}
The bundle $\Q$ was constructed by a series of extensions, which the holomorphic structure $\bD$ respects. $H^{(0,1)}_{\bD}(\Q)$ could then be computed by the machinery of long exact sequences in cohomology, giving it as a subset of the usual cohomologies as expected. In computing this cohomology, we encountered extra ``moduli", $H^{(0,1)}(\End(TX))$, which should not correspond to physical fields in the low energy theory. The purpose of chapter \ref{ch:connredef} was to give these the correct interpretation, and we saw that they correspond to $\OO(\a)$ field redefinitions of the theory.

Knowing what the spectrum of the theory is is the first step in understanding the corresponding four-dimensional $N=1$ supergravity, the goal of heterotic phenomenology. As noted in the discussion section of chapter \ref{ch:SS}, in order to achieve this, knowledge of the K\"ahler potential, Yukawa couplings and superpotential is also needed. To obtain these, a dimensional reduction of heterotic supergravity on general heterotic $SU(3)$-structure manifolds needs to be performed. To do this properly, a lot more knowledge is needed about the moduli space of such compactifications in general. What are the obstructions, K\"ahler metric on the moduli space, its geometrical structure, etc? These are all interesting directions to pursue, in particular, in terms of viewing the system as a holomorphic structure $\bar D$. Indeed, a lot of knowledge of holomorphic structures on bundles and their moduli spaces already exists in the mathematics literature, which would be interesting to generalise to the Strominger system.

We also made the curious observation that the Strominger system can be rewritten as a hermitian Yang-Mills connection $\D$ on the extension bundle $\Q$. We proposed that this condition can be derived from an instanton type condition
\begin{equation*}
F_{\D AB}\Gamma^{AP}\epsilon=0\:.
\end{equation*}
Interestingly, such a condition also makes sense {\it outside the large volume limit}, and it therefore has a shot of describing the correct supersymmetric solutions in this regime. It would be interesting to see if an action with such corresponding supersymmetric solutions can be written down for the connection $\D$.

Note also that throughout this thesis we have omitted loop corrections from the world-sheet perspective, i.e. $g_s$-corrections. To include these, more knowledge of the full heterotic world-sheet model is needed. The generic heterotic string theory has $(0,1)$-supersymmetry, which allows for more flexibility then the corresponding type II theories, making it much harder to compute quantum corrections. Putting the target space in terms of a single hermitian Yang-Mills connection on $\Q$ might then prove useful in this direction, potentially simplifying the problem. Indeed, world-sheet models with such holomorphic bundles as target spaces have been studied in the literature before. It would be interesting to extend these results to $\bD$, and work in this direction is underway.

In Part~\ref{part:3d} of the thesis, we turned to discussing moduli stabilisation. As has been established, moduli stabilisation in heterotic theory is a hard problem, due to the lack of RR fluxes, and the non-K\"ahler spaces that arise when including NS flux. We saw that we could remedy this problem, by allowing for compactifications to a non-maxiamlly symmetric space-time
\begin{equation*}
X\rightarrow M_{10}\rightarrow M_4\:,
\end{equation*}
where $M_{4}$ is four-dimensional non-compact space-time, and $X$ is the compact space, non-trivially fibered over $M_4$. Specifically, we considered heterotic compactifications on space-times known as domain walls, and showed that such compactifications allowed for the internal geometry to be non-complex and torsional. $X$ then had an $SU(3)$-structure known as generalised half-flat. The torsion induced K\"ahler moduli dependent terms in the superpotential, similar to the flux terms induced for complex structure moduli. We saw that these could be used to stabilise geometric moduli.

Specifically, in chapter \ref{ch:HFNK} we considered compactifications on half-flat mirror manifolds, focusing on coset spaces $G/H$, where the geometry is easily expressed in terms of $G$-invariant forms. We showed that, although not possible at zeroth order, the first order geometry allows for all geometric moduli to be stabilised, leaving only a coordinate dependent dilaton. We further studied the corresponding four-dimensional theory, showing that by inclusion of non-perturbative effects, it is possible to lift the vacuum to a maximally symmetric supersymmetric AdS vacuum in a consistent region of moduli space. In light of this, it seems that much more general $SU(3)$-structure compactifications can be considered when discussing heterotic compactifications, more so than what is obtained by the usual Minkowski compactifications mentioned above.

Next, we considered Calabi-Yau compactifications with flux in chapter \ref{ch:CY}. Though we made some progress in Part~\ref{part:4d} of the thesis, mathematically, a lot more is still known about Calabi-Yau's then the heterotic $SU(3)$-structures described above. Compactifications on Calabi-Yau manifolds is therefore more desirable from a mere pragmatic point of view, and it has been the most useful setting for model-building this far. However, as we pointed out, such maximally symmetric compactifications suffer from no-go theorems when including fluxes. We showed that it is possible to evade these theorems, provided again that the maximal symmetric space-time assumption is sacrificed. It follows that fluxes can be used as an extra ingredient when it comes to moduli stabilisation in Calabi-Yau compactifications and model building. This is of course provided that the corresponding solutions can be lifted somehow by use of e.g. non-perturbative effects, as was done in chapter \ref{ch:HFNK} for the torsional compactifications. It would be interesting to investigate if this can be done consistently, and a viable maximally symmetric vacuum with all moduli stabilised can be found. Work in this direction is currently underway.

The main point of Part~\ref{part:3d} of the thesis was then that more general compactifications then the traditional zero flux Calabi-Yau, or it's $\a$-generalisation to heterotic $SU(3)$-structure manifolds, can be employed for heterotic model building, allowing for far more flexibility in moduli stabilisation. This also prompts a lot of future directions to be explored. In particular, as this thesis was only concerned with domain wall solutions, it would be interesting to investigate more generic solutions, such as cosmic strings and black holes, potentially leading to even more general compactifications. Furthermore, the four-dimensional supergravities, appearing from compactifications on half-flat mirror manifolds as described in section \ref{sec:4ddimred}, are far from being fully explored. It would be interesting to extend this analysis and consider more generic half-flat mirror manifolds, allowing for a broader set of torsion parameters and fluxes. It would also be interesting to extend the search, and look for non-supersymmetric vacua of these theories.



{\small

\addcontentsline{toc}{chapter}{Bibliography}
\bibliographystyle{JHEP}
\bibliography{BibliographyXD}        
}

\end{document}